\newtheorem{Proposition}{Proposition}
\newtheorem{Lemma}{Lemma}
\newtheorem{Theorem}{Theorem}
\newtheorem{Corollary}{Corollary}
\newtheorem{Remark}{Remark}
\newtheorem{Definition}{Definition}
\newtheorem{example}{Example}
\DeclareMathOperator{\eig}{\varphi}
\DeclareMathOperator{\pd}{\partial}
\DeclareMathOperator{\Cov}{Cov}
\DeclareMathOperator{\diag}{diag}
\newcommand{\proper}{\mathsf}
\newcommand{\pE}{\proper{E}}
\newcommand{\pN}{\proper{N}}
\newcommand{\mv}[1]{{\boldsymbol{\mathrm{#1}}}}
\newcommand{\trsp}{\ensuremath{\top}}
\newcommand{\<}{\langle}
\renewcommand{\>}{\rangle}
\newcommand{\transp}{\top}
\DeclarePairedDelimiter\ceil{\lceil}{\rceil}
\DeclarePairedDelimiter\floor{\lfloor}{\rfloor}
\newcommand{\kT}{\kappa\ell}
\newcommand{\A}{\mathcal{C}}
\newcommand{\Ac}{\mathcal{U}}
\begin{document}
	
	\begin{frontmatter}
		
		\title{Statistical inference for Gaussian Whittle--Mat\'ern fields on metric graphs}
		\runtitle{Statistical inference for Gaussian  Whittle--Mat\'ern fields on metric graphs}

		\begin{aug}
			\author[A]{\fnms{David} \snm{Bolin}\ead[label=e1,mark]{david.bolin@kaust.edu.sa}} 
			\and 
			\author[A]{\fnms{Alexandre B.} \snm{Simas}\ead[label=e2]{alexandre.simas@kaust.edu.sa}} 
			\and
			\author[B]{\fnms{Jonas} \snm{Wallin}\ead[label=e3]{jonas.wallin@stat.lu.se}}
			
			\runauthor{David Bolin, Alexandre Simas and Jonas Wallin }

			\address[A]{Statistics Program, Computer, Electrical and Mathematical Sciences and Engineering Division, King Abdullah 
				University of 
				Science and Technology, 
				\printead{e1}, \printead{e2}}
			\address[B]{Department of Statistics,
				Lund University,
				\printead{e3}} 
		\end{aug}
		
		\begin{abstract}
			Whittle--Mat\'ern fields are a recently introduced class of Gaussian processes on metric graphs, which are  
			specified as solutions to a fractional-order stochastic differential equation. Unlike 
			earlier covariance-based approaches 
			for specifying Gaussian fields on metric graphs, the Whittle--Mat\'ern fields are well-defined for any compact 
			metric graph and can provide Gaussian processes with differentiable sample paths. 
			We derive the main statistical properties of the model class, particularly the consistency and asymptotic normality of 
			maximum likelihood estimators of model parameters and the necessary and sufficient conditions for asymptotic 
			optimality properties of linear prediction based on the model with misspecified parameters.

			The covariance function of the Whittle--Mat\'ern fields is generally unavailable in closed form, and 
			they have therefore been challenging to use for statistical inference. However, we show that for specific values of 
			the fractional exponent, when the fields have Markov properties, likelihood-based inference and spatial prediction 
			can be performed exactly and computationally efficiently. This facilitates using the Whittle--Mat\'ern fields
			in  statistical applications involving big datasets without the need for any approximations. The methods are illustrated 
			via an application to modeling of traffic data, where allowing for differentiable processes 
			dramatically improves the results.  
		\end{abstract}
		
		\begin{keyword}[class=MSC]
	\kwd[Primary ]{62M30} 
	\kwd[; secondary ]{} 
	\kwd{62M05} 
	\kwd{60G60}
	\kwd{35R02}
\end{keyword}

\begin{keyword}
	\kwd{Networks,  Gaussian processes, stochastic differential equations, Gaussian Markov random fields, quantum graphs}
\end{keyword}
		
\end{frontmatter}

\section{Introduction and preliminaries}
\subsection{Introduction}
There is a growing interest in the statistical modeling of data on compact metric graphs, such as street or river networks, 
based on Gaussian random fields \citep{okabe2012spatial,baddeley2017stationary,cronie2020, moller2022lgcp, porcu2022}. 
One approach to formulating Gaussian processes on metric graphs is to specify them in terms of a covariance function $\widetilde{\varrho}(d(\cdot,\cdot))$, 
where $d(\cdot,\cdot)$ is a metric on the graph and $\widetilde{\varrho}(\cdot)$ is an isotropic covariance function. 
The difficulty with this approach is to ensure that the  function $r(\cdot,\cdot) = \widetilde{\varrho}(d(\cdot,\cdot))$ is positive semi-definite. 
However, \citet{anderes2020isotropic} demonstrated this is the case for several covariance functions if the graph has Euclidean edges, 
and $d$ is chosen as the so-called resistance metric. For example, a valid choice is the Mat\'ern covariance function:
\begin{equation}\label{eq:matern_cov}
	r(s,s') = \varrho_M(d(s,s')), \quad \varrho_M(h) = \frac{\tau^{-2}}{ 2^{\nu-1}\Gamma(\nu + \nicefrac{n}{2}) (4\pi)^{\nicefrac{n}{2}}\kappa^{2\nu}}(\kappa |h|)^{\nu}K_\nu(\kappa |h|),
\end{equation}
where $n=1$ and the parameters $\tau$, $\kappa>0$ and $0<\nu\leq \nicefrac1{2}$ control the variance, practical correlation range, 
and the sample path regularity, respectively. 
Further, $K_\nu(\cdot)$ is a modified Bessel function of the second kind and $\Gamma(\cdot)$ denotes the gamma function.  			
	
As $\nu\leq \nicefrac1{2}$ is required in \eqref{eq:matern_cov}, this approach cannot create differentiable Gaussian Mat\'ern-type processes on metric graphs, 
even if they have Euclidean edges. 
Because of this, \citet{BSW2022} proposed to instead create Whittle--Mat\'ern Gaussian fields on a compact metric graph $\Gamma$ 
by considering the differential equation 
\begin{equation}\label{eq:Matern_spde}
	(\kappa^2 - \Delta_\Gamma)^{\alpha/2} (\tau u) = \mathcal{W}, \qquad \text{on $\Gamma$},
\end{equation}
where $\alpha = \nu + \nicefrac1{2}$, $\Delta_\Gamma$ is the so-called Kirchhoff--Laplacian, and $\mathcal{W}$ is Gaussian white noise. 
The motivation for considering this particular equation is that when \eqref{eq:Matern_spde} is considered on $\mathbb{R}^n$ 
(with $\alpha = \nu + \nicefrac{n}{2}$ and $\Delta_\Gamma$ is replaced by the standard Laplacian), 
it has Gaussian random fields with the covariance function \eqref{eq:matern_cov}, where $d(\cdot,\cdot)$ is the Euclidean distance on $\mathbb{R}^n$, 
as stationary solutions \citep{whittle63}. 
As for Euclidean domains,  \citet{BSW2022} proved that the parameter $\alpha$ controls sample path regularity of the process
in the metric graph setting. 
In particular, the solution $u$ is a well-defined Gaussian random field if $\alpha>\nicefrac12$, 
which has a modification with almost surely (a.s.) $\gamma$-H\"older continuous sample paths for 
$0<\gamma < \min\{\alpha-\nicefrac12, \nicefrac12\}$, and if $\alpha>\nicefrac32$, the sample paths of the process are a.s.~weakly differentiable. 
The Whittle--Mat\'ern fields thus provide a natural analog to the Gaussian Mat\'ern fields on Euclidean domains. 
However, their statistical properties have so far not been studied, 
and no methods for inference of the processes have been proposed because their covariance function has not been available in closed form. 
We aim to fill this gap. 
	
\begin{figure}
	\centering
	\includegraphics[width=\linewidth]{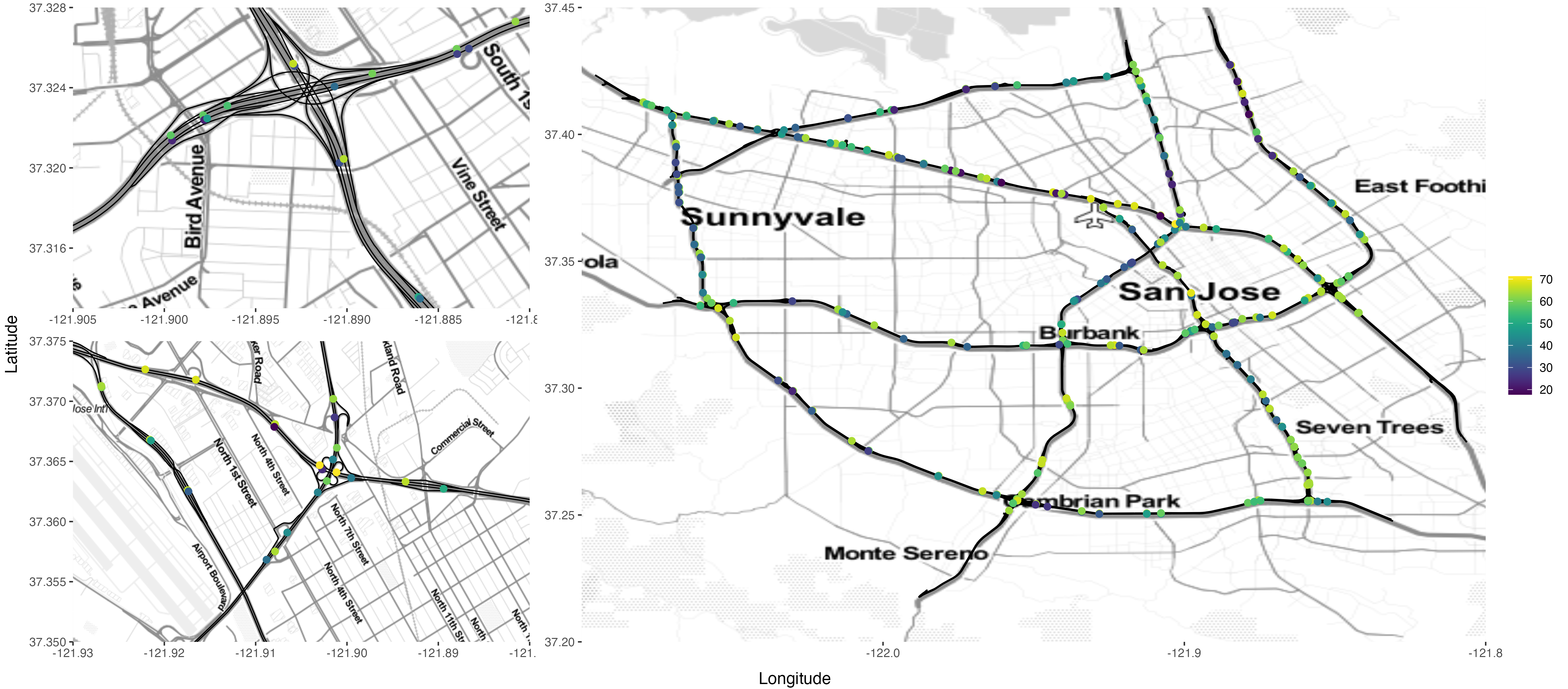}
	\caption{Average speeds observed every Monday at 17:30 during the first half of 2017 on the San Jose highway.
		Left panels: Zoomed-in areas of the panel to the right. The graph edges are shown as black curves. }
	\label{fig:data}
\end{figure}

Specifically, we demonstrate that the Whittle--Mat\'ern fields have statistical properties similar to their Euclidean counterparts.
For a fixed $\alpha$, the parameter $\tau$ (but not $\kappa$) can be estimated consistently under infill asymptotics, 
and the maximum-likelihood estimator for $\tau$ is asymptotically normal. 
We also consider kriging prediction of the fields under misspecified parameters, 
and show that one obtains uniform asymptotic optimality of the optimal linear predictor as long as $\alpha$ is correctly specified. 
Thus, accurately estimating $\kappa$ or $\tau$ is not important  to obtaining good predictions for large datasets. 
Further, an important feature of Gaussian Mat\'ern fields on $\mathbb{R}^d$ with $\alpha\in\mathbb{N}$ is that they are Markov random fields. 
This fact also holds for the Whittle--Mat\'ern fields on metric graphs \citep{BSW_Markov}, 
and we demonstrate that this can be used to evaluate finite-dimensional distributions of the fields, 
perform  likelihood-based inference, and do spatial prediction exactly and computationally efficiently. 
Thus, one can use the fields in applications involving large metric graphs and large datasets. 
	
In particular, we prove that the Whittle--Mat\'ern field (and the derivatives of the process, if they exist) 
evaluated at any finite number of locations on $\Gamma$ is a Gaussian Markov random field (GMRF) with a sparse precision matrix, 
allowing for numerically efficient likelihood-based inference, prediction and interpolation. 
Using GMRF approximations of Gaussian fields for spatial data is common. 
For example, the stochastic partial differential equation (SPDE) approach for Whittle--Mat\'ern field on Euclidean domains \citep{lindgren11, lindgren2022spde} 
results in one such approximation. 
The difference with the proposed method is that the GMRF is not an approximation; therefore,  we can perform exact and computationally efficient inference.
	
As a motivating example of why these fields are useful for data analysis, 
we consider a data set of traffic speed observations on highways in the city of San Jose, California (Figure~\ref{fig:data}), 
studied earlier in \cite{borovitskiy2021matern}. 
Due to the complicated graph structure of the highway network, this is an example of a graph with non-Euclidean edges.  
Further, as observed later, differentiability of the field dramatically improves the model fit, 
revealing the importance of differentiable fields even if the graph has Euclidean edges.
As an illustration of the computational efficiency of the model, 
the computation time for exact log-likelihood evaluation of the model for $n$ observations on the street network is illustrated in Figure~\ref{fig:timings}. 
The green curve corresponds to a covariance-based approach, 
whereas the red and black curves depict two methods we derive that take advantage of sparsity. 
			
\begin{figure}[t]
	\includegraphics[width=0.48\linewidth]{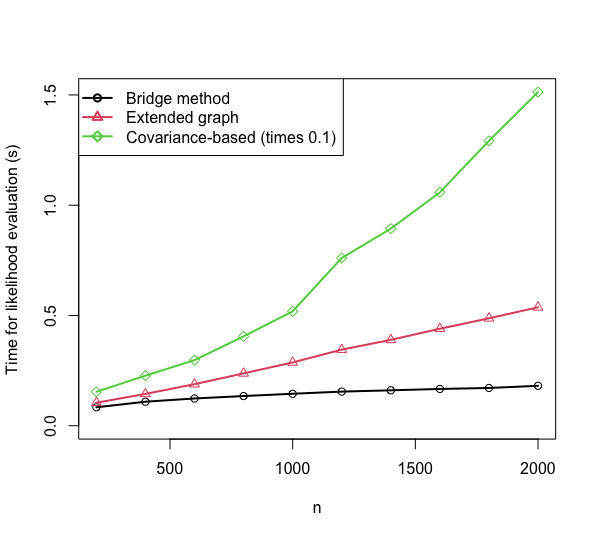}
	\vspace{-0.5cm}
	\caption{Average computation time for evaluating the log-likelihood for $n$ observations using three approaches, for the graph in Figure~\ref{fig:data}. The computation times for the covariance-based approach is multiplied by $0.1$ to more easily observe the differences between the other two approaches. See Section~\ref{sec:simulation} for details.}
	\label{fig:timings}
\end{figure}
		
The paper is organized as follows. Section~\ref{sec:construction} provides the necessary notation, 
defines Whittle--Mat\'ern fields, and summarizes their regularity properties. 
The main statistical properties of the fields are derived in Section~\ref{sec:statistical}. 
Then, Section~\ref{sec:markov} focuses on the case $\alpha\in\mathbb{N}$ and derives two alternative representations of the fields that 
can be used for exact evaluation of finite-dimensional distributions. 
Sections~\ref{sec:inference} and \ref{sec:prediction} derive the exact and computationally efficient methods 
for likelihood-based inference and spatial prediction, respectively.  
Next, Section~\ref{sec:boundary} discusses alternative boundary conditions, 
which can be used to remove boundary effects at vertices of degree 1, 
and Section~\ref{sec:graph_laplacian} compares the Whittle--Mat\'ern fields to the Mat\'ern-like models based on the graph Laplacian, 
which have also recently received much attention \citep{Alonso2021,dunson2020graph,borovitskiy2021matern}. 
Finally, Section~\ref{sec:application} considers the data in Figure~\ref{fig:data} and compares the predictive qualities of 
the Whittle--Mat\'ern fields with those of models based on the isotropic exponential covariance function and the graph Laplacian. 
Notably, the advantage of working with differentiable Gaussian processes on metric graphs is apparent for this application. 
The article concludes with a discussion in Section~\ref{sec:discussion}. 
Technical details and proofs are provided in the appendices, 
and all models and computational methods introduced in this work are implemented in the \texttt{R} software package \texttt{MetricGraph} \citep{bsw_MetricGraph_cran} available on CRAN.

\section{Whittle--Mat\'ern fields on compact metric graphs}\label{sec:construction}
This section reviews the construction and main properties of the Whittle--Mat\'ern fields on metric graphs as introduced in \citet{BSW2022}. 
However, we begin by introducing the notation for the article and critical facts about compact metric graphs.    
	
\subsection{Preliminaries and notation}	
Throughout the article, $(\Omega, \mathcal{F},\mathbb{P})$ denotes a complete probability space, 
and $\pE(Z) = \int_\Omega Z(\omega) d\mathbb{P}(\omega)$ denotes the expectation of  a real-valued random variable $Z$. 
The Hilbert space of all (equivalence classes) of real-valued random variables, $Z$, with a finite second moment, 
$\pE(Z^2) < \infty$, is denoted by $L_2(\Omega)$. 
We let $\Gamma$ denote a compact metric graph consisting of a set of finitely many vertices $\mathcal{V}=\{v_i\}$ and a 
finite set $\mathcal{E}=\{e_j\}$ of edges connecting the vertices. 
For practical purposes, we assume that $d\in\mathbb{N}$ exists such that $\mathcal{V}\subset \mathbb{R}^d$, 
and that each edge $e$ is defined by a rectifiable curve $\gamma:[0,\ell_e]\to \mathbb{R}^d$, 
where $\gamma$ is parameterized by the arc-length, so that $0<\ell_e <\infty$ is the length of the curve. 
The vertices at the start and end of the curve are denoted by $\underline{e}$ and $\bar{e}$, respectively, where $\underline{e},\bar{e}\in\mathcal{V}$.
We assume that the graph is connected so that a path exists between all vertices and write $u\sim v$ or $(u,v)\in\mathcal{E}$ 
if the vertices $u$ and $v$ are connected by an edge. 
The graph is equipped with the geodesic metric, 
denoted by $d(\cdot,\cdot)$  from now on, 
which for any two points in $\Gamma$ is defined as the length of the shortest path in $\Gamma$ connecting the two.
This metric is well-defined because $\Gamma$ is connected.
For every $v\in\mathcal{V}$, we let $\mathcal{E}_v$ denote the set of edges incident to the vertex $v$, 
and define the degree of $v$ by $\deg(v) = \#\mathcal{E}_v$. 
A location $s\in \Gamma$ is a position on an edge and can thus be represented as a pair $(e,t)$, 
where $t\in[0,\ell_e]$ and $e\in\mathcal{E}$. 
For a function $f$ on $\Gamma$, we let $f_e = f|_e$ denote the restriction of the function to the edge, 
and write $f_e(t)$ for $t\in[0,\ell_e]$, or $f_e(s)$, to denote the value of $f(s)$ with $s=(e,t)$.
	
We let $L_2(e)$ denote the space of square-integrable functions on the edge $e\in\mathcal{E}$, 
which is equipped with the Lebesgue measure. 
The space $L_2(\Gamma) = \bigoplus_{e \in \mathcal{E}} L_2(e)$ is defined as the direct sum of the $L_2(e)$ spaces 
and is equipped with the norm $\|f\|_{L_2(\Gamma)}^2 =  \sum_{e\in\mathcal{E}}\|f_e\|_{L_2(e)}^2$.
That is, $f = \{f_e\}_{e\in \mathcal{E}} \in L_2(\Gamma)$ if $f_e \in L_2(e)$ for each $e\in \mathcal{E}$.
The space of continuous functions on $\Gamma$ is denoted by $C(\Gamma) = \{f\in L_2(\Gamma): f\hbox{ is continuous}\}$, 
which is equipped with the supremum norm $\|\phi\|_{C(\Gamma)} = \sup\{|\phi(x)|: x\in\Gamma\}$. 	
For any $k\in\mathbb{N}$, we introduce the decoupled Sobolev space $\widetilde{H}^k(\Gamma) = \bigoplus_{e\in \mathcal{E}} H^k(e)$, 
endowed with the norm $\|f\|_{\widetilde{H}^k(\Gamma)}^2 = \sum_{e\in\mathcal{E}} \|f\|_{H^k(e)}^2,$ 
where $H^k(e)$ is the Sobolev space of order $k$ on $e$ \citep[see ][Appendix A for details]{BSW2022}.
We define the Sobolev space $H^1(\Gamma) = C(\Gamma)\cap \widetilde{H}^1(\Gamma)$ as the space of all continuous functions on $\Gamma$ such that 
$\|f\|_{H^1(\Gamma)} = \|f\|_{\widetilde{H}^1(\Gamma)} < \infty$.
It follows from standard Sobolev space theory that, if $f\in\widetilde{H}^1(\Gamma)$, then, $f|_e$ is continuous for each edge $e\in\mathcal{E}$. 
However, $f$ might have discontinuities at the vertices. 
Therefore, the continuity assumption of $H^1(\Gamma)$ guarantees that $f$ is uniquely defined at the vertices. 
For $u\in H^1(\Gamma)$, the weak derivative $u'\in L_2(\Gamma)$ is defined as the function whose restriction to any edge $e$ coincides 
almost everywhere (a.e.) with the weak derivative of $u|_e$, 
which is well-defined because ${u|_e\in H^1(e)}$. 

\subsection{Model construction and properties}
To define the Whittle--Mat\'ern fields, we introduce 
$K(\Gamma) = \{f\in\widetilde{H}^2(\Gamma): \forall v\in\mathcal{V},\, \sum_{e\in\mathcal{E}_v} \partial_e f(v) = 0\}$, 
where $\partial_eu(v)$ is the directional derivative of $u$ on $e$ in the directional away from $v$ (i.e., 
if $e=[0,\ell_e]$, then ${\partial_e u(0) = u_e'(0)}$, and ${\partial_eu(\ell_e) = -u_e'(\ell_e)}$).
We define 
${\Delta_\Gamma : \mathcal{D}(\Delta_\Gamma) = C(\Gamma) \cap K(\Gamma) \subset L_2(\Gamma) \rightarrow L_2(\Gamma)}$
by ${\Delta_\Gamma := \oplus_{e\in \mathcal{E}}\Delta_e}$, where $\Delta_e u_e(x) = u_e''(x)$ with $u_e\in H^2(e)$. 
This operator acts as the second derivative on the edges and enforces the Kirchhoff vertex conditions on $f\in \widetilde{H}^2(\Gamma)$:
$\forall v\in\mathcal{V},  f \mbox{ is continuous at $v$ and} \sum_{e \in\mathcal{E}_v} \partial_e f(v) = 0$. 
These vertex conditions are the natural extension of Neumann boundary conditions to the graph setting and 
coincide with homogeneous Neumann boundary conditions for vertices of degree 1. 
Therefore they are often called Kirchhoff--Neumann, Neumann, or even ``standard'' boundary conditions in the physics literature \citep{Berkolaiko2013}. 
	
Next, we let $\kappa^2>0$ and define the operator 
$L : \mathcal{D}(L) = \mathcal{D}(\Delta_\Gamma)\subset L_2(\Gamma) \rightarrow L_2(\Gamma)$ 
as the shifted Kirchhoff-Laplacian $L = \kappa^2 - \Delta_\Gamma.$
According to \citet[][Theorem 1.4.4]{Berkolaiko2013}, $\Delta_\Gamma$ is self-adjoint, which implies that $L$ is also self-adjoint. 
Further, $L$ is densely defined, strictly positive-definite, and has a discrete spectrum where each eigenvalue has finite multiplicity \citep{BSW2022}.
We let $\{\hat{\lambda}_i\}_{i\in\mathbb{N}}$ denote the (nonnegative) eigenvalues of $-\Delta_{\Gamma}$, sorted in nondecreasing order, 
and let $\{\eig_j\}_{j\in\mathbb{N}}$ denote the corresponding eigenfunctions. 
Then, $L$ clearly diagonalizes with respect to the eigenfunctions of $\Delta_{\Gamma}$, 
and has eigenvalues $\{ \lambda_i\}_{i\in\mathbb{N}} = \{\kappa^2 + \hat{\lambda}_i\}_{i\in\mathbb{N}}$. 
By Weyl's law \citep{Odzak2019Weyl}, we have that $\hat{\lambda}_i \sim i^2$ as $i\rightarrow\infty$. 
Hence, constants $c_\lambda$ and $C_\lambda$ exist such that ${0<c_\lambda< C_\lambda<\infty}$ and
\begin{equation}\label{eq:weyl}
	\forall i\in\mathbb{N}, \quad c_\lambda i^2 \leq \lambda_i \leq C_\lambda i^2.
\end{equation}
	
For $\beta>0$, we introduce the fractional operator $L^{\beta}$ in the spectral sense. 
Start by defining the space $\mathcal{D}(L^\beta) = \dot{H}^{2\beta} := \{ \phi \in L_2(\Gamma) : \|\phi\|_{2\beta} < \infty\}$,
where, for $\phi \in L_2(\Gamma)$, we have $\|\phi\|_{2\beta} := \left(\sum_{j\in\mathbb{N}} \lambda_j^{2\beta} (\phi,\varphi_j)_{L_2(\Gamma)}^2\right)^{\nicefrac{1}{2}}$.
We let the action of $L^\beta : \mathcal{D}(L^\beta) \rightarrow L_2(\Gamma)$ be defined by 
$
L^\beta \phi := \sum_{j\in\mathbb{N}} \lambda_j^\beta (\phi,\eig_j)_{L_2(\Gamma)}\eig_j, \phi\in \mathcal{D}(L^\beta).
$
For	$\phi,\psi\in\dot{H}^{2\beta}$, we define the inner product $(\phi,\psi)_{2\beta} := (L^\beta\phi, L^\beta \psi)_{L_2(\Gamma)}$, and note that $\|\phi\|_{2\beta} = \|L^\beta \phi\|_{L_2(\Gamma)}$. Observe that $(\dot{H}^{2\beta}, (\cdot,\cdot)_{2\beta})$ is a Hilbert space. 
	
We define the Whittle--Mat\'ern fields through the fractional-order equation \eqref{eq:Matern_spde}, that is, 
$L^{\nicefrac{\alpha}{2}} (\tau u) = \mathcal{W}$, where $\mathcal{W}$ denotes Gaussian white noise on $L_2(\Gamma)$, 
which can be represented as a family of centered Gaussian variables $\{\mathcal{W}(h) : h\in L_2(\Gamma)\}$ satisfying ${\forall h,g\in L_2(\Gamma)}$, $\pE[\mathcal{W}(h)\mathcal{W}(g)] = (h,g)_{L_2(\Gamma)}$.
Given that $\alpha>1/2$, \eqref{eq:Matern_spde} has a unique solution $u\in L_2(\Gamma)$ $\mathbb{P}$-a.s. \citep[][Proposition 1]{BSW2022}, which is a centered Gaussian random field satisfying
\begin{equation*}
	\forall \psi\in L_2(\Gamma),\quad (u,\psi)_{L_2(\Gamma)} = \mathcal{W}(\tau^{-1}L^{-\alpha/2}\psi) \quad \mathbb{P}\text{-a.s.}.
\end{equation*}
The following theorem summarizes the main properties of the Whittle--Mat\'ern fields needed later (see Appendix \ref{app:proofs_edge} for auxiliary definitions).
	
\begin{Theorem}[\citet{BSW2022}]\label{thm:regularity}
Suppose that $\alpha>\nicefrac12$, let $u$ be the solution of \eqref{eq:Matern_spde}, 
 let $\lceil a \rceil$ denote the smallest integer larger than or equal to $a\in\mathbb{R}$, 
and let $\lfloor a \rfloor$ denote the largest integer less than or equal to $a\in\mathbb{R}$. 
\begin{enumerate}
	\item $u$ has a modification with continuous sample paths. Further, the covariance function of $u$,
	${\varrho(s,s') = \pE(u(s)u(s'))}$ for $(s,s')\in \Gamma\times \Gamma$, is continuous (i.e., $u$ is $L_2(\Omega)$-continuous).
	\item Let $\alpha \geq 2$. Then, for every edge $e\in\mathcal{E}$, 
	the weak derivatives in the $L_2(\Omega)$ sense of $u_e$ up to order $\lfloor \alpha\rfloor-1$ exist and are weakly continuous on $e$ in the $L_2(\Omega)$ sense.
	\item Let $\alpha\geq 2$, fix any edge $e\in\mathcal{E}$, and let $u_e^{(k)}(\cdot)$, $k=0,\ldots,\floor{\alpha}-1$, 
	be the $k$th order directional weak derivative of $u$. 
	Then, for any $t_1,t_2\in e$, and $j,k \in \{0,\ldots, \lfloor\alpha\rfloor-1\}$, 
	$$\pE\left(u_e^{(j)}(t_1) u_e^{(k)}(t_2)\right) = \frac{\partial^{j+k}}{\partial t_1^j \partial^kt_2}\pE(u_e(t_1) u_e(t_2)).$$
	\item The derivatives of $u$ in the weak $L_2(\Omega)$ sense agree with the weak derivatives of $u$ in the Sobolev sense, 
	whenever both of them exist simultaneously.
	\item Let $\alpha\geq 2$ and let $k \in \{0,\ldots, \ceil{\alpha - \nicefrac{1}{2}}  -1\}$. 
	If $k$ is odd, $\sum_{e\in\mathcal{E}_v} \partial_{e}^{k} u(v) = 0$, for each $v\in\mathcal{V}$. If $k$ is even, 
	for each $v\in\mathcal{V}$ and each pair $e,e'\in\mathcal{E}_v$, $\partial_e^{k} u(v) = \partial_{e'}^{k}u(v)$.\label{thm:regularity:item:Kirchhoff}
	\item Suppose that $\Gamma$ has a vertex $v_i$ of degree 2, connected to edges $e_k$ and $e_\ell$, 
	and define $\widetilde{\Gamma}$ as the graph where $v_i$ is removed and $e_k,e_\ell$ are merged to a new edge $\widetilde{e}_k$. 
	Let $u$ and $\widetilde{u}$ be the solutions to \eqref{eq:Matern_spde} on $\Gamma$ and $\widetilde{\Gamma}$, respectively. 
	Then $u$ and $\widetilde{u}$ have the same covariance function.\label{prop:join}
\end{enumerate}
\end{Theorem}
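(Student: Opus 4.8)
The plan is to obtain all six items from the Karhunen--Lo\`eve expansion of $u$. Expanding white noise as $\mathcal{W}=\sum_{m\in\mathbb{N}}\xi_m\varphi_m$ with $\xi_m=\mathcal{W}(\varphi_m)$ i.i.d.\ standard Gaussian and testing the defining identity for $u$ against $\varphi_m$ gives $(u,\varphi_m)_{L_2(\Gamma)}=\tau^{-1}\lambda_m^{-\alpha/2}\xi_m$, hence $u=\tau^{-1}\sum_m\lambda_m^{-\alpha/2}\xi_m\varphi_m$ with convergence in $L_2(\Omega;L_2(\Gamma))$ by \eqref{eq:weyl}, and, at least formally, $\varrho(s,s')=\tau^{-2}\sum_m\lambda_m^{-\alpha}\varphi_m(s)\varphi_m(s')$. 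Two elementary facts about the eigenfunctions make every series below tractable. First, $\|\varphi_m\|_{C(\Gamma)}\le C_\infty$ uniformly in $m$: on any edge $e$ the restriction $\varphi_{m,e}$ solves $-\varphi_{m,e}''=\hat\lambda_m\varphi_{m,e}$, so the energy $|\varphi_{m,e}(t)|^2+\hat\lambda_m^{-1}|\varphi_{m,e}'(t)|^2$ is constant along $e$ and, as soon as $\hat\lambda_m$ is large enough for a full oscillation to fit on $e$, is bounded by a fixed multiple of $\ell_e^{-1}\|\varphi_{m,e}\|_{L_2(e)}^2\le\ell_e^{-1}$ (the finitely many small eigenvalues being treated individually). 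Second, differentiating the ODE gives $\varphi_{m,e}^{(2p)}=(-\hat\lambda_m)^p\varphi_{m,e}$ and $\varphi_{m,e}^{(2p+1)}=(-\hat\lambda_m)^p\varphi_{m,e}'$, so the same energy identity yields $\|\varphi_{m,e}^{(k)}\|_{C(e)}\lesssim\lambda_m^{k/2}$ for every integer $k\ge0$ and every edge.

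For (i), I would estimate the mean-square increment. Combining $|\varphi_m(s)-\varphi_m(s')|\le 2C_\infty$ with the bound $\lesssim\lambda_m^{1/2}d(s,s')$ valid when $s,s'$ lie on a common edge (for general $s,s'$ with $d(s,s')$ small the connecting geodesic meets at most two edges, and larger separations are covered by the trivial bound together with compactness) gives $\pE|u(s)-u(s')|^2\lesssim\sum_m\lambda_m^{-\alpha}\min\{1,\lambda_m\,d(s,s')^2\}$; cutting the sum at $m\asymp d(s,s')^{-1}$ and using \eqref{eq:weyl} and $\alpha>\nicefrac12$ bounds this by $C\,d(s,s')^{2\alpha-1}$. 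Gaussianity upgrades this to $\pE|u(s)-u(s')|^p\le C_p\,d(s,s')^{p(2\alpha-1)/2}$ for all $p$, and taking $p$ large enough that $p(2\alpha-1)/2>1$, the Kolmogorov--Chentsov theorem on the compact metric space $(\Gamma,d)$ supplies a (H\"older-)continuous modification. The same Cauchy--Schwarz estimate, with $\sup_s\pE u(s)^2\le\tau^{-2}C_\infty^2\sum_m\lambda_m^{-\alpha}<\infty$, shows $\varrho$ is jointly continuous, i.e.\ $u$ is $L_2(\Omega)$-continuous, and incidentally makes the spectral series for $\varrho$ converge pointwise to $\pE(u(s)u(s'))$.

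For (ii)--(iv), the candidate $k$-th directional derivative of $u$ on an edge $e$ is the termwise differentiated series $u_e^{(k)}(t):=\tau^{-1}\sum_m\lambda_m^{-\alpha/2}\xi_m\varphi_{m,e}^{(k)}(t)$. Since $\pE|u_e^{(k)}(t)|^2\lesssim\sum_m\lambda_m^{k-\alpha}<\infty$ for $k\le\lfloor\alpha\rfloor-1$ (as then $k<\alpha-\nicefrac12$), this series converges in $L_2(\Omega)$ uniformly in $t$; a dominated-convergence argument then shows that for such $k$ the difference quotients of $u_e^{(k-1)}$ converge to $u_e^{(k)}$ in $L_2(\Omega)$, while interpolating $|\varphi_{m,e}^{(k)}(t)-\varphi_{m,e}^{(k)}(t')|$ between $\lesssim\lambda_m^{k/2}$ and $\lesssim\lambda_m^{(k+1)/2}|t-t'|$ gives mean-square (hence weak) continuity of $t\mapsto u_e^{(k)}(t)$; this is (ii). For (iii), one differentiates $\varrho_e(t_1,t_2)=\tau^{-2}\sum_m\lambda_m^{-\alpha}\varphi_{m,e}(t_1)\varphi_{m,e}(t_2)$ termwise, $j$ times in $t_1$ and $k$ times in $t_2$: at each stage the new series admits a $t$-independent summable majorant (the extreme case $\alpha\in\mathbb{N}$, $j=k=\alpha-1$ being $\sum_m\lambda_m^{-1}\asymp\sum_m m^{-2}$), so the interchange is legitimate and the resulting sum equals $\pE(u_e^{(j)}(t_1)u_e^{(k)}(t_2))$. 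For (iv), the process $u_e^{(k)}$ just constructed satisfies the integration-by-parts identity defining the Sobolev weak derivative of the $L_2(\Omega)$-valued map $t\mapsto u_e(t)$ --- this is Fubini plus the classical smoothness of each $\varphi_{m,e}$ --- so by uniqueness of weak derivatives it coincides with the Sobolev one whenever the latter exists.

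Claim (v) is obtained by pushing the eigenfunctions' vertex relations through the series. As $\varphi_m\in\mathcal{D}(\Delta_\Gamma)=C(\Gamma)\cap K(\Gamma)$, it is single-valued at every $v\in\mathcal{V}$ and $\sum_{e\in\mathcal{E}_v}\partial_e\varphi_m(v)=0$; feeding this into $\varphi_{m,e}^{(2p)}=(-\hat\lambda_m)^p\varphi_{m,e}$ and $\varphi_{m,e}^{(2p+1)}=(-\hat\lambda_m)^p\varphi_{m,e}'$ shows that the even-order directional derivatives of $\varphi_m$ agree across the edges at each $v$ while the odd-order ones satisfy $\sum_{e\in\mathcal{E}_v}\partial_e^k\varphi_m(v)=0$; since $\sum_m\lambda_m^{k-\alpha}<\infty$ for $k\le\lceil\alpha-\nicefrac12\rceil-1$, the series $\tau^{-1}\sum_m\lambda_m^{-\alpha/2}\xi_m\partial_e^k\varphi_m(v)$ converges in $L_2(\Omega)$ to $\partial_e^k u(v)$ and the two families of linear relations pass to the limit. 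For (vi), reparameterising $e_k\cup e_\ell$ as one edge $\widetilde e_k$ gives a canonical isometry $\iota\colon L_2(\Gamma)\to L_2(\widetilde\Gamma)$, and the Kirchhoff condition at the degree-$2$ vertex $v_i$ --- continuity together with $\partial_{e_k}f(v_i)+\partial_{e_\ell}f(v_i)=0$ --- is exactly the $C^1$-matching across the junction which, combined with $\widetilde{H}^2$-regularity on the two pieces, characterises $H^2(\widetilde e_k)$; hence $\iota$ maps $\mathcal{D}(\Delta_\Gamma)$ onto $\mathcal{D}(\Delta_{\widetilde\Gamma})$ and intertwines $\Delta_\Gamma$ with $\Delta_{\widetilde\Gamma}$, so by the spectral calculus $\iota(\kappa^2-\Delta_\Gamma)^{-\alpha/2}=(\kappa^2-\Delta_{\widetilde\Gamma})^{-\alpha/2}\iota$, and since $\iota$ carries white noise to white noise, $\iota u$ and $\widetilde u$ have the same law; as $\iota$ is the identity on points away from $v_i$, the covariances agree there and, by the continuity from (i), on all of $\widetilde\Gamma\times\widetilde\Gamma$. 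I expect the main obstacle to be (ii)--(iv): fixing the precise meaning of the ``weak $L_2(\Omega)$'' and Sobolev derivatives, securing the \emph{uniform-in-$t$} convergence needed for weak continuity rather than mere pointwise existence, and controlling the borderline integer-$\alpha$ exponent in the termwise differentiation of $\varrho_e$, where one must lean on the $t$-independent summable majorant from \eqref{eq:weyl}; the uniform eigenfunction bounds underlying everything, though elementary, still require the full edgewise energy argument plus the separate treatment of low eigenvalues.
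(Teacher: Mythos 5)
This theorem is not proved in the paper at all: it is stated as a quotation of results from \citet{BSW2022} (the label in the theorem header is the citation), so the paper's ``proof'' is simply a pointer to Propositions 1--6 and the regularity theory of that reference. Your proposal is therefore necessarily a different route, namely a self-contained spectral argument: Karhunen--Lo\`eve expansion $u=\tau^{-1}\sum_m\lambda_m^{-\alpha/2}\xi_m\varphi_m$, the edgewise energy identity giving $\|\varphi_m\|_{C(\Gamma)}\le C_\infty$ and $\|\varphi_{m,e}^{(k)}\|_{C(e)}\lesssim\lambda_m^{k/2}$, Weyl asymptotics \eqref{eq:weyl} to control all the resulting series, Kolmogorov--Chentsov for (i), termwise differentiation for (ii)--(iv), passage of the eigenfunctions' vertex relations to the limit for (v), and the unitary intertwining of $\Delta_\Gamma$ with $\Delta_{\widetilde\Gamma}$ for (vi). I checked the key steps and they are sound: the summability conditions $k<\alpha-\nicefrac12$ do cover the ranges $k\le\lfloor\alpha\rfloor-1$ in (ii)--(iii) and $k\le\lceil\alpha-\nicefrac12\rceil-1$ in (v); the $C^1$-matching characterisation of $H^2(\widetilde e_k)$ in (vi) is correct; and the energy argument for the uniform eigenfunction bound (with the finitely many low modes handled separately) is valid. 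What this buys over the citation is a single transparent mechanism for all six items; what it costs is that every estimate has to be re-derived from the eigenfunction bounds, which \citet{BSW2022} packages once and for all.

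One quantitative slip, immaterial to the conclusion: the claimed increment bound $\pE|u(s)-u(s')|^2\le C\,d(s,s')^{2\alpha-1}$ is false for $\alpha>\nicefrac32$, where the low-frequency part of the split sum contributes order $d(s,s')^2$, which dominates $d(s,s')^{2\alpha-1}$; the correct statement is $\pE|u(s)-u(s')|^2\le C\,d(s,s')^{2\min\{\alpha-\nicefrac12,\,1\}}$. Since any positive H\"older exponent suffices for Kolmogorov--Chentsov after raising to a high moment, and since this corrected exponent is exactly what reproduces the H\"older regularity $\gamma<\min\{\alpha-\nicefrac12,\nicefrac12\}$ quoted in the introduction, the rest of the argument goes through unchanged.
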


Further, let $\varrho(\cdot,\cdot)$ be the covariance function of $u$, where $u$ is given by the solution to \eqref{eq:Matern_spde}. 
Regarding $\varrho(\cdot,\cdot)$, in \cite{BSW2022} it was only known that $\varrho$ is positive semi-definite, as it is a covariance function. 
However, to simplify the usage of the models in applications, and to derive the statistical properties in the next section, we need $\varrho(\cdot,\cdot)$ to be strictly positive-definite. This is indeed the case and is a new result, stated in Proposition \ref{prp:strposdef}, whose proof is provided in Appendix \ref{app:strposdef}. 
\begin{Proposition}\label{prp:strposdef}
	The function $\varrho:\Gamma\times\Gamma\to\mathbb{R}$ is strictly positive-definite.
\end{Proposition}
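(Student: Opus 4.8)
The plan is to establish strict positive-definiteness by showing that if $\sum_{i=1}^n c_i c_j \varrho(s_i,s_j) = 0$ for distinct points $s_1,\ldots,s_n\in\Gamma$ and real coefficients $c_1,\ldots,c_n$, then all $c_i=0$. Since $\varrho(s,s') = \pE(u(s)u(s'))$ and $u$ has a continuous modification by Theorem~\ref{thm:regularity}(1), the quadratic form equals $\pE\bigl[(\sum_i c_i u(s_i))^2\bigr] = \|\sum_i c_i u(s_i)\|_{L_2(\Omega)}^2$. Hence the hypothesis forces $\sum_i c_i u(s_i) = 0$ in $L_2(\Omega)$, i.e.\ $\pP$-a.s. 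I would then exploit the white-noise representation: for $\psi\in L_2(\Gamma)$, $(u,\psi)_{L_2(\Gamma)} = \mathcal{W}(\tau^{-1}L^{-\alpha/2}\psi)$, so pointwise evaluations $u(s_i)$ must be interpreted as limits of such functionals. Writing $u(s) = \tau^{-1}\sum_{j} \lambda_j^{-\alpha/2} \eig_j(s)\, \xi_j$ with $\xi_j = \mathcal{W}(\eig_j)$ i.i.d.\ standard Gaussian (this series converging in $L_2(\Omega)$ uniformly in $s$ because $\alpha>\nicefrac12$, which is exactly what gives the continuous modification), the identity $\sum_i c_i u(s_i)=0$ a.s.\ becomes $\sum_j \lambda_j^{-\alpha/2}\bigl(\sum_i c_i\eig_j(s_i)\bigr)\xi_j = 0$ a.s. By independence of the $\xi_j$, each coefficient must vanish: $\sum_i c_i \eig_j(s_i) = 0$ for every $j\in\bbN$.

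The remaining task is purely deterministic: show that if a finite linear combination of point evaluations annihilates every eigenfunction $\eig_j$ of the Kirchhoff--Laplacian, then the combination is trivial. Equivalently, the functional $\mu = \sum_i c_i \delta_{s_i}$ (a finite signed measure on $\Gamma$) is orthogonal to the closed span of $\{\eig_j\}$ — but $\{\eig_j\}$ is an orthonormal basis of $L_2(\Gamma)$, so any $L_2$ function orthogonal to all of them is zero. The subtlety is that $\mu$ is not an $L_2$ function, so I would instead argue via the heat semigroup: consider $f_t(s) = \sum_i c_i p_t(s,s_i)$ where $p_t$ is the heat kernel of $-\Delta_\Gamma$ (which has the eigenexpansion $p_t(s,s') = \sum_j e^{-\hat\lambda_j t}\eig_j(s)\eig_j(s')$). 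Then for each $j$, $(f_t,\eig_j)_{L_2(\Gamma)} = e^{-\hat\lambda_j t}\sum_i c_i\eig_j(s_i) = 0$, so $f_t\equiv 0$ in $L_2(\Gamma)$ for every $t>0$. But $p_t(\cdot,\cdot)$ is continuous and strictly positive on $\Gamma\times\Gamma$ for $t>0$ (the heat kernel of a connected compact quantum graph with Kirchhoff conditions is strictly positive — this follows from irreducibility/the strong maximum principle, or from the Feynman--Kac representation via Brownian motion on the graph being irreducible), so $f_t(s) = \sum_i c_i p_t(s,s_i)$ is a continuous function; letting $t\downarrow 0$ and localizing near a point $s_k$ not equal to the others, the term $c_k p_t(s_k,s_k)\to +\infty$ while the cross terms $p_t(s_k,s_i)$ stay bounded, forcing $c_k=0$; iterating gives all $c_i=0$.

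I expect the \textbf{main obstacle} to be the rigorous justification of the deterministic step — specifically, establishing that the heat kernel of the Kirchhoff--Laplacian on a connected compact metric graph is jointly continuous and strictly positive on $\Gamma\times\Gamma$ for each $t>0$, and then making the blow-up argument as $t\downarrow0$ precise (one needs the on-diagonal behavior $p_t(s,s)\sim (4\pi t)^{-1/2}$ away from vertices, or at least $p_t(s_k,s_k)\to\infty$, together with boundedness of $p_t(s_k,s_i)$ for $i\neq k$, which follows from off-diagonal Gaussian-type upper bounds $p_t(s,s')\le C t^{-1/2}e^{-d(s,s')^2/(Ct)}$ valid on metric graphs). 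An alternative route avoiding heat-kernel estimates would be to use the resolvent/Green's function $G_\kappa(s,s') = \sum_j \lambda_j^{-1}\eig_j(s)\eig_j(s')$, which is exactly the covariance of the field when $\alpha=2$ and is known in closed form on each edge as a piecewise-exponential function; strict positive-definiteness in that special case plus the eigenfunction argument above would then transfer to general $\alpha>\nicefrac12$. Either way, the probabilistic reduction is immediate and the whole difficulty is concentrated in showing that finitely many distinct Dirac masses cannot be simultaneously annihilated by all Laplacian eigenfunctions, which is a statement about unique continuation / completeness that should be attacked through positivity of a smoothing kernel.
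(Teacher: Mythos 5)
Your argument is correct in outline but follows a genuinely different route from the paper. The paper never touches the Karhunen--Lo\`eve expansion or the heat kernel: it identifies the reproducing kernel Hilbert space of $\varrho$ with the Cameron--Martin space $\dot{H}^\alpha$, builds an explicit subalgebra $\mathcal{A}(\Gamma)\subset\bigcap_{k}\dot{H}^k\subset\dot{H}^\alpha$ that contains the constants and separates points, invokes Stone--Weierstrass to get density of $\dot{H}^\alpha$ in $C(\Gamma)$, and concludes via the equivalence between $c$-universality and integral strict positive-definiteness. That route yields the stronger statement that $\varrho$ is \emph{integrally} strictly positive-definite (against all finite signed Radon measures), at the price of the somewhat delicate construction of functions in $\bigcap_k\dot{H}^k$ supported near a single vertex. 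Your reduction --- the quadratic form is $\pE[(\sum_i c_i u(s_i))^2]$, the KL series gives $\sum_i c_i\varphi_j(s_i)=0$ for all $j$, and then no nontrivial finite combination of distinct Dirac masses is annihilated by every eigenfunction --- is sound and arguably more probabilistically transparent, and your observation that the deterministic step is independent of $\alpha$ (so strict positive-definiteness for one $\alpha$ transfers to all) is a nice structural point the paper does not exploit. The cost is that all the work is concentrated in the heat-kernel facts you flag: joint continuity of $p_t$ (which follows from uniform boundedness of the $\varphi_j$ together with Weyl's law, not from positivity), the on-diagonal blow-up $p_t(s,s)\gtrsim t^{-1/2}$, and the boundedness of $p_t(s_k,s_i)$ for $s_k\neq s_i$ as $t\downarrow 0$. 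These are true for the Kirchhoff Laplacian on a compact connected metric graph and are citable (e.g.\ from explicit path-sum representations of quantum-graph heat kernels or Gaussian two-sided bounds for Dirichlet forms), but they would have to be imported or proved, so your proof is not self-contained as written; with those references supplied, it is a complete and valid alternative to the paper's argument.
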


\begin{Remark}\label{rem:strposdef}
	Observe that Proposition \ref{prp:strposdef} is not trivial. Indeed, we have trivially that for every function $f\in L_2(\Gamma)$, $f\neq 0$,
	\begin{equation}\label{eq:covoppos}
		\sum_{e,e'\in\mathcal{E}} \int_e \int_{e'} \varrho(x,y) f(x)f(y) \, dxdy 
		= \|\tau^{-1}L^{-\alpha/2} f\|_{L_2(\Gamma)}^2 > 0,
	\end{equation}
	where the equality follows from \citet[Lemma 3]{BSW2022} and the inequality comes from the fact that $L^{-\alpha/2}$ is injective.
	This condition, however, does not imply that $\varrho(\cdot,\cdot)$ is strictly positive-definite. 
	For example, consider the solution to ${(\kappa^2 - \Delta_D)^{\nicefrac{1}{2}} (\tau u_D) = \mathcal{W}}$ on the interval $[a,b]$, $a<b$, where $\Delta_D$ is the Dirichlet Laplacian. Let $\varrho_D(\cdot,\cdot)$ be the covariance function of $u_D$. In this case, by the same reasons, $\varrho_D(\cdot,\cdot)$ satisfies \eqref{eq:covoppos} (considering the metric graph as a single interval). Nevertheless, $\varrho_D(\cdot,\cdot)$ is not strictly positive-definite as ${\varrho_D(a,\cdot) = \varrho_D(\cdot,b) = 0}$.
\end{Remark}

\section{Statistical properties}\label{sec:statistical}
The goal in this section is to derive some of the most vital statistical properties of the  Whittle--Mat\'ern fields. 
Let $u$ be the solution to \eqref{eq:Matern_spde} on $\Gamma$. Then, $u$ has covariance operator $\mathcal{C} = \tau^{-2}L^{-\alpha}$ satisfying $(\mathcal{C}\phi,\psi)_{L_2(\Gamma)} = \pE[(u,\phi)_{L_2(\Gamma)}(u,\psi)_{L_2(\Gamma)}]$ for all $\phi,\psi \in L_2(\Gamma)$.
Let $\mu(\cdot; \kappa,\tau,\alpha) = \pN(\cdot;0,\tau^{-2}L^{-\alpha})$ denote the Gaussian measure corresponding to $u$ on $L_2(\Gamma)$,
with zero mean and covariance operator $\tau^{-2}L^{-\alpha}$. That is, for every Borel set ${B\in \mathcal{B}(L_2(\Gamma))}$, we have 
$\mu(B; \kappa,\tau,\alpha) = \mathbb{P}(\{\omega\in\Omega : u(\cdot, w)\in B\}).$
In the following subsections, we investigate the consistency of maximum-likelihood parameter estimation of Whittle--Mat\'ern fields, 
and the asymptotic optimality of kriging prediction based on misspecified model parameters. 
		
First, we introduce some additional notation. 
For a Gaussian random field $u$ on $\Gamma$, such as a Whittle--Mat\'ern field, and $S\subset \Gamma$, 
$U(S)$ denotes the vector space of all linear combinations of $u(s)$, $s\in S$ (i.e., elements of the form 
${\gamma_1 u(s_1) + \ldots + \gamma_N u(s_N)}$, where $N\in\mathbb{N}$ and $\gamma_j \in \mathbb{R}$, $s_j \in S$ for all $j\in\{1,\ldots,N\}$). 
We introduce the Hilbert space $H(S)$, known as the linear Gaussian space induced by $u$ at $S$, 
as the closure of $U(S)$ in $L_2(\Omega)$ endowed with the norm $\|\,\cdot\,\|_{H(S)}$ induced by the $L_2(\Omega)$ inner product. 
That is, if  $g = \sum_{j=1}^{N}  \gamma_j  u(s_j)$
and $h = \sum_{k=1}^{N'} \gamma_k' u(s_k')$, for some $N,N'\in\mathbb{N}$, $s_j, s_{k}\in S$, and $\gamma_j, \gamma_k'\in\mathbb{R}$,
with $j=1,\ldots,N$ and $k=1,\ldots, N'$, then
$
(g,h)_{_{H(S)}} := \sum_{j=1}^{N}  \sum_{k=1}^{N'} \gamma_j \gamma_k' \pE\bigl[ u(s_j) u(s_k') \bigr].
$

Assuming that $u$ has a continuous covariance function $\rho:\Gamma\times\Gamma\to\mathbb{R}$, given $S\subset\Gamma$, we define the Cameron--Martin space $\mathcal{H}(S)$, which is isometrically isomorphic to $H(S)$, as
${\mathcal{H}(S) = \{h(s) = \pE(u(s)u): s\in \Gamma\hbox{ and } u\in H(S)\}}$,
with inner product $\langle h_1, h_2\rangle_{\mathcal{H}} = \pE(u_1 u_2),$
where $h_j(s) = \pE(u(s)u_j)$, $u_j \in H(S)$, and $j=1,2$.
		
\subsection{Equivalence of measures and parameter estimation}
Two probability measures $\mu$ and $\widetilde{\mu}$ on $L_2(\Gamma)$ are equivalent if for any Borel set $B$, $\mu(B)=0$ holds 
if and only if $\widetilde{\mu}(B)=0$. 
In contrast, if a Borel set $B$ exists such that $\mu(B)=0$ and $\widetilde{\mu}(B)=1$, the measures are orthogonal. 
Equivalence and orthogonality play a crucial role in studying asymptotic properties of Gaussian random fields. 
The Feldman--H\'ajek theorem \citep[][Theorem 2.25]{daPrato2014} provides necessary and sufficient conditions for equivalence; 
however, the conditions of this theorem are given in terms of the corresponding covariance operators and are generally difficult to verify. 
Nevertheless, the covariance operators for two Whittle--Mat\'ern fields diagonalize with respect to the eigenfunctions of the Kirchhoff-Laplacian, 
allowing us to derive the following result concerning the equivalence of measures. 
\begin{Proposition}\label{prop_measure}
Suppose that $\mu(\cdot; \kappa,\tau,\alpha)$ and $\mu(\cdot; \widetilde{\kappa},\widetilde{\tau},\widetilde{\alpha})$ are 
two Gaussian measures on $L_2(\Gamma)$ as defined above, 
with parameters $\kappa,\tau>0, \alpha>1/2$ and $ \widetilde{\kappa},\widetilde{\tau}>0, \widetilde{\alpha}>1/2$ respectively. 
Then $\mu(\cdot; \kappa,\tau,\alpha)$ and $\mu(\cdot; \widetilde{\kappa},\widetilde{\tau},\widetilde{\alpha})$ are equivalent if and only if 
$\alpha=\widetilde{\alpha}$ and $\tau=\widetilde{\tau}$.
\end{Proposition}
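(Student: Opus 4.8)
The plan is to invoke the Feldman--H\'ajek theorem \citep[][Theorem 2.25]{daPrato2014} and exploit that the two covariance operators commute. Write $\widetilde{L} = \widetilde\kappa^2 - \Delta_\Gamma$, so that the covariance operators of the two fields are $\mathcal{C} = \tau^{-2}L^{-\alpha}$ and $\widetilde{\mathcal{C}} = \widetilde\tau^{-2}\widetilde{L}^{-\widetilde\alpha}$. Since the Kirchhoff--Laplacian $\Delta_\Gamma$ and its domain are independent of $\kappa$ and $\alpha$, both operators are diagonalized by the same orthonormal basis $\{\varphi_j\}_{j\in\mathbb{N}}$ of eigenfunctions of $-\Delta_\Gamma$, with eigenvalues $\gamma_j := \tau^{-2}(\kappa^2+\widehat\lambda_j)^{-\alpha}$ and $\widetilde\gamma_j := \widetilde\tau^{-2}(\widetilde\kappa^2+\widehat\lambda_j)^{-\widetilde\alpha}$, respectively. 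Because $\alpha,\widetilde\alpha>\nicefrac12$ and $\widehat\lambda_j\sim j^2$ by Weyl's law, $\sum_j\gamma_j<\infty$ and $\sum_j\widetilde\gamma_j<\infty$, so both $\mu(\cdot;\kappa,\tau,\alpha)$ and $\mu(\cdot;\widetilde\kappa,\widetilde\tau,\widetilde\alpha)$ are genuine centered Gaussian measures on $L_2(\Gamma)$. In this simultaneously diagonalizable setting the Feldman--H\'ajek conditions collapse to the single requirement that the diagonal operator $\mathcal{C}^{-1/2}\widetilde{\mathcal{C}}\,\mathcal{C}^{-1/2}-I$ be Hilbert--Schmidt, i.e.
\begin{equation}\label{eq:fh_diag}
	\sum_{j\in\mathbb{N}}\left(\frac{\widetilde\gamma_j}{\gamma_j}-1\right)^2<\infty,
\end{equation}
and, by the Feldman--H\'ajek dichotomy, failure of \eqref{eq:fh_diag} forces the two measures to be orthogonal. (When \eqref{eq:fh_diag} holds, $\widetilde\gamma_j/\gamma_j\to1$, hence $\gamma_j\asymp\widetilde\gamma_j$, so the Cameron--Martin spaces $\mathcal{C}^{1/2}L_2(\Gamma)$ and $\widetilde{\mathcal{C}}^{1/2}L_2(\Gamma)$ coincide, as also required.)

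For the sufficiency (``if'') direction, assume $\alpha=\widetilde\alpha$ and $\tau=\widetilde\tau$. Then $\widetilde\gamma_j/\gamma_j = \big((\kappa^2+\widehat\lambda_j)/(\widetilde\kappa^2+\widehat\lambda_j)\big)^{\alpha} = (1+x_j)^{\alpha}$ with $x_j := (\kappa^2-\widetilde\kappa^2)/(\widetilde\kappa^2+\widehat\lambda_j)$. Since $\widehat\lambda_j\sim j^2$, there is $C>0$ with $|x_j|\le C j^{-2}$ for all large $j$; combining this with the elementary inequality $|(1+x)^{\alpha}-1|\le C_\alpha|x|$, valid for $|x|\le\nicefrac12$, yields $|\widetilde\gamma_j/\gamma_j-1|\le C'j^{-2}$, so \eqref{eq:fh_diag} holds and the measures are equivalent.

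For necessity I argue by contraposition, showing that \eqref{eq:fh_diag} fails (so the measures are orthogonal, hence not equivalent) whenever $(\alpha,\tau)\neq(\widetilde\alpha,\widetilde\tau)$. Write $\widetilde\gamma_j/\gamma_j = (\tau^2/\widetilde\tau^2)(\kappa^2+\widehat\lambda_j)^{\alpha}(\widetilde\kappa^2+\widehat\lambda_j)^{-\widetilde\alpha}$. Using $\widehat\lambda_j\sim j^2$, the factor $(\kappa^2+\widehat\lambda_j)^{\alpha}(\widetilde\kappa^2+\widehat\lambda_j)^{-\alpha}$ tends to $1$, so $\widetilde\gamma_j/\gamma_j$ is, up to a factor converging to the positive constant $\tau^2/\widetilde\tau^2$, of order $(\widehat\lambda_j)^{\alpha-\widetilde\alpha}\asymp j^{2(\alpha-\widetilde\alpha)}$. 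Hence if $\alpha\neq\widetilde\alpha$ the ratio $\widetilde\gamma_j/\gamma_j$ tends to $0$ (if $\alpha<\widetilde\alpha$) or to $\infty$ (if $\alpha>\widetilde\alpha$), while if $\alpha=\widetilde\alpha$ but $\tau\neq\widetilde\tau$ it tends to $\tau^2/\widetilde\tau^2\neq1$; in every such case $(\widetilde\gamma_j/\gamma_j-1)^2\not\to 0$, so the series in \eqref{eq:fh_diag} diverges.

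I expect the only real obstacle to be bookkeeping rather than new ideas: one must (i) carefully justify that the full list of Feldman--H\'ajek conditions genuinely reduces to \eqref{eq:fh_diag} when $\mathcal{C}$ and $\widetilde{\mathcal{C}}$ are simultaneously diagonalizable (in particular that the range/Cameron--Martin condition is automatic once \eqref{eq:fh_diag} holds), and (ii) make the estimate $|(1+x_j)^{\alpha}-1|\le C_\alpha|x_j|$ uniform in $j$, which is precisely where Weyl's law \eqref{eq:weyl} is used to bound $\widehat\lambda_j$ from below by a multiple of $j^2$ for all large $j$.
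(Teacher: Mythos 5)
Your proposal is correct and follows essentially the same route as the paper: both diagonalize the two covariance operators in the common eigenbasis of the Kirchhoff--Laplacian and reduce equivalence, via Weyl's law, to square-summability of the deviation of the eigenvalue ratios from $1$, which holds precisely when $\alpha=\widetilde{\alpha}$ and $\tau=\widetilde{\tau}$. The only cosmetic difference is that the paper delegates the reduction of the Feldman--H\'ajek conditions to \citet[Corollary~3.1]{bk-measure} (working with $c_j=(\gamma_j/\widetilde{\gamma}_j)^{1/\alpha}$, which is equivalent to your criterion since the ratios are bounded away from $0$ and $\infty$), whereas you carry out that reduction directly.
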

		
\begin{proof}
The two Gaussian measures can be written as $\mu(\cdot; \kappa,\tau,\alpha) = \pN(\cdot;0,\tau^{-2} L^{-\alpha})$ and 
$\mu(\cdot; \widetilde{\kappa},\widetilde{\tau},\widetilde{\alpha}) = \pN(\cdot;0,\widetilde{\tau}^{-2} \widetilde{L}^{-\widetilde{\alpha}})$, 
where $L = \kappa^2 - \Delta_\Gamma$ and $\widetilde{L} = \widetilde{\kappa}^2 - \Delta_\Gamma$.
Recall that $\{\hat{\lambda}_{i}\}$ and $\{\eig_i\}$ denote the eigenvalues and corresponding eigenvectors of the Kirchhoff-Laplacian and that 
$\{\eig_i\}$ forms an orthonormal basis of $L_2(\Gamma)$. 
Further, this is also an eigenbasis for both $L$ and $\widetilde{L}$, 
with corresponding eigenvalues $\lambda_j = \kappa^2 + \hat{\lambda}_j$ and  $\widetilde{\lambda}_j = \widetilde{\kappa}^2 + \hat{\lambda}_j$, 
respectively, for ${j\in\mathbb{N}}$. Define $\delta := \widetilde{\alpha}/\alpha$ and 
$c_j := \widetilde{\tau}^{2/\alpha}\tau^{-2/\alpha}\widetilde{\lambda}_j^{\delta}\lambda_j^{-1}$, $j\in\mathbb{N}$. 
Then, the asymptotic behavior of the eigenvalues $\hat{\lambda}_j$ in \eqref{eq:weyl} shows that $c_{-},c_{+}\in (0,\infty)$ exist such that 
${0<c_{-} < c_j < c_{+} < \infty}$ for all $j\in\mathbb{N}$ if, and only if, $\delta = \widetilde{\alpha}/\alpha = 1$.
In this case we have that $\lim_{j\rightarrow\infty} c_j = (\widetilde{\tau}/\tau)^{2/\alpha}$. 
Thus, the series $\sum_{j=1}^{\infty}(c_j-1)^2$ converges if, and only if, $\tau=\widetilde{\tau}$ and $\alpha = \widetilde{\alpha}$. 
The result then follows by \citet[][Corollary~3.1]{bk-measure}.
\end{proof}
		
Two Gaussian measures defined on the metric graph $\Gamma$ are either equivalent or orthogonal \citep[see, e.g.,][Theorem~2.7.2]{Bogachev1998}. 
Thus, whenever $\alpha\neq\widetilde{\alpha}$ or $\tau\neq\widetilde{\tau}$, the Gaussian measures are orthogonal. 
In contrast, if $\alpha=\widetilde{\alpha}$ and $\tau = \widetilde{\tau}$, the measures are equivalent even if $\kappa\neq\widetilde{\kappa}$.
This finding has crucial consequences for parameter estimation. 
It is well-known \citep{Zhang2004} that all parameters of Mat\'ern fields on bounded subsets of $\mathbb{R}^2$ cannot be estimated consistently under infill asymptotics,
but that one can estimate the so-called micro-ergodic parameter, corresponding to $\tau$ in our case. 
The following proposition shows that $\tau$ in the Whittle--Mat\'ern fields on metric graphs can also be estimated consistently. 
This result is the analogue of \citet[][Theorem~2]{Kaufman2013} for the metric graph setting. 
We recall that $u$ is the solution to \eqref{eq:Matern_spde} and that its law is given by $\mu = \pN\left(0, \tau^{-2}L^{-\alpha}\right)$.
\begin{Proposition}\label{prop:ml_consistency}
Suppose that $u_1,u_2,\ldots $ are observations of $u\sim \pN\left(0, \tau^{-2}L^{-\alpha}\right)$ at distinct locations $s_1,s_2,\ldots$  that accumulate at every point in 
$\Gamma$. Assume that $\alpha>\nicefrac{1}{2}$ is known and suppose that $0 < \kappa_L < \kappa < \kappa_U < \infty$. Let $(\tau_n, \kappa_n)$ 
denote the values of $\tau$ and $\kappa$ that maximize the likelihood $L(u_1,\ldots, u_n; \tau, \kappa)$ over 
$(\tau,\kappa)\in\mathbb{R}^+\times[\kappa_L,\kappa_U]$. Then:
\begin{enumerate}
	\item $\tau_n\rightarrow \tau$, $\mu$-a.s.
	\item $n^{1/2}(\tau_n^2 - \tau^2) \rightarrow \pN(0,2\tau^{4})$ in distribution. 
\end{enumerate}
\end{Proposition}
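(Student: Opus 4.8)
\emph{Strategy.} The plan is to follow the template of \citet[][Theorem~2]{Kaufman2013} for Mat\'ern fields on bounded Euclidean domains, replacing its two domain-specific ingredients by results available here: the needed spectral asymptotics are supplied by Weyl's law \eqref{eq:weyl}, and the classification of equivalent parameter configurations by Proposition~\ref{prop_measure}. Write $\kappa$ for the true value of the shift parameter and, for $\vartheta\in[\kappa_L,\kappa_U]$, let $R_n(\vartheta)$ be the covariance matrix of $\mathbf{u}_n:=(u_1,\dots,u_n)^\transp$ for the parameters $(\vartheta,\tau,\alpha)$ with $\tau$ replaced by $1$, so that the true covariance matrix of $\mathbf{u}_n$ is $\tau^{-2}R_n(\kappa)$. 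By Proposition~\ref{prp:strposdef} (which is valid for every $\kappa>0$ and $\alpha>\nicefrac12$) we have $R_n(\vartheta)\succ 0$, so the Gaussian log-likelihood is well defined; concentrating out $\tau^2$ for a fixed $\vartheta$ yields the maximiser $\widehat{\tau^2}_n(\vartheta)=n/Q_n(\vartheta)$ with $Q_n(\vartheta):=\mathbf{u}_n^\transp R_n(\vartheta)^{-1}\mathbf{u}_n$, whence $\tau_n^2=n/Q_n(\kappa_n)$ and the statement reduces to the asymptotics of the quadratic forms $Q_n$.

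\emph{Behaviour at the true parameter.} Let $R_n(\kappa)=L_nL_n^\transp$ be the Cholesky factorisation. The coordinates of $L_n^{-1}\mathbf{u}_n$ are i.i.d.\ $\pN(0,\tau^{-2})$ under $\mu$, and since the sequence $(s_i)$ is fixed and nested they assemble into a single i.i.d.\ sequence $w_1,w_2,\dots$ with $Q_n(\kappa)=\sum_{i=1}^n w_i^2$. The strong law gives $n^{-1}Q_n(\kappa)\to\tau^{-2}$ $\mu$-a.s., the central limit theorem gives $n^{1/2}\bigl(n^{-1}Q_n(\kappa)-\tau^{-2}\bigr)\to\pN(0,2\tau^{-4})$ in distribution, and the delta method for $x\mapsto 1/x$ turns these into the a.s.\ convergence $\widehat{\tau^2}_n(\kappa)\to\tau^2$ and into $n^{1/2}\bigl(\widehat{\tau^2}_n(\kappa)-\tau^2\bigr)\to\pN(0,2\tau^4)$ in distribution.

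\emph{Removing the nuisance parameter.} It remains to check that replacing $\kappa$ by the maximiser $\kappa_n\in[\kappa_L,\kappa_U]$ is harmless: it suffices to show $\sup_{\vartheta\in[\kappa_L,\kappa_U]}|Q_n(\vartheta)-Q_n(\kappa)|=o(n)$ $\mu$-a.s., which gives $n^{-1}Q_n(\kappa_n)\to\tau^{-2}$, hence $\tau_n^2\to\tau^2$ $\mu$-a.s.\ and $\tau_n\to\tau$ (as $\tau_n>0$), i.e.\ (i), together with $\sup_{\vartheta}|Q_n(\vartheta)-Q_n(\kappa)|=o_{P}(n^{1/2})$, which gives $n^{1/2}\bigl(\tau_n^2-\widehat{\tau^2}_n(\kappa)\bigr)\to0$ in probability so that (ii) follows by Slutsky. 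For a fixed $\vartheta$, Proposition~\ref{prop_measure} gives that $\mu(\cdot;\vartheta,\tau,\alpha)$ and $\mu$ are equivalent (since $\tau$ and $\alpha$ coincide), and --- because the $(s_i)$ accumulate at every point and $u$ has continuous sample paths (Theorem~\ref{thm:regularity}(i)) --- this equivalence is already visible to the vectors $\mathbf{u}_n$, so the log-likelihood ratio $\ell_n^{\mathrm{rn}}(\vartheta)$ of $(u_1,\dots,u_n)$ between the two parameter sets converges $\mu$-a.s.\ to a finite limit; hence $\ell_n^{\mathrm{rn}}(\vartheta)=O(1)$ a.s. Rearranging the explicit Gaussian likelihood ratio,
\[
\tau^{2}\bigl(Q_n(\vartheta)-Q_n(\kappa)\bigr)=\log\frac{\det R_n(\kappa)}{\det R_n(\vartheta)}-2\,\ell_n^{\mathrm{rn}}(\vartheta),
\]
and the determinant difference equals $-\sum_{i=1}^n\log\bigl(d_i(\vartheta)/d_i(\kappa)\bigr)$ for the one-step prediction variances $d_i(\cdot)$. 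Equivalence forces $\sum_i\bigl(d_i(\vartheta)/d_i(\kappa)-1\bigr)^2<\infty$, so a tail Cauchy--Schwarz estimate gives $\sum_{i=1}^n|\log(d_i(\vartheta)/d_i(\kappa))|=o(n^{1/2})$; combined with $\ell_n^{\mathrm{rn}}(\vartheta)=O(1)$ this yields $Q_n(\vartheta)-Q_n(\kappa)=o(n^{1/2})$ $\mu$-a.s.\ for each fixed $\vartheta$, and in particular $n^{-1}Q_n(\vartheta)\to\tau^{-2}$. I would upgrade these pointwise statements to the required suprema over $\vartheta$ by equicontinuity of $\vartheta\mapsto n^{-1}Q_n(\vartheta)$: since the operators $(\vartheta^2-\Delta_\Gamma)^{-\alpha}$ are, by \eqref{eq:weyl}, uniformly comparable to $(\kappa^2-\Delta_\Gamma)^{-\alpha}$ on $[\kappa_L,\kappa_U]$ with $\vartheta$-derivative controlled by $(\vartheta^2-\Delta_\Gamma)^{-\alpha-1}$, one obtains $|\partial_\vartheta Q_n(\vartheta)|\le C\,Q_n(\vartheta)$ with $C$ independent of $n$.

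\emph{Main obstacle.} The Cholesky/law-of-large-numbers/delta-method computation and the reduction to $Q_n$ are routine; the real work is the last step --- making the equivalence-of-measures bookkeeping \emph{uniform} in the nuisance parameter $\vartheta$ over $[\kappa_L,\kappa_U]$, and sharpening it from $o(n)$ to the $o(n^{1/2})$ rate that the central limit theorem needs. Concretely, one needs uniform-in-$\vartheta$ control of $\sum_i\bigl(d_i(\vartheta)/d_i(\kappa)-1\bigr)^2$ and of the likelihood-ratio martingale, which is where the fractional structure of $L^{-\alpha}$ and Weyl's law \eqref{eq:weyl} have to be exploited carefully, in the spirit of the corresponding estimates in \citet{Zhang2004} and \citet{Kaufman2013}.
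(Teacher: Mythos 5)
Your overall reduction --- concentrate out $\tau$, write $\tau_n^2=n/Q_n(\kappa_n)$, handle the true-$\kappa$ quadratic form by the nested Cholesky/innovations representation together with the SLLN, CLT and delta method, and then control the effect of the nuisance parameter $\kappa$ --- is the same skeleton as the paper's proof, and your treatment of $Q_n(\kappa)$ itself is fine. For part (i) you take a genuinely different route to uniformity in $\vartheta$: you combine pointwise convergence of $n^{-1}Q_n(\vartheta)$ (deduced from equivalence of measures) with the Gronwall-type bound $|\partial_\vartheta Q_n(\vartheta)|\le C\,Q_n(\vartheta)$, $C$ independent of $n$. That derivative bound is correct (the operators $-\partial_\vartheta(\vartheta^2-\Delta_\Gamma)^{-\alpha}$ and $(\vartheta^2-\Delta_\Gamma)^{-\alpha}$ diagonalize in the same eigenbasis with eigenvalue ratio $2\alpha\vartheta/(\vartheta^2+\hat\lambda_j)\le 2\alpha/\kappa_L$), and uniform equicontinuity of $\log(n^{-1}Q_n(\vartheta))$ plus pointwise a.s.\ convergence on a countable dense set does yield uniform convergence, hence (i). The paper instead proves that $\kappa\mapsto\tau_n^2(\kappa)$ is monotone, by showing via the spectral expansion that $\varrho_{\kappa_1}-\varrho_{\kappa_2}$ is positive semi-definite for $\kappa_1\le\kappa_2$, and then sandwiches $\tau_n^2(\kappa_n)$ between $\tau_n^2(\kappa_L)$ and $\tau_n^2(\kappa_U)$; this avoids any uniform-convergence machinery but is specific to the one-parameter family, whereas your argument is more generic.

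For part (ii), however, there is a genuine gap, which you yourself flag but do not close. You need $Q_n(\kappa_n)-Q_n(\kappa)=o_P(n^{1/2})$ for the \emph{random} maximiser $\kappa_n$. Your pointwise bound gives $Q_n(\vartheta)-Q_n(\kappa)=o(n^{1/2})$ a.s.\ for each fixed $\vartheta$, but your only uniformity device is the Lipschitz bound $|\partial_\vartheta Q_n(\vartheta)|\le C\,Q_n(\vartheta)\asymp Cn$, whose constant grows like $n$; a mesh argument then produces an error $Cn\delta$ that cannot be made $o(n^{1/2})$ without letting the number of mesh points grow, at which point the pointwise $o(n^{1/2})$ statements would themselves have to be made uniform --- exactly the problem you started with. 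The paper closes this step with a different and sharper tool: by \citet[Lemma~2 and Expression~(2.9), p.76]{ibragimov_rozanov}, equivalence of $\mu(\cdot;\kappa^*,\tau,\alpha)$ and $\mu(\cdot;\kappa,\tau,\alpha)$ implies that the \emph{variance} of $\delta_n^{\mv{u}}(\kappa^*)=\mv{u}_n^\top\bigl(\mv{\Gamma}_{n,\kappa^*}^{-1}-\mv{\Gamma}_{n,\kappa}^{-1}\bigr)\mv{u}_n$ is uniformly bounded in $n$, so the quadratic-form difference is $O_P(1)$ --- far stronger than $o_P(n^{1/2})$ --- and the passage to the sequence $\kappa_n$ is then handled by continuity of this variance in $\kappa^*$ on the compact interval $[\kappa_L,\kappa_U]$. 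Without an ingredient of this type (a uniform-in-$n$ second-moment bound on $\delta_n^{\mv{u}}$, or equivalently uniform-in-$\vartheta$ control of the Hilbert--Schmidt quantity $\sum_i(d_i(\vartheta)/d_i(\kappa)-1)^2$ that you mention), your argument for (ii) does not go through.
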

		
\begin{proof}
We start by proving (i).
\textbf{Step 1:} 
Let $f_{n,\kappa,\tau}$ be the probability density function of the vector $(u(s_1),\ldots, u(s_n))^\top$ under the measure $\mu(\cdot; \kappa, \tau, \alpha)$ 
and for $\mv{u}_n = (u_1,\ldots, u_n)^\top$ define ${\rho_n = \log f_{n,\tau,\kappa}(\mv{u}_n) - \log f_{n,\tau^*,\kappa^*}(\mv{u}_n)}$. 
First, suppose that the likelihood is evaluated with $\kappa = \kappa^*$ fixed and $\tau_n(\kappa^*)$ denotes the value of $\tau$ 
that maximizes the likelihood for this fixed value of $\kappa$. 
By Proposition~\ref{prop_measure}, the Gaussian measures $\mu(\cdot; \kappa,\tau,\alpha)$ and $\mu(\cdot; \kappa^*,\tau^*,\alpha)$ 
are orthogonal if $\tau^*\neq \tau$ and equivalent if $\tau^* = \tau$. 
In addition, by the continuity of the sample paths (Theorem~\ref{thm:regularity}), we can apply \cite[Theorem 1, p.100]{gikhmanskorohod}. 
Therefore, if $\tau^*\neq \tau$, $\rho_n \rightarrow -\infty$ as $n\rightarrow\infty$. 
If instead $\tau^* = \tau$, then $\rho_n \rightarrow \log C$ as $n\rightarrow\infty$, where $C$ is the Radon--Nikodym derivative 
$d \mu(u; \kappa,\tau,\alpha)/ d\mu(u; \kappa^*,\tau,\alpha)$. 
The result then follows along the same lines as in the proof of \citep[][Theorem 3]{Zhang2004}: 
It is sufficient to demonstrate that $\tau_n(\kappa^*) \rightarrow \tau^*$, $\mu(\cdot; \kappa^*,\tau^*,\alpha)$-a.s.
Therefore, it is therefore sufficient to establish that, for any $\epsilon>0$, 
an integer $N$ exists such that for $n>N$ and $|\tau_n(\kappa^*) - \tau^*|>\epsilon$, $\rho_n < \log C -1$. 
This result follows immediately from the fact that, for each $n$, the log-likelihood function $\tau\mapsto \log f_{n,\kappa,\tau}(\mv{u}_n)$ is strictly concave. 
	
\textbf{Step 2:} 
Next, we show that, if $\kappa_L < \kappa_1 < \kappa_2 < \kappa_U$, then $\tau_n(\kappa_2) \leq \tau_n(\kappa_1)$. 
To that end, let $\varrho_\kappa$ denote the covariance matrix of the Whittle--Mat\'ern field with $\tau=1$ and 
let $\mv{\Gamma}_{n,\kappa}$ denote the corresponding covariance matrix with elements $[\mv{\Gamma}_{n,\kappa}]_{i,j} = \varrho_\kappa(s_i,s_j)$. 
Observe that, by Proposition \ref{prp:strposdef}, for every $\kappa>0$, $\mv{\Gamma}_{n,\kappa}$ is strictly positive-definite.
Then, we have 
${\tau_n^2(\kappa) = \frac{n}{\mv{u}_n^\top\mv{\Gamma}_{n,\kappa}^{-1}\mv{u}_n}}$.
Thus, to show that $\tau_n^2(\kappa_2) \leq \tau_n^2(\kappa_1)$ it is enough to establish that for any possible realization of $\mv{u}_n$,
and any set of locations, we have 
$\mv{u}_n^\top \mv{\Gamma}_{n,\kappa_2}^{-1}\mv{u}_n - \mv{u}_n^\top \mv{\Gamma}_{n,\kappa_1}^{-1}\mv{u}_n \geq 0$, or, 
in other words, that the matrix 
$\mv{\Gamma}_{n,\kappa_2}^{-1} - \mv{\Gamma}_{n,\kappa_1}^{-1}$ is positive semi-definite. 
As $\mv{\Gamma}_{n,\kappa_2}$ and $\mv{\Gamma}_{n,\kappa_1}$ are strictly positive-definite, this holds if, and only if, 
$\mv{\Gamma}_{n,\kappa_1} - \mv{\Gamma}_{n,\kappa_2}$ is positive semi-definite, which holds for any possible realization of $\mv{u}_n$,
and any set of locations if, and only if, the function 
$r_\kappa = \varrho_{\kappa_1} - \varrho_{\kappa_2}$ is positive semi-definite. 
Using the expansion of the covariance functions $\varrho_{\kappa_1}$ and $\varrho_{\kappa_2}$ in \citet[Proposition 7]{BSW2022}, we have  
$$
	r_\kappa(s,t) = \sum_{i=1}^\infty \left(\frac1{(\kappa_1^2 + \hat{\lambda}_i)^\alpha} - \frac1{(\kappa_2^2 + \hat{\lambda}_i)^\alpha}\right)\varphi_i(s)\varphi_i(t).
$$
Next, because $\{\varphi_i\}$ are the eigenfunctions of $\varrho_\kappa$ it follows by \citet[Lemma 2.6]{Steinwart2012} 
and \citet[Theorem 10.4]{wendland} (for one direction and \citet[Theorem 10.4]{wendland} 
and \citet[p. 364-365]{Steinwart2012} for the other direction) that $r_\kappa$ is positive semi-definite if, and only if,
$(\kappa_1^2 + \hat{\lambda}_i)^{-\alpha} \geq (\kappa_2^2 + \hat{\lambda}_i)^{-\alpha}$ and
\begin{equation}\label{eq:conv_cond_like}
	\forall s\in \Gamma,\quad \sum_{i=1}^\infty \left(\frac1{(\kappa_1^2 + \hat{\lambda}_i)^\alpha} - \frac1{(\kappa_2^2 + \hat{\lambda}_i)^\alpha}\right) \varphi_i^2(s)< \infty.
\end{equation}
By \citet[Proposition 7]{BSW2022}, condition \eqref{eq:conv_cond_like} always holds. Thus, both conditions clearly hold if, and only if, $\kappa_1 \leq \kappa_2$; thus, $\tau_n^2(\kappa)$ is monotonically decreasing in $\kappa$. 
	
\textbf{Step 3:}
The result in (i) now follows by applying the result from Step 1 twice: one time for $\kappa^* = \kappa_L$ and another for $\kappa^* = \kappa_U$.  
Then, the result follows from using the fact that $\tau_n^2(\kappa)$ is monotonically decreasing in $\kappa$, which proves (i).
	
To prove (ii), set $\sigma^2_n=\frac{1}{\tau^2_n}$, $\sigma^2 = \frac{1}{\tau^2}$, and let $\kappa_n$ be an arbitrary sequence in $[\kappa_L,\kappa_U]$. 
Then 
\begin{align}
	\sqrt{n}(\sigma_n^2(\kappa_n) - \sigma^2) &= \sqrt{n}\left(\frac{\mv{u}_n^\top \mv{\Gamma}_{n,\kappa_n}^{-1}\mv{u}_n}{n} - \sigma^2\right) \notag \\
	&= \sqrt{n^{-1}}\left(\mv{u}_n^\top \mv{\Gamma}_{n,\kappa_n}^{-1}\mv{u}_n - \mv{u}_n^\top \mv{\Gamma}_{n,\kappa}^{-1}\mv{u}_n \right) + \sqrt{n}\left(\frac{\mv{u}_n^\top \mv{\Gamma}_{n,\kappa}^{-1}\mv{u}_n}{n} - \sigma^2\right). \label{eq:proof_2terms}
\end{align}
First, we establish that the first term in \eqref{eq:proof_2terms} converges to zero in probability by establishing that 
$\delta_n^{\mv{u}}(\kappa^*) = \mv{u}_n^\top \mv{\Gamma}_{n,\kappa^*}^{-1}\mv{u}_n- \mv{u}_n^\top \mv{\Gamma}_{n,\kappa}^{-1}\mv{u}_n$ 
is bounded in probability for any $\kappa^*$. 
As the measures $\mu(\cdot, \kappa, \tau, \alpha)$ and $\mu(\cdot, \kappa^*, \tau, \alpha)$  are equivalent, 
\citet[Lemma~2 and Expression~(2.9), p.76]{ibragimov_rozanov} show that the variance of $\delta_n^{\mv{u}}(\kappa^*)$ is uniformly bounded in $n$. 
In particular, by Chebyshev's inequality, the sequence $\delta_n^{\mv{u}}(\kappa^*)$ is bounded in probability. 
Next, because the variance of  $\delta^\mv{u}_n(\kappa^*)$  is a continuous function of $\kappa^*$ the result follows for any bounded sequence $(\kappa_n)$. 
Further, since $\mv{u}_n \sim \pN(0,\tau^{-2}\Gamma_{n,\kappa})$  it follows that
$$
	\sqrt{n}\Bigl(\frac{\mv{u}_n^\top \mv{\Gamma}_{n,\kappa}^{-1}\mv{u}_n}{n} - \sigma^2\Bigr) =  \sqrt{n}\sigma^2\Bigl(\frac{1}{n} \sum_{i=1}^n Z_i^2 - 1\Bigr) \rightarrow  \pN(0,2\sigma^4).
$$
Finally, $\sqrt{n}\left(n(\mv{u}_n^\top \mv{\Gamma}_{n,\kappa}^{-1}\mv{u}_n)^{-1} - \tau^2\right)  \rightarrow  \pN(0,2\tau^4)$
by the delta theorem \cite[][Theorem 2.5.2]{lehmann1999elements}.
\end{proof}
			
\subsection{Kriging prediction}
We now characterize the asymptotic properties of linear prediction for $u$ based on misspecified parameters. 
A sufficient criterion for asymptotic optimality is the equivalence of the corresponding Gaussian measures \citep{stein99}; 
thus, we obtain asymptotically optimal linear prediction as soon as $\widetilde{\alpha} = \alpha$ and $\widetilde{\tau} = \tau$ by Proposition~\ref{prop_measure}. 
However, the equivalence of measures is not necessary \citep{kb-kriging}, 
and we now establish that we only need  $\widetilde{\alpha} = \alpha$ to obtain asymptotic optimality.

To clarify the setup, recall the Gaussian linear space $H(\Gamma)$ and 
suppose  that we aim to predict $h\in H(\Gamma)$ based on a set of observations $\{y_{nj}\}_{j=1}^n$ of the process.
Then, the best linear predictor, with respect to the $H(\Gamma)$-norm, $h_n$, 
is the $H(\Gamma)$-orthogonal projection of $h$ onto the subspace $	H_n(\Gamma)  := 
\operatorname{span}\bigl\{y_{nj} \bigr\}_{j=1}^n$.
We now want to know what would happen if we replace $h_n$ with another linear predictor $\widetilde{h}_n$, 
computed based on a Whittle--Mat\'ern field with misspecified parameters. 
To answer this question, we assume that the  set of observations 
$\bigl\{ \{ y_{nj} \}_{j=1}^n : n\in\mathbb{N} \bigr\}$ 
yields $\mu$-consistent kriging prediction, that is,  
\begin{equation*}\label{eq:ass:Hn-dense}
	\lim\limits_{n\to\infty} \pE\bigl[ (h_n - h)^2 \bigr]
	= 
	\lim\limits_{n\to\infty} \| h_n - h \|_{\mathcal{H}}^2   
	= 0. 
\end{equation*}
Following \citet{kb-kriging}, we let $\mathcal{S}^\mu_{\mathrm{adm}}$ 
denote the set of all admissible sequences of observations which provide $\mu$-consistent kriging prediction 
and introduce the set $H_{-n}(\Gamma):=\bigl\{ h\in H(\Gamma):\pE\bigl[ (h_n - h)^2 \bigr]> 0 \bigr\}$. 
Observe that, in view of Proposition \ref{prp:strposdef}, we can apply the results in \cite{kb-kriging}.	
Then, we obtain the following result:
\begin{Proposition}\label{prop:A-kriging} 
Let $h_n, \widetilde{h}_n$ denote the best linear
predictors of $h\in H(\Gamma)$, with respect to the $H(\Gamma)$-norm, based on~$H_n(\Gamma)$~and the measures
$\mu(\cdot;\kappa,\tau,\alpha)$  and $\mu(\cdot;\widetilde{\kappa},\widetilde{\tau},\widetilde{\alpha})$, respectively. 
Let $\widetilde{\pE}(\cdot)$ denote the expectation under $\mu(\cdot;\widetilde{\kappa},\widetilde{\tau},\widetilde{\alpha})$. Then, 
any of the following statements,
\begin{align}
	\lim_{n\to\infty}\sup_{h\in H_{-n}(\Gamma)}
				\frac{
					\pE\bigl[ ( \widetilde{h}_n - h)^2 \bigr]
				}{
					\pE\bigl[ ( h_n - h)^2 \bigr]
				} 
				= 
				\lim_{n\to\infty}
				\sup_{h\in H_{-n}(\Gamma)}
				\frac{
					\widetilde{\pE}\bigl[ ( h_n - h)^2 \bigr]
				}{
					\widetilde{\pE}\bigl[ ( \widetilde{h}_n - h)^2 \bigr]
				} 
				= 1, 
				\label{eq:prop:A-kriging-1} 
\end{align}
\begin{align}
	\lim_{n\to\infty}
				\sup_{h\in H_{-n}(\Gamma)}
				\left|
				\frac{
					\widetilde{\pE}\bigl[ ( h_n - h)^2 \bigr]
				}{
					\pE\bigl[ ( h_n - h)^2 \bigr]
				} - c 
				\right| 
				= 
				\lim_{n\to\infty}
				\sup_{h\in H_{-n}(\Gamma)}
				\left|
				\frac{
					\pE\bigl[ ( \widetilde{h}_n - h)^2 \bigr]
				}{
					\widetilde{\pE}\bigl[ ( \widetilde{h}_n - h)^2 \bigr]
				} - \frac{1}{c} 
				\right| 
				= 0, 
				\label{eq:prop:A-kriging-2} 
\end{align} 
holds for some $c\in\mathbb{R}_+$ and all $\{H_n(\Gamma)\}_{n\in\mathbb{N}} \in \mathcal{S}^\mu_{\mathrm{adm}}$ 
if and only if $\alpha=\widetilde{\alpha}$. In this case, the constant $c$ in  \eqref{eq:prop:A-kriging-2} is $c = (\widetilde{\tau}/\tau)^{2/\alpha}$.
\end{Proposition}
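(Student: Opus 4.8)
The plan is to reduce the statement to the abstract kriging-asymptotics framework of \citet{kb-kriging}, whose applicability is already guaranteed by Proposition~\ref{prp:strposdef} (strict positive-definiteness of $\varrho$, hence of $\widetilde\varrho$). That framework says that the two-sided uniform asymptotic optimality statements \eqref{eq:prop:A-kriging-1} and \eqref{eq:prop:A-kriging-2} hold for all admissible observation sequences in $\mathcal{S}^\mu_{\mathrm{adm}}$ if and only if a certain compatibility/equivalence-type condition relating $\mu(\cdot;\kappa,\tau,\alpha)$ and $\mu(\cdot;\widetilde\kappa,\widetilde\tau,\widetilde\alpha)$ holds — concretely, that the Cameron--Martin spaces $\mathcal{H}$ and $\widetilde{\mathcal{H}}$ coincide as sets with norms that are asymptotically proportional along the relevant eigen-directions, with proportionality constant $c$. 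So the first step is to quote the precise equivalence from \citet{kb-kriging} (the analogue of their main theorem, in the form used e.g. in \citet{bk-measure}) and identify which spectral condition on the eigenvalues it becomes in our setting.

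The second and main step is the spectral computation. Both covariance operators diagonalize in the Kirchhoff--Laplacian eigenbasis $\{\eig_j\}$, with eigenvalues $\tau^{-2}\lambda_j^{-\alpha} = \tau^{-2}(\kappa^2+\hat\lambda_j)^{-\alpha}$ and $\widetilde\tau^{-2}\widetilde\lambda_j^{-\widetilde\alpha} = \widetilde\tau^{-2}(\widetilde\kappa^2+\hat\lambda_j)^{-\widetilde\alpha}$. Exactly as in the proof of Proposition~\ref{prop_measure}, the ratio of eigenvalues is
\[
q_j := \frac{\widetilde\tau^{-2}\widetilde\lambda_j^{-\widetilde\alpha}}{\tau^{-2}\lambda_j^{-\alpha}} = \frac{\tau^2}{\widetilde\tau^2}\,\frac{\lambda_j^{\alpha}}{\widetilde\lambda_j^{\widetilde\alpha}},
\]
and Weyl's law \eqref{eq:weyl}, $\hat\lambda_j\sim j^2$, gives $\lambda_j^\alpha \asymp j^{2\alpha}$ and $\widetilde\lambda_j^{\widetilde\alpha}\asymp j^{2\widetilde\alpha}$, so $q_j$ is bounded away from $0$ and $\infty$ if and only if $\alpha=\widetilde\alpha$, and in that case $q_j \to (\tau/\widetilde\tau)^{2} \cdot (\text{limit of }\lambda_j^\alpha/\widetilde\lambda_j^{\alpha}) = (\tau/\widetilde\tau)^{2}$ since $\lambda_j/\widetilde\lambda_j = (\kappa^2+\hat\lambda_j)/(\widetilde\kappa^2+\hat\lambda_j)\to 1$. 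The constant governing the Cameron--Martin norm comparison is then $c = \lim_j q_j^{-1}$ raised to the appropriate power dictated by \citet{kb-kriging}; matching their normalization (the norms scale like the square root of the eigenvalue ratios, and the claimed constant is $(\widetilde\tau/\tau)^{2/\alpha}$) one identifies $c = (\widetilde\tau/\tau)^{2/\alpha}$, as asserted. The ``only if'' direction also comes from here: if $\alpha\neq\widetilde\alpha$, then $q_j\to 0$ or $q_j\to\infty$, the Cameron--Martin spaces differ, and \citet{kb-kriging} supplies an admissible sequence $\{H_n(\Gamma)\}\in\mathcal{S}^\mu_{\mathrm{adm}}$ along which the supremum ratios blow up or collapse, violating both \eqref{eq:prop:A-kriging-1} and \eqref{eq:prop:A-kriging-2}.

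The remaining bookkeeping step is to check the hypotheses needed to invoke \citet{kb-kriging}: that $\mu$ and $\widetilde\mu$ are centered Gaussian measures on $L_2(\Gamma)$ with strictly positive-definite, continuous covariances (Theorem~\ref{thm:regularity}(i) and Proposition~\ref{prp:strposdef}), that $H(\Gamma)$ and its finite-dimensional approximating subspaces are as required, and that $\mathcal{S}^\mu_{\mathrm{adm}}$ is nonempty and well-behaved — all of which follow from the continuity of sample paths and the density of point evaluations, exactly as these facts were used in the proof of Proposition~\ref{prop:ml_consistency}. I expect the genuine obstacle to be purely notational: translating the precise form of the equivalence criterion and the constant in \citet{kb-kriging} into the exponent convention used here (the factor $2/\alpha$ rather than $2$ comes from how they parametrize the norm), so that the constant $c=(\widetilde\tau/\tau)^{2/\alpha}$ in \eqref{eq:prop:A-kriging-2} is matched exactly; the analytic content is entirely contained in the Weyl-law eigenvalue asymptotics already established.
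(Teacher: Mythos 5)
Your proposal follows essentially the same route as the paper's proof: both reduce the claim to the necessary and sufficient conditions of \citet{kb-kriging} (isomorphic Cameron--Martin spaces plus a compactness condition on a ratio-type operator), verify them by diagonalizing both covariance operators in the Kirchhoff--Laplacian eigenbasis, and use Weyl's law \eqref{eq:weyl} to show that the relevant eigenvalue ratios are bounded away from $0$ and $\infty$ precisely when $\alpha=\widetilde{\alpha}$. The one soft spot in your write-up --- pinning down the exponent so that the limit is $(\widetilde{\tau}/\tau)^{2/\alpha}$ rather than the raw variance ratio $(\tau/\widetilde{\tau})^{2}$, which you infer by matching against the claimed answer --- is handled in the paper by working directly with $c_j=\widetilde{\tau}^{2/\alpha}\tau^{-2/\alpha}\widetilde{\lambda}_j^{\widetilde{\alpha}/\alpha}\lambda_j^{-1}$ and citing Corollary~3.1 of \citet{bk-measure}, which supplies exactly the normalization you would otherwise have to reconstruct.
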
 
		
\begin{proof}
We use the same notation as in the proof of Proposition~\ref{prop_measure}; thus, we have 
$\mu(\cdot; \kappa,\tau,\alpha) = \pN(0,\tau^{-2} L^{-\alpha})$ and 
$\mu(\cdot; \widetilde{\kappa},\widetilde{\tau},\widetilde{\alpha}) = \pN(0,\widetilde{\tau}^{-2} \widetilde{L}^{-\widetilde{\alpha}})$, 
where $L = \kappa^2 - \Delta_\Gamma$ and $\widetilde{L} = \widetilde{\kappa}^2 - \Delta_\Gamma$. 
Both measures are centered; therefore, by \citet{kb-kriging}, the necessary and sufficient conditions for any of the statements in \eqref{eq:prop:A-kriging-1} or \eqref{eq:prop:A-kriging-2} are
\begin{enumerate}[i.]
	\item The operators $\tau^{-2} L^{-\alpha}$ and $\widetilde{\tau}^{-2} \widetilde{L}^{-\widetilde{\alpha}}$ have isomorphic Cameron--Martin spaces.
	\item There exists $c>0$ such that $(\nicefrac{\widetilde{\tau}}{\tau})^{2}L^{-\alpha/2}\widetilde{L}^{\widetilde{\alpha}}L^{-\alpha/2} -c^{-1} I$ is a compact operator on $L_2(\Gamma)$. 
\end{enumerate}
As in Proposition~\ref{prop_measure}, we define $\delta:=\widetilde{\alpha}/\alpha$ and 
$c_j := \widetilde{\tau}^{2/\alpha}\tau^{-2/\alpha}\widetilde{\lambda}_j^{\delta}\lambda_j^{-1}$. 
Then, the asymptotic behavior of the eigenvalues of the Kirchhoff--Laplacian in \eqref{eq:weyl} shows that constants 
$c_{-}, c_{+} \in (0,\infty)$ exists such that ${0<c_{-} < c_j < c_{+} < \infty}$ 
if and only if $\delta = \widetilde{\alpha}/\alpha = 1$. 
In this case, ${\lim_{j\rightarrow\infty} c_j = (\widetilde{\tau}/\tau)^{2/\alpha}}$. The result then follows from Corollary 3.1 in \citet{bk-measure}.
\end{proof}
		
\section{The Markov subclass}\label{sec:markov}
Although it is essential to be able to consider a general smoothness parameter $\alpha$ for the Whittle--Mat\'ern fields, 
we believe that the most important cases are where $\alpha\in\mathbb{N}$, corresponding to the case of a local precision operator $\mathcal{Q} = \tau^2L^{\alpha}$. 
The reason is that this results in Gaussian random fields with Markov properties \citep{BSW_Markov}. 
This section applies these Markov properties to derive two representations of the Whittle--Mat\'ern fields with  $\alpha\in\mathbb{N}$, 
facilitating computationally efficient and exact likelihood evaluations, spatial prediction, and simulation.
Section~\ref{sec:bridge} derives a ``bridge'' representation, where the process is represented as the sum of independent Gaussian processes on the edges, 
which are zero at the vertices, and  a ``low-rank'' Gaussian process defined in the vertices and interpolated to the edges. 
This representation completely characterizes the conditional finite-dimensional distributions of the fields, 
given the field and its derivatives evaluated at the graph vertices. 
The only quantity not explicitly characterized in the bridge representation is the joint distribution of the field and its derivatives at the vertices. 
To obtain this distribution, Section~\ref{sec:conditional} shows that one can define a set of independent Gaussian processes on the edges, 
with explicit covariance functions, such that the Whittle--Mat\'ern fields are obtained when conditioning on the Kirchhoff vertex conditions. 
This is referred to as the `conditional' representation and it can be used to evaluate the joint density of the process and its derivatives at the vertices. 
These two representations are used in the later sections to derive explicit and computationally efficient methods for likelihood evaluations and spatial prediction. 
	
\subsection{Bridge representation}\label{sec:bridge}	
We begin by introducing the following process, referred to as a Whittle--Mat\'ern bridge process.

\begin{Definition}\label{def:WMB}
Let $x_\alpha$ be a centered Gaussian process on an interval $[0, T]$, with covariance function \eqref{eq:matern_cov},  
where $d(x,y) = |x-y|$, $\nu=\alpha-\frac{1}{2}$ and $\alpha \in \mathbb{N}$. 
Then, the Whittle--Mat\'ern bridge process with parameters $(\kappa,\tau,\alpha)$ on the interval $[0,T]$, $x_{B,T,\alpha}$,  
with respect to $x_\alpha$ is $ {x_{B,T,\alpha}(t) = x_\alpha(t) | \{\mv{x}_{\alpha}(0)  =0, \mv{x}_{\alpha}(T) =0\}},$ 
where $ \mv{x}_{\alpha}(t)= [x_\alpha(t),x_\alpha^{(1)}(t),\ldots , x_\alpha^{(\alpha-1)}(t)]$, 
and the derivatives of $x_\alpha(\cdot)$ are taken weakly in the $L_2(\Omega)$ sense (see Appendix~\ref{app:proofs_edge}).
\end{Definition}

We consider the following operator, defined for a  sufficiently differentiable function, as follows: 
$$
	B^{\alpha} u =  \left[u(0),u^{(1)}(0),\ldots,u^{(\alpha-1)}(0),u(\ell_e),u^{(1)}(\ell_e),\ldots,u^{(\alpha-1)}(\ell_e) \right]^\top,
$$
where the derivatives are weak in the $L_2(\Omega)$ sense. 
The Whittle--Mat\'ern bridge process has the following properties:

\begin{Proposition}\label{prp:Whittle_Matern_bridge_prop}
Let $x_{B,T,\alpha}(\cdot)$ be a Whittle--Mat\'ern bridge process on $e=[0,\ell_e]$, then
$x_{B,T,\alpha}(\cdot)$
is $\alpha-1$ times weakly differentiable in the $L_2(\Omega)$ sense, its weak derivatives in $L_2(\Omega)$
are also weakly continuous 
in the $L_2(\Omega)$ sense, and 
$B^{\alpha}x_{B,T,\alpha}(\cdot) = \mv{0}$. Further, $x_{B,T,\alpha}(\cdot)$ has covariance function
\begin{equation}\label{eq:cov_func_whittle_matern_bridge}
	r_{B,\ell_e}(t_1,t_2) = \varrho_M(t_1-t_2) - 
		\begin{bmatrix}
			\mv{r}_1(t_1,0) & \mv{r}_1(t_1,\ell_e) 
		\end{bmatrix}
		\begin{bmatrix}
			\mv{r}(0,0) & \mv{r}(0,\ell_e) \\
			\mv{r}(\ell_e,0) & \mv{r}(\ell_e,\ell_e)
		\end{bmatrix}^{-1}
		\begin{bmatrix}
			\mv{r}_1(0,t_2) \\
			\mv{r}_1(\ell_e,t_2) 
		\end{bmatrix},
\end{equation} 		
where $t_1,t_2\in e$, $\mv{r}(s,t)$ is the matrix given by
\begin{equation}\label{eq:R_matrix_edge_repr}
	\mv{r} : \mathbb{R} \times \mathbb{R} \mapsto \mathbb{R}^{\alpha \times \alpha}, 
	\quad \mv{r}(t_1,t_2) = \left[ \frac{\pd^{i-1}}{\pd t_2^{i-1}}\frac{\pd^{j-1}}{\pd t_1^{j-1}}\varrho_M(t_1-t_2)\right]_{ij\in\{1,2,\ldots, \alpha\}},
\end{equation}
with $\varrho_M(\cdot)$ given in \eqref{eq:matern_cov}, and $\mv{r}_1(\cdot,\cdot)$ denotes the first row in $\mv{r}(\cdot,\cdot)$. 
\end{Proposition}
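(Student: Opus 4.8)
The plan is to recognise the Whittle--Mat\'ern bridge as a Gaussian conditioning of the stationary Mat\'ern process on an interval, and to read off all four assertions from the standard Gaussian conditional-covariance formula together with the edge-level regularity results. Write $T=\ell_e$ and let $\mv X=B^{\alpha}x_\alpha$ be the $2\alpha$-dimensional Gaussian vector $(x_\alpha^{(k)}(0))_{k=0}^{\alpha-1}$ stacked on top of $(x_\alpha^{(k)}(T))_{k=0}^{\alpha-1}$, that is, the vector conditioned to $\mv 0$ in Definition~\ref{def:WMB}. Since $\nu=\alpha-\nicefrac12$ with $\alpha\in\mathbb{N}$, the half-integer Mat\'ern covariance $\varrho_M$ is $2\alpha-2$ times continuously differentiable on $\mathbb{R}$, and consequently $x_\alpha$ possesses $\alpha-1$ weak derivatives in the $L_2(\Omega)$ sense which are weakly continuous and satisfy $\pE(x_\alpha^{(j)}(s)x_\alpha^{(k)}(t))=\partial_s^{j}\partial_t^{k}\varrho_M(s-t)$ for $j,k\le\alpha-1$; these are the interval analogues of the regularity statements in Theorem~\ref{thm:regularity} and follow from the auxiliary constructions of Appendix~\ref{app:proofs_edge}. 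With this in hand, the row vector $\mv k(t):=\pE(x_\alpha(t)\mv X^\top)$ and the matrix $\mv K:=\pE(\mv X\mv X^\top)$ coincide, entry by entry, with the blocks built from $\mv r(\cdot,\cdot)$ and its first row $\mv r_1(\cdot,\cdot)$ appearing in \eqref{eq:cov_func_whittle_matern_bridge}, once one keeps track of which argument of $\varrho_M(t_1-t_2)$ carries each derivative.

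The crucial step is that $\mv K$ is invertible, so that the conditioning in Definition~\ref{def:WMB} is non-degenerate. I would establish this via the spectral representation of $x_\alpha$: because $\alpha\in\mathbb{N}$, the spectral density of $x_\alpha$ is proportional to $(\kappa^2+\omega^2)^{-\alpha}$, which is strictly positive and integrable, so for real coefficients $a_0,\dots,a_{\alpha-1},b_0,\dots,b_{\alpha-1}$ one has
\[
\Var\!\Bigl(\sum_{j}a_j x_\alpha^{(j)}(0)+\sum_{j}b_j x_\alpha^{(j)}(T)\Bigr)=c\int_{\mathbb{R}}\bigl|P(i\omega)+e^{i\omega T}Q(i\omega)\bigr|^{2}(\kappa^2+\omega^2)^{-\alpha}\,d\omega,
\]
with $P(z)=\sum_j a_j z^{j}$ and $Q(z)=\sum_j b_j z^{j}$; evaluating the integrand at $\omega=2\pi m/T$, $m\in\mathbb{Z}$, forces $P=-Q$, and substituting back gives $Q(i\omega)(e^{i\omega T}-1)\equiv 0$, hence $Q\equiv0$ and $P\equiv0$. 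Thus the entries of $\mv X$ are linearly independent in $L_2(\Omega)$, i.e.\ $\mv K\succ0$. (Alternatively, using the order-$\alpha$ state-space representation of $x_\alpha$ from \citet{BSW_Markov}, $\mv x_\alpha(T)=e^{AT}\mv x_\alpha(0)+\xi_T$ with $\xi_T$ independent of $\mv x_\alpha(0)$ and $\Cov(\xi_T)=\int_0^{T}e^{As}bb^\top e^{A^\top s}\,ds\succ0$ by controllability, whence $\mv K\succ0$.) I expect this non-degeneracy to be the main obstacle; the remaining points are bookkeeping.

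Given $\mv K\succ0$, the conditioned process admits the explicit realisation $x_{B,T,\alpha}(t)=x_\alpha(t)-\mv k(t)\mv K^{-1}\mv X$ on the same probability space, a centred Gaussian process whose covariance is the Schur complement $r_{B,T}(t_1,t_2)=\varrho_M(t_1-t_2)-\mv k(t_1)\mv K^{-1}\mv k(t_2)^\top$, which after substituting the identifications of the first paragraph is precisely \eqref{eq:cov_func_whittle_matern_bridge}; this proves the covariance assertion. For differentiability, the coordinates of $\mv k(\cdot)$ are, up to sign, $\varrho_M^{(j-1)}(\cdot)$ and $\varrho_M^{(j-1)}(\cdot-T)$ for $j=1,\dots,\alpha$, each of which belongs to $C^{\alpha-1}([0,T])$ because $(j-1)+(\alpha-1)\le2\alpha-2$; hence differentiating the identity above $k\le\alpha-1$ times in the $L_2(\Omega)$ sense is legitimate and yields $x_{B,T,\alpha}^{(k)}(t)=x_\alpha^{(k)}(t)-\mv k^{(k)}(t)\mv K^{-1}\mv X$. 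Weak continuity of these derivatives is inherited from that of $x_\alpha^{(k)}$ together with the continuity of $\mv k^{(k)}$, and the identity $\pE(x_{B,T,\alpha}^{(j)}(t_1)x_{B,T,\alpha}^{(k)}(t_2))=\partial_{t_1}^{j}\partial_{t_2}^{k}r_{B,T}(t_1,t_2)$ follows as in Theorem~\ref{thm:regularity}.

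Finally, $B^{\alpha}x_{B,T,\alpha}=\mv 0$ also drops out of this realisation: since $\mv k^{(k)}(t)=\pE(x_\alpha^{(k)}(t)\mv X^\top)$, evaluating at $t=0$ gives $x_{B,T,\alpha}^{(k)}(0)=x_\alpha^{(k)}(0)-\pE(x_\alpha^{(k)}(0)\mv X^\top)\mv K^{-1}\mv X$, where the subtracted term is the $L_2(\Omega)$-orthogonal projection of $x_\alpha^{(k)}(0)$ onto $\spn(\mv X)$; since $x_\alpha^{(k)}(0)$ is itself one of the coordinates of $\mv X$, this projection equals $x_\alpha^{(k)}(0)$, so $x_{B,T,\alpha}^{(k)}(0)=0$ $\mathbb{P}$-a.s. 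The same argument at $t=T$ gives $x_{B,T,\alpha}^{(k)}(T)=0$ for $k=0,\dots,\alpha-1$, i.e.\ $B^{\alpha}x_{B,T,\alpha}=\mv 0$. Collecting these four points proves the proposition.
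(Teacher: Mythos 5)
Your proof is correct, and while it rests on the same basic idea as the paper's — the bridge is a Gaussian conditioning of the stationary Mat\'ern process, so its covariance is the Schur complement $\varrho_M(t_1-t_2)-\mv{k}(t_1)\mv{K}^{-1}\mv{k}(t_2)^\top$ — you fill in the supporting steps along a noticeably different route. The paper's own proof is a two-line deferral: the covariance formula is attributed to Definition~\ref{def:WMB} plus standard Gaussian conditioning, the regularity and the vanishing of $B^\alpha x_{B,T,\alpha}$ are read off from Theorem~\ref{thm:ReprTheoremEdge_Refined} (whose proof goes through the Cameron--Martin space machinery of \citet{BSW_Markov}, Lemma~\ref{lem:CM_edge_repr_bridge} and Lemma~\ref{lem:charHdot_edge}), and the invertibility of the $2\alpha\times 2\alpha$ block matrix is obtained in Lemma~\ref{lem:sol_ODE_edge} from the linear independence of the fundamental solutions $\varrho_M^{(k)}(\cdot)$, $\varrho_M^{(k)}(\cdot-\ell_e)$ of $(\kappa^2-\Delta)^\alpha s=0$. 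You instead work entirely with the explicit residual representation $x_\alpha(t)-\mv{k}(t)\mv{K}^{-1}\mv{X}$: non-degeneracy of $\mv{K}$ comes from the spectral density $(\kappa^2+\omega^2)^{-\alpha}$ and a clean polynomial argument (your computation is in effect the two-point special case of the paper's Lemma~\ref{lem:strpos_def_stationary}, which the paper proves with a tempered-distribution argument and uses elsewhere); the $\alpha-1$ weak derivatives and their weak continuity follow by differentiating the residual, using that $\varrho_M\in C^{2\alpha-2}$ so each component of $\mv{k}$ is $C^{\alpha-1}$; and $B^\alpha x_{B,T,\alpha}=\mv{0}$ drops out because the coordinates of $B^\alpha x_\alpha$ project onto themselves. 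What your approach buys is self-containedness and elementarity — no reproducing-kernel or Sobolev-space input is needed — at the cost of some sign/index bookkeeping in identifying $\mv{k}$ and $\mv{K}$ with the blocks of $\mv{r}$, which you flag but do not carry out; what the paper's approach buys is that the same Cameron--Martin identification is reused for Theorem~\ref{thm:ReprTheoremEdge_Refined} and Lemma~\ref{lem:CM_edge_repr_bridge}, so the proposition comes essentially for free once that machinery is in place.
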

The proofs of Proposition \ref{prp:Whittle_Matern_bridge_prop}, Lemma \ref{lem:CM_edge_repr_bridge} and Theorems \ref{thm:ReprTheoremEdge_Refined} and \ref{cor:conditional_dists} are provided in Appendix~\ref{app:proofs_edge}. Further, the invertibility of the matrix whose inverse appears in \eqref{eq:cov_func_whittle_matern_bridge} follows from the proof of Lemma \ref{lem:sol_ODE_edge} in Appendix~\ref{app:proofs_edge}.

For a given edge $e\in\mathcal{E}$, we let $C_c^\infty(e)$ denote the set of infinitely differentiable functions with support compactly contained in the interior of $e$. 
The space $H_0^\alpha(e)$ is the completion of $C_c^\infty(e)$ with respect to the $\|\cdot\|_{H^\alpha(e)}$ norm, 
and $(\cdot,\cdot)_{\alpha,e}$ is the extension to $H_0^\alpha(e)\times H_0^\alpha(e)$ of the bilinear form
$(u,v)_{\alpha,e} = (u, L^\alpha v)_{L_2(e)}$, $u, v\in C^\infty_c(e)$.
\begin{Lemma}\label{lem:CM_edge_repr_bridge}
Let $V_{\alpha,0}(\cdot)$ be a Whittle--Mat\'ern bridge process with parameters $(\kappa,\tau,\alpha)$ on the interval $[0,\ell_e]$, ${0<\ell_e<\infty}$ and $\alpha\in\mathbb{N}$. The Cameron--Martin space associated with $V_{\alpha,0}(\cdot)$ is given by $(H^\alpha_0(e), (\cdot,\cdot)_{\alpha,e}).$
\end{Lemma}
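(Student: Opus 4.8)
The plan is to identify the Cameron--Martin space of the bridge process $V_{\alpha,0}$ with the Cameron--Martin space of the ``parent'' process $x_\alpha$ conditioned on the $2\alpha$ linear functionals $B^\alpha$. Recall from general Gaussian theory that if $X$ is a centered Gaussian process on $[0,\ell_e]$ with Cameron--Martin space $\mathcal{H}_X$, and one conditions on a finite collection of continuous linear functionals vanishing, then the conditioned process has Cameron--Martin space equal to $\{h\in\mathcal{H}_X : \ell_i(h)=0\}$, where $\ell_i$ are the functionals in question (this is the standard behavior of Cameron--Martin spaces under conditioning a jointly Gaussian vector, equivalently under $H(S)$-orthogonal projection). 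Here the functionals are point evaluations of $h$ and its first $\alpha-1$ derivatives at the two endpoints, i.e.\ $B^\alpha h = \mv{0}$. So the first step is to recall/establish that the Cameron--Martin space of $x_\alpha$ on $[0,\ell_e]$, the stationary Whittle--Mat\'ern process with $\nu=\alpha-\tfrac12$ and $\alpha\in\mathbb{N}$, is norm-equivalent to $H^\alpha(e)$ (equivalently, the reproducing kernel Hilbert space of $\varrho_M$ restricted to $[0,\ell_e]$ is $H^\alpha(e)$ with an equivalent norm). This is classical for the Mat\'ern kernel on an interval: the spectral density behaves like $(\kappa^2+\omega^2)^{-\alpha}$, so the RKHS is the Sobolev space $H^\alpha(\R)$ restricted to the interval, and restriction to a bounded interval gives $H^\alpha(e)$ with an equivalent inner product; one convenient representative of this inner product on smooth compactly supported functions is $(u,v)\mapsto (u,L^\alpha v)_{L_2(e)}$, since $L^\alpha=(\kappa^2-\Delta_e)^\alpha$ has symbol $(\kappa^2+\omega^2)^\alpha$.

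The second step is to intersect with the constraint $B^\alpha h = \mv{0}$. Since a function $h\in H^\alpha(e)$ has, by Sobolev embedding, continuous derivatives up to order $\alpha-1$ on the closed interval, the constraint $B^\alpha h = \mv 0$ says exactly that $h$ and its first $\alpha-1$ derivatives vanish at $0$ and $\ell_e$. The subspace of $H^\alpha(e)$ cut out by these $2\alpha$ conditions is precisely $H^\alpha_0(e)$: indeed $H^\alpha_0(e)$, the completion of $C_c^\infty(e)$ in the $H^\alpha(e)$ norm, is well known to coincide with $\{h\in H^\alpha(e): h^{(k)}(0)=h^{(k)}(\ell_e)=0,\ k=0,\ldots,\alpha-1\}$. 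So as a set the Cameron--Martin space of $V_{\alpha,0}$ is $H^\alpha_0(e)$. It remains to pin down the inner product: conditioning does not change the ambient Cameron--Martin inner product on the surviving subspace (the projection is orthogonal), so the inner product on $H^\alpha_0(e)$ inherited from $\mathcal{H}_{x_\alpha}$ is the restriction of the $\mathcal{H}_{x_\alpha}$ inner product, which by the previous step is $(\cdot,\cdot)_{\alpha,e}$ — and this is exactly the bilinear form $(u,v)_{\alpha,e}=(u,L^\alpha v)_{L_2(e)}$ on $C_c^\infty(e)$, extended by density to $H^\alpha_0(e)$. One should check that $(\cdot,\cdot)_{\alpha,e}$ is indeed an inner product (positive-definiteness) on $H^\alpha_0(e)$ and that it is equivalent to the $H^\alpha(e)$-norm there; this follows from integration by parts, which turns $(u,L^\alpha u)_{L_2(e)}$ into a sum $\sum_{k}\binom{\alpha}{k}\kappa^{2(\alpha-k)}\|u^{(k)}\|_{L_2(e)}^2$ for $u\in C_c^\infty(e)$ (no boundary terms, precisely because $u$ is compactly supported), which is manifestly equivalent to $\|u\|_{H^\alpha(e)}^2$.

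The main obstacle, and the place where care is needed, is making the ``conditioning does not change the Cameron--Martin inner product on the kept subspace'' argument rigorous in infinite dimensions — i.e.\ justifying that the Cameron--Martin space of $X \mid \{\ell_1(X)=\cdots=\ell_m(X)=0\}$ is $\ker\ell_1\cap\cdots\cap\ker\ell_m$ with the inherited inner product, where $\ell_i\in\mathcal{H}_X^*$ are identified with elements of $\mathcal{H}_X$ via Riesz. The clean way to do this is through the isometry between the Cameron--Martin space and the Gaussian Hilbert space $H(S)$ generated by the process (already set up in Section~\ref{sec:statistical} of the paper): conditioning the Gaussian vector corresponds to $L_2(\Omega)$-orthogonal projection in $H(S)$ onto the orthogonal complement of $\operatorname{span}\{\text{the } 2\alpha \text{ boundary variables}\}$, and under the Cameron--Martin isometry this is exactly the $(\cdot,\cdot)_{\alpha,e}$-orthogonal projection onto $H^\alpha_0(e)$; the image is a closed subspace with the inherited inner product, which is a genuine Hilbert space. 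A subtle point to address is that the $2\alpha$ boundary functionals are indeed bounded (hence continuous) linear functionals on $\mathcal{H}_{x_\alpha}=H^\alpha(e)$ — which again is just Sobolev embedding $H^\alpha(e)\hookrightarrow C^{\alpha-1}(\bar e)$ — and that they are linearly independent, which is needed so that the constraint genuinely removes a $2\alpha$-dimensional piece and so that the covariance matrix $\mv r$ of the boundary vector (appearing in \eqref{eq:cov_func_whittle_matern_bridge}) is invertible; the latter invertibility is recorded in the paper via the proof of Lemma~\ref{lem:sol_ODE_edge}. With these points in hand the identification $(H^\alpha_0(e),(\cdot,\cdot)_{\alpha,e})$ follows.
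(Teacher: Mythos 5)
Your proposal is correct in substance but structured quite differently from the paper's argument. The paper works directly with the bridge process: it writes $\varrho_M(t_1-t_2)=r_{B,\ell_e}(t_1,t_2)+h(t_1,t_2)$, shows that the kernel sections $r_{B,\ell_e}(\cdot,t)$ lie in $H_0^\alpha(e)$, and then pins down the covariance operator $\mathcal{C}_{\alpha,0}$ of the bridge by proving $\mathcal{C}_{\alpha,0}L^\alpha g=g$ for $g\in C_c^\infty(e)$ via the free-space Green's function identity for $\varrho_M$ (the term $h(\cdot,t)$ drops out because $L^\alpha h(\cdot,t)=0$); the identification of the Cameron--Martin inner product with $(\cdot,\cdot)_{\alpha,e}$ and the density of $C_c^\infty(e)$ in $H_0^\alpha(e)$ then finish the proof. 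You instead work at the level of the parent stationary process: identify its Cameron--Martin space on $[0,\ell_e]$ with $H^\alpha(e)$, invoke the general fact that conditioning on finitely many continuous linear functionals vanishing replaces the Cameron--Martin space by the intersection of their kernels with the inherited inner product, and use the classical characterization $H_0^\alpha(e)=\{h\in H^\alpha(e):B^\alpha h=\mv{0}\}$. Your route is more conceptual and transfers immediately to other conditionings, while the paper's avoids having to first establish the restricted RKHS of the stationary Mat\'ern process and instead gets the inverse of the bridge covariance operator explicitly. The one place where your write-up is thinner than it should be is the claim that the inherited inner product on $C_c^\infty(e)$ is \emph{exactly} $(u,v)\mapsto(u,L^\alpha v)_{L_2(e)}$ rather than merely an equivalent one: the restricted RKHS norm is an infimum over extensions to $\mathbb{R}$, so you must argue that for $u\in C_c^\infty(e)$ the zero extension is the minimal-norm extension (which holds because $L^\alpha$ applied to it remains supported in $e$, so it is orthogonal to every $h$ vanishing on $e$) --- this is precisely the content of the paper's Green's-function computation, and without it the lemma's statement about the specific inner product $(\cdot,\cdot)_{\alpha,e}$ is not fully established.
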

	
The following representation of a Whittle--Mat\'ern field restricted to one edge of $\Gamma$ holds. 

\begin{Theorem}\label{thm:ReprTheoremEdge_Refined}
Let $\Gamma$ be a compact metric graph and let $u$ be a Whittle--Mat\'ern field on $\Gamma$
obtained as a solution to \eqref{eq:Matern_spde} for $\alpha\in\mathbb{N}$. Let $\mathcal{E}$ be
the set of edges of $\Gamma$. For any $e\in\mathcal{E}$, $e = [0,\ell_e]$, we obtain the following representation:
$$
	u_e(t) = v_{\alpha,0}(t) + \mv{S}_{e}(t) B^\alpha u_e,\quad t\in e,
$$
where 
\begin{equation}\label{eq:edge_repr_Solution}
	\mv{S}_{e}(t) = \begin{bmatrix}
		\mv{r}_1(t,0) & \mv{r}_1(t,\ell_e) 
	\end{bmatrix}
	\begin{bmatrix}
		\mv{r}(0,0) & \mv{r}(0,\ell_e) \\
		\mv{r}(\ell_e,0) & \mv{r}(\ell_e,\ell_e)
	\end{bmatrix}^{-1},
\end{equation}
$\mv{r}(s,t)$ is the matrix given by \eqref{eq:R_matrix_edge_repr},
and $\mv{r}_1(\cdot,\cdot)$ denotes the first row in $\mv{r}(\cdot,\cdot)$. 
Finally, $v_{\alpha,0}(\cdot)$ is a Whittle--Mat\'ern bridge process on $e$, which is independent of $B^\alpha u_e$. 
\end{Theorem}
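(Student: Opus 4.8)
The plan is to deduce the representation from the Markov property of order $\alpha$ of $u$ established in \citet{BSW_Markov}, together with the fact that the precision operator $\tau^{2}L^{\alpha} = \tau^{2}(\kappa^{2}-\Delta_\Gamma)^{\alpha}$, acting on functions supported in the interior of a single edge $e=[0,\ell_e]$, reduces to the one-dimensional operator $\mathcal{L}^{\alpha} := (\kappa^{2}-\partial_{t}^{2})^{\alpha}$, whose fundamental solution on $\mathbb{R}$ is $\varrho_{M}$ of \eqref{eq:matern_cov} (this is precisely how $\varrho_M$ is normalised). The heuristic is thus that, along $e$, the field $u$ looks locally like the stationary Whittle--Mat\'ern process $x_\alpha$ of Definition~\ref{def:WMB}, up to boundary effects encoded in $B^{\alpha}u_e$.

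First I would set up the Gaussian bookkeeping on a fixed edge $e$. Since $\alpha\in\mathbb{N}$, Theorem~\ref{thm:regularity} ensures that $B^{\alpha}u_e$ is a well-defined centered Gaussian vector in $\mathbb{R}^{2\alpha}$ (its entries being genuine $L_2(\Omega)$-limits), whose covariance matrix is obtained by differentiating $\varrho$ as in Theorem~\ref{thm:regularity} and which is invertible --- this is the invertibility that, as noted after Proposition~\ref{prp:Whittle_Matern_bridge_prop}, follows from Lemma~\ref{lem:sol_ODE_edge}. By the conditioning formula for a centered jointly Gaussian family, for each $t\in e$ we may write $u_e(t) = \pE[u_e(t)\mid B^{\alpha}u_e] + w_e(t)$, where $t\mapsto\pE[u_e(t)\mid B^{\alpha}u_e] = \mv{A}_e(t)\,B^{\alpha}u_e$ for a deterministic $\mv{A}_e(t)\in\mathbb{R}^{1\times 2\alpha}$, and $w_e=\{w_e(t)\}_{t\in e}$ is a centered Gaussian process independent of $B^{\alpha}u_e$. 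Applying $B^{\alpha}$ and using that the conditional mean reproduces the jet gives $B^{\alpha}w_e = \mv{0}$ a.s. It then remains to identify (a) $\mv{A}_e(\cdot)$ with $\mv{S}_e(\cdot)$ of \eqref{eq:edge_repr_Solution}, and (b) $w_e$ as a Whittle--Mat\'ern bridge on $e$, i.e.\ to show $\Cov(w_e(t_1),w_e(t_2)) = r_{B,\ell_e}(t_1,t_2)$ as in \eqref{eq:cov_func_whittle_matern_bridge}; the claimed regularity of $u_e$ then follows from that of $w_e$ in Proposition~\ref{prp:Whittle_Matern_bridge_prop} together with the smoothness of the finite-rank term $\mv{S}_e(\cdot)B^{\alpha}u_e$.

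For (a) and (b) I would invoke the Markov property: by \citet{BSW_Markov}, conditionally on $B^{\alpha}u_e$ the field $u_e$ is independent of $\{u(s):s\in\Gamma\setminus e\}$, so the conditional law of $u_e$ given $B^{\alpha}u_e$ is that of a centered Gaussian field on $[0,\ell_e]$ with precision $\tau^{2}\mathcal{L}^{\alpha}$ in the interior, conditioned on its own $\alpha$-jet at the two endpoints being $B^{\alpha}u_e$. Two consequences: for $t$ in the interior of $e$, $t\mapsto\pE[u_e(t)\mid B^{\alpha}u_e]$ is $\mathcal{L}^{\alpha}$-harmonic; and for each fixed $t_2\in(0,\ell_e)$ the map $t_1\mapsto\Cov(w_e(t_1),w_e(t_2))$ is $\tau^{-2}$ times the Green's function of $\mathcal{L}^{\alpha}$ on $(0,\ell_e)$ under the homogeneous conditions that the $\alpha$-jet vanish at $0$ and at $\ell_e$. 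The same two characterizations hold for $\mv{S}_e(\cdot)B^{\alpha}u_e$ and $r_{B,\ell_e}$: the row vector $\mv{S}_e$ in \eqref{eq:edge_repr_Solution} is built from $\varrho_M$ and its derivatives, which are $\mathcal{L}^{\alpha}$-harmonic off the diagonal and hence on all of $(0,\ell_e)$ since the only singular points are the endpoints, and the normalisation \eqref{eq:R_matrix_edge_repr} forces $\mv{S}_e(\cdot)\mv{b}$ to reproduce the prescribed jet $\mv{b}$; while, by Proposition~\ref{prp:Whittle_Matern_bridge_prop} and Definition~\ref{def:WMB}, $r_{B,\ell_e}$ is the covariance of $x_\alpha$ conditioned on a vanishing $\alpha$-jet at $0$ and $\ell_e$, which is exactly that Green's function. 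Since the $\mathcal{L}^{\alpha}$-harmonic function with a prescribed $\alpha$-jet at both endpoints is unique, and the Green's function under those $2\alpha$ homogeneous conditions is unique --- the homogeneous order-$2\alpha$ two-point boundary value problem having only the trivial solution, which is once more the invertibility from Lemma~\ref{lem:sol_ODE_edge} --- we obtain (a) and (b). Finally, $w_e$ being centered Gaussian with covariance $r_{B,\ell_e}$, Proposition~\ref{prp:Whittle_Matern_bridge_prop} (equivalently, the Cameron--Martin identification in Lemma~\ref{lem:CM_edge_repr_bridge}, which pins down the bridge measure by its Cameron--Martin space $(H_0^\alpha(e),(\cdot,\cdot)_{\alpha,e})$) identifies it as a Whittle--Mat\'ern bridge $v_{\alpha,0}$ on $e$, independent of $B^{\alpha}u_e$, which is the asserted representation.

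The main obstacle is step (b): making rigorous that conditioning $u_e$ on $B^{\alpha}u_e$ genuinely severs the edge from the rest of the graph and leaves a field whose conditional covariance in the interior is the pinned Green's function of $\tau^{2}\mathcal{L}^{\alpha}$ --- equivalently, that the restriction of $\varrho$ to $e\times e$ agrees with $\varrho_M$ modulo a correction that is $\mathcal{L}^{\alpha}$-harmonic in each argument and is therefore annihilated once one conditions on the full $\alpha$-jet at the two endpoints. This is exactly where the Markov property of \citet{BSW_Markov} and the locality of $L^{\alpha}$ on edge interiors are indispensable; once that decoupling is available, the remainder is Gaussian conditioning plus uniqueness for the order-$2\alpha$ two-point boundary value problem, and the explicit shapes in \eqref{eq:edge_repr_Solution}--\eqref{eq:cov_func_whittle_matern_bridge} are inherited directly from Proposition~\ref{prp:Whittle_Matern_bridge_prop}.
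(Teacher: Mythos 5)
Your route is genuinely different from the paper's, and most of it is sound. The paper's proof is a short citation chain: it imports the abstract decomposition $u_e = v_{e,0} + \mv{s}_e(\cdot) B^\alpha u_e$ from \citet[Theorem~9]{BSW_Markov}, takes the explicit $\mv{S}_e$ from Lemma~\ref{lem:sol_ODE_edge}, and identifies $v_{e,0}$ as a Whittle--Mat\'ern bridge purely at the level of Cameron--Martin spaces, chaining $\mathcal{H}_{0,I}(e)\cong\dot H_0^\alpha(e)\cong H_0^\alpha(e)$ (Lemma~\ref{lem:charHdot_edge}) with Lemma~\ref{lem:CM_edge_repr_bridge}. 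You instead build the decomposition by hand via Gaussian conditioning and identify the two pieces through uniqueness for the order-$2\alpha$ two-point boundary value problem: the conditional mean and $\mv{S}_e(\cdot)\mv{b}$ are both $(\kappa^2-\partial_t^2)^\alpha$-harmonic interpolants of the jet, while the conditional covariance and $r_{B,\ell_e}$ are both the jet-pinned Green's function of $\tau^2(\kappa^2-\partial_t^2)^\alpha$ on $(0,\ell_e)$. This works and yields a more concrete, PDE-level argument; the price is that you must justify directly the distributional identity $(\kappa^2-\partial_t^2)^\alpha\varrho(\cdot,t_2)=\tau^{-2}\delta_{t_2}$ on the interior of $e$ and the harmonicity of $t\mapsto\partial_2^k\varrho(t,s)$ for $s\in\partial e$ (locality of $L$ plus the eigenexpansion of $\varrho$ in \citet[Proposition~7]{BSW2022}); it is this locality, rather than the conditional-independence statement of the Markov property, that actually drives your step (b).

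The one concrete error is the claim that $\Cov(B^\alpha u_e)$ is invertible, with the invertibility ``following from Lemma~\ref{lem:sol_ODE_edge}.'' That lemma gives invertibility of $\mv{M}=\Cov(B^\alpha x_\alpha)$ for the stationary process on $\mathbb{R}$, not of the jet covariance of the graph field. The latter is genuinely singular whenever an endpoint of $e$ has degree $1$ and $\alpha\geq 2$: by item \ref{thm:regularity:item:Kirchhoff} of Theorem~\ref{thm:regularity}, the odd-order derivatives of $u$ vanish a.s.\ at such a vertex --- which is precisely why Theorem~\ref{cor:conditional_dists} restricts the admissible boundary values there. Consequently $\mv{A}_e(t)=\Cov(u_e(t),B^\alpha u_e)\Cov(B^\alpha u_e)^{-1}$ is not defined, $B^\alpha\mv{A}_e=\mv{I}$ fails, and your uniqueness argument pins down $\mv{A}_e$ only on the support of $B^\alpha u_e$. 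The repair is to bypass the matrix identity: show that $\mv{g}(t):=\Cov\bigl(u_e(t),B^\alpha u_e\bigr)-\mv{S}_e(t)\Cov(B^\alpha u_e)$ is harmonic with $B^\alpha\mv{g}=\mv{0}$, hence $\mv{g}\equiv\mv{0}$ by Lemma~\ref{lem:sol_ODE_edge}, so that $u_e(t)-\mv{S}_e(t)B^\alpha u_e$ is uncorrelated with $B^\alpha u_e$ and the representation holds with $\mv{S}_e$ as stated even in the degenerate case. A minor further caveat: deducing the regularity of $w_e$ from Proposition~\ref{prp:Whittle_Matern_bridge_prop} is circular, since the paper proves that proposition from this theorem; nothing is lost because the theorem makes no regularity claim, but you should instead obtain the regularity of the bridge directly from Definition~\ref{def:WMB}.
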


\begin{Remark}
Theorem \ref{thm:ReprTheoremEdge_Refined} provides a substantial refinement of \citet[Theorem~9]{BSW_Markov} for the case of Whittle--Mat\'ern fields. 
Instead of having a description of the process $v_{\alpha,0}(\cdot)$ through its Cameron--Martin space, 
in Theorem \ref{thm:ReprTheoremEdge_Refined}, we completely identify $v_{\alpha,0}(\cdot)$ as a Whittle--Mat\'ern bridge process 
and provide explicit expressions for the functions multiplying the boundary term $B^\alpha u_e$.
\end{Remark}

The above representation indicates that given a Whittle--Mat\'ern field $u$ on a general compact metric graph $\Gamma$, 
then for any edge $e=[0,\ell_e]$, the conditional distribution of the field given the boundary data $B^\alpha u$ only depends on $\alpha, \kappa, \tau$ and $\ell_e$. 
In particular, it does not depend on the graph geometry. 
Further, this representation can also be used to obtain the conditional distribution of $u_e(t_1),\ldots,u_e(t_n)$ given $B^\alpha u$, 
where $t_1,\ldots,t_n\in [0,l_e]$, $n\in \mathbb{N}$: 

\begin{Theorem}\label{cor:conditional_dists}
Let $u$ be a solution to \eqref{eq:Matern_spde} with $\alpha\in\mathbb{N}$ and define a centered Gaussian process $u_M$ on $e\in\mathcal{E}$ 
with a Mat\'ern covariance function with parameters $(\kappa,\tau,\alpha)$. 
Fix ${\underline{\mv{u}}_0, \bar{\mv{u}}_0 \in \mathbb{R}^{\alpha}}$ and let $\mv{u}_0 = (\underline{\mv{u}}_0,\bar{\mv{u}}_0)$. 
If $\deg(\underline{e}) = 1$, we require that $\underline{\mv{u}}_0= (u_{00}, 0, u_{01}, 0, \ldots)$, with $u_{0j}\in\mathbb{R}$, $j=0,\ldots,\floor{\nicefrac{\alpha-1}{2}}$. Similarly, if $\deg(\bar{e}) = 1$, we require that every second element in $\bar{\mv{u}}_0$ is zero.
Then, the two conditional processes $u_e | \left\{ B^{\alpha}u_e =\mv{u}_0  \right\}$ and $u_M |\left\{ B^{\alpha}u_M =\mv{u}_0  \right\}$
have the same finite-dimensional distributions. 
\end{Theorem}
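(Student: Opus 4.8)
The plan is to show that both conditional processes are centered Gaussian processes with the same covariance function, since a Gaussian process is determined by its mean and covariance, and conditioning on a fixed Gaussian vector $\mv{u}_0$ only shifts the mean by a deterministic (affine) function of $\mv{u}_0$ while leaving the conditional covariance unchanged. So the core claim reduces to: the conditional covariance of $u_e$ given $B^\alpha u_e$ equals the conditional covariance of $u_M$ given $B^\alpha u_M$, and the two conditional means, as functions of the conditioning value $\mv{u}_0$, coincide.

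First I would invoke Theorem~\ref{thm:ReprTheoremEdge_Refined}: on the edge $e=[0,\ell_e]$ we have $u_e(t) = v_{\alpha,0}(t) + \mv{S}_e(t) B^\alpha u_e$ with $v_{\alpha,0}$ a Whittle--Mat\'ern bridge process on $e$ independent of $B^\alpha u_e$, and $\mv{S}_e$ given explicitly by \eqref{eq:edge_repr_Solution}. Hence conditionally on $\{B^\alpha u_e = \mv{u}_0\}$, the process $u_e$ has mean $\mv{S}_e(t)\mv{u}_0$ and covariance function equal to that of $v_{\alpha,0}$, namely $r_{B,\ell_e}(t_1,t_2)$ from \eqref{eq:cov_func_whittle_matern_bridge}. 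Next I would establish the analogous decomposition for the Mat\'ern process $u_M$: since $u_M$ has covariance function $\varrho_M$, standard Gaussian conditioning on the boundary vector $B^\alpha u_M$ — whose joint covariance with $(u_M(t_1),\dots,u_M(t_n))$ is built from the derivative-covariance matrices $\mv{r}$ via Theorem~\ref{thm:regularity}(3), which legitimizes differentiating $\varrho_M$ to get covariances of weak $L_2(\Omega)$ derivatives — yields exactly conditional mean $\mv{S}_e(t)\mv{u}_0$ (the regression coefficient matrix is precisely $[\mv{r}_1(t,0)\ \mv{r}_1(t,\ell_e)]$ times the inverse of the $2\alpha\times 2\alpha$ boundary covariance block, i.e.\ $\mv{S}_e(t)$) and conditional covariance $\varrho_M(t_1-t_2)$ minus the Schur complement term, which is literally $r_{B,\ell_e}(t_1,t_2)$. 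Thus $u_M \mid \{B^\alpha u_M = \mv{u}_0\}$ and $u_e \mid \{B^\alpha u_e = \mv{u}_0\}$ have identical finite-dimensional mean vectors and covariance matrices, hence identical finite-dimensional distributions.

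The one genuine subtlety — and the step I expect to be the main obstacle — is the degenerate boundary case when $\deg(\underline{e})=1$ or $\deg(\bar{e})=1$. At a degree-one vertex the Kirchhoff conditions in Theorem~\ref{thm:regularity}(\ref{thm:regularity:item:Kirchhoff}) force the odd-order derivatives of $u$ to vanish, so the boundary vector $B^\alpha u_e$ is supported on a proper subspace (every second component is a.s.\ zero up to order $\floor{(\alpha-1)/2}$), and correspondingly the $2\alpha\times 2\alpha$ boundary covariance matrix for $u$ is singular, whereas for the free Mat\'ern process $u_M$ it is nonsingular (its invertibility comes from the proof of Lemma~\ref{lem:sol_ODE_edge}). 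The hypothesis of the theorem precisely restricts $\mv{u}_0$ to lie in the support of $B^\alpha u_e$ — requiring the corresponding entries of $\underline{\mv{u}}_0$ (resp.\ $\bar{\mv{u}}_0$) to be zero — so the conditioning event is non-null for $u_e$. To handle this I would argue that conditioning $u_M$ on the full event $\{B^\alpha u_M = \mv{u}_0\}$ with those entries set to zero gives the same law as conditioning $u_e$ on $\{B^\alpha u_e = \mv{u}_0\}$: both amount to conditioning the same underlying Mat\'ern-type covariance structure on the same affine subspace of boundary values, and the bridge process $v_{\alpha,0}$ appearing in Theorem~\ref{thm:ReprTheoremEdge_Refined} is defined via conditioning $x_\alpha$ on that very subspace (Definition~\ref{def:WMB} together with Proposition~\ref{prp:Whittle_Matern_bridge_prop}). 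Equivalently, one can reduce to the non-degenerate interior case by noting that the relevant quantities $\mv{S}_e$ and $r_{B,\ell_e}$ depend only on the non-degenerate sub-block of $\mv{r}$, and by a standard approximation/limiting argument the conditional distributions depend continuously on the (restricted) conditioning vector; carefully matching the linear-algebra of the singular block on both sides is the only place where real care is needed, and it is exactly what the support restrictions in the statement are designed to make work.
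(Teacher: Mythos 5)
Your proposal is correct and follows essentially the same route as the paper: both conditional processes are reduced to a Whittle--Mat\'ern bridge plus the deterministic term $\mv{S}_e(t)\mv{u}_0$, using the independence of the bridge from the boundary vector, so the two conditional laws coincide. The only differences are cosmetic: the paper cites Theorem~\ref{thm:EdgeReprStationary} for the stationary Mat\'ern side where you re-derive the Schur-complement computation, and it dispatches the degree-one-vertex case in a single sentence (odd-order derivatives vanish a.s.\ there, so the support restriction on $\mv{u}_0$ is exactly what keeps the conditioning well-posed), which is the same point you make at greater length.
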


Finally, as a corollary of Theorem \ref{thm:ReprTheoremEdge_Refined}, we have the following bridge representation.

\begin{Corollary}\label{cor:bridge_representation}
Let $\Gamma$ be a compact metric graph and $u$ be a Whittle--Mat\'ern field obtained as solution to \eqref{eq:Matern_spde}, with $\alpha\in\mathbb{N}$. 
Then, $u$ has the following representation:	
\begin{equation}\label{eq:bridge_representation}
	u(s) = u_\Gamma(s)  + \sum_{e\in \mathcal{E}}\tilde u_{B,\ell_e,\alpha}(s), \quad s\in\Gamma,
\end{equation}
where $\widetilde{u}_{B, \ell_e,\alpha}$, $e\in\mathcal{E}$, are independent Mat\'ern bridge processes defined on each edge and extended by zero outside the edge. 
Moreover, $u_\Gamma(s) =  \sum_{e\in\mathcal{E}}\mv{S}_{e}(s) \mv{D}_e \mv{U}$, where 
\begin{align}\label{eq:dens_u2}
   \mv{U} =  [{\mv{u}}(\underline{e}_1)^\top,
   {\mv{u}}(\bar{e}_1)^\top,
   {\mv{u}}(\underline{e}_2)^\top,
   {\mv{u}}(\bar{e}_2)^\top,
   \ldots,
	{\mv{u}}(\underline{e}_{|\mathcal{E}|})^\top,
   {\mv{u}}(\bar{e}_{|\mathcal{E}|})^\top]^\top,
\end{align}
$$	
	\mv{u}(s) =  [{u}(s),{u}'(s),{u}''(s),\ldots, {u}^{(\alpha-1)}(s)]^\top = \sum_{e \in \mathcal{E}} \mathbb{I}\left(s \in e\right) {\mv{u}}_e(s),
$$
 $\mv{D}_{e}$ is the matrix that maps $\mv{U}$ to $(\mv{u}(\underline{e}),\, \mv{u}(\bar{e}))^\top$, 
 and $\mv{S}_e(\cdot), e\in\mathcal{E}$, is given in Theorem \ref{thm:ReprTheoremEdge_Refined}.
\end{Corollary}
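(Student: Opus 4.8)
The plan is to derive Corollary~\ref{cor:bridge_representation} by applying Theorem~\ref{thm:ReprTheoremEdge_Refined} edge by edge and then reorganizing the resulting expressions into a single global representation. First I would fix an enumeration $e_1,\ldots,e_{|\mathcal{E}|}$ of the edges and, for each $e\in\mathcal{E}$, invoke Theorem~\ref{thm:ReprTheoremEdge_Refined} to write $u_e(t) = v_{\alpha,0}^{(e)}(t) + \mv{S}_e(t)B^\alpha u_e$ for $t\in e$, where $v_{\alpha,0}^{(e)}$ is a Whittle--Mat\'ern bridge process on $e$ independent of $B^\alpha u_e$. The bridge terms, extended by zero off their edge, are exactly the processes $\widetilde{u}_{B,\ell_e,\alpha}$; their mutual independence needs a short argument, since Theorem~\ref{thm:ReprTheoremEdge_Refined} only gives independence of each $v_{\alpha,0}^{(e)}$ from the boundary data $B^\alpha u_e$ of the same edge. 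I would obtain this by noting that the bridge processes are functions of the white noise restricted to the interior of their respective edges (via the representation in the proof of Theorem~\ref{thm:ReprTheoremEdge_Refined}, or equivalently from their Cameron--Martin description in Lemma~\ref{lem:CM_edge_repr_bridge} together with the fact that $H_0^\alpha(e)$-localized noise components on disjoint edges are independent), and white noise on disjoint subsets of $\Gamma$ is independent.

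Next I would handle the bookkeeping that turns the per-edge boundary term $\mv{S}_e(t)B^\alpha u_e$ into the stated global form. By definition, $B^\alpha u_e = (\mv{u}_e(\underline{e})^\top, \mv{u}_e(\bar{e})^\top)^\top$, i.e.\ the vector of the field and its first $\alpha-1$ derivatives at the two endpoints of $e$, taken from the $e$-side. Collecting all these endpoint jets into the long vector $\mv{U}$ of \eqref{eq:dens_u2}, there is, for each edge $e$, a (sparse, $0/1$) selection matrix $\mv{D}_e$ with $\mv{D}_e\mv{U} = B^\alpha u_e$; this is just the assertion that $\mv{U}$ contains $\mv{u}(\underline{e})$ and $\mv{u}(\bar{e})$ as sub-blocks, which it does by construction. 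Here one should remark that $\mv{u}(v)$ is well-defined as a single vector at each vertex $v$: by Theorem~\ref{thm:regularity}\eqref{thm:regularity:item:Kirchhoff}, the even-order derivatives agree across edges meeting at $v$, so the ``$e$-side'' jet values are consistent (up to the sign flips in odd derivatives already absorbed into the definition of directional derivatives / the matrices $\mv{D}_e$), which is why $\sum_{e}\mathbb{I}(s\in e)\mv{u}_e(s)$ is unambiguous. Substituting $B^\alpha u_e = \mv{D}_e\mv{U}$ and extending $\mv{S}_e(\cdot)$ by zero off $e$ gives $u(s) = \sum_{e\in\mathcal{E}}\widetilde{u}_{B,\ell_e,\alpha}(s) + \sum_{e\in\mathcal{E}}\mv{S}_e(s)\mv{D}_e\mv{U} = u_\Gamma(s) + \sum_{e}\widetilde{u}_{B,\ell_e,\alpha}(s)$, which is \eqref{eq:bridge_representation}.

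Finally I would note that the $\widetilde{u}_{B,\ell_e,\alpha}$ are also independent of $\mv{U}$: each $\widetilde{u}_{B,\ell_e,\alpha}$ is independent of $B^\alpha u_e$ by Theorem~\ref{thm:ReprTheoremEdge_Refined}, and since it is measurable with respect to the white noise on the interior of $e$ alone while $\mv{U}$ is built from boundary jets, a monotone-class / independence-of-$\sigma$-algebras argument (white noise on $\mathrm{int}(e)$ is independent of white noise on the complement, and the endpoint jets $\mv{u}(v)$ at vertices of edges other than $e$ are governed by the latter) upgrades this to joint independence; combined with the mutual independence of the bridges across edges, $u_\Gamma$ and $\{\widetilde{u}_{B,\ell_e,\alpha}\}_{e\in\mathcal{E}}$ form an independent family. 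The covariance-function identity \eqref{eq:cov_func_whittle_matern_bridge} for the bridges and the identification of their law as Mat\'ern bridges are already supplied by Proposition~\ref{prp:Whittle_Matern_bridge_prop} and Definition~\ref{def:WMB}, so nothing further is needed there. The main obstacle is the independence bookkeeping — carefully tracking which $\sigma$-algebra each summand lives in so that ``independent of its own boundary data'' can be promoted to ``jointly independent family including $\mv{U}$'' — whereas the algebraic rearrangement into $\mv{S}_e\mv{D}_e\mv{U}$ is routine once the consistency of vertex jets from Theorem~\ref{thm:regularity}\eqref{thm:regularity:item:Kirchhoff} is invoked.
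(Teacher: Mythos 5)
Your overall skeleton -- apply Theorem~\ref{thm:ReprTheoremEdge_Refined} on each edge, rewrite $B^\alpha u_e$ as $\mv{D}_e\mv{U}$, extend $\mv{S}_e$ by zero and collect -- is exactly the paper's route, and the bookkeeping part is fine. The genuine gap is in how you justify the independence. Your primary argument is that each bridge is ``a function of the white noise restricted to the interior of its edge,'' so that bridges on disjoint edges, and the bridges versus $\mv{U}$, are independent because white noise on disjoint sets is independent. This is false in general: the solution map $\mathcal{W}\mapsto\tau^{-1}L^{-\alpha/2}\mathcal{W}$ is nonlocal, and under the canonical isometry between the Cameron--Martin space and the Gaussian space $H(\Gamma)$, an element $h\in H_0^\alpha(e)$ corresponds to the variable $\mathcal{W}(\tau L^{\alpha/2}h)$. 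For odd $\alpha$ (in particular $\alpha=1$), $L^{\alpha/2}$ is a genuinely fractional, nonlocal operator, so $L^{\alpha/2}h$ is not supported in $e$ even when $h$ is; the bridge on $e$ is therefore \emph{not} measurable with respect to the white noise on the interior of $e$, and the disjoint-support argument collapses. The same problem sinks your argument that the bridges are independent of $\mv{U}$: the boundary jets are not functionals of ``white noise on the boundary'' (a null set) -- they depend on $\mathcal{W}$ everywhere.

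The correct mechanism, and the one the paper uses, is the Markov property rather than noise localization: by Theorem~5 and Remark~3 of \cite{BSW_Markov}, the restrictions of $u$ to distinct edges are conditionally independent given $\mv{U}$. Since, given $\mv{U}$, the conditional law of $u_e$ is that of $\mv{S}_e\mv{D}_e\mv{U}$ plus a bridge whose law does not depend on $\mv{U}$ (Theorem~\ref{thm:ReprTheoremEdge_Refined}), the bridges are mutually independent and jointly independent of $\mv{U}$. Equivalently, in the language of your hedge, the relevant fact is $L_2(\Omega)$-orthogonality of the interior Gaussian spaces $\mathcal{H}_{0,I}(e)$ for distinct edges and their orthogonality to all boundary jets, which follows from the \emph{locality of the Cameron--Martin inner product} $(\cdot,\cdot)_\alpha$ (a direct sum of integrals over edges) -- not from any localization of the driving noise. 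If you replace your white-noise argument by this appeal to the Markov results of \cite{BSW_Markov}, the proof closes and coincides with the paper's.
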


\begin{proof}
Consider the Whittle--Mat\'ern field $u(s)$ on a metric graph $\Gamma$. 
By \cite[Theorem 5 and Remark 3]{BSW_Markov}, $u(s_1)$ and $u(s_2)$ are conditionally independent given $\mv{U}$ 
if $s_1$ and $s_2$ are locations on different edges. 
By Theorem~\ref{thm:ReprTheoremEdge_Refined}, we can express $u(s)|_e$ as 
$$
	u(s)  = \mv{S}_{e}(t) B^\alpha u_e + u_{B,\ell_e,\alpha}(t), \quad s\in e,\quad s=(e,t),
$$
where $u_{B,\ell_e,\alpha}$ is a Mat\'ern bridge process on $[0, \ell_e]$, independent of $\mv{U}$.

 By extending $\mv{S}_{e}(\cdot)$ by zero outside the edge $e$, we can then express the Whittle--Mat\'ern field $u(s)$ as follows:
\begin{align*}
u(s)  =   \sum_{e\in\mathcal{E}}\mv{S}_{e}(s) \begin{bmatrix}
\mv{u}(\underline{e}) \\
\mv{u}(\bar{e})
\end{bmatrix}   + \sum_{e\in \mathcal{E}}\tilde u_{B,\ell_e,\alpha}(s) =  \sum_{e\in\mathcal{E}}\mv{S}_{e}(s) \mv{D}_e\mv{U}  + \sum_{e\in \mathcal{E}} u_{B,\ell_e,\alpha}(s),
\end{align*}
where $u_{B, \ell_e}$ are independent Mat\'ern bridge processes defined on each edge and extended by zero outside the edge. 
Thus, by defining the low-rank process $u_\Gamma(s) =  \sum_{e\in\mathcal{E}}\mv{S}_{e}(s) \mv{D}_e \mv{U}$, 
we arrive at the bridge representation \eqref{eq:bridge_representation} for the Whittle--Mat\'ern field.
\end{proof}

With this representation, the only quantity not explicitly given is the joint distribution of $\mv{U}$. 
This distribution can be obtained through the conditional representation in Theorem~\ref{thm:representation} in the following subsection. 

\subsection{Conditional representation}\label{sec:conditional}
We will now establish that one can define independent Gaussian processes on the graph edges, which after conditioning on 
Kirchhoff constraints at the vertices results in a process that is the solution to \eqref{eq:Matern_spde}. 
To define these processes, we require the multivariate covariance function from the following proposition, 
whose proof is given in  Appendix~\ref{app:proofs_conditional}:
	
\begin{Proposition}\label{prop:multivariate_covariante}
Let $\mv{r}(\cdot,\cdot)$ be given by \eqref{eq:R_matrix_edge_repr} with $\alpha\in\mathbb{N}$.
Then, for $\ell >0$,
\begin{align}\label{eq:covmod}
	\tilde{\mv{r}}_{\ell}\left(t_1,t_2\right) = \mv{r}(t_1,t_2) +
	\begin{bmatrix}\mv{r}(t_1, 0) & \mv{r}(t_1,\ell)\end{bmatrix} 
	\begin{bmatrix}
		\mv{r}(0,0)  & 	-\mv{r}(0,\ell) \\
		-\mv{r}(\ell,0)  & 	\mv{r}(0,0)
	\end{bmatrix}^{-1} 
	\begin{bmatrix}\mv{r}(t_2,0)\\
		\mv{r}(t_2,\ell)
	\end{bmatrix}
\end{align}
is a multivariate covariance function on the interval $[0, \ell]$.
\end{Proposition}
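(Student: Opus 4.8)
The plan is to show that $\tilde{\mv{r}}_\ell$ is a valid $\R^{\alpha\times\alpha}$-valued covariance function by exhibiting a Gaussian process whose covariance is precisely this matrix-valued kernel. The natural candidate, by analogy with the bridge construction in Definition~\ref{def:WMB} and Proposition~\ref{prp:Whittle_Matern_bridge_prop}, is a Whittle--Mat\'ern process on $[0,\ell]$ (with parameters $(\kappa,\tau,\alpha)$) conditioned not on the endpoint data being zero, but on a periodic-type coupling of the endpoints. Concretely, let $x_\alpha$ be the centered Gaussian process on $[0,\ell]$ with Mat\'ern covariance $\varrho_M$ (so $\mv{x}_\alpha(t) = [x_\alpha(t),\ldots,x_\alpha^{(\alpha-1)}(t)]^\top$ has covariance $\mv{r}(t_1,t_2)$), and consider conditioning $x_\alpha(\cdot)$ on the event $\{\mv{x}_\alpha(0) = \mv{x}_\alpha(\ell)\}$, or some sign-twisted variant thereof. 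The block matrix $\begin{bmatrix}\mv{r}(0,0) & -\mv{r}(0,\ell)\\ -\mv{r}(\ell,0) & \mv{r}(0,0)\end{bmatrix}$ appearing in \eqref{eq:covmod} is exactly the covariance matrix of $(\mv{x}_\alpha(0), -\mv{x}_\alpha(\ell))$ — or equivalently, it is the covariance of $\mv{x}_\alpha(0) + \mv{x}_\alpha(\ell)$ written in a convenient form — which strongly suggests the conditioning event is a linear constraint relating the endpoint vectors.

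First I would identify the precise linear functional being conditioned on. Given the $+\mv{r}(0,0)$ on the diagonal and $-\mv{r}(0,\ell)$ off-diagonal, the relevant covariance is that of the pair $(\mv{x}_\alpha(0),\mv{x}_\alpha(\ell))$ after negating the second component, so the conditioning is on the vector $[\mv{x}_\alpha(0)^\top, \mv{x}_\alpha(\ell)^\top]^\top$ lying in the subspace where a suitable combination vanishes; working out the standard Gaussian conditioning formula $\Cov(Y\mid Z) = \Cov(Y) - \Cov(Y,Z)\Cov(Z)^{-1}\Cov(Z,Y)$ then should reproduce \eqref{eq:covmod} verbatim, where $Y = \mv{x}_\alpha(t_1)$ (resp. $\mv{x}_\alpha(t_2)$) and $Z$ is the endpoint constraint vector. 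Second, I would verify that the block matrix being inverted in \eqref{eq:covmod} is indeed invertible: this follows from the nondegeneracy of the joint law of $(\mv{x}_\alpha(0),\mv{x}_\alpha(\ell))$, which in turn is a consequence of the invertibility established in the proof of Lemma~\ref{lem:sol_ODE_edge} (the same matrix, up to sign changes on off-diagonal blocks, that appears in \eqref{eq:cov_func_whittle_matern_bridge} and \eqref{eq:edge_repr_Solution}); sign flips on off-diagonal blocks preserve invertibility since they correspond to conjugation by $\diag(I,-I)$. Third, once $\tilde{\mv{r}}_\ell$ is recognized as a conditional covariance, it is automatically symmetric and positive semi-definite as an $\R^{\alpha\times\alpha}$-valued kernel, i.e. $\sum_{i,j} \mv{a}_i^\top \tilde{\mv{r}}_\ell(t_i,t_j)\mv{a}_j \geq 0$ for all finite collections — which is the definition of a multivariate covariance function on $[0,\ell]$.

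The main obstacle I anticipate is pinning down the correct conditioning event so that the Gaussian conditioning formula produces exactly the matrix in \eqref{eq:covmod}, including the precise pattern of signs on the off-diagonal blocks. The sign structure ($+\mv{r}(0,0)$ on both diagonal blocks, $-\mv{r}(0,\ell)$ and $-\mv{r}(\ell,0)$ off-diagonal) is not the covariance of $(\mv{x}_\alpha(0),\mv{x}_\alpha(\ell))$ itself but of a reflected version, so some care is needed: one likely conditions on $\mv{x}_\alpha(0) = -\mv{x}_\alpha(\ell)$, or introduces an auxiliary independent copy and glues, or uses the even/odd reflection trick familiar from constructing processes on a circle from processes on an interval. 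An alternative, more computational route that sidesteps the probabilistic interpretation is to directly check the positive-semidefiniteness of the matrix kernel via a Schur-complement / completion-of-squares argument on the big Gram matrix $[\tilde{\mv{r}}_\ell(t_i,t_j)]_{i,j}$; this avoids guessing the conditioning event but replaces it with a somewhat opaque linear-algebra verification. I would pursue the probabilistic route first, as it also yields, as a byproduct, an explicit construction of the process needed in the conditional representation of Section~\ref{sec:conditional}, and fall back on the Schur-complement computation only if the sign bookkeeping proves too delicate.
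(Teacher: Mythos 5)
Your primary route cannot work, and the obstruction is not ``delicate sign bookkeeping'' but the overall sign of the correction term. Conditioning a Gaussian process on any linear functional of its endpoint data produces a covariance of the form $\mv{r}(t_1,t_2) - \Cov(Y,Z)\Cov(Z)^{-1}\Cov(Z,Y)$, i.e.\ the stationary covariance \emph{minus} a positive semi-definite correction; in particular the variance can only decrease. Equation \eqref{eq:covmod} \emph{adds} a Gram-type term $\mv{g}(t_1)\mv{M}^{-1}\mv{g}(t_2)$ with $\mv{M} = \diag(\mv{I},-\mv{I})\,\Cov\bigl(\mv{x}_\alpha(0),\mv{x}_\alpha(\ell)\bigr)\,\diag(\mv{I},-\mv{I}) \succ 0$, so $\tilde{\mv{r}}_\ell(0,0) \succ \mv{r}(0,0)$: the boundaryless process has strictly \emph{larger} endpoint variance than the stationary Mat\'ern process. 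No conditioning event, periodic, sign-twisted or otherwise, can produce it. What the paper actually does (Theorem~\ref{thm:CondDens} via Lemma~\ref{lem:conditional}, the ``adjusting the $c$-marginal'' operation) is the opposite of conditioning: it keeps the bridge law of the interior given the endpoints unchanged and \emph{replaces} the endpoint marginal by an inflated Gaussian whose precision is $\mv{Q}^{0\ell} - \tfrac{1}{2}\diag\bigl(\mv{r}(0,0)^{-1},\mv{r}(0,0)^{-1}\bigr)$ (cf.\ Proposition~\ref{cor:precfuncconddens}); this inflation is exactly what is needed so that gluing independent edge copies and imposing the Kirchhoff constraints recovers the Whittle--Mat\'ern field.

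Your fallback route, by contrast, is essentially correct and is in fact how the paper closes the argument: once one knows $\mv{M}$ is symmetric positive-definite (which you correctly reduce, via congruence by $\diag(\mv{I},-\mv{I})$, to the nondegeneracy of $\bigl(\mv{x}_\alpha(0),\mv{x}_\alpha(\ell)\bigr)$ --- in the paper this is Lemma~\ref{lem:strpos_def_stationary} plus a Schur-complement remark), the added term is a kernel of the form $\mv{G}(t_1)^\top\mv{M}^{-1}\mv{G}(t_2)$ and hence positive semi-definite, and the sum of two covariance kernels is a covariance kernel. That two-line linear-algebra argument proves the bare statement of Proposition~\ref{prop:multivariate_covariante} without any probabilistic interpretation. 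You should promote it from fallback to main proof; as written, you have committed to a route that fails and left the workable one unexecuted. Note also that the paper's Theorem~\ref{thm:CondDens} proves considerably more than the proposition asks (uniqueness of $\tilde{\mv{r}}_\ell$ subject to the bridge-preservation, endpoint-symmetry and interval-gluing conditions), because those extra properties are what Theorem~\ref{thm:representation} needs; the positive semi-definiteness itself is the easy part.
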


The covariance function \eqref{eq:R_matrix_edge_repr} is the covariance function of $[x(t),x'(t),x''(t),\ldots, x^{(\alpha-1)}(t)]$ if $x$ 
is a centered Gaussian process on $\mathbb{R}$ with a Mat\'ern covariance function with $\nu=\alpha - \nicefrac{1}{2}$. Further, the invertibility of the matrix whose inverse appears in \eqref{eq:covmod} is showed in the proof of Theorem \ref{thm:CondDens} (together with Lemma \ref{lem:prop_bdlessproc} to apply it for the covariance \eqref{eq:R_matrix_edge_repr}) in Appendix~\ref{app:bdlessaux}.

Proposition \ref{prop:multivariate_covariante} allows defining a new centered Gaussian stochastic process on an interval $[0,\ell]$, 
which we call the boundaryless Whittle--Mat\'ern process. 

\begin{Definition}\label{def:cov_based_bdlessWM}
Let $\ell>0, \alpha\in\mathbb{N}$ and $\widetilde{r}_\ell(t_1,t_2) = [\tilde{\mv{r}}_{\ell}\left(t_1,t_2\right)]_{1,1}$, $t_1,t_2\in [0,\ell]$, 
that is, $\widetilde{r}_\ell(\cdot,\cdot)$ is the first entry of the matrix $\tilde{\mv{r}}_{\ell}\left(t_1,t_2\right)$. 
We define the boundaryless Whittle--Mat\'ern process on $[0,\ell]$ with parameters $(\kappa,\tau,\alpha)$ as a centered Gaussian process 
with covariance function $\widetilde{r}_\ell(\cdot,\cdot)$.
\end{Definition}

\begin{Remark}\label{rem:cov_bdless}
By Proposition \ref{prp:bdless_thm_cov_123} in Appendix \ref{app:bdlessaux}, for a given $\alpha\in\mathbb{N}$, 
the multivariate covariance function $\tilde{\mv{r}}_{\ell}(\cdot,\cdot)$ given in \eqref{eq:covmod} is the multivariate covariance function 
of the multivariate process $\widetilde{\mv{u}}(\cdot) =  [\widetilde{u}(\cdot),\widetilde{u}'(\cdot),\widetilde{u}''(\cdot),\ldots, \widetilde{u}^{(\alpha-1)}(\cdot)]^\top$, 
where $\widetilde{u}(\cdot)$ is the boundaryless Whittle--Mat\'ern process and the derivatives are weak derivatives in the $L_2(\Omega)$ sense.
\end{Remark}

We introduce the following notation for the subset of $\Omega$ corresponding to the realizations of a field $x:\Gamma\times\Omega\to\mathbb{R}$ 
such that the Kirchhoff conditions implied by taking powers of the differential operator $L$ 
(see \cite[Propositions 3 and 4]{BSW2022} and Theorem \ref{thm:regularity}.\ref{thm:regularity:item:Kirchhoff}) hold for $x$:
\begin{equation*}\label{eq:Kirchhoff_alpha}
	\begin{split}
	\mathcal{K}_\alpha(x)	= \{\omega\in\Omega:&\mbox{$ \forall v\in \mathcal{V}$ and each pair $e,\widetilde{e}\in\mathcal{E}_v$, $x_e^{(2k)}(v,\omega)=x_{\widetilde{e}}^{(2k)}(v,\omega),$ and} \\ 
	&\,\,\mbox{$\sum_{e \in\mathcal{E}_v} \partial_e x_e^{2k+1}(v,\omega ) = 0$, $k = 0,\ldots, \ceil{\alpha - \nicefrac{1}{2}}  -1$} \},
	\end{split}
\end{equation*}
where the derivatives are weak derivatives in the $L_2(\Omega)$ sense.
We can now state the conditional representation of Whittle--Mat\'ern fields. 

\begin{Theorem}\label{thm:representation}
Let $\alpha\in\mathbb{N}$ and $\{\widetilde{u}_e(\cdot): e\in\mathcal{E}\}$ be a family of independent boundaryless Whittle--Mat\'ern processes 
with parameters $(\kappa,\tau,\alpha)$, where $\widetilde{u}_e(\cdot)$ is defined on $[0,\ell_e]$, $e\in\mathcal{E}$.
Further, for any $s\in\Gamma$, set 
\begin{equation}\label{eq:utilde_proc}
	\widetilde{\mv{u}}(s) =  [\widetilde{u}(s),\widetilde{u}'(s),\widetilde{u}''(s),\ldots, \widetilde{u}^{(\alpha-1)}(s)]^\top = \sum_{e \in \mathcal{E}} \mathbb{I}\left(s \in e\right) \widetilde{\mv{u}}_e(s),
\end{equation}
and define
$\mv{u}(s) = \widetilde{\mv{u}}(s) |\mathcal{K}_\alpha(\widetilde{u})$. Then $u(\cdot)$, given by the first entry of $\mv{u}(\cdot)$, is a solution to \eqref{eq:Matern_spde}. 
\end{Theorem}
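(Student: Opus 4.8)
The plan is to identify the conditioned family of boundaryless processes with the Whittle--Mat\'ern field via a characterization of the latter through its Cameron--Martin space (equivalently, its covariance operator). Concretely, by \citet[Lemma 3]{BSW2022} the solution $u$ of \eqref{eq:Matern_spde} is the unique (in law) centered Gaussian field on $\Gamma$ whose Cameron--Martin space is $\mathcal{D}(L^{\alpha/2}) = \dot H^\alpha$ with inner product $(\cdot,\cdot)_\alpha = (L^{\alpha/2}\cdot, L^{\alpha/2}\cdot)_{L_2(\Gamma)}$; moreover, for $\alpha\in\mathbb{N}$ this space is exactly $\{f\in \widetilde H^\alpha(\Gamma): f \text{ satisfies the Kirchhoff conditions of } \mathcal{K}_\alpha\}$ with the form $\sum_e (u_e, L^\alpha v_e)_{L_2(e)}$ (this is the statement underlying \cite[Propositions 3 and 4]{BSW2022}). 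So it suffices to show that the conditioned process $\mv u(s) = \widetilde{\mv u}(s)\,|\,\mathcal{K}_\alpha(\widetilde u)$ has, in its first coordinate, precisely this Cameron--Martin space and inner product.

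First I would record the Cameron--Martin structure of the unconditioned product field. Since the $\widetilde u_e$ are independent, the Gaussian field $\bigoplus_e \widetilde u_e$ on the disjoint union of edges has Cameron--Martin space the direct sum $\bigoplus_e \mathcal{H}_e$ of the edge-wise Cameron--Martin spaces, with the sum of the inner products. The key input here is the identification of the Cameron--Martin space of a single boundaryless Whittle--Mat\'ern process on $[0,\ell_e]$: I claim it is $(H^\alpha(e), (\cdot,\cdot)_{\alpha,e})$ — that is, the \emph{full} Sobolev space $H^\alpha(e)$ with the form $(u, L^\alpha v)_{L_2(e)}$ extended appropriately — in contrast to the bridge process whose Cameron--Martin space is $H_0^\alpha(e)$ (Lemma~\ref{lem:CM_edge_repr_bridge}). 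Intuitively, the bridge kills the $2\alpha$ boundary degrees of freedom while the boundaryless process retains all of them; the covariance \eqref{eq:covmod} is designed precisely so that the ``pinning'' of the bridge is undone. This edge-wise statement is exactly what Proposition~\ref{prp:bdless_thm_cov_123} in the appendix supplies, and I would invoke it. Consequently the unconditioned product field has Cameron--Martin space $\widetilde H^\alpha(\Gamma)$ with inner product $\sum_e (u_e, L^\alpha v_e)_{L_2(e)}$.

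Next I would apply the standard fact about conditioning Gaussian measures on a closed linear constraint: conditioning a centered Gaussian field on a family of linear events cuts down the Cameron--Martin space to the closed subspace annihilated by the corresponding linear functionals, keeping the same inner product. The constraint $\mathcal{K}_\alpha(\widetilde u)$ is the intersection of the events that finitely many bounded linear functionals of $\widetilde{\mv u}$ — namely the jumps $x_e^{(2k)}(v) - x_{\widetilde e}^{(2k)}(v)$ and the sums $\sum_{e\in\mathcal{E}_v}\partial_e x_e^{(2k+1)}(v)$, for $k=0,\dots,\lceil\alpha-1/2\rceil-1$ — vanish. Each such functional is continuous on the Cameron--Martin space (point evaluations of derivatives up to order $\alpha-1$ are bounded on $H^\alpha(e)$ by Sobolev embedding), so conditioning replaces $\widetilde H^\alpha(\Gamma)$ by the subspace of $f$ satisfying exactly these Kirchhoff relations, i.e. $\dot H^\alpha$, with unchanged inner product. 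That subspace, with the form $\sum_e(u_e, L^\alpha v_e)_{L_2(e)} = (L^{\alpha/2}\cdot, L^{\alpha/2}\cdot)_{L_2(\Gamma)}$, is the Cameron--Martin space of $u$; hence the first coordinate of $\mv u$ is equal in law to $u$, and in particular solves \eqref{eq:Matern_spde}.

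The main obstacle is the edge-wise Cameron--Martin identification — showing that the boundaryless process on $[0,\ell_e]$ has Cameron--Martin space $H^\alpha(e)$ with the prescribed inner product, and that the resulting conditioning is ``clean'' in the sense that no extra integrability constraints sneak in at the vertices. One has to be careful that (a) the constraint functionals are genuinely continuous on the Cameron--Martin space and not merely a.s.\ defined, so that conditioning is well-posed and the conditioned covariance operator is the expected one, and (b) the inner product really does glue across vertices to give $(L^{\alpha/2}\cdot,L^{\alpha/2}\cdot)_{L_2(\Gamma)}$ rather than some operator with different vertex conditions — this is where the precise form of the Kirchhoff conditions built into $\mathcal{K}_\alpha$, matching those in \cite[Propositions 3 and 4]{BSW2022}, is essential. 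I would lean on the appendix results (Proposition~\ref{prp:bdless_thm_cov_123}, Theorem~\ref{thm:CondDens}, Lemma~\ref{lem:prop_bdlessproc}) to handle these points, and also cross-check consistency with the bridge representation of Corollary~\ref{cor:bridge_representation}: conditioning $\widetilde{\mv u}_e$ additionally on $B^\alpha\widetilde u_e = \mv 0$ must recover the Whittle--Mat\'ern bridge, which ties the two representations together and provides a sanity check on the covariance algebra in \eqref{eq:covmod}.
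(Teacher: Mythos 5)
Your proposal is correct and follows essentially the same route as the paper: identify the Cameron--Martin space of the independent edge field as $\widetilde{H}^\alpha(\Gamma)$ with the summed edge inner products (Propositions~\ref{prp:bdless_thm_cov_123} and \ref{prp:CM_indep_field_bdless_WM}), realize the conditioning on $\mathcal{K}_\alpha$ as the orthogonal projection onto the closed subspace $\dot{H}^\alpha$, and conclude by uniqueness of the Gaussian law with that Cameron--Martin space (Propositions~\ref{prp:CondRepr_WM_alpha} and \ref{prp:Wm_prp_kirch_cond_alpha}). You also correctly flag the two delicate points the paper's appendix handles explicitly, namely that the edge inner product is the boundary-term-free Dirichlet form which only glues to $(L^{\alpha/2}\cdot,L^{\alpha/2}\cdot)_{L_2(\Gamma)}$ on the Kirchhoff subspace, and that the constraint functionals live in the Gaussian space so the conditioning is well-posed.
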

The proof of this theorem, which is one of our main results, is given in Appendix~\ref{app:proofs_conditional}. 
As an illustration of the conditioning procedure, consider the graph in Figure~\ref{fig:split_graph}. Then, for $\alpha=1$ and $\alpha=2$, we have 
\begin{align*}
	\mathcal{K}_1(\widetilde{u}) &= 
	\{\omega \in \Omega : \widetilde{u}_1(\underline{e}_1,\omega) = \widetilde{u}_2(\underline{e}_2,\omega) = \widetilde{u}_3(\underline{e}_3,\omega),\,\, \widetilde{u}_1(\bar{e}_1,\omega) = \widetilde{u}_2(\bar{e}_2,\omega) = \widetilde{u}_3(\bar{e}_3,\omega) \},\\ 
	\mathcal{K}_2(\widetilde{u}) &= \mathcal{K}_1(\widetilde{u}) \cap 
	\left\{\omega \in \Omega : \sum_{i=1}^3\widetilde{u}_i^{(1)}(\underline{e}_i,\omega) =0,\,\, -\sum_{i=1}^3\widetilde{u}_i^{(1)}(\bar{e}_i,\omega) =0 \right\}. 
\end{align*}

\begin{figure}[t]
	\includegraphics[height=0.35\linewidth]{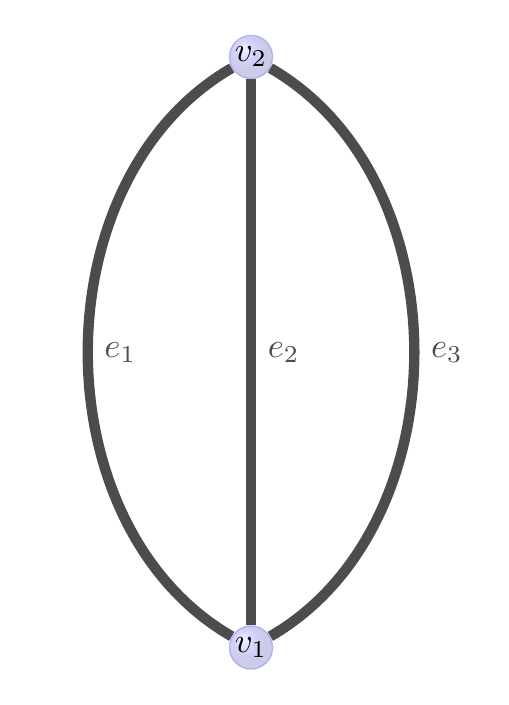}
	\includegraphics[height=0.35\linewidth]{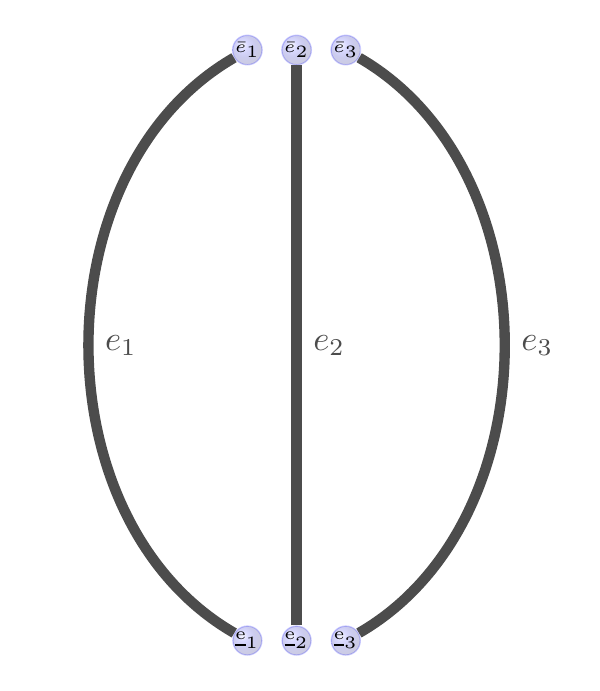}
	\caption{A graph $\Gamma$ (left) and the split used to define the independent edge processes (right).}
	\label{fig:split_graph}
\end{figure}
	
For a metric graph with two vertices and a single edge (i.e., an interval), we obtain the following result:
\begin{example}\label{ex:exp_line}
Consider the stochastic differential equation $(\kappa^2 - \Delta)^{\alpha/2} \tau u = \mathcal{W}$
on the interval $[0,\ell_e]$, where $\Delta$ is the Neumann--Laplacian. If $\alpha=1$, then the covariance function of $u$ is given by 
\begin{align}
	\varrho(t_1,t_2)  &= r(t_1-t_2) + \begin{bmatrix}
				r(t_1) &
				r(t_1-\ell_e) 
	\end{bmatrix}
	\begin{bmatrix}
		r(0) & -r(\ell_e) \\
		-r(\ell_e) & r(0)
	\end{bmatrix}^{-1}
	\begin{bmatrix}
		r(t_2) \\
		r(t_2-\ell_e) 
	\end{bmatrix}\nonumber\\
	&= 
	\frac1{2\kappa \tau^2\sinh(\kappa \ell_e)}(\cosh(\kappa(\ell_e-|t_1-t_2|)) + \cosh(\kappa(t_1+t_2-\ell_e))),\label{eq:exp2}
\end{align}
where $r(h) = (2\kappa\tau^2)^{-1}\exp(-\kappa |h|)$ and $t_1,t_2\in [0,\ell_e]$. 
Here, Equation~\eqref{eq:exp2} follows by simple trigonometric identities. 
If $\alpha=2$,  the covariance function of $u$ is given by
\begin{align*}
	\varrho(t_1,t_2) =& r(|h|) + \frac{r(h)+r(-h)+e^{2\kappa \ell_e}r(v)+ r(-v) }{2e^{\kappa \ell_e} \sinh(\kappa \ell_e) } 
	+
	\frac{\ell_e\cosh(\kappa t_1)\cosh(\kappa t_2)}{2\kappa^2\tau^2\sinh(\kappa \ell_e)^2 }
\end{align*}
where $h=t_1-t_2, v= t_1+t_2$ and $r(h)=(4\kappa^3\tau^2)^{-1}\left(1 + \kappa h\right)\exp(-\kappa h)$.
\end{example}
	For a metric graph with a single vertex and a single edge (i.e., a circle), we instead obtain the following result.
\begin{example}\label{ex:exp_circle}
Consider the stochastic differential equation  $(\kappa^2 - \Delta)^{\alpha/2} \tau u = \mathcal{W}$ on a circle with perimeter $\ell_e$. If $\alpha=1$, then the covariance function of $u$ is given by
\begin{align*}
	\varrho(t_1,t_2) &= \frac1{2\kappa\tau^2}e^{-\kappa |t_1-t_2|} + \frac{e^{-\kappa T}}{\kappa\tau^2(1-e^{-\kappa \ell_e})}\cosh(\kappa|t_1-t_2|)
	= \frac{\cosh(\kappa(|t_1-t_2|-\ell_e/2))}{2\kappa\tau^2\sinh(\kappa \ell_e/2)}.
\end{align*}
If $\alpha  = 2$, we instead obtain the following covariance function:
\begin{align*}
	 \varrho(t_1,t_2) =& \frac1{4\kappa^3\tau^2\sinh(\kappa \ell_e/2)}\left(\left[1+\frac{\kappa \ell_e}{2}\coth\left(\frac{\kappa \ell_e}{2}\right)\right]\cosh(w) + w\sinh(w)\right),
\end{align*}
where $w = \kappa(|t_1-t_2|-\ell_e/2)$.
\end{example}

\section{Exact log-likelihood evaluations for the Markov case}\label{sec:inference}
In this section, we show how to evaluate the log-likelihood of Whittle--Mat\'ern fields on metric graphs when sampled at locations on the graph. 
We consider the cases with direct observations of the random field and with observations under Gaussian measurement noise. 
That is, we assume that observations $\mv{y}= [y_1,\ldots, y_n]$ are available at locations $s_1,\ldots,s_n \in \Gamma$, 
which in the first setting are ${y_i = u(s_i)}$ and in the second are as follows:
\begin{equation}\label{eq:likelihood}
y_i  |u(\cdot)\sim \pN\left( u(s_i), \sigma^2\right).
\end{equation}
Before presenting the general likelihood formulas, we show how to evaluate the density of the process at the graph vertices. 

\subsection{Distribution of the process in the vertices}\label{sec:Uvertex}
The advantage of the conditional representation is that we have an explicit covariance function for each independent Gaussian process $\mv{u}_e$. 
Therefore, this facilitates exact evaluation of finite-dimensional distributions because it is easy to compute conditional distributions for Gaussian processes. 
For this approach, the following property, proved in Appendix~\ref{app:bdlessaux}, of the modified covariance $\tilde{\mv{r}}_{\ell}\left(\cdot,\cdot\right)$ is critical. 
	
\begin{Proposition}\label{cor:precfuncconddens}
Let $\alpha\in\mathbb{N}$ and define $\mv{u}$ and $\widetilde{\mv{u}}$ as two centered Gaussian processes on $[0,\ell]$ with covariance functions $\mv{r}$ defined in 
\eqref{eq:R_matrix_edge_repr}
and $\widetilde{\mv{r}}_\ell$ defined in \eqref{eq:covmod}, respectively. Then,  the precision matrix of $\left[\tilde{\mv{u}}(0),\tilde{\mv{u}}(\ell)\right]$ is 
\begin{equation}\label{eq:Qedge}
	\tilde{\mv{Q}}_e  = \mv{Q}_e - \frac{1}{2}\begin{bmatrix}
		\mv{r}(0,0)^{-1} & \mv{0} \\
		\mv{0} & \mv{r}(0,0)^{-1} 
	\end{bmatrix},
\end{equation}
where $\mv{Q}_e $ is the precision matrix of  $\left[{ \mv{u} }(0),{ \mv{u} }(\ell) \right]$.
\end{Proposition}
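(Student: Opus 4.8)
The plan is to relate the two covariance functions $\mv{r}$ and $\widetilde{\mv{r}}_\ell$ at the endpoints $\{0,\ell\}$ directly, and then invert the resulting block relation. Write $\mv{R} = \begin{bmatrix}\mv{r}(0,0) & \mv{r}(0,\ell)\\ \mv{r}(\ell,0) & \mv{r}(\ell,\ell)\end{bmatrix}$ for the endpoint covariance of $[\mv{u}(0),\mv{u}(\ell)]$, so that $\mv{Q}_e = \mv{R}^{-1}$, and let $\widetilde{\mv{R}}$ be the analogous endpoint covariance built from $\widetilde{\mv{r}}_\ell$, so $\widetilde{\mv{Q}}_e = \widetilde{\mv{R}}^{-1}$. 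The first step is to evaluate $\widetilde{\mv{r}}_\ell(t_1,t_2)$ from \eqref{eq:covmod} at $t_1,t_2\in\{0,\ell\}$. Using $\mv{r}(t_1,t_2) = \mv{r}(t_2,t_1)^\top$ and the symmetry $\mv{r}(\ell,\ell) = \mv{r}(0,0)$ (which holds because $\varrho_M$ is stationary, hence $\mv{r}(t,t)$ is independent of $t$ — this follows from \eqref{eq:R_matrix_edge_repr}), the correction term in \eqref{eq:covmod} collapses: at $(0,0)$ the prepended/appended row-blocks are $\begin{bmatrix}\mv{r}(0,0) & \mv{r}(0,\ell)\end{bmatrix}$ and the middle inverse acts on $\begin{bmatrix}\mv{r}(0,0)\\ \mv{r}(\ell,0)\end{bmatrix}$, and one checks the product equals $\mv{r}(0,0)$; similarly at $(\ell,\ell)$; and at the off-diagonal $(0,\ell)$ the two contributions $\mv{r}(0,\ell)$ (from $\mv{r}(t_1,t_2)$) cancel with the correction. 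The upshot is a clean identity of the form $\widetilde{\mv{R}} = 2\begin{bmatrix}\mv{r}(0,0) & \mv{0}\\ \mv{0} & \mv{r}(0,0)\end{bmatrix} - \mv{A}$ where $\mv{A}$ is a rank-structured expression in $\mv{R}$, or — cleaner to carry through — an identity directly between $\widetilde{\mv{R}}$ and $\mv{R}$.

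The second step is to turn the relation between $\widetilde{\mv{R}}$ and $\mv{R}$ into the stated relation between their inverses $\widetilde{\mv{Q}}_e$ and $\mv{Q}_e$. The natural route is to recognize that $\widetilde{\mv{r}}_\ell$ is, by Proposition~\ref{prop:multivariate_covariante} and Remark~\ref{rem:cov_bdless}, the covariance of the boundaryless process, and that by construction (cf.\ the proof of Proposition~\ref{prop:multivariate_covariante} / the conditioning in Theorem~\ref{thm:representation}) the boundaryless process on $[0,\ell]$ is obtained by a "doubling" / gluing construction: its endpoint vector has a covariance that, when inverted, produces the precision of the $\mv{u}$ endpoints minus a diagonal local term $\tfrac12\,\mathrm{diag}(\mv{r}(0,0)^{-1},\mv{r}(0,0)^{-1})$. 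Concretely, I expect the block algebra to reduce to showing
\begin{equation*}
	\widetilde{\mv{R}}^{-1} = \mv{R}^{-1} - \frac{1}{2}\begin{bmatrix}
		\mv{r}(0,0)^{-1} & \mv{0}\\ \mv{0} & \mv{r}(0,0)^{-1}
	\end{bmatrix},
\end{equation*}
which is a finite-dimensional matrix identity once $\widetilde{\mv{R}}$ has been expressed in terms of the blocks of $\mv{R}$. This can be verified either by a Schur-complement computation or, more transparently, by multiplying the claimed right-hand side by $\widetilde{\mv{R}}$ and checking it equals the identity, using only the block entries of $\mv{R}$ and the relation from Step 1.

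For the verification I would use the $2\times2$ block inversion formula together with the specific structure of $\mv{r}$: writing $\mv{r}(0,0)=\mv{P}$, $\mv{r}(0,\ell)=\mv{r}(\ell,0)^\top=\mv{B}$, so $\mv{R} = \begin{bmatrix}\mv{P} & \mv{B}\\ \mv{B}^\top & \mv{P}\end{bmatrix}$, and then one can diagonalize $\mv{R}$ via the sum/difference transformation $\tfrac{1}{\sqrt2}\begin{bmatrix}I & I\\ I & -I\end{bmatrix}$, reducing everything to statements about $\mv{P}\pm\mv{B}$ and their inverses; the correction matrix in \eqref{eq:covmod} involves exactly $\begin{bmatrix}\mv{P} & -\mv{B}\\ -\mv{B}^\top & \mv{P}\end{bmatrix}^{-1}$, which is diagonalized by the same transformation with the roles of $\mv{P}+\mv{B}$ and $\mv{P}-\mv{B}$ swapped. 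In those coordinates both $\widetilde{\mv{R}}$ and the claimed precision become block-diagonal and the identity reduces to two scalar-matrix identities of the shape $(\mv{P}\pm\mv{B})^{-1} - \tfrac12\mv{P}^{-1}$ matching the corresponding block of $\widetilde{\mv{R}}^{-1}$; these I expect to fall out immediately. The main obstacle is the bookkeeping in Step 1 — correctly evaluating \eqref{eq:covmod} at the four endpoint pairs and confirming the cancellations — since this requires keeping careful track of which block of $\mv{r}$ is transposed and using $\mv{r}(t,t)=\mv{r}(0,0)$; once that is in hand, invertibility of $\widetilde{\mv{R}}$ (equivalently of $\mv{P}+\mv{B}$ and $\mv{P}-\mv{B}$) is exactly what is supplied by the proofs referenced in Appendix~\ref{app:bdlessaux}, and the rest is the routine block algebra sketched above.
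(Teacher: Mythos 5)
Your overall strategy---evaluate $\widetilde{\mv{r}}_\ell$ at the four endpoint pairs to get the endpoint covariance $\widetilde{\mv{R}}$ of $[\tilde{\mv{u}}(0),\tilde{\mv{u}}(\ell)]$ in terms of the blocks of $\mv{R}$, then verify the inverse identity by block algebra---is viable and genuinely different from the paper's proof. The paper does no computation at all: in the proof of Theorem~\ref{thm:CondDens} the process with covariance $\widetilde{\mv{r}}_\ell$ is \emph{constructed} by replacing the endpoint marginal of $\mv{u}$ with $\pN(\mv{0},\mv{H}^{0\ell})$ while keeping the conditional law of the interior given the endpoints, and Lemma~\ref{lem:conditional} (adjusting the $c$-marginal) says the resulting joint precision is $\mv{Q}+\mathrm{diag}(\mv{0},\mv{C})$ with $\mv{C}^{0\ell}=-\tfrac12\,\mathrm{diag}(\mv{r}(0,0)^{-1},\mv{r}(0,0)^{-1})$ already identified there; the proposition is then read off. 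Your route buys a self-contained check that does not lean on that construction, at the price of the bookkeeping you anticipate.

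However, two of your concrete intermediate claims are wrong, and one of your two verification routes fails. First, writing $\mv{P}=\mv{r}(0,0)$, $\mv{B}=\mv{r}(0,\ell)$, $\mv{D}=\mathrm{diag}(\mv{P},\mv{P})$ and noting that the middle matrix in \eqref{eq:covmod} is $2\mv{D}-\mv{R}$, the endpoint evaluation of \eqref{eq:covmod} assembles into
\begin{equation*}
\widetilde{\mv{R}} \;=\; \mv{R} + \mv{R}\,(2\mv{D}-\mv{R})^{-1}\mv{R},
\end{equation*}
and neither of your claimed simplifications holds: the correction at $(0,0)$ is \emph{not} $\mv{r}(0,0)$ and the off-diagonal blocks do \emph{not} cancel. (Check $\alpha=1$: with $p=(2\kappa\tau^2)^{-1}$ and $b=pe^{-\kappa\ell}$ one gets $\tilde r_\ell(0,0)=p+p(p^2+3b^2)/(p^2-b^2)\neq 2p$ and $\tilde r_\ell(0,\ell)=b+b(3p^2+b^2)/(p^2-b^2)\neq 0$, consistent with \eqref{eq:precQexp}.) Fortunately your fallback plan still works: multiplying $\mv{R}^{-1}-\tfrac12\mv{D}^{-1}$ against the display above and left-multiplying the residual by $2\mv{D}$ reduces everything to $(2\mv{D}-\mv{R})(2\mv{D}-\mv{R})^{-1}\mv{R}-\mv{R}=\mv{0}$, which proves \eqref{eq:Qedge}. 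Second, the sum/difference diagonalization you propose does not apply for $\alpha\geq 2$, because $\mv{B}=\mv{r}(0,\ell)$ is not symmetric (e.g.\ for $\alpha=2$ its off-diagonal entries are $\varrho_M'(-\ell)$ and $-\varrho_M'(-\ell)$), so $\tfrac{1}{\sqrt2}\bigl[\begin{smallmatrix}I&I\\ I&-I\end{smallmatrix}\bigr]$ leaves off-diagonal blocks $\pm\tfrac12(\mv{B}-\mv{B}^\top)$ and neither $\mv{R}$ nor $2\mv{D}-\mv{R}$ becomes block-diagonal. Drop that shortcut and use the direct multiplication; also be aware that the left and right factors in \eqref{eq:covmod} must be read so that the assembled correction is the symmetric matrix $\mv{R}(2\mv{D}-\mv{R})^{-1}\mv{R}$, i.e.\ the right factor at $t_2$ is effectively $[\mv{r}(0,t_2)^\top\;\mv{r}(\ell,t_2)^\top]^\top$.
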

	
Observe that, by Lemma \ref{lem:prop_bdlessproc} in Appendix~\ref{app:bdlessaux}, $\mv{r}(\cdot,\cdot)$ is strictly positive-definite, in particular, $\mv{r}(0,0)$ is a strictly positive-definite matrix, so it is invertible. See Definition \ref{def:str_pos_def_mat} in Appendix~\ref{app:bdlessaux} for the definition of strictly positive-definite matrix-valued functions.
	
\begin{example}
In the above proposition, for $\alpha = 1$, we have 
\begin{equation}
	\label{eq:precQexp}
	\tilde{\mv{Q}}_e = \frac{\kappa\tau^2}{e^{2\kT} -1}
	\begin{bmatrix}
		e^{2\kT}+1 &&  -2 e^{\kT}\\
		-2 e^{\kT} &&  e^{2\kT}+1
	\end{bmatrix}.
\end{equation}	
For $\alpha=2$, let $c_1 = 1-2 (\kT)^2$, $c_2 = e^{2 \kT} (2+4 (\kT)^2)$, and $c_3 = \kT e^{2\kT}$. Then, 
$$		
	\tilde{\mv{Q}}_e = \frac{2\kappa\tau^2}{c_1^2- c_2+e^{4\kT}}
	\begin{bmatrix}
		\kappa^2(e^{4 \kT} -c_1^2+ 4  c_3) 	&&  4 \kappa c_3 && q_1 								&&  q_2\\
		4 \kappa c_3 						&& q_3 			&& q_2 									&& q_4\\
		q_1 								&& q_2 			&& \kappa^2(e^{4 \kT} -c_1^2+ 4  c_3) 	&& 4 \kappa c_3 \\
		q_2 								&& q_4 			&& 4 \kappa c_3 						&& q_3
	\end{bmatrix},
$$
where 
$q_{1} = 2\kappa^2e^{\kT} \left((1-\kappa\ell) c_1 - c_3 - e^{2 \kT} \right),$
$q_{2}  = -2 \kappa e^{\kT} \left(\kappa\ell c_1+c_3\right),$
$q_{3} = -\kappa ^2 c_1^2+ (\kappa^2-1)c_2-4 c_3 -\left(\kappa ^2-2\right) e^{4 \kT},$ and
$q_{4} = 2 e^{\kT} \left( c_1(\kappa\ell+1)+ c_3 -e^{2 \kT}\right)$.		
\end{example}

Next, we let $\alpha\in\mathbb{N}$ and define the vector 
\begin{align*}
	\widetilde{\mv{U}} = [\widetilde{\mv{u}}_1(\underline{e}_1)^\top,\widetilde{\mv{u}}_1(\bar{e}_1)^\top,\widetilde{\mv{u}}_2(\underline{e}_2)^\top,\widetilde{\mv{u}}_2(\bar{e}_2)^\top,\ldots, \widetilde{\mv{u}}_{|\mathcal{E}|}(\underline{e}_{|\mathcal{E}|})^\top,\widetilde{\mv{u}}_{|\mathcal{E}|}(\bar{e}_{|\mathcal{E}|})^\top]^\top
\end{align*}
of the process $\widetilde{\mv{u}}(s)$ from \eqref{eq:utilde_proc} evaluated in the endpoints of each of the $|\mathcal{E}|$ edges in $\Gamma$. 
We let $\mv{U}$ denote the corresponding vector for $\mv{u}(s) = \widetilde{\mv{u}}(s) |\mathcal{K}_{\alpha}(\widetilde{u})$. 
The processes $\widetilde{\mv{u}}_e$, $e\in\mathcal{E}$, are mutually independent; thus, we have 
$\widetilde{\mv{U}} \sim \pN\left(\mv{0},\widetilde{\mv{Q}}^{-1}\right)$,
where $\widetilde{\mv{Q}}= diag(\{\mv{Q}_e\}_{e\in\mathcal{E}})$, and $\mv{Q}_e$  is the precision matrix of $\{\widetilde{\mv{u}}_e(\underline{e}),\widetilde{\mv{u}}_e(\bar{e})\}$ which is given by \eqref{eq:Qedge}.
The Kirchhoff vertex conditions are linear on $\widetilde{\mv{U}}$; therefore, a $k \times 2\alpha|\mathcal{E}|$ matrix $\mv{K}$ exists 
such that the conditions can be written as $\mv{K}\widetilde{\mv{U}} = \mv{0}$,
where the number of constraints, $k$, is $\frac{\alpha}{2}\sum_{v\in\mathcal{V}} \deg(v)$ if $\alpha$ is even and $ \lceil\frac{\alpha}{2}\rceil \sum_{v\in\mathcal{V}} (\deg(v)-1)$ if $\alpha$ is odd. Thus,  $\mv{U}$ in  \eqref{eq:dens_u2} is
\begin{equation}
	\label{eq:dens_u}
	\mv{U}= \widetilde{\mv{U}} |  \left\{\mv{K} \widetilde{\mv{U}} = \mv{0}\right\},
\end{equation}
which is a GMRF under linear constraints. 
Hence, we arrive at the following corollary providing the distribution of $u(s)$ evaluated 
at the vertices. The result follows from standard formulas for GMRFs under linear constraints \citep[see, e.g.,][]{bolin2021efficient}.
\begin{Corollary}\label{cor:Uv_conditional}
Let $\mv{A}$ be a $|\mathcal{V}| \times 2\alpha|\mathcal{E}|$ matrix such that 
\begin{equation}\label{eq:Uv}
	\mv{U}_v  = (u(v_1), u(v_2), \ldots, u(v_{|\mathcal{V}|}))^\top = \mv{A}\mv{U}.
\end{equation}
That is, $\mv{A}$ acts on $\mv{U}$ by selecting the components corresponding to the solution of \eqref{eq:Matern_spde} evaluated at the vertices.
Then, $\mv{U}_v \sim \pN(\mv{0}, \mv{\Sigma})$, where 
$$
\mv{\Sigma} = \mv{A}\left(\widetilde{\mv{Q}}^{-1} - \widetilde{\mv{Q}}^{-1}\mv{K}^\top (\mv{K}\widetilde{\mv{Q}}^{-1}\mv{K}^\top)^{-1}\mv{K}\widetilde{\mv{Q}}^{-1}\right)\mv{A}^\top.
$$
\end{Corollary}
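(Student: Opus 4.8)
The plan is to derive Corollary~\ref{cor:Uv_conditional} as a direct consequence of the conditional representation in Theorem~\ref{thm:representation} together with the standard theory of Gaussian Markov random fields under hard linear constraints. First I would recall the setup established just before the statement: by Theorem~\ref{thm:representation}, the solution $u$ to \eqref{eq:Matern_spde} has the property that $\mv{u}(s) = \widetilde{\mv{u}}(s)\,|\,\mathcal{K}_\alpha(\widetilde{u})$, where $\widetilde{\mv{u}}$ is built from the independent boundaryless Whittle--Mat\'ern edge processes; hence $\mv{U} = \widetilde{\mv{U}}\,|\,\{\mv{K}\widetilde{\mv{U}}=\mv{0}\}$ as in \eqref{eq:dens_u}, with $\widetilde{\mv{U}}\sim\pN(\mv{0},\widetilde{\mv{Q}}^{-1})$ and $\widetilde{\mv{Q}}=\diag(\{\mv{Q}_e\}_{e\in\mathcal{E}})$ the block-diagonal precision matrix whose blocks are given by Proposition~\ref{cor:precfuncconddens} (invertibility of each $\mv{r}(0,0)$, hence of $\mv{Q}_e$ and $\widetilde{\mv{Q}}$, was noted via Lemma~\ref{lem:prop_bdlessproc}).

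The second step is to identify the conditional law of $\widetilde{\mv{U}}$ given the linear constraint $\mv{K}\widetilde{\mv{U}}=\mv{0}$. For a centered Gaussian vector $\widetilde{\mv{U}}\sim\pN(\mv{0},\widetilde{\mv{Q}}^{-1})$, conditioning on $\mv{K}\widetilde{\mv{U}}=\mv{0}$ again yields a centered Gaussian vector, now with covariance matrix
\begin{equation*}
\widetilde{\mv{Q}}^{-1} - \widetilde{\mv{Q}}^{-1}\mv{K}^\top(\mv{K}\widetilde{\mv{Q}}^{-1}\mv{K}^\top)^{-1}\mv{K}\widetilde{\mv{Q}}^{-1},
\end{equation*}
which is precisely the formula for a GMRF under a hard linear constraint; this is exactly the content cited from \citet{bolin2021efficient}. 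One should check that $\mv{K}\widetilde{\mv{Q}}^{-1}\mv{K}^\top$ is invertible, which holds once $\mv{K}$ has full row rank $k$ (and $\widetilde{\mv{Q}}$ is positive definite); the count of $k$ recorded before the corollary, together with the explicit description of the Kirchhoff conditions as linearly independent constraints on distinct vertices, gives this.

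The third and final step is to push this distribution through the linear map $\mv{A}$. Since $\mv{U}_v = \mv{A}\mv{U}$ by \eqref{eq:Uv} and affine images of Gaussian vectors are Gaussian with the covariance conjugated by the map, we immediately get $\mv{U}_v\sim\pN(\mv{0},\mv{A}\,\mathrm{Cov}(\mv{U})\,\mv{A}^\top)$, which is the claimed $\mv{\Sigma}$. The mean is zero throughout because every process involved is centered and $\mv{A}$ is linear.

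The main obstacle, such as it is, is not analytical but bookkeeping: one must be careful that the components selected by $\mv{A}$ indeed correspond to the values of the solution $u$ of \eqref{eq:Matern_spde} at the vertices and not to, say, derivative components or to a pre-conditioning version of the field. This is guaranteed by Theorem~\ref{thm:representation} (the conditioned process is the genuine solution) together with the continuity of $u$ at the vertices from Theorem~\ref{thm:regularity}, so that $u(v_i)$ is unambiguously defined and equals the corresponding entry of $\mv{u}_e(v_i)$ for any incident edge $e$. Everything else is a routine invocation of the Gaussian conditioning formula and closure of Gaussianity under linear transformations; no genuinely new estimate is needed here.
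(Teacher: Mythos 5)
Your proposal is correct and follows essentially the same route as the paper, which simply invokes the representation $\mv{U}=\widetilde{\mv{U}}\,|\,\{\mv{K}\widetilde{\mv{U}}=\mv{0}\}$ from \eqref{eq:dens_u} and the standard conditioning formula for centered Gaussian vectors under hard linear constraints (citing \citealp{bolin2021efficient}), followed by the linear map $\mv{A}$. Your additional remarks on the invertibility of $\mv{K}\widetilde{\mv{Q}}^{-1}\mv{K}^\top$ and on the bookkeeping of which components $\mv{A}$ selects are sensible elaborations of details the paper leaves implicit, but they do not change the argument.
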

	
\begin{example}
For the graph in Figure~\ref{fig:split_graph}, for $\alpha=1$, we have $\mv{A} = \begin{bmatrix}\mv{I}& \mv{0} & \mv{0}\end{bmatrix}$ and
$$
	\mv{K} = \begin{bmatrix}
		\mv{I} & -\mv{I} & \mv{0} \\
		\mv{0} & \mv{I} & - \mv{I}
	\end{bmatrix},
$$
where $\mv{I}$ denotes a $2\times 2$ identity matrix and $\mv{0}$ a $2\times 2$ matrix with zero elements.
Note that $\mv{A}$ is not unique because  $\mv{A} = 
\begin{bmatrix}
 \mv{0}& \mv{I}& \mv{0}
\end{bmatrix}$
or
$\mv{A} = 
\begin{bmatrix}
\mv{0} & \mv{0} & \mv{I}
\end{bmatrix}$ would provide the same result.
For $\alpha=2$, we instead have
\setcounter{MaxMatrixCols}{20}
$$
\mv{K} = \begin{bmatrix}
	1 & 0 & 0 & 0  & -1 & 0  & 0 & 0 & 0 & 0 & 0 & 0\\
	0 & 0 & 1    & 0    & 0  & 0   & -1   & 0  & 0    & 0   & 0   & 0\\
	0 & 0 & 0   & 0    & 1   & 0   & 0   & 0   & -1   & 0   & 0   & 0\\
	0 & 0 & 0   & 0    & 0   & 0   & 1  & 0    & 0    & 0   & -1  & 0\\
	0 & 1 & 0   & 0    & 0   & 1   & 0  & 0   & 0    & 1    & 0   & 0\\
	0 & 0 & 0   &- 1    & 0   & 0   & 0  & -1    & 0    & 0    & 0  & -1
\end{bmatrix}, \quad \text{and} \quad
\mv{A} = \begin{bmatrix}
	1 & 0 & 0 & 0  & 0 & 0  & 0 & 0 & 0 & 0 & 0 & 0\\
	0 & 0 & 1    & 0    & 0  & 0   & 0   & 0  & 0    & 0   & 0   & 0
\end{bmatrix}.
$$		
\end{example}
	
If $\alpha = 1$, the conditioning in \eqref{eq:dens_u} only enforces continuity, and we can derive an explicit expression of the precision matrix of the process at the vertices in the graph.
	
\begin{Corollary}\label{cor:exp_Q}
For $\alpha=1$, $\mv{U}_v$ in \eqref{eq:Uv} satisfies $\mv{U}_v \sim \pN \left( \mv{0}, \mv{Q}^{-1}\right)$, where 
\begin{equation}\label{eq:expprec}
	Q_{ij} =  2\kappa\tau^2\cdot \begin{cases}
	\sum\limits_{e \in \mathcal{E}_{v_i} } 
	\left(	\frac{1}{2} + \frac{e^{-2\kappa \ell_e}}{1-e^{-2\kappa \ell_e}}  \right) \mathbb{I}\left(\bar{e}\neq \underline{e}\right)+	\tanh(\kappa \frac{l_e}{2} )  \mathbb{I}\left(\bar{e}= \underline{e}\right) & \text{if $i=j$},\\
	\sum\limits_{e \in  \mathcal{E}_{v_i}  \cap  \mathcal{E}_{v_j} } -\frac{e^{-\kappa \ell_e}}{1-e^{-2\kappa \ell_e}} & \text{if $i\neq j$.}
	\end{cases}
\end{equation}
\end{Corollary}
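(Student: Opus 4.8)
The plan is to specialize the general formula in Corollary~\ref{cor:Uv_conditional} to the case $\alpha=1$ and compute the resulting precision matrix entrywise. For $\alpha=1$, each edge $e$ contributes a $2\times 2$ block $\widetilde{\mv{Q}}_e$, already given explicitly in \eqref{eq:precQexp}, and the Kirchhoff constraints $\mv{K}\widetilde{\mv{U}}=\mv{0}$ only enforce continuity of $u$ at the vertices. The key observation is that, under continuity alone, conditioning $\widetilde{\mv{U}}$ on $\mv{K}\widetilde{\mv{U}}=\mv{0}$ and then selecting the vertex components via $\mv{A}$ is equivalent to the standard ``stitching'' of edgewise precision blocks: the precision matrix $\mv{Q}$ of $\mv{U}_v$ is obtained by summing, for each vertex $v_i$, the diagonal contributions of all incident edges, and for each pair $v_i\sim v_j$ connected by edge $e$, adding the off-diagonal entry of $\widetilde{\mv{Q}}_e$. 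This is the usual assembly of a GMRF precision from local contributions, and it can be justified either directly from the constrained-GMRF formula in Corollary~\ref{cor:Uv_conditional} or by noting that $u$ restricted to an edge, given its endpoint values, is a Markov bridge independent across edges (Corollary~\ref{cor:bridge_representation} with $\alpha=1$), so the joint density of $\mv{U}_v$ factorizes over edges with each factor a bivariate Gaussian density with precision $\widetilde{\mv{Q}}_e$.

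Concretely, I would first treat the generic case $\bar e\neq\underline e$: from \eqref{eq:precQexp} the diagonal entry of $\widetilde{\mv{Q}}_e$ is $\kappa\tau^2(e^{2\kappa\ell_e}+1)/(e^{2\kappa\ell_e}-1)$, which I rewrite as $2\kappa\tau^2\bigl(\tfrac12 + e^{-2\kappa\ell_e}/(1-e^{-2\kappa\ell_e})\bigr)$ after multiplying numerator and denominator by $e^{-2\kappa\ell_e}$ and doing elementary algebra; the off-diagonal entry $-2\kappa\tau^2 e^{\kappa\ell_e}/(e^{2\kappa\ell_e}-1)$ becomes $-2\kappa\tau^2 e^{-\kappa\ell_e}/(1-e^{-2\kappa\ell_e})$ by the same manipulation. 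Summing the diagonal contributions over $e\in\mathcal{E}_{v_i}$ gives the $i=j$ entry, and summing the off-diagonal contributions over $e\in\mathcal{E}_{v_i}\cap\mathcal{E}_{v_j}$ gives the $i\neq j$ entry, matching \eqref{eq:expprec}.

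Next I would handle the loop case $\bar e=\underline e$, i.e., an edge forming a circle attached at a single vertex $v_i$. Here the two endpoints of the edge are identified with the same vertex, so the $2\times2$ block $\widetilde{\mv{Q}}_e$ contributes to the single diagonal entry $Q_{ii}$ the sum of \emph{all four} of its entries: the two diagonal terms plus twice the off-diagonal term. Using \eqref{eq:precQexp}, this sum is $\kappa\tau^2\bigl(2(e^{2\kappa\ell_e}+1) - 4e^{\kappa\ell_e}\bigr)/(e^{2\kappa\ell_e}-1) = 2\kappa\tau^2(e^{\kappa\ell_e}-1)^2/\bigl((e^{\kappa\ell_e}-1)(e^{\kappa\ell_e}+1)\bigr) = 2\kappa\tau^2\tanh(\kappa\ell_e/2)$ after recognizing the difference of the hyperbolic expressions; alternatively one can invoke the circle covariance from Example~\ref{ex:exp_circle} with $\alpha=1$ evaluated at a single point. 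This yields the $\mathbb{I}(\bar e=\underline e)$ term in \eqref{eq:expprec}, completing the proof.

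The main obstacle is the justification that the assembly of edgewise precision blocks into a global precision is exactly what Corollary~\ref{cor:Uv_conditional} produces when the constraint is pure continuity; once that structural fact is in place, everything reduces to the elementary hyperbolic-identity bookkeeping sketched above. I would argue this either by appealing directly to the bridge representation (Corollary~\ref{cor:bridge_representation}), under which the field on each edge given its two endpoint values is independent across edges so the joint law of $\mv{U}_v$ is a product of bivariate Gaussians with precisions $\widetilde{\mv{Q}}_e$ glued at shared vertices, or by verifying that substituting the continuity constraint matrix $\mv{K}$ into the formula $\mv{A}\bigl(\widetilde{\mv{Q}}^{-1} - \widetilde{\mv{Q}}^{-1}\mv{K}^\top(\mv{K}\widetilde{\mv{Q}}^{-1}\mv{K}^\top)^{-1}\mv{K}\widetilde{\mv{Q}}^{-1}\bigr)\mv{A}^\top$ inverts to the claimed sparse $\mv{Q}$; the former is cleaner and I would take that route.
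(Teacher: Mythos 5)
Your computations are correct and your overall strategy --- assemble the global precision from the edgewise blocks in \eqref{eq:precQexp} under the continuity constraint, then simplify with hyperbolic identities --- is essentially the paper's. The algebra checks out in both cases: for a non-loop edge the diagonal and off-diagonal entries of $\widetilde{\mv{Q}}_e$ rewrite exactly to the stated $q_{e,1}$ and $q_{e,2}$ terms, and for a loop the sum of all four block entries is $2\kappa\tau^2(e^{\kappa\ell_e}-1)/(e^{\kappa\ell_e}+1)=2\kappa\tau^2\tanh(\kappa\ell_e/2)$, consistent with Example~\ref{ex:exp_circle}.

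However, your preferred justification of the key ``stitching'' step is the wrong one. Corollary~\ref{cor:bridge_representation} describes the conditional law of the field \emph{given} the vertex data $\mv{U}$ (independent bridges on the edges plus a deterministic interpolant); it says nothing about the marginal law of $\mv{U}_v$, and the text immediately after that corollary states explicitly that the distribution of $\mv{U}$ is the one quantity the bridge representation does \emph{not} provide. The factorization you want --- that the density of $\mv{U}_v$ is proportional to a product over edges of bivariate Gaussian factors with precisions $\widetilde{\mv{Q}}_e$ evaluated at the glued vertex values --- comes from the \emph{conditional} representation (Theorem~\ref{thm:representation} together with Proposition~\ref{cor:precfuncconddens}): the unconstrained vector $\widetilde{\mv{U}}$ has block-diagonal precision $\diag(\{\widetilde{\mv{Q}}_e\})$, and conditioning a Gaussian on the linear continuity constraint restricts its density to the constraint subspace, so that $f_{\mv{U}_v}(\mv{u}_v)\propto f_{\widetilde{\mv{U}}}(\mv{B}\mv{u}_v)$ for the gluing matrix $\mv{B}$, giving $\mv{Q}=\mv{B}^\top\diag(\{\widetilde{\mv{Q}}_e\})\mv{B}$. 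This is exactly the paper's argument. Your fallback of verifying Corollary~\ref{cor:Uv_conditional} directly would also work but is needlessly heavy. Finally, on loops your route differs from the paper's: you identify the two endpoints of the loop with the single vertex and sum the four block entries, whereas the paper first splits each loop with an auxiliary degree-2 vertex (which leaves the law unchanged) and then integrates that vertex out. Both yield $\tanh(\kappa\ell_e/2)$, but your shortcut tacitly assumes the conditional representation applies to a loop edge with the constraint $\widetilde{u}_e(0)=\widetilde{u}_e(\ell_e)$; the paper consistently avoids this by removing loops first, and you should either do the same or justify that the conditioning argument covers loop edges.
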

\begin{proof}
Without loss of generality, assume that $\tau^2=\frac{1}{2\kappa}$.  
For each circular edge $e^*$, with $\bar{e}^* = \underline{e}^*$, 
add a vertex $v^*$ at an arbitrary location $s^*$ on the edge $e^*$ and let $\Gamma^*$ denote this extended graph. Recall that these additions do not change the distribution of the process. Let $n_v^*$  and $n_e^*$ denote the number of vertices and edges in $\Gamma^*$, respectively. Let ${\tilde{\mv{U}}= \left[\tilde{u}_1(0),\tilde{u}_1(\ell_{e_1}),\ldots, \tilde{u}_{n_e^*}(0),\tilde{u}_{n_e^*}(\ell_{n_e^*}) \right]}$  be a vector of independent edges with covariance given by \eqref{eq:covmod} with $r$ is the stationary Mat\'ern covariance with $\alpha=1$.
By  \eqref{eq:precQexp} the density of  $\tilde{\mv{U}}$ is
\begin{align*}
f_{\tilde{\mv{U}}}\left(\tilde{\mv{u}}\right) \propto \exp \left( - \frac{1}{2} \sum_{i=1}^{n_e^*} q_{e_i,1}  \tilde{u}^2_i(0) +  q_{e_i,1} \tilde{u}^2_i(\ell_i)- 2  q_{e_i,2}\tilde{u}_i(0)\tilde{u}_i(\ell_i) \right),
\end{align*}
where $q_{e,1} = \frac{1}{2} + \frac{e^{-2\kappa \ell_e}}{1-e^{-2\kappa \ell_e}} $ and $q_{e,2} = -\frac{e^{-\kappa \ell_e}}{1-e^{-2\kappa \ell_e}}$.
Next, the density $\mv{U}_v = \mv{AU}$ from the statement can be obtained as follows.  Let $\mv{B}$ be the $2n_e^* \times n_v^*$ matrix that links each vertex in $\Gamma$, $v_i$, to its corresponding vertices on the independent edges. Then, by the continuity requirement,
\begin{align*}
	f_{\mv{U}_v}\left(\mv{u}_v\right)  = f_{\tilde{\mv{U}}}\left(\mv{B}\mv{u}_v\right)  &\propto \exp \left(  - \frac{1}{2} \sum_{e\in\mathcal{E}} \left[ q_{e,1}  u^2(\underline{e}) +  q_{e,1} u^2(\bar{e})- 2  q_{e,2}u(\underline{e})u(\bar{e})\right] \right) \\
	&= \exp \left(- \frac{1}{2} \mv{u}_v ^\trsp \mv{Q} \mv{u}_v\right).
\end{align*}
For each vertex $v^*$ created to remove a circular edge, we have the two edges $e^*_0$ and $e^*_1$ that split $e^*$ at $s^*$. 
Hence, if $v_i$ is the vertex from the circular edge $e^\ast$ (recall that $v^*$ is only connected to $v_i$), then
\begin{align*}
	f_{\mv{U}_v}\left(\mv{u}_v\right)  \propto \exp \left( - \frac{1}{2} \left( q_{e^*_0,1} + q_{e^*_1,1} \right)^2 \left( u_{v^*}^2 +u_{v_i}^2  \right)- 2    \left( q_{e^*_0,2} + q_{e^*_1,2} \right) u_{v^*} u_{v_i} , \right),
\end{align*}
and integrating out $u_{v^*}$  results in $\exp  \left( - \frac{1}{2}  \tanh\left(\kappa \frac{l_{e^*}}{2} \right) u_{v_i}^2  \right)$.
\end{proof}

The total number of non-zero elements in the matrix $\mv{Q}$ in Corollary~\ref{cor:exp_Q} is $|\mathcal{V}| + \sum_{v\in\mathcal{V}} d_v$. 
Therefore, the matrix is very sparse for a graph where the vertices have low degrees compared to the total number of vertices $|\mathcal{V}|$. 
Hence, we can easily sample $\mv{U}_v$ and compute its density via sparse Cholesky factorization \citep[see, e.g.,][]{rue2005gaussian}.  
		
For $\alpha>1$, we do not have such a simple expression for the precision matrix. 
Moreover, the expression in Corollary~\ref{cor:Uv_conditional} should not be used for large graphs because it does not take advantage of the sparsity of $\mv{Q}$. 
The problem is that, although the GMRF $\widetilde{\mv{U}}$ has a sparse precision matrix, sparsity is lost when conditioning on the Kirchhoff constraints. 
Fortunately, \cite{bolin2021efficient} recently derived a computationally efficient method for working with GMRFs under sparse linear constraints that 
directly applies to the Whittle--Mat\'ern fields. 
The idea is to change the basis of $\widetilde{\mv{U}}$ in such a way that the constraints imposed via $\mv{K}$ become noninteracting, 
because the likelihood could then be trivially evaluated. 
Thus, we transform $\widetilde{\mv{U}}$ into $\widetilde{\mv{U}}^* = \mv{T}\widetilde{\mv{U}}$ such that the $k$ constraints of $\mv{K}$ act only on $\widetilde{\mv{U}}^{*}_{\A}$, where $\A= \{1,\ldots,k\}$. 
The constraints in $\mv{K}$ are sparse and non-overlapping in the sense that each constraint acts on only one vertex. 
Therefore, the computational complexity of creating the change-of-basis matrix $\mv{T}$ through Algorithm~2 in  \cite{bolin2021efficient} is 
linear in the number of vertices. 
Further, the matrix is independent of the parameters. Hence, it must only be computed once. 
By this strategy, we obtain the following representation of $\mv{U}$, which follows from Theorem~1 and Corollary~1 in \cite{bolin2021efficient}:
	
\begin{Corollary}\label{cor:sample_Uv}
Let  $\widetilde{\mv{Q}}= diag(\{\mv{Q}_e\}_{e\in\mathcal{E}})$ denote the precision matrix of $\widetilde{\mv{U}}$ and let ${\widetilde{\mv{Q}}^* = \mv{T}\widetilde{\mv{Q}}\mv{T}^\top}$. Further, let $\mv{A}$ be as in Corollary \ref{cor:Uv_conditional}, let $\mv{T}_{\Ac}$ denote the matrix obtained by removing the first $k$ rows from $\mv{T}$, and let $\widetilde{\mv{Q}}^*_{\Ac\Ac}$ denote the matrix obtained by removing the first $k$ rows and the first $k$ columns of $\widetilde{\mv{Q}}^*$. Then
$\mv{U} \sim \pN\bigl(\mv{0}, \mv{T}_{\Ac}^\top \bigl(\widetilde{\mv{Q}}_{\Ac\Ac}^* \bigr)^{-1}\mv{T}_{\Ac}\bigr)\mathbb{I}\bigl(\mv{KU}=\mv{0}\bigr)$ and
\begin{equation}\label{eq:Uv_sparse}
	\mv{U}_v \sim \pN\left(\mv{0}, \mv{A}\mv{T}_{\Ac}^\top \left(\widetilde{\mv{Q}}_{\Ac\Ac}^* \right)^{-1}\mv{T}_{\Ac}\mv{A}^\top \right).
\end{equation}
\end{Corollary}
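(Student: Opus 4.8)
The plan is to recognize $\mv{U}$ as a Gaussian Markov random field subject to a sparse, full-rank homogeneous linear constraint and then to invoke the constrained-GMRF results of \cite{bolin2021efficient} essentially verbatim. The starting point is \eqref{eq:dens_u}, which identifies $\mv{U}=\widetilde{\mv{U}}\mid\{\mv{K}\widetilde{\mv{U}}=\mv{0}\}$ with $\widetilde{\mv{U}}\sim\pN(\mv{0},\widetilde{\mv{Q}}^{-1})$ and $\widetilde{\mv{Q}}=\mathrm{diag}(\{\mv{Q}_e\}_{e\in\mathcal{E}})$. I would first record that $\widetilde{\mv{Q}}$ is a genuine precision matrix: it is symmetric and block-sparse by construction, and positive-definite because each block $\mv{Q}_e$, given explicitly by \eqref{eq:Qedge}, is the precision of the endpoint vector $[\widetilde{\mv{u}}_e(\underline{e})^\top,\widetilde{\mv{u}}_e(\bar e)^\top]^\top$, which exists and is positive-definite thanks to the strict positive-definiteness of the matrix-valued covariance $\mv{r}(\cdot,\cdot)$ noted just after Proposition~\ref{cor:precfuncconddens} (so in particular $\mv{r}(0,0)$ is invertible, as is the joint endpoint covariance of the boundaryless process).

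The second, and only genuinely non-mechanical, step is to verify that $\mv{K}$ meets the structural hypotheses required by Theorem~1 and Corollary~1 of \cite{bolin2021efficient}: that it has full row rank $k$ and that its rows split into groups, one per vertex, with distinct groups acting on disjoint blocks of $\widetilde{\mv{U}}$. The disjointness is immediate, since the two endpoints $\underline{e},\bar e$ of an edge are distinct blocks of $\widetilde{\mv{U}}$, so the coordinates ``attached to $v$'' (one block of size $\alpha$ for each $e\in\mathcal{E}_v$) are disjoint from those attached to any other vertex. For the rank I would argue vertex by vertex: by Theorem~\ref{thm:regularity}.\ref{thm:regularity:item:Kirchhoff} the conditions at $v$ are, for each even order $k'\in\{0,\dots,\alpha-1\}$, the $\deg(v)-1$ equalities between the $k'$th derivatives of the incident edge restrictions, and, for each odd order, a single vanishing-sum equation; these are linearly independent because distinct orders act on disjoint coordinate blocks and, within a fixed even order, the $\deg(v)-1$ difference functionals are independent. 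Summing over vertices and orders reproduces the count $k$ stated above, with degree-one vertices---where the odd-order directional derivatives are pinned to zero as in Theorem~\ref{cor:conditional_dists}---handled by the same bookkeeping.

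With these two facts in place the conclusion is a translation of \cite{bolin2021efficient} into the present notation. Their Algorithm~2 produces the change-of-basis matrix $\mv{T}$, which is orthogonal and, because of the vertex-local non-overlapping structure just verified, block-diagonal up to a permutation---whence the $\Ordo(|\mathcal{V}|)$ cost---and which is such that in the coordinates $\widetilde{\mv{U}}^{*}=\mv{T}\widetilde{\mv{U}}$ the constraint acts only on $\widetilde{\mv{U}}^{*}_{\A}$, $\A=\{1,\dots,k\}$. Since $\mv{T}$ is orthogonal, $\widetilde{\mv{U}}^{*}\sim\pN(\mv{0},(\widetilde{\mv{Q}}^{*})^{-1})$ with $\widetilde{\mv{Q}}^{*}=\mv{T}\widetilde{\mv{Q}}\mv{T}^\top$; the constraint $\mv{K}\widetilde{\mv{U}}=\mv{0}$ forces $\widetilde{\mv{U}}^{*}_{\A}=\mv{0}$, and the standard Gaussian conditioning formula for GMRFs leaves $\widetilde{\mv{U}}^{*}_{\Ac}\sim\pN(\mv{0},(\widetilde{\mv{Q}}^{*}_{\Ac\Ac})^{-1})$. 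Transforming back, $\mv{U}=\mv{T}^\top\widetilde{\mv{U}}^{*}=\mv{T}_{\Ac}^\top\widetilde{\mv{U}}^{*}_{\Ac}$, so $\mv{U}\sim\pN(\mv{0},\mv{T}_{\Ac}^\top(\widetilde{\mv{Q}}^{*}_{\Ac\Ac})^{-1}\mv{T}_{\Ac})$, a degenerate Gaussian of rank $2\alpha|\mathcal{E}|-k$ supported on $\{\mv{KU}=\mv{0}\}$, which is the first display; as a cross-check, this covariance also equals $\widetilde{\mv{Q}}^{-1}-\widetilde{\mv{Q}}^{-1}\mv{K}^\top(\mv{K}\widetilde{\mv{Q}}^{-1}\mv{K}^\top)^{-1}\mv{K}\widetilde{\mv{Q}}^{-1}$ from Corollary~\ref{cor:Uv_conditional} by a short matrix identity using $\mv{K}\mv{T}^\top=[\mv{M}\,|\,\mv{0}]$. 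The second display then follows at once since $\mv{U}_v=\mv{A}\mv{U}$ is a linear image of the Gaussian $\mv{U}$, matching Corollary~1 of \cite{bolin2021efficient}.

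I expect the rank-and-structure verification of $\mv{K}$ to be the main obstacle: it is not deep, but it requires carefully tracking the several derivative orders and the special treatment of degree-one vertices, and it is precisely what is needed to legitimately invoke Algorithm~2; everything downstream is routine bookkeeping with orthogonal transformations and Gaussian conditioning and can be quoted from \cite{bolin2021efficient}.
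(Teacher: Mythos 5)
Your proposal is correct and follows essentially the same route as the paper, which proves this corollary simply by identifying $\mv{U}$ as the constrained GMRF in \eqref{eq:dens_u} and invoking Theorem~1 and Corollary~1 of \cite{bolin2021efficient}; your reconstruction of the change of basis, the conditioning $\widetilde{\mv{U}}^*_{\Ac}\mid\{\widetilde{\mv{U}}^*_{\A}=\mv{0}\}\sim\pN(\mv{0},(\widetilde{\mv{Q}}^*_{\Ac\Ac})^{-1})$, and the back-transformation $\mv{U}=\mv{T}_{\Ac}^\top\widetilde{\mv{U}}^*_{\Ac}$ is exactly what those cited results deliver. The only difference is that you explicitly verify the hypotheses (positive-definiteness of $\widetilde{\mv{Q}}$, full rank and non-overlapping structure of $\mv{K}$), which the paper leaves implicit.
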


The matrices $\mv{A}$, $\mv{T}_{\Ac}$ and $\widetilde{\mv{Q}}_{\Ac\Ac}^*$ in \eqref{eq:Uv_sparse} are sparse. 
Thus, we can simulate $\mv{U}_v$ efficiently by simulating $\mv{v} \sim \pN(\mv{0}, (\widetilde{\mv{Q}}_{\Ac\Ac}^*)^{-1})$ through 
sparse Cholesky factorization of $\widetilde{\mv{Q}}_{\Ac\Ac}^*$ and then computing the sparse matrix vector product $\mv{U}_v = \mv{A}\mv{T}_{\Ac}^\top\mv{v}$.
	
\subsection{Log-likelihood evaluation}
To evaluate the log-likelihood, we differentiate between the case in which $\alpha=1$ and higher values of $\alpha\in\mathbb{N}$. 
The approach is much simpler for $\alpha = 1$ because we have an explicit expression for the precision matrix of the process evaluated in the vertices. 
Recall that $\mv{y}= [y_1,\ldots, y_n]$ is a vector of observations at locations $s_1,\ldots,s_n\in \Gamma$, which are 
either direct observations of $u$ or observations under Gaussian measurement noise. 
	
\subsubsection{The case of \texorpdfstring{$\alpha=1$}{alpha equal to 1}}\label{sec:alpha1likelihood}
The easiest method for computing the log-likelihood is to expand the graph $\Gamma$ to include the observation locations $s_1,\ldots,s_n$ as vertices. 
We let $\bar{\Gamma}$ denote the extended graph where $\{s_1,\ldots,s_n\}$ are added as vertices 
(duplicate nodes, which occur when $s_i\in \mathcal{V}$ for some $i$, are removed) and the edges containing those locations are subdivided. 
Let $\mv{v} = (v_1,\ldots, v_m)$ be the vector of indices of the original vertices and $\mv{s} = (s_1,\ldots,s_n)$ be the vector of indices of the locations, where $m = |\bar{\mathcal{V}}| - n$. Let, also, $\bar{\mv{U}} = (\bar{\mv{U}}_{\mv{v}}, \bar{\mv{U}}_{\mv{s}})$,
with $\bar{\mv{U}}_{\mv{v}} = (u(v_1),\ldots, u(v_{m}))$ and $\bar{\mv{U}}_{\mv{s}} = (u(s_1),\ldots, u(s_n))$. 
Then, $\bar{\mv{U}}\sim \pN\bigl(\mv{0}, \bar{\mv{Q}}^{-1}\bigr)$, where
\begin{equation}\label{eq:Qbar}
	\bar{\mv{Q}} = \begin{bmatrix}
		\mv{Q}_{\mv{vv}} & \mv{Q}_{\mv{vs}} \\
		\mv{Q}_{\mv{sv}} & \mv{Q}_{\mv{ss}}
	\end{bmatrix}
\end{equation}
denotes the corresponding complete vertex precision matrix from Corollary~\ref{cor:exp_Q}. Then, we have the following corollary.

\begin{Corollary}\label{cor:exp_density}
Suppose that $u(s)$ is the solution to \eqref{eq:Matern_spde} on a metric graph $\Gamma$ with ${\alpha=1}$, 
and that we have locations $s_1,\ldots,s_n \in \Gamma$. 
Let $\mv{U}_{\mv{s}}$ denote the joint distribution of $u(\cdot)$ evaluated at these locations. 
Then, $\mv{U}_{\mv{s}}\sim\pN(\mv{0},\mv{Q}_\mv{s}^{-1})$, where 
$\mv{Q}_{\mv{s}} = \mv{Q}_{\mv{ss}}  - \mv{Q}_{\mv{sv}} \mv{Q}_{\mv{vv}}^{-1} \mv{Q}_{\mv{vs}}$ is the precision matrix associated with the 
locations $s_1,\ldots,s_n$, with elements given in \eqref{eq:Qbar}.
\end{Corollary}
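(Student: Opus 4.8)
The plan is to reduce the statement to a direct application of Corollary~\ref{cor:exp_Q} on a suitably enlarged graph, followed by the standard marginalization formula for Gaussian Markov random fields. First I would construct the extended graph $\bar{\Gamma}$: for each observation location $s_i$ that does not already lie in $\mathcal{V}$, insert $s_i$ as a new vertex, splitting the edge on which it lies into two edges, and if several of the $s_i$ lie on a common edge, split that edge at each of them, in their order along the edge (coincidences $s_i\in\mathcal{V}$ or $s_i=s_j$ are handled by simply not duplicating vertices, as in the statement). By Theorem~\ref{thm:regularity}.\ref{prop:join}, inserting a vertex of degree $2$ does not change the covariance function of the solution to \eqref{eq:Matern_spde}; applying this repeatedly, the Whittle--Mat\'ern field on $\bar{\Gamma}$ has the same finite-dimensional distributions as $u$ on $\Gamma$. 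In particular, the law of $(u(v_1),\ldots,u(v_m),u(s_1),\ldots,u(s_n))^\top$ is unchanged.

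Next, I would apply Corollary~\ref{cor:exp_Q} to $\bar{\Gamma}$, whose vertex set is $\{v_1,\ldots,v_m\}\cup\{s_1,\ldots,s_n\}$. Ordering the vertices so that the original ones precede the inserted ones yields $\bar{\mv{U}}=(\bar{\mv{U}}_{\mv{v}},\bar{\mv{U}}_{\mv{s}})\sim\pN(\mv{0},\bar{\mv{Q}}^{-1})$ with $\bar{\mv{Q}}$ of the block form \eqref{eq:Qbar}, whose entries are given by the explicit formula \eqref{eq:expprec} evaluated on $\bar{\Gamma}$; here the hypothesis $\alpha=1$ is exactly what is needed for Corollary~\ref{cor:exp_Q} to apply. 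Since $\bar{\mv{Q}}$ is a genuine precision matrix it is strictly positive-definite, so in particular $\mv{Q}_{\mv{vv}}$ is invertible and the Schur complement below is well defined.

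Finally, I would invoke the elementary Gaussian fact that if a centered Gaussian vector $(\bar{\mv{U}}_{\mv{v}},\bar{\mv{U}}_{\mv{s}})$ has precision matrix $\left[\begin{smallmatrix}\mv{Q}_{\mv{vv}}&\mv{Q}_{\mv{vs}}\\\mv{Q}_{\mv{sv}}&\mv{Q}_{\mv{ss}}\end{smallmatrix}\right]$, then the marginal law of $\bar{\mv{U}}_{\mv{s}}$ is centered Gaussian with precision $\mv{Q}_{\mv{ss}}-\mv{Q}_{\mv{sv}}\mv{Q}_{\mv{vv}}^{-1}\mv{Q}_{\mv{vs}}$; this follows by completing the square in $\bar{\mv{u}}_{\mv{v}}$ in the quadratic form of the joint density and integrating it out, or from the block-inverse identity, and is standard for GMRFs \citep[see, e.g.,][]{rue2005gaussian}. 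Combining this with the first step ($\mv{U}_{\mv{s}}=\bar{\mv{U}}_{\mv{s}}$ in distribution) gives $\mv{U}_{\mv{s}}\sim\pN(\mv{0},\mv{Q}_{\mv{s}}^{-1})$ with $\mv{Q}_{\mv{s}}=\mv{Q}_{\mv{ss}}-\mv{Q}_{\mv{sv}}\mv{Q}_{\mv{vv}}^{-1}\mv{Q}_{\mv{vs}}$, as claimed. The only genuinely delicate point is the bookkeeping in the graph-extension step — subdividing edges that carry several observation locations consistently and handling coincidences — but once $\bar{\Gamma}$ is in place, the probabilistic content is entirely contained in Corollary~\ref{cor:exp_Q} and the Schur-complement marginalization, both already available.
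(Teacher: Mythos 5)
Your proposal is correct and follows exactly the route the paper intends: the corollary is stated immediately after the construction of the extended graph $\bar{\Gamma}$ and the identification of $\bar{\mv{Q}}$ via Corollary~\ref{cor:exp_Q}, with the invariance under adding degree-2 vertices coming from Theorem~\ref{thm:regularity}.\ref{prop:join} and the final step being the standard Schur-complement marginalization for GMRFs. No gaps; your extra care about the bookkeeping of edge subdivision and coincident locations is the only part the paper leaves implicit.
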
	
	
Thus, for direct observations, the likelihood is directly obtained by Corollary~\ref{cor:exp_density}. 
For indirect observations, we have 
$\mv{y} | \bar{\mv{U}} \sim \pN\left(	\bar{\mv{A}}\bar{\mv{U}} , \sigma^2 \mv{I}\right)$,	
where $\bar{\mv{A}} = \left[\mv{0}_{n\times m}, \mv{I}_{n \times n} \right]$ is a matrix that maps $\bar{\mv{U}}$ to $\bar{\mv{U}}_{\mv{s}}$
and  $\mv{Q}$ is the precision matrix defined by \eqref{eq:expprec}. 
Hence, the resulting log-likelihood is
\begin{align*}
	l(\sigma,\tau,\kappa;\mv{y})=  \frac12\log\left|\mv{Q}\right| + n\log(\sigma)  - \frac12 \log\left| \mv{Q}_p\right| - \frac{ 1}{2\sigma^2}\mv{y}^\top\mv{y}  + \frac12  \mv{\mu}^{\top} \mv{Q}_p \mv{\mu},
\end{align*}
where $\mv{Q}_{p}  =  \mv{Q}+ \frac{1}{\sigma^2}  \bar{\mv{A}}^{\top}\bar{\mv{A}}$ and 
$\mv{\mu}    = \sigma^{-2}\mv{Q}_{p}^{-1} \bar{\mv{A}}^{\top}\mv{y}$. 
In either case, all matrices involved are sparse, 
so the log-likelihood is computed efficiently through sparse Cholesky factorization of $\mv{Q}$ and $\mv{Q}_p$ \citep[see, e.g.,][]{rue2005gaussian}. 
	
For indirect observations, an alternative approach is to utilize the bridge representation from Section~\ref{sec:bridge}. 
Suppose that we have $n_e$ observations on edge $e$ and let $\mv{y}_{e}$ be the vector of these observations. 
Let $\mv{t}_e$ be the vector with the positions of these locations on $e$. 
Then, the sets of observations $\{\mv{y}_{e}\}_{e\in\mathcal{E}}$ are conditionally independent given $\mv{U}$ (see \eqref{eq:dens_u}), 
with distribution $\mv{y}_{e} |\mv{U} \sim \pN\left( \mv{B}_{e}(\mv{t}_e)\mv{D}_e\mv{U}, \mv{\Sigma}_e\right)$,
where $\mv{B}_{e}(\mv{t}_e)  =   (\mv{B}_e(t_1)^\top,\ldots, \mv{B}_e(t_{n_e})^\top)^\top$ and $\mv{D}_e$ are defined in Corollary \ref{cor:bridge_representation}. In addition, $\mv{\Sigma}_e$ has elements 
$$
	(\mv{\Sigma}_e)_{ij} = \begin{cases}
		\sigma^2 + r_{B,\ell_e}(t_i, t_i) & i= j,\\
		 r_{B,\ell_e}(t_i, t_j) & i\neq j, 
	 \end{cases}
$$
where $ r_{B,\ell_e}$ is given in \eqref{eq:cov_func_whittle_matern_bridge}. 
Thus, the resulting log-likelihood is
\begin{align*}
	l(\sigma,\tau,\kappa;\mv{y})=  \frac12\log |\mv{Q}_v| - \frac12\log|\widetilde{\mv{Q}}_p| - \frac12\sum_{e\in\mathcal{E}}\log \left| \mv{\Sigma}_e\right|  + \frac12  \mv{\mu}^{\top} \mv{Q}_p \mv{\mu} - \frac{ 1}{2}\sum_{e\in\mathcal{E}}\mv{y}_e^{\top}\mv{\Sigma}_e^{-1}\mv{y}_e,
\end{align*}
where $\mv{Q}_v$ is the precision matrix of $\mv{U}_v$ (see Corollary~\ref{cor:exp_Q}), 
$$
	\widetilde{\mv{Q}}_p  =  \mv{Q}_v +  \sum_{e\in\mathcal{E}} \mv{B}_e(\mv{t}_e)^{\top} \mv{\Sigma}_e^{-1} \mv{B}_e(\mv{t}_e), \quad \text{and} \quad
	\mv{\mu}   = \mv{Q}_p^{-1} \sum_{e\in\mathcal{E}} \mv{B}_e(\mv{t}_e)^{\top}\mv{\Sigma}_e^{-1} \mv{y}_e.
$$
The advantage of this representation is that the size of $\mv{Q}_v$ is smaller than that of $\mv{Q}$ in the first approach 
(it has $|\mathcal{V}|$ rows instead of $|\mathcal{V}| + n$). 
The disadvantage is that the matrices $\mv{\Sigma}_i$ are dense; therefore, they have a cubic cost for evaluating the required log-determinant. 
However, in many applications, the number of observations for any specific edge is small, even if the total number of observations is vast. 
In such situations, this approach is more computationally efficient than the first.   
	
\subsubsection{The case of \texorpdfstring{$\alpha \in \mathbb{N} + 1$}{of integer alpha}}\label{sec:alphaNlikelihood}
For direct observations of $u$ in the case of $\alpha \in \mathbb{N}+1$, 
a straightforward approach to compute the log-likelihood is to add the observation locations as vertices, 
evaluate the density of $\mv{U}_v$ in Corollary~\ref{cor:conditional} and then integrate out the vertex locations. 
As mentioned, using Corollary~\ref{cor:conditional} is not feasible for large graphs because we cannot take advantage of sparsity. 
However, we can significantly reduce the computational costs by combining the bridge representation with the ideas developed in \citet{bolin2021efficient}. 
	
We use the same notation as in Section~\ref{sec:alpha1likelihood} and have 
$\mv{y}_{e} |\mv{U} \sim \pN\left( \mv{B}_{e}(\mv{t}_e)\mv{D}_e\mv{U}, \mv{\Sigma}_e\right)$.
Therefore, we have the joint density $\pi_{\mv{Y}|\mv{U}}(\mv{y} | \mv{u}) = \pN\left(\mv{y};\mv{B} \mv{u}, \mv{\Sigma} \right)$,
where $\mv{B}  = diag\left( \{\mv{B}_e \}_{e\in\mathcal{E}} \right)$ and 
${\mv{\Sigma} = diag\left( \{\mv{\Sigma}_e \}_{e\in\mathcal{E}} \right)}$.
As in Section~\ref{sec:Uvertex}, we perform the change of basis $\mv{U}^{*} = \mv{T}\mv{U}$ such that the $k$ constraints of $\mv{K}$ act only on $\mv{U}^{*}_{\A}$,  where $\A= \{1,\ldots,k\}$. 
The remaining unconstrained nodes are denoted by $\mv{U}^{*}_{\Ac}$. 
\cite{bolin2021efficient} derived a computationally efficient method for computing the likelihood of constrained GMRFs when $ \mv{\Sigma}_e$ are diagonal matrices. 
The following more general result can be used to compute the log-likelihood based on the change of basis in our situation with non-diagonal matrices. 
\begin{Proposition}\label{Them:piAXsoft}
Let $\mv{U} \sim \pN\left(\mv{\mu}, \mv{Q}^{-1}\right)$ be an $m$-dimensional Gaussian random variable and $\mv{Y}$ be an $n$-dimensional Gaussian random variable with
$\mv{Y} | \mv{U} \sim \pN\left( \mv{B}\mv{U} , \mv{\Sigma}\right)$ for some $n\times m$ matrix $\mv{B}$, where the matrices $\mv{Q}$  and $\mv{\Sigma}$ are strictly positive-definite.  
Further, let $\mv{K}$ be a $k \times m$ matrix of full rank and let $\mv{T}$ be the change-of-basis matrix of $\mv{K}$.
Define $\mv{Q}^*= \mv{T}\mv{Q}\mv{T}^\trsp$ and $\mv{B}^*=\mv{B}\mv{T}^\top$ and fix $\mv{b}\in \mathbb{R}^k$. 
Then, the density of $\mv{Y}|\mv{KU}=\mv{b}$ is
\begin{align*}
	\pi_{\mv{Y}|\mv{KU}=\mv{b}}(\mv{y}) = &
	\frac{|\mv{Q}^*_{\Ac\Ac}|^{\frac{1}{2}} |\mv{\Sigma}|^{-\frac12}}{\left(2\pi \right)^{\frac{n}{2}} |\widehat{\mv{Q}}^*_{\Ac\Ac} |^{\frac{1}{2}}}  
	\exp \left(-  \frac{1}{2}\left[\mv{y}^{T}\mv{\Sigma}^{-1}\mv{y}+ \widetilde{\mv{\mu}}_{\Ac} ^{*\top} \mv{Q}^*_{\Ac\Ac}\widetilde{\mv{\mu}}^*_{\Ac}- \widehat{\mv{\mu}}_{\Ac}^{*\top}\widehat{\mv{Q}} ^*_{\Ac\Ac}  \widehat{\mv{\mu}}^*_{\Ac}\right]\right),
\end{align*}
where $\widehat{\mv{Q}}^*_{\Ac\Ac} = \mv{Q}^*_{\Ac\Ac}  + \left(\mv{B}^*_{\Ac}\right)^\top \mv{\Sigma}^{-1}\ \mv{B}^*_{\Ac}$, $\left[\mv{b}^*, \mv{\mu}^*\right]=\mv{T}\left[\mv{b}\,\mv{\mu}\right]$, with 
$$
	\widehat{\mv{\mu}}^*_{\Ac} =\left(\widehat{\mv{Q}}^{*}_{\Ac\Ac} \right)^{-1}  \left(
	\mv{Q}^*_{\Ac\Ac} \widetilde{\mv{\mu}}^*_{\Ac} + \left(\mv{B}^*_{\Ac}\right)^\top \mv{\Sigma}^{-1}\ \mv{y} \right),\,\,
	\widetilde{\mv{\mu}}^* = 
	\begin{bmatrix}
		\mv{b}^* \\
		\mv{\mu}_{\Ac}^* - \left(\mv{Q}_{\Ac\Ac}^{*}\right)^{-1}  \mv{Q}^*_{\Ac\A} \left( \mv{b}^*  - \mv{\mu}^*_{\A} \right)
	\end{bmatrix}.
$$ 
\end{Proposition}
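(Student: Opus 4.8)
The plan is to transport the whole problem into the basis associated with the change-of-basis matrix $\mv{T}$ of \citet{bolin2021efficient}, in which the linear constraint becomes a fixing of the first $k$ coordinates, and then perform two elementary Gaussian operations -- conditioning, followed by marginalization -- carrying the general (non-diagonal) $\mv{\Sigma}$ through unchanged. Write $\mv{U}^{*}$ for the representation of $\mv{U}$ in this basis; by construction $\mv{U}^{*}$ is Gaussian with precision $\mv{Q}^{*}=\mv{T}\mv{Q}\mv{T}^{\top}$ and mean $\mv{\mu}^{*}$, the observation model reads $\mv{Y}\mid\mv{U}^{*}\sim\pN(\mv{B}^{*}\mv{U}^{*},\mv{\Sigma})$ with $\mv{B}^{*}=\mv{B}\mv{T}^{\top}$, and the event $\{\mv{K}\mv{U}=\mv{b}\}$ coincides with $\{\mv{U}^{*}_{\A}=\mv{b}^{*}\}$. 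Partition $\mv{Q}^{*}$, $\mv{\mu}^{*}$ and $\mv{B}^{*}$ into their $\A$ and $\Ac$ blocks.

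First I would obtain the conditional law of the free coordinates. Standard Gaussian conditioning applied to the partitioned precision gives $\mv{U}^{*}_{\Ac}\mid\{\mv{U}^{*}_{\A}=\mv{b}^{*}\}\sim\pN(\widetilde{\mv{\mu}}^{*}_{\Ac},(\mv{Q}^{*}_{\Ac\Ac})^{-1})$, which is precisely the vector $\widetilde{\mv{\mu}}^{*}$ appearing in the statement. Inserting $\mv{U}^{*}_{\A}=\mv{b}^{*}$ into the observation model leaves $\mv{Y}\mid\mv{U}^{*}_{\Ac}\sim\pN(\mv{B}^{*}_{\A}\mv{b}^{*}+\mv{B}^{*}_{\Ac}\mv{U}^{*}_{\Ac},\mv{\Sigma})$; this is the only place the shape of $\mv{\Sigma}$ matters, and allowing it to be non-diagonal changes nothing structurally.

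The core computation is then the Gaussian marginalization
\[
\pi_{\mv{Y}\mid\mv{K}\mv{U}=\mv{b}}(\mv{y}) = \int \pN\!\bigl(\mv{y};\mv{B}^{*}_{\A}\mv{b}^{*}+\mv{B}^{*}_{\Ac}\mv{u}^{*}_{\Ac},\mv{\Sigma}\bigr)\,\pN\!\bigl(\mv{u}^{*}_{\Ac};\widetilde{\mv{\mu}}^{*}_{\Ac},(\mv{Q}^{*}_{\Ac\Ac})^{-1}\bigr)\,d\mv{u}^{*}_{\Ac},
\]
which I would evaluate by completing the square in $\mv{u}^{*}_{\Ac}$: the quadratic form has Hessian $\widehat{\mv{Q}}^{*}_{\Ac\Ac}=\mv{Q}^{*}_{\Ac\Ac}+(\mv{B}^{*}_{\Ac})^{\top}\mv{\Sigma}^{-1}\mv{B}^{*}_{\Ac}$ and minimizer $\widehat{\mv{\mu}}^{*}_{\Ac}$, the integral over $\mv{u}^{*}_{\Ac}$ contributes $(2\pi)^{(m-k)/2}|\widehat{\mv{Q}}^{*}_{\Ac\Ac}|^{-1/2}$, and the residual exponent is the displayed combination $\mv{y}^{\top}\mv{\Sigma}^{-1}\mv{y}+\widetilde{\mv{\mu}}^{*\top}_{\Ac}\mv{Q}^{*}_{\Ac\Ac}\widetilde{\mv{\mu}}^{*}_{\Ac}-\widehat{\mv{\mu}}^{*\top}_{\Ac}\widehat{\mv{Q}}^{*}_{\Ac\Ac}\widehat{\mv{\mu}}^{*}_{\Ac}$ (the affine shift through $\mv{B}^{*}_{\A}\mv{b}^{*}$ being absorbed, and vanishing outright in the case $\mv{b}=\mv{0}$ that the Kirchhoff constraints require). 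Multiplying the normalizing constants of the two Gaussian densities, $(2\pi)^{-n/2}|\mv{\Sigma}|^{-1/2}$ and $(2\pi)^{-(m-k)/2}|\mv{Q}^{*}_{\Ac\Ac}|^{1/2}$, with $(2\pi)^{(m-k)/2}|\widehat{\mv{Q}}^{*}_{\Ac\Ac}|^{-1/2}$ yields exactly the prefactor in the statement.

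I expect the difficulty to be bookkeeping rather than conceptual: keeping the transposes in $\mv{Q}^{*}=\mv{T}\mv{Q}\mv{T}^{\top}$ and $\mv{B}^{*}=\mv{B}\mv{T}^{\top}$ consistent with how $\mv{U}$, $\mv{\mu}$ and $\mv{b}$ transform under $\mv{T}$, and organizing the determinant and quadratic-form algebra cleanly -- in particular recognizing that the covariance-form marginal variance $\mv{\Sigma}+\mv{B}^{*}_{\Ac}(\mv{Q}^{*}_{\Ac\Ac})^{-1}(\mv{B}^{*}_{\Ac})^{\top}$ has determinant $|\mv{\Sigma}|\,|\mv{Q}^{*}_{\Ac\Ac}|^{-1}\,|\widehat{\mv{Q}}^{*}_{\Ac\Ac}|$ by the matrix determinant lemma and inverse $\mv{\Sigma}^{-1}-\mv{\Sigma}^{-1}\mv{B}^{*}_{\Ac}(\widehat{\mv{Q}}^{*}_{\Ac\Ac})^{-1}(\mv{B}^{*}_{\Ac})^{\top}\mv{\Sigma}^{-1}$ by the Woodbury identity, which is what turns the marginal density of $\mv{Y}$ into the precision-form expression given. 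Note also that no Jacobian of $\mv{T}$ enters, since $\mv{Y}$ itself is never transformed and the remaining integration is against the natural Lebesgue measure of the free coordinates $\mv{u}^{*}_{\Ac}$.
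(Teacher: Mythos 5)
Your proposal is correct and follows essentially the same route as the paper: both pass to the change-of-basis coordinates where the constraint fixes $\mv{U}^*_{\A}=\mv{b}^*$, use the precision-form conditional $\mv{U}^*_{\Ac}\mid\mv{U}^*_{\A}=\mv{b}^*\sim\pN(\widetilde{\mv{\mu}}^*_{\Ac},(\mv{Q}^*_{\Ac\Ac})^{-1})$, and evaluate the marginalizing integral over $\mv{u}^*_{\Ac}$ by completing the square (the paper phrases this as recognizing the posterior density $\pi_{\mv{U}^*_{\Ac}\mid\mv{Y},\mv{U}^*_{\A}}$ and integrating it to one), collecting the same normalizing constants. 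Your observation that the $\mv{B}^*_{\A}\mv{b}^*$ shift must be absorbed (and vanishes when $\mv{b}=\mv{0}$) matches the implicit treatment in the paper's own computation.
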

	
This result, whose proof is presented in Appendix~\ref{app:inference}, is applicable for both direct and indirect observations. 
The only difference is that one has $\mv{\Sigma} = diag\left( \{\mv{\Sigma}_e \}_{e\in\mathcal{E}}\right)$ for direct observations, 
whereas $\mv{\Sigma} = diag\left( \{\mv{\Sigma}_e \}_{e\in\mathcal{E}} \right) + \sigma^2 \mv{I}$ for indirect observations. 
The likelihood $\pi_{\mv{Y}|\mv{KU}=\mv{b}}(\mv{y})$ in Proposition \ref{Them:piAXsoft} and, therefore, 
also the log-likelihood can be evaluated solely based on operations on sparse or low-dimensional matrices, making it computationally efficient. 
	
\subsection{Simulation study}\label{sec:simulation}
To illustrate the computational benefits of using the above methods, we perform a brief simulation study in this subsection. 
We focus on the model with $\alpha=2$ for the graph of the street network in Figure~\ref{fig:data} and consider the computation time 
required to evaluate the likelihood for a set of indirect observations of the process. 
We compare three methods for evaluating the likelihood. 
The first is a covariance-based approach where the covariance matrix for the observations is computed and the standard formula for the log-likelihood of a multivariate normal distribution is used, thus ignoring sparsity. 
The second is to extend the graph with the observation locations and use Proposition~\ref{Them:piAXsoft}. 
The final is to use the bridge representation of the process, where Proposition~\ref{Them:piAXsoft} is used without extending the graph. 
We add $n$ observation locations at random on the network and consider the computation time as a function of $n$. 
The computations are repeated five times, and the average time is computed. 
The resulting timings presented in Figure~\ref{fig:timings} illustrate the benefit of taking advantage of the sparsity. 

\section{Spatial prediction for the Markov subclass}\label{sec:prediction}	
Suppose that we have observations at locations $s_1,\ldots, s_n \in\Gamma$ and that want to perform prediction at some location $p_1 \in \Gamma$. Again, we separate the cases $\alpha=1$ and $\alpha\in\mathbb{N}+1$.

We begin with the case of $\alpha=1$ and direct observations.
We let  $\widehat{\Gamma}$ be the extended graph where the observation and prediction locations are added as vertices to $\Gamma$. 
Further, we let ${\widetilde{\mv{v}} = (p_1,\mv{v}) = (p_1,v_1,\ldots,v_{|\mathcal{V}|})}$ denote the prediction location and the original vertex locations,
define $\mv{U}_{\mv{s}}$ as the Gaussian process at the observation locations, 
and let  $\mv{U}_{\widetilde{\mv{v}}}$ denote the Gaussian process evaluated at the locations in $\widetilde{\mv{v}}$. 
By Corollary \ref{cor:exp_Q}, the vector $\mv{U} = (\mv{U}_{\mv{s}}^{\top}, \mv{U}_{\widetilde{\mv{v}}}^{\top})^\top$ is a 
centered multivariate normal random variable with precision matrix $\mv{Q}$ in block form as in \eqref{eq:Qbar}. 
By standard results for conditional distributions of GMRFs \citep{rue2005gaussian}, 
$\mv{U}_{\widetilde{\mv{v}}}|\mv{U}_{\mv{s}} \sim \pN(- \mv{Q}_{\widetilde{\mv{v}}\widetilde{\mv{v}}}^{-1}\mv{Q}_{\widetilde{\mv{v}}\mv{s}}\mv{U}_{\mv{s}},\mv{Q}_{\widetilde{\mv{v}}\widetilde{\mv{v}}}^{-1}).$
We have the partitioning
$$
	\mv{Q}_{\widetilde{\mv{v}}\widetilde{\mv{v}}} = \begin{bmatrix}
		Q_{11} & \mv{q}_{\mv{v}}^{\top} \\
		\mv{q}_{\mv{v}} & \mv{Q}_{\mv{v}\mv{v}}
	\end{bmatrix}
$$
and can thus extract that $u(p_1)|\mv{U}_{\mv{s}}\sim \pN(\mu_p, \sigma_p^2)$, where $\mu_p$ is the first element in the vector $- \mv{Q}_{\widetilde{\mv{v}}\widetilde{\mv{v}}}^{-1}\mv{Q}_{\widetilde{\mv{v}}\mv{t}}\mv{U}_{\mv{s}}$ and $\sigma_p^2 = (Q_{11} - \mv{q}_{\mv{v}}^\top \mv{Q}_{\mv{v}\mv{v}}^{-1}\mv{q}_{\mv{v}})^{-1}$. 

For indirect observations and $\alpha=1$, similar standard results for GMRFs yield that 
${\mv{U}|\mv{Y} \sim \pN(\widehat{\mv{\mu}},\widehat{\mv{Q}}^{-1})}$, 
where $\widehat{\mv{Q}} = \mv{Q} + \sigma^{-2}\mv{A}^{\top}\mv{A}$, $\widehat{\mv{\mu}} = \sigma^{-2}\widehat{\mv{Q}}^{-1}\mv{A}^\top\mv{Y}$, and
$\mv{A}$ is the sparse $n\times (n + 1 + |\mathcal{V}|)$ matrix satisfying $\mv{U}_{\mv{s}} = \mv{A}\mv{U}$. 
Therefore,  $u(p_1)|\mv{Y} \sim \pN(\widehat{\mu}_p, \widehat{\sigma}_p^2)$, where $\widehat{\mu}_p$ is element $n+1$ in the vector $\widehat{\mu}$ and 
$
	\widehat{\sigma}_p^2 = (\widehat{Q}_{n+1,n+1} - \widehat{\mv{q}}^{\top}\widehat{\mv{Q}}_{-(n+1,n+1)}\widehat{\mv{q}})^{-1},
$
where $\widehat{\mv{Q}}_{-(n+1,n+1)}$ denotes the matrix $\widehat{\mv{Q}}$ with row $n+1$ and column $n+1$ removed, and $\hat{\mv{q}}$ is column $n+1$ of $\widehat{\mv{Q}}$ where element $n+1$ has been removed.

For both direct and indirect observations, all required matrix solves involve sparse matrices, 
which thus can be computed efficiently using sparse Cholesky factorization and back substitution \citep[see, e.g.,][]{rue2005gaussian}. 
An alternative approach for prediction is to apply the bridge representation, which may be more computationally efficient. 
However, for the sake of brevity, we omit the details of this approach.
	
To perform spatial prediction in the case of $\alpha \in \mathbb{N} + 1$, we use the same notation as in Section~\ref{sec:alphaNlikelihood}. 
We have the following result proved in Appendix~\ref{app:inference}:
\begin{Proposition}\label{Them:piXgby}
Let $\mv{U} \sim \pN\left(\mv{\mu}, \mv{Q}^{-1}\right)$ be an $m$-dimensional Gaussian random variable and $\mv{Y}$ be an $n$-dimensional Gaussian random variable with
$\mv{Y} | \mv{U} \sim \pN\left( \mv{B}\mv{U} , \mv{\Sigma}\right)$ for some $n\times m$ matrix $\mv{B}$, where the matrices $\mv{Q}$  and $\mv{\Sigma}$ are strictly positive-definite.  
Further, let $\mv{K}$ be a $k \times m$ matrix of full rank and let $\mv{T}$ be the change-of-basis matrix of $\mv{K}$.
Then, for $\mv{b}\in \mathbb{R}^k$, 
$\mv{U}| \left\{\mv{KU}=\mv{b},\mv{Y}=\mv{y}\right\} \sim  \pN\left(\widehat{\mv{\mu}}, \widehat{\mv{Q}}\right),$
where $\widehat{\mv{Q}} = \mv{T}^\top_{\Ac,}	\widehat{\mv{Q}}^*_{\Ac\Ac} \mv{T}_{\Ac,}$ and 
$
\widehat{\mv{\mu}} = \mv{T}^\top
		\scalebox{0.75}{$\begin{bmatrix}
				\mv{b}^* \\
				\widehat{\mv{\mu}}^*_{\Ac} 
			\end{bmatrix}$}$. Here 
$\widehat{\mv{Q}}^*_{\Ac\Ac}$ and 
$\widehat{\mv{\mu}}^*_{\Ac}$ are given in Theorem~\ref{Them:piAXsoft}.
\end{Proposition}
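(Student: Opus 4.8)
The plan is to identify this conditional law by a two‑stage conditioning performed in the transformed coordinates used for Proposition~\ref{Them:piAXsoft}, followed by a change of variables back. Since $\mv{U}$ is Gaussian, $\mv{Y}\mid\mv{U}$ is Gaussian with affine mean, and $\{\mv{KU}=\mv{b}\}$ is an affine event, the conditional distribution of $\mv{U}$ given $\{\mv{KU}=\mv{b},\mv{Y}=\mv{y}\}$ is a (degenerate) Gaussian supported on the affine subspace $\{\mv{u}\in\mathbb{R}^m:\mv{K}\mv{u}=\mv{b}\}$, so it suffices to determine its mean and precision. Throughout I would use the change‑of‑basis map of \cite{bolin2021efficient} that underlies Proposition~\ref{Them:piAXsoft}: in the new coordinates $\mv{U}^*$ the constraint $\mv{KU}=\mv{b}$ is equivalent to fixing the first $k$ components, $\mv{U}^*_{\A}=\mv{b}^*$; the prior precision becomes $\mv{Q}^*=\mv{T}\mv{Q}\mv{T}^\top$; and the observation model becomes $\mv{Y}\mid\mv{U}^*\sim\pN(\mv{B}^*\mv{U}^*,\mv{\Sigma})$ with $\mv{B}^*=\mv{B}\mv{T}^\top$. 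Because $\mv{Q}$ and $\mv{\Sigma}$ are strictly positive definite and $\mv{T}$ is invertible, all the submatrices appearing below are invertible.

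First I would condition on the constraint alone. The standard formula for conditioning a Gaussian vector with precision $\mv{Q}^*$ on the subvector $\mv{U}^*_{\A}=\mv{b}^*$ (see, e.g., \cite{rue2005gaussian}) gives $\mv{U}^*_{\Ac}\mid\{\mv{U}^*_{\A}=\mv{b}^*\}\sim\pN\bigl(\widetilde{\mv{\mu}}^*_{\Ac},(\mv{Q}^*_{\Ac\Ac})^{-1}\bigr)$, where $\widetilde{\mv{\mu}}^*_{\Ac}$ is the vector written in Proposition~\ref{Them:piAXsoft}. Next I would bring in the observation: holding $\mv{U}^*_{\A}=\mv{b}^*$ fixed, the likelihood $\pN\bigl(\mv{y};\mv{B}^*_{\A}\mv{b}^*+\mv{B}^*_{\Ac}\mv{U}^*_{\Ac},\mv{\Sigma}\bigr)$ is Gaussian in $\mv{U}^*_{\Ac}$, so multiplying it by the conditional density from the previous step and completing the square in $\mv{U}^*_{\Ac}$ --- the same algebraic manipulation performed, in integrated form, in the proof of Proposition~\ref{Them:piAXsoft} --- yields $\mv{U}^*_{\Ac}\mid\{\mv{U}^*_{\A}=\mv{b}^*,\mv{Y}=\mv{y}\}\sim\pN\bigl(\widehat{\mv{\mu}}^*_{\Ac},(\widehat{\mv{Q}}^*_{\Ac\Ac})^{-1}\bigr)$, with $\widehat{\mv{Q}}^*_{\Ac\Ac}$ and $\widehat{\mv{\mu}}^*_{\Ac}$ exactly as defined there. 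Finally I would transform back: on $\{\mv{KU}=\mv{b}\}$ one has $\mv{U}=\mv{T}^\top[(\mv{U}^*_{\A})^\top,(\mv{U}^*_{\Ac})^\top]^\top$ with $\mv{U}^*_{\A}=\mv{b}^*$ deterministic, so the conditional distribution of $\mv{U}$ is Gaussian with mean $\mv{T}^\top[(\mv{b}^*)^\top,(\widehat{\mv{\mu}}^*_{\Ac})^\top]^\top=\widehat{\mv{\mu}}$ and, in the parametrisation of degenerate Gaussians used in Proposition~\ref{Them:piAXsoft}, precision $\widehat{\mv{Q}}=\mv{T}^\top_{\Ac,}\widehat{\mv{Q}}^*_{\Ac\Ac}\mv{T}_{\Ac,}$; this is the asserted $\pN(\widehat{\mv{\mu}},\widehat{\mv{Q}})$.

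I do not expect a genuine analytic obstacle: the computation is essentially bookkeeping, and the main effort is keeping track of which blocks of $\mv{T}$, $\mv{Q}^*$ and $\mv{B}^*$ enter each conditioning. The one point that needs care is that the target law is degenerate --- it is concentrated on the affine subspace $\{\mv{KU}=\mv{b}\}$ --- so $\widehat{\mv{Q}}$ must be read as the precision of the non‑degenerate $\Ac$‑block transported through $\mv{T}_{\Ac,}$ (equivalently, of the density of $\mv{U}\mid\{\mv{KU}=\mv{b},\mv{Y}=\mv{y}\}$ with respect to Lebesgue measure on that subspace), and the two conditioning identities must be applied against the appropriate reference measures. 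Equivalently and most transparently, one may invoke Bayes' identity $\pi_{\mv{U}\mid\mv{KU}=\mv{b},\mv{Y}=\mv{y}}\propto\pi_{\mv{Y}\mid\mv{U}}\,\pi_{\mv{U}\mid\mv{KU}=\mv{b}}$ and note that, written in the coordinates $\mv{U}^*$, the right‑hand side is exactly the unnormalised integrand appearing in the proof of Proposition~\ref{Them:piAXsoft}; its normalisation in $\mv{U}^*_{\Ac}$ is $\pN(\widehat{\mv{\mu}}^*_{\Ac},(\widehat{\mv{Q}}^*_{\Ac\Ac})^{-1})$, and the back‑transformation then gives the claim.
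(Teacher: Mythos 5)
Your argument is correct and follows essentially the same route as the paper: apply Bayes' rule in the transformed coordinates $\mv{U}^*=\mv{T}\mv{U}$, combine the conditional prior $\mv{U}^*_{\Ac}\mid\{\mv{U}^*_{\A}=\mv{b}^*\}\sim\pN\bigl(\widetilde{\mv{\mu}}^*_{\Ac},(\mv{Q}^*_{\Ac\Ac})^{-1}\bigr)$ with the likelihood viewed as a Gaussian in $\mv{u}^*_{\Ac}$, complete the square to obtain $\pN\bigl(\widehat{\mv{\mu}}^*_{\Ac},(\widehat{\mv{Q}}^*_{\Ac\Ac})^{-1}\bigr)$, and map back via $\mv{U}=\mv{T}^\top\mv{U}^*$. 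Your additional remark on reading $\widehat{\mv{Q}}$ as the precision of the non-degenerate $\Ac$-block transported through $\mv{T}_{\Ac,}$ is a point the paper leaves implicit, but it does not change the argument.
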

To use the result for spatial prediction, we can add the location we want to predict as a vertex in the graph, and obtain the prediction as the element in 
$\widehat{\mv{\mu}}$ corresponding to that element. The result can be used for both direct and indirect observations by changing $\mv{\Sigma}$, and the predictions can be computed solely based on operations on sparse matrices. 

\section{Stationary boundary conditions}\label{sec:boundary}	
As noted in \citet[][Section~8]{BSW2022}, the Whittle--Mat\'ern fields are not isotropic, and for many applications, this might be a desirable feature. 
However, the effect of the vertex conditions at vertices of degree 1 may not be desirable if these vertices are not actual 
endpoints of the graph but instead are induced from only observing a part of the graph (e.g., in Figure~\ref{fig:data}, some roads are cut by the observation window). 
To remove boundary effects, the Kirchhoff condition at vertices with degree 1, corresponding to a Neumann condition, can be replaced by boundary condition yielding a stationary solution at the boundary. 
		
In particular, for $\alpha=1$ the Robin boundary conditions $\kappa u + u' = 0$ results in a stationary model on an interval \citep{Daon2018mitigating}, 
and replacing the Kirchhoff vertex condition by a Robin boundary condition at vertices of degree 1 yields a well-defined model \citet{bolinetal_fem_graph}. 
Assuming such stationary vertex conditions means that $\frac{1}{2}\mv{r}(0,0)^{-1}$ is not removed for one of the endpoints in \eqref{eq:Qedge}. 
Thus, for $\alpha=1$, \eqref{eq:expprec} is then modified to
\begin{equation*}
	Q_{ij} =  2\kappa\tau^2\cdot \begin{cases}
		\frac{1}{2}\mathbb{I}\left(\deg(v_i) =1\right)+  \hspace{-0.15cm}\sum\limits_{e \in \mathcal{E}_{v_i}} 
		\hspace{-0.2cm}\left(	\frac{1}{2} + \frac{e^{-2\kappa \ell_e}}{1-e^{-2\kappa \ell_e}}  \right) \mathbb{I}\left(\bar{e}\neq \underline{e}\right)+	\tanh(\kappa \frac{l_e}{2} )  \mathbb{I}\left(\bar{e}= \underline{e}\right) &\hspace{-0.15cm}  i=j,\\
		\sum\limits_{e \in  \mathcal{E}_{v_i}  \cap  \mathcal{E}_{v_j} } -\frac{e^{-\kappa \ell_e}}{1-e^{-2\kappa \ell_e}} &\hspace{-0.15cm} \text{$i\neq j$.}
		\end{cases}
\end{equation*}
Similar corrections can be derived for higher $\alpha\in\mathbb{N}$. For example, 
for $\alpha = 2$, the boundary conditions  $\kappa u + u' \sim \pN(0,(2\kappa)^{-1})$ provide stationarity for the interval.
An example of the marginal variances of a Whittle--Mat\'ern field on a metric graph, with $\alpha = 2$, $\kappa = 4$, and $\tau = (4\kappa^3)^{-1}$, 
with and without the boundary correction is presented in Figure~\ref{fig:variances}. 
The effect of this boundary correction is investigated in the application in Section~\ref{sec:application}. 
	
\begin{figure}[t]
	\includegraphics[width=0.99\linewidth]{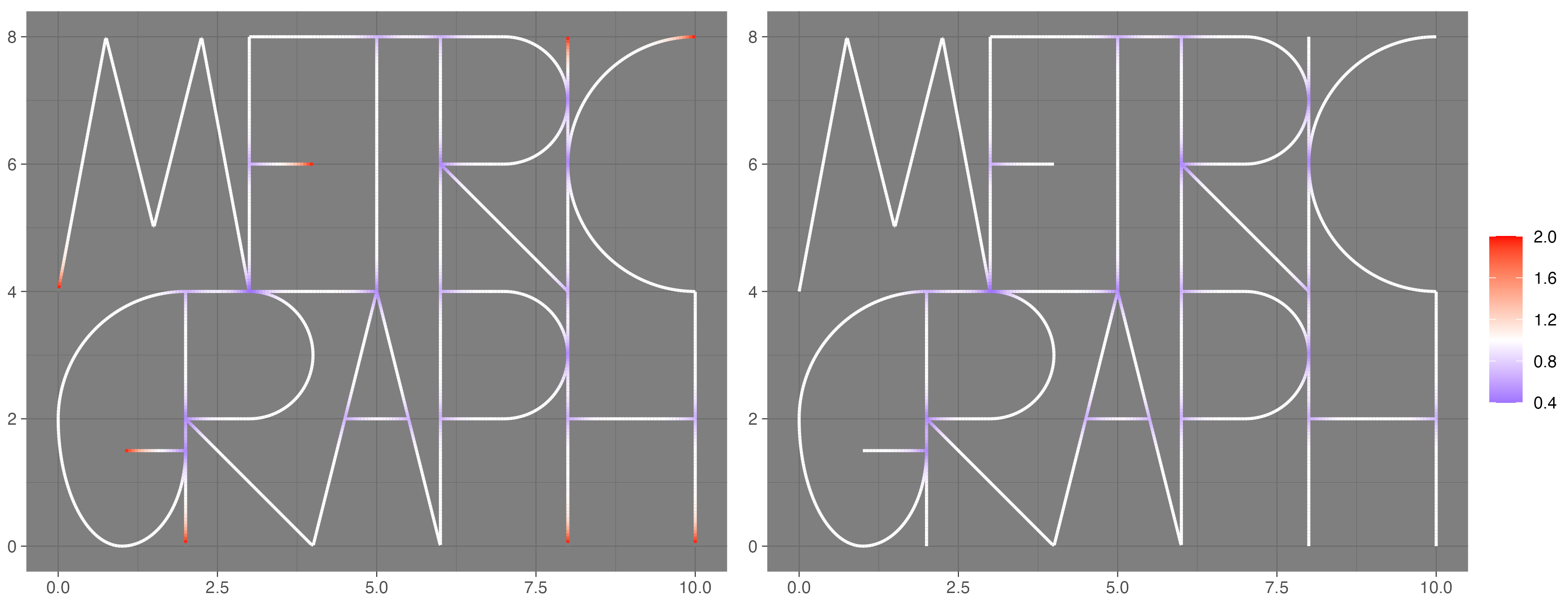}
	\caption{Marginal variances with (right) and without (left) boundary correction for the case of $\alpha=2$ on the metric graph that is used as logo for the \texttt{MetricGraph} package \cite{bsw_MetricGraph_cran}. }
	\label{fig:variances}
\end{figure}

\section{Comparison with models based on the graph Laplacian}\label{sec:graph_laplacian}
For $\alpha=1$, we have the following result regarding the limit of the precision matrix as $\kappa\rightarrow 0$. 
The result follows directly by the expression of $Q_{ij}$ in \eqref{eq:precQexp} and L'H\^ospital's rule.
			
\begin{Corollary}
Let $\alpha=1$ and $\tau^2  = 2\kappa$ and consider the matrix $\mv{Q}$ in \eqref{eq:precQexp}. Then,
$$
	\lim_{\kappa\rightarrow 0} 2\kappa Q_{ij} = 
	\begin{cases}
		\sum_{e \in \mathcal{E}_{v_i}} \frac{1}{\ell_e}  & i=j,\\
		\sum_{e\in\mathcal{E}_{v_i}\cap\mathcal{E}_{v_j}} -\frac{1}{\ell_e} &  i\neq j.
	\end{cases}
$$
\end{Corollary}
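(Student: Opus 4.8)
The plan is to take the closed-form entries of $\mv{Q}$ provided by Corollary~\ref{cor:exp_Q} (equivalently by \eqref{eq:expprec}), substitute the prescribed value of $\tau$, and then let $\kappa\to 0$ one summand at a time. This is legitimate because $\mv{Q}$ has only finitely many nonzero entries, and each such entry is a finite sum indexed by the edges incident to a vertex, resp.\ to a pair of vertices. First I would rewrite the per-edge contributions appearing in \eqref{eq:expprec} as hyperbolic functions of the single variable $\kappa\ell_{e}$: for a non-circular edge $e$ one has $\tfrac12+\tfrac{e^{-2\kappa\ell_{e}}}{1-e^{-2\kappa\ell_{e}}}=\tfrac12\coth(\kappa\ell_{e})$, the off-diagonal summand equals $-\tfrac{1}{2\sinh(\kappa\ell_{e})}$, and the contribution of a circular edge is $\tanh(\kappa\ell_{e}/2)$.

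After inserting $\tau$ and multiplying by the scalar factor in the statement, every entry of the rescaled matrix becomes a finite sum of terms of the form $\kappa\coth(\kappa\ell_{e})$, $\kappa/\sinh(\kappa\ell_{e})$, and $\kappa\tanh(\kappa\ell_{e}/2)$, with $\ell_{e}>0$ held fixed. The second step is the elementary limit: by one application of L'H\^opital's rule to each term (equivalently, the expansions $x\coth x\to 1$, $x/\sinh x\to 1$, $\tanh x / x \to 1$ as $x\to 0$) one obtains $\kappa\coth(\kappa\ell_{e})\to 1/\ell_{e}$, $\kappa/\sinh(\kappa\ell_{e})\to 1/\ell_{e}$, and $\kappa\tanh(\kappa\ell_{e}/2)\to 0$. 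Summing these finitely many limits over the incident edges yields $\sum_{e\in\mathcal{E}_{v_{i}}}1/\ell_{e}$ on the diagonal and $-\sum_{e\in\mathcal{E}_{v_{i}}\cap\mathcal{E}_{v_{j}}}1/\ell_{e}$ off the diagonal, which is precisely the claimed weighted graph-Laplacian limit.

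There is essentially no substantive obstacle here — the result is a termwise limit of finitely many elementary functions, which is why the remark states that it ``follows directly.'' The only points requiring a word of care are (i) the bookkeeping of the powers of $\kappa$, i.e.\ choosing the rescaling and the normalisation of $\tau$ consistently so that the rescaled matrix has a finite nonzero limit (the unit-variance scaling $\tau^{2}=\tfrac1{2\kappa}$ used in the proof of Corollary~\ref{cor:exp_Q} makes $2\kappa\mv{Q}$ converge), and (ii) the treatment of a circular edge at $v_{i}$: there the $\tanh$ term appears instead of the $\coth$ term and contributes $0$ in the limit, consistent with the fact that a self-loop is Schur-complemented to zero in a weighted graph Laplacian; hence the diagonal sum is to be read over the non-loop incident edges, and the two formulas coincide verbatim when $\Gamma$ is loopless.
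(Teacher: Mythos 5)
Your proposal is correct and is essentially the paper's own argument: the paper gives no proof beyond stating that the result follows from the explicit entries in \eqref{eq:expprec} together with L'H\^opital's rule, which is exactly the termwise hyperbolic-function limit you carry out. Your two side remarks are also well taken --- the normalisation must indeed be read as $\tau^{2}=(2\kappa)^{-1}$ (the unit-variance scaling used in the proof of Corollary~\ref{cor:exp_Q}) for $2\kappa\mv{Q}$ to have a finite nonzero limit, and the $\tanh$ contribution of a circular edge vanishes in the limit, so the stated formula is to be read for a loopless graph.
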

	
The matrix obtained in the corollary is a graph Laplacian matrix for a graph with edge weights $\nicefrac{1}{\ell_e}$. 
Thus, there are clear connections to the construction of Mat\'ern fields on graphs suggested by \citet{borovitskiy2021matern}. 
Their method does not define a stochastic process on the metric graph, but a multivariate distribution at the vertices. 
However, we now compare the two approaches in more detail. 
For simplicity, we assume that $\alpha=1$ so that the precision matrix at the vertices of the Whittle--Mat\'ern field is given by Corollary~\ref{cor:exp_Q}. 
	
The model proposed by \citet{borovitskiy2021matern} for a graph $\Gamma = (\mathcal{V},\mathcal{E})$ is a Gaussian process 
$\mv{u}_{\mathcal{V}}$ on $\mathcal{V}$ defined via
$(\hat{\kappa}^2\mv{I} + \mv{\Delta})^{\alpha/2}\mv{u}_V \sim \pN(\mv{0},\mv{I}),$
where $\mv{\Delta} = \mv{D} - \mv{W}$ is the (unweighted) graph Laplacian matrix, $\mv{W}$ is the adjacency matrix of the graph, 
and $\mv{D}$ is a diagonal matrix with elements $D_{ii} = d_i = \sum_{j=1}^{n_{\mathcal{V}}} W_{i,j}$. 
For $\alpha=1$, this means that $\mv{u}_{\mathcal{V}}$ is a multivariate normal distribution with precision matrix $\hat{\mv{Q}}$ with nonzero elements
\begin{equation*}
	\hat{Q}_{i,j} = \begin{cases}
	\hat{\kappa}^2 + d_i & i=j, \\
	-1								& i \sim j.
	\end{cases}
\end{equation*}
Comparing this expression to that in Corollary~\ref{cor:exp_Q} reveals that this defines a different distribution than that of the Whittle--Mat\'ern field on $\Gamma$ evaluated on ${\mathcal{V}}$. 
However, suppose that we take a graph where all edges have a length of 1 and subdivide each edge into sections of length $h$, 
obtaining a new graph $\Gamma_h$ with edges of length $h$ and where most vertices have degree $2$.
%
If we define $\hat{c} = e^{-\kappa h}/(1-e^{-2\kappa h})$ and set $\hat{\kappa}^2 = \hat{c}^{-1} + 2 e^{-\kappa h} - 2$, then
\begin{equation*}
	\hat{c}\hat{Q}_{i,j} = \begin{cases}
	Q_{i,i} &  \text{$i=j$ and $d_i = 2$ or $i\neq j$,} \\
	Q_{i,i} + 1 - \frac{d}{2} + \hat{c}(d-2) (e^{-\kappa h} -1) &  \text{$i=j$ and $d_i \neq 2$}.
	\end{cases}
\end{equation*}
Thus, the precision matrix $\hat{c}\hat{\mv{Q}}$ agrees with that of a Whittle--Mat\'ern field for all vertices except those with a degree different from $2$. Hence, the covariance matrices $\mv{\Sigma} = \mv{Q}^{-1}$ and $\hat{\mv{\Sigma}} = \hat{c}^{-1}\hat{\mv{Q}}^{-1}$ are similar 
if $h$ is small and most vertices have a degree of $2$.
	
For example, suppose that we only have one vertex with degree $3$ in the graph, whereas the other vertices have a degree of 2. 
Then, $\hat{c}\hat{\mv{Q}} = \mv{Q} + \mv{v}\mv{v}^{\top}$, where $\mv{v}$ is a vector with all zeros except for one element 
$v_i = (1 - \frac{d}{2} + \hat{c}(d-2) (e^{-\kappa h} -1))^{\nicefrac12}.$
Via the Sherman--Morrison formula, we obtain
$\mv{\Sigma} - \hat{\mv{\Sigma}} = v_i(1+v_i^2 \Sigma_{ii})^{-1} \mv{\Sigma}_i\mv{\Sigma}_i^\top,$
where $\mv{\Sigma}_i$ is the $i$th column of $\mv{\Sigma}$. 
Because $v_i\rightarrow 0$ as ${h\rightarrow 0}$, 
we observe that each element in the difference $\mv{\Sigma} - \hat{\mv{\Sigma}}$ converges to $0$ as ${h\rightarrow 0}$. 
Thus, the construction based on the graph Laplacian can be viewed as a finite-difference approximation of the Whittle--Mat\'ern field. 
However, for a nonsubdivided graph, the difference between the construction based on the graph Laplacian and the exact Whittle--Mat\'ern field can be considerable. 
In addition, subdividing the graph to get a good approximation induces a high computational cost. 
Finally, another advantage of the Whittle--Mat\'ern fields compared to the approach based on the graph Laplacian is that we can add vertices on existing edges, 
and remove vertices with a degree of 2 in the graph without changing the Whittle--Mat\'ern fields. 
This property is useful for applications because we do not change the model when adding observation or prediction locations to the graph. 
	
\section{Application}\label{sec:application}
We consider the traffic speed data in Figure \ref{fig:data}. 
The network was obtained from OpenStreetMap \citep{OpenStreetMap}, 
and the traffic speeds were obtained from the California performance measurement system database \citep{chen2001freeway}. 
We have $n= 325$ observations, which we assume follow the model \eqref{eq:likelihood}. 
That is, the observations are assumed to be noisy observations of a centered Gaussian process $u$ on $\Gamma$. 
We compare seven different models for $u$: 
Whittle--Mat\'ern fields with $\alpha=1$ and $\alpha=2$, with and without the stationary boundary correction of Section~\ref{sec:boundary}, 
a Gaussian process with an isotropic exponential covariance function using the resistance metric from \cite{anderes2020isotropic}, 
and the models from \cite{borovitskiy2021matern} based on the graph Laplacian for $\alpha=1$ and $\alpha=2$. 
Note that the latter two models are not defined on $\Gamma$, but only on the vertices, $\mathcal{V}$. 
Therefore, we add all observation locations as vertices in the graph for those models.  
In addition, the graph does not have Euclidean edges, so the isotropic model is not guaranteed to be valid.
		
All models are fitted to the data using numerical optimization of the respective log-likelihood function, 
in \texttt{R} using the \texttt{MetricGraph} \citep{bsw_MetricGraph_cran} package. 
The negative log-likelihood value for each model is presented in Table~\ref{tab:res}. 
All seven models have the same number of parameters, so according to measures such as Akaike or Bayesian information criteria (AIC or BIC), 
the Whittle--Mat\'ern model with $\alpha=2$ is the most appropriate.
	
\begin{figure}
	\centering
	\includegraphics[width=\linewidth]{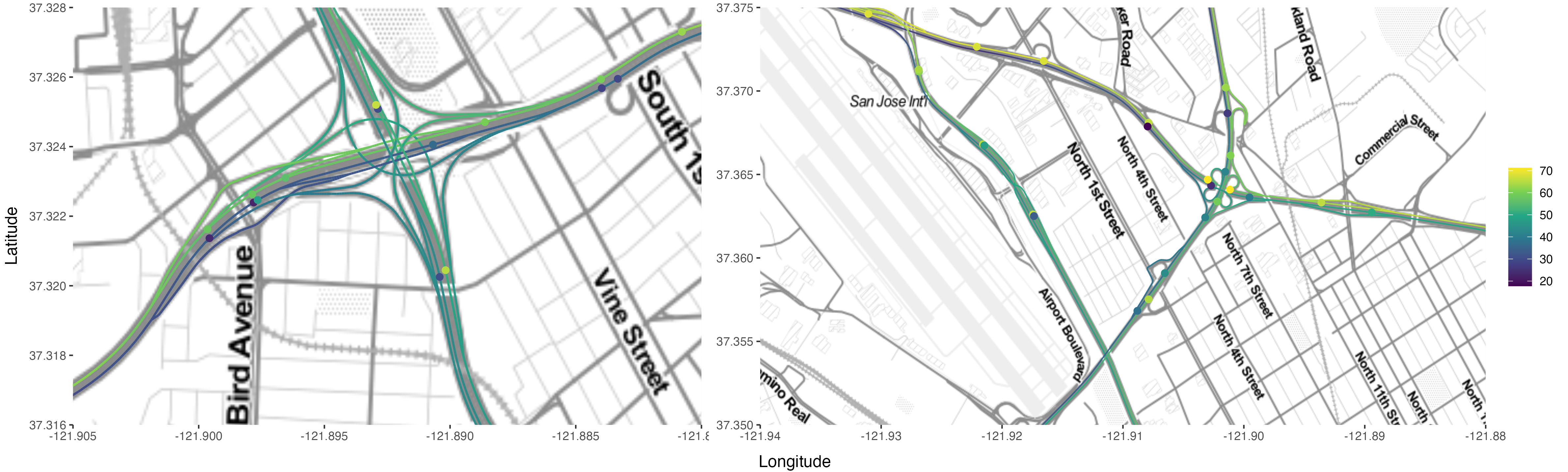}
	\caption{Kriging estimate on the two subareas of $\Gamma$ shown in Figure~\ref{fig:data} for the Whittle--Mat\'ern field with $\alpha=2$. 
		The estimate is not shown on the full graph because one cannot see the individual roads without zooming in on specific areas.}
	\label{fig:interpolation}
\end{figure}

Given the estimated parameters, we can perform kriging prediction to estimate the process at any location 
in the street network for all models but those based on the graph Laplacian. 
Figure~\ref{fig:interpolation} depicts an example of this, 
where the posterior mean of $u(s)$, given the observations, is shown for the Whittle--Mat\'ern model with $\alpha=2$. 
We can only perform prediction at the graph vertices for the models based on the graph Laplacian.
Thus, to perform kriging prediction at unobserved locations, we need to add those in the graph before fitting the model, 
which can be computationally demanding and changes the model. 

We perform a five-fold pseudo cross validation to compare the predictive quality of the seven models. 
The data are split into five groups at random, and  the data from the first group are predicted based on the fitted model and the data from the remaining four groups. 
This process is repeated  five times, each with a different group that is predicted. 
We evaluate the accuracy of the predictions using averages of five different scoring rules: the root mean square error (RMSE), 
mean absolute error (MAE), log-score (LS), continuous ranked probability score (CRPS), and the scaled CRPS (SCRPS; see \cite{bolin2022local} and \cite{gneiting2007strictly} for details about these scoring rules). 
The average scores are listed in Table~\ref{tab:res}.  
The Whittle--Mat\'ern model with $\alpha=2$ without boundary corrections fits best, 
but model with $\alpha=2$ and boundary corrections, and the model based on the graph Laplacian with $\alpha=2$ have similar results. 
This finding confirms the theoretical results in Proposition~\ref{prop:A-kriging}, 
that the smoothness of the field is crucial for the predictive accuracy. 
It also reveals that the higher degree of isotropy achieved by using the boundary corrections might not have a significant effect and
actually reduces the predictive performance in this case.

\begin{table}[t]\centering
   	\caption{Various scoring rules that measure the predictive performance of the models in the application.  
 	Models with boundary corrections are marked as ``adjusted.'' A lower value is better for all scores; the best values are indicated in bold. 
 	The first five scores are computed through five-fold cross-validation, whereas the negative log-likelihood is computed on the whole dataset.}
	\begin{tabular}{rrrrrrr}  \hline & RMSE & MAE & LS & CRPS & SCRPS &  neg. log-like \\   \hline 
	Isotropic exponential & 8.9327 & 6.5496 & 3.6378 & 4.9410 & 2.1527 & 1223.9716 \\ 
	Whittle--Mat\'ern $\alpha=1$ adjusted  & 8.9366 & 6.5516 & 3.6521 & 4.9996 & 2.1618 & 1221.3791 \\  
	Whittle--Mat\'ern $\alpha=1$ & 8.9417 & 6.5607 & 3.6522 & 5.0017 & 2.1619 & 1221.2284 \\   
	Graph Laplacian $\alpha=1$& 8.9370 & 6.5583 & 3.6525 & 5.0019 & 2.1620 & 1221.3832 \\ 
	Whittle--Mat\'ern $\alpha=2$  adjusted & 8.5448 & 6.1499 & 3.5850 & 4.7067 & \bf  2.1264 & 1208.1840 \\   
	Whittle--Mat\'ern $\alpha=2$  & \bf 8.5438 & \bf  6.1492 & \bf  3.5849 & \bf 4.7063 & \bf  2.1264 & \bf  1207.8678 \\ 
	Graph Laplacian $\alpha=2$  & 8.5897 & 6.1536 & 3.5980 & 4.7742 & 2.1348 & 1208.7028 \\   \hline
	\end{tabular}
	\label{tab:res}
\end{table}
 
\section{Discussion}\label{sec:discussion}
We comprehensively characterized the statistical properties of the Gaussian Whittle--Mat\'ern fields and demonstrated that their finite-dimensional 
distributions can be derived exactly for the Markov cases $\alpha\in\mathbb{N}$. 
In these cases, we obtained sparse precision matrices that facilitate their use in real applications to extensive datasets 
via computationally efficient implementations based on sparse matrices.
	
We argue that this class of models is a natural choice for applications where Gaussian random fields are needed to model data on metric graphs, 
and that the anisotropy of the models may be a desirable property for many applications. 
Also, as far as we know, no other constructions can currently provide differentiable Gaussian processes on general metric graphs. 
Having differentiable processes may be useful in many applications, 
particularly applications involving log-Gaussian Cox processes \citep{moller2022lgcp}, 
where it may be desirable to have a smoothly varying intensity function. 
Considering such applications is planned for future work. 
Another interesting aspect is to consider generalized Whittle--Mat\'ern fields \citep{bolinetal_fem_graph}, 
where we can allow the parameter $\kappa$ to be non-stationary over the graph. 
In the Markov case, a natural extension is to expand Theorem \ref{thm:representation} to allow for a non-stationary covariance. 
In the non-Markovian case, a natural extension is to use a finite element method combined with an 
approximation of the fractional operator, similarly to the methods by \citet{BKK2020}, \citet{BK2020rational} and \citet{xiong2022}, 
to approximate the random fields. Some initial work in this direction is presented in \citet{bolinetal_fem_graph}.
	 
We presented adjusted vertex conditions at vertices of degree 1 to remove boundary effects, 
and it would be interesting to investigate alternative vertex conditions for vertices with higher degrees, 
which could be used to make the marginal variances more stationary across the graph. 
Finally, considering non-Gaussian extensions of the Whittle--Mat\'ern fields similarly to the Type-G random fields on Euclidean domains by \citet{bw20} is also an interesting topic for future work.

\bibliographystyle{chicago}
\bibliography{../../Bib/unified_graph_bib}
\newpage

\begin{appendix}
\section{Strict positive-definiteness of $\varrho(\cdot,\cdot)$}\label{app:strposdef}
We begin this section by recalling some relevant auxiliary definitions. We refer the reader to \cite{sriperumbudur2011universality} for further details. 
Let $M_b(\Gamma)$ be the space of all finite signed Radon measures on $\Gamma$. Also, recall that $\Gamma$ is a compact metric space, so it is, in particular, locally compact and Hausdorff. We say that a measurable, symmetric and bounded kernel $k:\Gamma\times\Gamma\to\mathbb{R}$ is integrally strictly positive-definite if
$$\forall \mu \in M_b(\Gamma)\setminus \{0\},\quad \int_\Gamma \int_\Gamma k(x,y) d\mu(x) d\mu(y) > 0.$$
\begin{Remark}\label{rem:ispdimpliesspd}
Note that by taking $\mu = \sum_{j=1}^N a_j \delta_{s_j}$, where $N\in\mathbb{N}$, $a_1,\ldots,a_N \in \mathbb{R}\setminus\{0\}$ and $s_1,\ldots,s_N\in \Gamma$, and $\delta_{s}(\cdot)$ is the Dirac measure concentrated at $s\in\Gamma$, we obtain that a integrally strictly positive-definite kernel $k(\cdot,\cdot)$ is also strictly positive-definite.
\end{Remark}

A  positive semi-definite kernel $k:\Gamma\times\Gamma\to\mathbb{R}$ is called a $c$-kernel if it is bounded and for every $s\in\Gamma$, $k(\cdot,s)\in C(\Gamma)$. Furthermore, $k$ is $c$-universal if the reproducing kernel Hilbert space induced by $k$ is dense in $C(\Gamma)$ with respect to the $\|\cdot\|_{C(\Gamma)}$-norm. Observe that since $\Gamma$ is compact, the notion of $c$-universality agrees with the notion of $c_0$-universality in \cite{sriperumbudur2011universality}.

We have the following result (adapted to our context of compact metric graphs) whose proof can be found in \citet[Proposition 4]{sriperumbudur2011universality}:

\begin{Proposition}\label{prp:c0universal_sipd}
Let $k$ be a $c$-kernel on a compact metric graph $\Gamma$. Then, $k$ is $c$-universal if, and only if, $k$ is integrally strictly positive-definite.
\end{Proposition}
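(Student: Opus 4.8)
The plan is to prove the equivalence by a duality argument that identifies $C(\Gamma)^\ast$ with $M_b(\Gamma)$ and translates both conditions into a single statement about the reproducing kernel Hilbert space $\mathcal{H}_k$ induced by $k$. Write $\mathcal{H}_k\subset C(\Gamma)$ for this RKHS; the inclusion holds because $k$ is a $c$-kernel, so every element of $\mathcal{H}_k$ is continuous. By definition, $k$ is $c$-universal precisely when $\mathcal{H}_k$ is dense in $(C(\Gamma),\|\cdot\|_{C(\Gamma)})$. Since $\Gamma$ is compact, hence compact Hausdorff, the Riesz representation theorem identifies the dual $C(\Gamma)^\ast$ with $M_b(\Gamma)$, and the Hahn--Banach theorem then gives that $\mathcal{H}_k$ is dense if and only if the only $\mu\in M_b(\Gamma)$ with $\int_\Gamma f\,d\mu=0$ for all $f\in\mathcal{H}_k$ is $\mu=0$.

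The bridge between this annihilator condition and integral strict positive-definiteness is the kernel embedding $\mu\mapsto k\mu$, where $k\mu(x):=\int_\Gamma k(x,y)\,d\mu(y)$. First I would verify that $k\mu\in\mathcal{H}_k$ for every $\mu\in M_b(\Gamma)$: the map $y\mapsto k(\cdot,y)$ takes values in $\mathcal{H}_k$ with $\|k(\cdot,y)\|_{\mathcal{H}_k}=\sqrt{k(y,y)}$ uniformly bounded (as $k$ is bounded), and it is weakly measurable because $y\mapsto\langle g,k(\cdot,y)\rangle_{\mathcal{H}_k}=g(y)$ is continuous for each $g\in\mathcal{H}_k$. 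Since $\Gamma$ is a compact metric space, $C(\Gamma)$ and hence $\mathcal{H}_k$ are separable, so the Pettis measurability theorem together with finiteness of $\mu$ makes $k\mu=\int_\Gamma k(\cdot,y)\,d\mu(y)$ a well-defined Bochner integral in $\mathcal{H}_k$. Integrating the reproducing identity $f(x)=\langle f,k(\cdot,x)\rangle_{\mathcal{H}_k}$ against $\mu$ and interchanging the Bochner integral with the inner product then yields the key identity $\int_\Gamma f\,d\mu=\langle f,k\mu\rangle_{\mathcal{H}_k}$ for all $f\in\mathcal{H}_k$. Taking $f=k\mu$ gives in particular $\|k\mu\|_{\mathcal{H}_k}^2=\int_\Gamma\int_\Gamma k(x,y)\,d\mu(x)\,d\mu(y)$, so this double integral is always nonnegative.

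With these identities the equivalences chain together. The annihilator condition $\int_\Gamma f\,d\mu=0$ for all $f\in\mathcal{H}_k$ is equivalent, via $\int_\Gamma f\,d\mu=\langle f,k\mu\rangle_{\mathcal{H}_k}$ and the admissible choice $f=k\mu$, to $k\mu=0$ in $\mathcal{H}_k$; and $k\mu=0$ is equivalent to $\|k\mu\|_{\mathcal{H}_k}^2=\int_\Gamma\int_\Gamma k\,d\mu\,d\mu=0$. Hence $\mathcal{H}_k$ is dense in $C(\Gamma)$ if and only if $\int_\Gamma\int_\Gamma k\,d\mu\,d\mu=0$ forces $\mu=0$, which, since the double integral is nonnegative for every $\mu$, is exactly the statement that $\int_\Gamma\int_\Gamma k\,d\mu\,d\mu>0$ for all $\mu\in M_b(\Gamma)\setminus\{0\}$, i.e.\ that $k$ is integrally strictly positive-definite. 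This establishes both implications at once.

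I anticipate that the main obstacle is the justification that $k\mu$ genuinely lies in $\mathcal{H}_k$ and that the Bochner integral commutes both with the inner product and with integration against $\mu$. This is precisely where the $c$-kernel hypotheses (boundedness and continuity of the sections $k(\cdot,s)$) are essential, and where one must invoke Pettis measurability and the standard properties of the Bochner integral rather than argue by hand. The remaining steps---the Riesz/Hahn--Banach duality and the purely algebraic chain of equivalences---are routine once this analytic point is secured.
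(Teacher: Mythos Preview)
Your argument is correct and is essentially the standard proof of this result; it is precisely the approach taken in \citet[Proposition~4]{sriperumbudur2011universality}, which is what the paper cites rather than providing its own proof. One minor point: your inference that $\mathcal{H}_k$ is separable because $C(\Gamma)$ is separable is not quite the right justification (the $\mathcal{H}_k$-norm is stronger), but separability of $\mathcal{H}_k$ does hold here and can be obtained directly from separability of $\Gamma$ together with the reproducing property, so this does not affect the validity of the proof.
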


Let $\varrho(\cdot,\cdot)$ be the covariance function of $u$, where $u$ is given by the solution to \eqref{eq:Matern_spde}. Thus, by Theorem \ref{thm:regularity}, $\varrho \in C(\Gamma\times\Gamma)$. In particular, since $\Gamma$ is compact, $\varrho$ is a $c$-kernel. Further, by, e.g., \citet[Corollary 8.16]{janson_gaussian}, the reproducing kernel Hilbert space induced by $\varrho$ is the Cameron--Martin space associated to $u$. By \citet[Proposition 8]{BSW2022}, the Cameron--Martin space associated to $u$ is $\dot{H}^\alpha$.
Moreover, in \citet[Propositions 3 and 4]{BSW2022}, we have a characterization for $\dot{H}^k$ for any $k\in\mathbb{N}$. Thus, our goal in showing that $\varrho$ is strictly positive-definite is to use Proposition \ref{prp:c0universal_sipd} and Remark \ref{rem:ispdimpliesspd} to reduce the problem to showing that $\varrho$ is $c$-universal. Finally, to this end, we will find a set $\mathcal{A}(\Gamma)\subset \bigcap_{k\in\mathbb{N}} \dot{H}^k$ such that $\mathcal{A}(\Gamma)$ is dense in $C(\Gamma)$ with respect to the $\|\cdot\|_{C(\Gamma)}$ norm.

In view of the Stone-Weierstrass theorem \cite[Theorem 2.4.11]{dudley2018real}, all we need to do is find a set $\mathcal{A}(\Gamma)$ such that it is a sub-algebra of $C(\Gamma)$, it contains constant functions and separates points. We will now obtain some auxiliary function spaces in order to obtain such a sub-algebra.

Let $\<1\> = \textrm{span}\{1\}$ be the space of constant functions on $\Gamma$ and $D(\Gamma) = \oplus_{e\in\mathcal{E}} C_c^\infty(e)$ be the space of functions with the support in the union of the interiors of the edges whose restrictions to the edges are infinitely differentiable. From  \citet[Propositions 3 and 4]{BSW2022}, $\<1\>\subset \bigcap_{k\in\mathbb{N}} \dot{H}^k$ and by \citet[Proposition 10]{BSW_Markov}, $D(\Gamma) \subset \bigcap_{k\in\mathbb{N}} \dot{H}^k$. Therefore, we obtain that ${\<1\> + D(\Gamma) = \{c + f: c\in\mathbb{R}, f\in D(\Gamma)\} \subset \bigcap_{k\in\mathbb{N}} \dot{H}^k}$. It is easy to see that $\<1\> + D(\Gamma)$ is an algebra that contains the constant functions and separates the points in the interiors of the edges from all the other points of $\Gamma$. We, thus, need to obtain an additional space that separates the vertices of the metric graph from the remaining vertices.  

To define such a set, we introduce some additional definitions and assumptions. Given a compact metric graph $\Gamma$, let $\widetilde{\Gamma}$ be the compact metric graph obtained from $\Gamma$ by adding one vertex (of degree 2) at the interior of each edge that is a loop. Therefore, $\widetilde{\Gamma}$ has no loops and, by \citet[Proposition 2]{BSW2022}, the solution to $(\kappa^2-\Delta_\Gamma)^{\alpha/2}(\tau u) = \mathcal{W}$ on $\widetilde{\Gamma}$ has the same covariance function as the solution to \eqref{eq:Matern_spde}. Therefore, we can assume, without loss of generality, that the compact metric graph $\Gamma$ does not contain loops, although it may contain cycles. For each $v\in\mathcal{V}$, let
$$S(v,\Gamma) := \{s\in\Gamma: s = (t,e), e\in \mathcal{E}_v, e = [0,\ell_e], t\in (0,\ell_e)\} \cup\{v\}$$
be the star graph induced by $v$, with the outer vertices removed. That is, $S(v,\Gamma)$ contains $v$ and the edges incident to $v$, but does not contain the remaining vertices of these edges. Thus, $S(v,\Gamma)\subset\Gamma$ is an open set. Let, now, $\mathcal{S}_c(v,\Gamma)$ be the set of continuous functions $f:\Gamma\to\mathbb{R}$ such that the support of $f$ is compactly contained in $S(v,\Gamma)$. Let, additionally, for $v\in\mathcal{V}$,
$$N(v,\Gamma) := \{f\in C(\Gamma): \forall e,e'\!\!\in \mathcal{E}_v, f_e \in C^\infty(e), f_e^{(2k-1)}(v) = 0, f_e^{(2k)}(v) = f_{e'}^{(2k)}(v), k\in\mathbb{N}\}.$$
Now, define
$\mathcal{S}_c(\Gamma) = \{f\in C(\Gamma): \exists v\in\mathcal{V}, f\in \mathcal{S}_c(v,\Gamma) \cap N(v,\Gamma)\}$, which is
 a subalgebra of $C(\Gamma)$. Furthermore, by \citet[Propositions 3 and 4]{BSW2022}, $\mathcal{S}_c(\Gamma)\subset \bigcap_{k\in\mathbb{N}} \dot{H}^k$. Also observe that $\mathcal{S}_c(\Gamma)$ is nonempty since $0\in \mathcal{S}_c(\Gamma)$. We are now in a position to obain the subalgebra of $C(\Gamma)$ that contains constant functions and separate points:

\begin{Lemma}\label{lem:StoneWeierSpace}
Let $\Gamma$ be a compact metric graph and define 
$$\mathcal{A}(\Gamma) = \<1\> + D(\Gamma) + \mathcal{S}_c(\Gamma) = \{c+f+g: c\in\mathbb{R}, f\in D(\Gamma), g\in\mathcal{S}_c(\Gamma)\}.$$ 
Then, $\mathcal{A}(\Gamma)$ is dense in $C(\Gamma)$ with respect to the $\|\cdot\|_{C(\Gamma)}$ norm.
\end{Lemma}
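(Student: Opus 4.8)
The plan is to invoke the Stone--Weierstrass theorem in the form quoted above: it suffices to verify that $\mathcal{A}(\Gamma)$ is a subalgebra of $C(\Gamma)$ containing the constant functions and separating points of $\Gamma$. Membership in $C(\Gamma)$ and the inclusion $\mathcal{A}(\Gamma)\subset\bigcap_{k\in\mathbb{N}}\dot H^k$ have already been arranged by the choice of the three summands, so the only thing to check here is the three Stone--Weierstrass hypotheses. Constants are obviously in $\mathcal{A}(\Gamma)$ because $\langle 1\rangle$ is one of the summands.

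For the algebra property, I would argue that $\mathcal{A}(\Gamma)$ is closed under pointwise products. Write a typical element as $c+f+g$ with $c\in\mathbb{R}$, $f\in D(\Gamma)$, $g\in\mathcal{S}_c(\Gamma)$, say $g$ supported in $S(v,\Gamma)$ and lying in $N(v,\Gamma)$. Expanding $(c_1+f_1+g_1)(c_2+f_2+g_2)$, the constant part $c_1c_2$ is in $\langle 1\rangle$, the cross terms $c_if_j$ and $f_1f_2$ are again in $D(\Gamma)$ (compactly supported in the interiors of edges, smooth on each edge). The genuinely delicate terms are those involving $g_1$ or $g_2$: $c_ig_j$ is clearly still in $\mathcal{S}_c(v,\Gamma)\cap N(v,\Gamma)$ since scaling preserves the support condition and the Kirchhoff-type conditions defining $N(v,\Gamma)$; $f\,g$ has support compactly contained in $S(v,\Gamma)$, is smooth on each edge, and near $v$ it vanishes identically (since $f$ is supported away from all vertices), so all the derivative conditions of $N(v,\Gamma)$ hold trivially, giving $f\,g\in\mathcal{S}_c(v,\Gamma)\cap N(v,\Gamma)$. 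Finally $g_1g_2$: if the supports of $g_1$ and $g_2$ lie in $S(v_1,\Gamma)$ and $S(v_2,\Gamma)$ with $v_1\neq v_2$, then the product vanishes unless the two stars overlap, and even then the overlap lies in the interior of edges away from both vertices, so $g_1g_2\in D(\Gamma)$; if $v_1=v_2=v$, then $g_1g_2\in\mathcal{S}_c(v,\Gamma)$ and the conditions defining $N(v,\Gamma)$ are preserved because $f_e^{(2k-1)}(v)=0$ for both factors forces the odd derivatives of the product at $v$ to vanish (Leibniz rule, every term contains an odd-order factor of one of the two functions), and the even derivatives of the product at $v$ are determined edge-independently by the even derivatives of $g_1,g_2$ at $v$, which agree across edges. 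Closure under addition and scalar multiplication is immediate from the definitions.

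For point separation, I would distinguish cases according to the location of the two points $x\neq y$ in $\Gamma$. If both $x,y$ lie in the interior of edges (possibly different edges), then a suitable bump function in $D(\Gamma)$ supported near $x$ but not $y$ separates them. If one of the points, say $x$, lies in the interior of an edge and $y$ is a vertex, the same bump function in $D(\Gamma)$ works. The remaining case is $x=v$, $y=w$ two distinct vertices: here I would construct an element of $\mathcal{S}_c(v,\Gamma)$ equal to $1$ at $v$ and vanishing at $w$ (and indeed outside a small neighbourhood of $v$). This is where the precise definition of $N(v,\Gamma)$ matters — one needs a function on the star $S(v,\Gamma)$ that is smooth on each incident edge, satisfies the vanishing odd-derivative and matching even-derivative conditions at $v$, equals $1$ at $v$, and has support compactly contained in $S(v,\Gamma)$. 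One can build such a function by taking a single smooth profile $\phi\colon[0,\infty)\to\mathbb{R}$ with $\phi(0)=1$, all odd derivatives vanishing at $0$ (e.g.\ a function of $t^2$), and compact support in $[0,\varepsilon)$ with $\varepsilon$ smaller than the length of every edge in $\mathcal{E}_v$, then setting $g_e(t)=\phi(t)$ on each $e\in\mathcal{E}_v$ (parameterising $e$ so that $t=0$ corresponds to $v$) and $g\equiv 0$ elsewhere; this lies in $\mathcal{S}_c(v,\Gamma)\cap N(v,\Gamma)$, and separates $v$ from every other vertex. This construction of a separating element with prescribed Kirchhoff-type behaviour at the vertex is the main obstacle, since it is the only step where the somewhat rigid definition of $N(v,\Gamma)$ is genuinely used; everything else is a routine verification. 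With constants, the algebra property, and point separation in hand, Stone--Weierstrass yields density of $\mathcal{A}(\Gamma)$ in $C(\Gamma)$.
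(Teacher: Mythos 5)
Your proposal is correct and follows essentially the same route as the paper: Stone--Weierstrass applied to $\mathcal{A}(\Gamma)$, with the product closure checked term by term and the vertex-separating element built from a smooth profile on the star $S(v,\Gamma)$ (the paper uses a profile identically equal to $1$ near $v$, you use one with vanishing odd derivatives --- both land in $\mathcal{S}_c(v,\Gamma)\cap N(v,\Gamma)$). The only point you leave implicit is that loop edges are excluded, which the paper arranges by a without-loss-of-generality reduction before defining $S(v,\Gamma)$, so nothing is missing from the proof of the lemma as stated.
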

\begin{proof}
	In view of Stone-Weierstrass theorem \cite[Theorem 2.4.11]{dudley2018real}, it is enough to show that $\mathcal{A}(\Gamma)$ is a subalgebra of $C(\Gamma)$ that contains the constant functions and separates points of $\Gamma$.
To this end, note that $\mathcal{A}(\Gamma)$ is clearly a vector space. Now, observe that given ${f_1,f_2 \in D(\Gamma)}$ and $g_1,g_2\in\mathcal{S}_c(\Gamma)$, then $f_1f_2, f_1g_1 \in D(\Gamma)$ and $g_1g_2\in \mathcal{S}_c(\Gamma)$. Further, let ${h_1,h_2\in \mathcal{A}(\Gamma)}$, so there exist $c_1,c_2\in\mathbb{R}$, $f_1,f_2\in D(\Gamma)$ and $g_1,g_2\in \mathcal{S}_c(\Gamma)$ such that ${h_i = c_i + f_i + g_i}$,${i=1,2}$. Hence, 
$$h_1h_2 = c_1c_2 + (c_1 f_2 + c_2f_1 + f_1f_2 + f_1g_2 + f_2g_1) + (c_1g_2 + c_2g_1 + g_1g_2)\in \mathcal{A}(\Gamma),$$
since $c_1c_2 \in \mathbb{R}$, $c_1 f_2 + c_2f_1 + f_1f_2 + f_1g_2 + f_2g_1\in D(\Gamma)$ and $c_1g_2 + c_2g_1 + g_1g_2\in\mathcal{S}_c(\Gamma)$. By construction, $\mathcal{A}(\Gamma)\subset C(\Gamma)$. Therefore, $\mathcal{A}(\Gamma)$ is a subalgebra of $C(\Gamma)$. Also, by construction, $\mathcal{A}(\Gamma)$ contains the constant functions. From the discussion in the beginning of this section, we already have that $\mathcal{A}(\Gamma)$ separates points from the interiors of edges from all the remaining points of the metric graph. 
Therefore, all we need to show is that $\mathcal{A}(\Gamma)$ separates the vertices of $\Gamma$ from the other vertices. 
To this end, we need to show that for each $v\in\mathcal{V}$, $\mathcal{S}_c(\Gamma)$ contains a nontrivial function with support on $S(v,\Gamma)$. 

We will now construct such functions. Take any $v\in\mathcal{V}$ and let $\delta := \min_{e\in \mathcal{E}_v} \ell_e>0$. 
Let $\psi \in C^\infty([0,\delta])$ be such that $\psi(x) = 1$ for $x\in [0,\delta/3]$ and that $\psi(x) = 0$ for $x\in [2\delta/3, \delta]$. We will now define a non-trivial function $f : \Gamma\to\mathbb{R}$ such that $f \in \mathcal{S}_c(\Gamma)$ and the support of $f$ is contained in $S(v,\Gamma)$. First, let $f(s) = 0$ if $s = (t,e)$ with $e\not\in\mathcal{E}_v$ and $t\in e$. Now, given any $e\in \mathcal{E}_v$ such that $\underline{e} = v$, define $f_e(t) = \psi(t)$ if $t\leq \delta$, and $f_e(t) = 0$ otherwise. Similarly, if $\bar{e} = v$, define $f_e(t) = \psi(\delta-t)$ if $t>\ell_e-\delta$ and $f_e(t) = 0$ otherwise. It is now clear that, by construction, $f$ is non-trivial, belongs to $\mathcal{S}_c(\Gamma)$ and has support in $S(v,\Gamma)$. 
\end{proof}


\begin{proof}[Proof of Proposition \ref{prp:strposdef}]
In view of Proposition \ref{prp:c0universal_sipd} and Remark \ref{rem:ispdimpliesspd} it is enough to show that $\varrho(\cdot,\cdot)$ is $c$-universal. 
Now, observe that by, e.g., \citet[Corollary 8.16]{janson_gaussian}, the reproducing kernel Hilbert space induced by $\varrho$ is the Cameron--Martin space associated to $u$. Further, by \citet[Proposition 8]{BSW2022}, the Cameron--Martin space associated to $u$ is $\dot{H}^\alpha$. 
Furthermore, it is clear that $\dot{H}^{\ceil{\alpha}} \subset \dot{H}^\alpha$, so ${\bigcap_{k\in\mathbb{N}} \dot{H}^k \subset \dot{H}^\alpha}$. 
Moreover, by the discussion above, we have ${\<1\>+D(\Gamma) \subset \bigcap_{k\in\mathbb{N}} \dot{H}^k}$ and ${\mathcal{S}_c(\Gamma) \subset \bigcap_{k\in\mathbb{N}} \dot{H}^k}$. Thus, ${\mathcal{A}(\Gamma)\subset \bigcap_{k\in\mathbb{N}} \dot{H}^k}$. 
Finally, by Lemma \ref{lem:StoneWeierSpace}, ${\mathcal{A}(\Gamma)}$ is dense in $C(\Gamma)$. 
Since ${\mathcal{A}(\Gamma)\subset \bigcap_{k\in\mathbb{N}} \dot{H}^k \subset \dot{H}^\alpha}$, 
we have that $\dot{H}^\alpha$ is dense in $C(\Gamma)$ with respect to the $\|\cdot\|_{C(\Gamma)}$ norm.
\end{proof}

\section{Proofs for the bridge representation}\label{app:proofs_edge}
First, we introduce additional notation. For two Hilbert spaces $E$ and $F$, 
we have the continuous embedding $E\hookrightarrow F$ if the inclusion map from $E$ to $F$ is continuous 
(i.e., $C>0$ exists such, that for every $f\in E$, $\|f\|_F \leq C \|f\|_E$). 
Further, we use the notation $E\cong F$ if a Hilbert space $Y$ exists and two spaces $\widetilde{E},\widetilde{F}\subset Y$ that are isometrically isomorphic to $E$ and $F$, respectively, such that  $\widetilde{E} \hookrightarrow \widetilde{F} \hookrightarrow \widetilde{E}$. Given $S\subset \Gamma$, we let $C_c(S)$ denote the set of continuous functions with support compactly contained in the interior of $S$.

We primarily work with weak differentiability in the $L_2(\Omega)$ sense, which we define here. 
A function ${v:e \to H(\Gamma), e\in\mathcal{E}}$ is weakly continuous in the $L_2(\Omega)$ sense if, for each ${w\in H(\Gamma)}$, 
the function ${s\mapsto \pE(w v(s))}$ is continuous, and $v$ is weakly differentiable at $s$ in the $L_2(\Omega)$ sense if 
$v'(s)\in H(\Gamma)$ exists such that, for each ${w\in H(\Gamma)}$ and each sequence $t_n\to s$ with ${t_n\neq s}$, ${\pE(w\nicefrac{(v(t_n)-v(s))}{(t_n-s)}) \to \pE(wv'(s))}$. 
Higher-order weak derivatives in the $L_2(\Omega)$ sense are defined inductively: 
for $k\geq 2$, $v$ has a $k$th order weak derivative at $s\in e$ if $v^{(k-1)}(t)$ exists for every $t\in e$ and it is weakly differentiable at $s$. 
Now, assume that $u$ has weak derivatives, in the $L_2(\Omega)$ sense, of orders $1,\ldots, p$ for $p\in\mathbb{N}$. 
Also assume that $u^{(j)}$, $j=0,\ldots,p$, is weakly continuous in the $L_2(\Omega)$ sense. 
Then, the weak derivatives are well-defined for each $s\in\Gamma$, and we say that $u$ is a differentiable Gaussian random field of order $p$. 
A Gaussian random field that is continuous in $L_2(\Omega)$ is said to be differentiable of order 0. 
					
Let $u$ be a differentiable Gaussian random field of order $\alpha-1$, $\alpha\geq 1$. 
Then, we let ${H_{\alpha,I}(\partial S) = \textrm{span}\{u_e(s), u_e'(s),\ldots, u_e^{(\alpha-1)}(s):
s\in \partial S, e\in \mathcal{E}_S \}}$ be the boundary Gaussian space on $S$ and 
${\mathcal{H}_{0,I}(S) = \mathcal{H}(S)\ominus \mathcal{H}_{\alpha,I}(\partial S)}$ be the interior Gaussian space on $S$, 
where $\mathcal{E}_S$ denotes the set of edges with a non-empty intersection with the interior of $S$.
		
Let $\varrho_M:\mathbb{R}\to\mathbb{R}$ be the Mat\'ern covariance function in \eqref{eq:matern_cov}. 
By Kolmogorov's extension theorem, a centered Gaussian process $U_{e}(t), t\in e$, with ${\Cov(U_{e}(t_1), U_e(t_2)) = \varrho_M(|t_1-t_2|)}$, $t_1,t_2\in e$ exists. 
We now show that the stationary Mat\'ern process $U_e$ on the interval $[0,\ell_e]$ has the same edge representation as the one for $u_e$ in Theorem \ref{thm:ReprTheoremEdge_Refined}. 
\begin{Theorem}\label{thm:EdgeReprStationary}
	Let $U_e(\cdot)$ be defined as above. We have the following representation:
	$$U_e(t) = V_{\alpha,0}(t) + \sum_{j=1}^{2\alpha} s_j(t)  \left(B^{\alpha}U_e\right)_j = V_{\alpha,0}(t) + \mv{s}^\top(t) B^\alpha U_e,\quad t\in e,$$
	where $V_{\alpha,0}(\cdot)$ is a Whittle--Mat\'ern bridge process on $e$ with parameters $(\kappa,\tau,\alpha)$ on the interval $[0,\ell_e]$, $V_{\alpha,0}(\cdot)$ is independent of $B^{\alpha}U_e$, and $\mv{s}(\cdot)$ is given by \eqref{eq:edge_repr_Solution}.
\end{Theorem}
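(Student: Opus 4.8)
The plan is to exhibit the claimed decomposition by direct construction and then identify the two summands via their covariance structure. First I would define the candidate "boundary term" $\mv{s}^\top(t) B^\alpha U_e$, where $B^\alpha U_e = [U_e(0), U_e^{(1)}(0),\ldots, U_e^{(\alpha-1)}(0), U_e(\ell_e),\ldots, U_e^{(\alpha-1)}(\ell_e)]^\top$ is the vector of weak $L_2(\Omega)$ derivatives of $U_e$ at the two endpoints; these derivatives exist and are weakly continuous because $U_e$ has a Mat\'ern covariance with $\nu = \alpha - \nicefrac12$, so $\varrho_M$ is $2\alpha$ times differentiable away from the diagonal and Theorem \ref{thm:regularity} (items 2--4, applied to the stationary field, or the elementary one-dimensional analogue) guarantees that $\Cov(U_e^{(j)}(t_1), U_e^{(k)}(t_2)) = \partial_{t_1}^j \partial_{t_2}^k \varrho_M(t_1-t_2)$. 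The matrix $\mv{s}(t)$ in \eqref{eq:edge_repr_Solution} is exactly the "kriging weight" matrix for predicting $U_e(t)$ from $B^\alpha U_e$: by \eqref{eq:R_matrix_edge_repr} the $2\alpha\times 2\alpha$ block matrix there is the covariance matrix of $B^\alpha U_e$ (whose invertibility is asserted to follow from the proof of Lemma \ref{lem:sol_ODE_edge}), and $[\mv{r}_1(t,0)\ \mv{r}_1(t,\ell_e)]$ is the cross-covariance vector between $U_e(t)$ and $B^\alpha U_e$.

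Next I would define the residual process $V_{\alpha,0}(t) := U_e(t) - \mv{s}^\top(t)B^\alpha U_e$ and establish its three required properties. (i) \emph{Independence of the boundary data.} Since everything is jointly Gaussian and centered, it suffices to check $\Cov(V_{\alpha,0}(t), (B^\alpha U_e)_j) = 0$ for every $t$ and every $j$; this is immediate from the definition of $\mv{s}(t)$ as the least-squares projection coefficient, i.e. $\Cov(U_e(t), B^\alpha U_e) = \mv{s}(t)^\top \Cov(B^\alpha U_e, B^\alpha U_e)$, which is precisely the identity \eqref{eq:edge_repr_Solution} encodes. (ii) \emph{$V_{\alpha,0}$ has the bridge covariance \eqref{eq:cov_func_whittle_matern_bridge}.} By bilinearity and the independence just shown, $\Cov(V_{\alpha,0}(t_1), V_{\alpha,0}(t_2)) = \Cov(U_e(t_1), U_e(t_2)) - \mv{s}^\top(t_1)\Cov(B^\alpha U_e, B^\alpha U_e)\mv{s}(t_2) = \varrho_M(t_1-t_2) - [\mv{r}_1(t_1,0)\ \mv{r}_1(t_1,\ell_e)] \mv{M}^{-1} [\mv{r}_1(0,t_2);\ \mv{r}_1(\ell_e,t_2)]$, matching \eqref{eq:cov_func_whittle_matern_bridge} exactly. (iii) \emph{$V_{\alpha,0}$ is a Whittle--Mat\'ern bridge in the sense of Definition \ref{def:WMB}.} Here I would invoke the fact that the conditional law of a Gaussian process given a linear functional of it equals the law of the process minus its projection onto that functional; thus $V_{\alpha,0} \overset{d}{=} U_e \mid \{B^\alpha U_e = \mv{0}\}$, which combined with Proposition \ref{prp:Whittle_Matern_bridge_prop} (covariance \eqref{eq:cov_func_whittle_matern_bridge}, $\alpha-1$ weak differentiability, weak continuity of derivatives, $B^\alpha V_{\alpha,0} = \mv{0}$) identifies it as the bridge process; alternatively one simply declares a bridge process \emph{to be} a centered Gaussian process with covariance \eqref{eq:cov_func_whittle_matern_bridge} and cites Proposition \ref{prp:Whittle_Matern_bridge_prop} for the regularity. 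The weak differentiability and continuity of $V_{\alpha,0}$ in the $L_2(\Omega)$ sense transfer from those of $U_e$ and the smoothness of the deterministic functions $\mv{s}(t)$ (which are finite linear combinations of derivatives of $\varrho_M$, hence $C^\infty$ on $e$).

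The main obstacle I anticipate is not the linear-algebra bookkeeping but two more delicate points. The first is rigorously justifying that the weak $L_2(\Omega)$ derivatives $U_e^{(k)}$ appearing in $B^\alpha U_e$ genuinely exist for the stationary Mat\'ern process and that $\Cov(U_e^{(j)}(s), U_e^{(k)}(t))$ equals the mixed partial of $\varrho_M$ — this parallels Theorem \ref{thm:regularity}(2)--(4) but must be carried out for the process on the line rather than on $\Gamma$; since the restriction of a Whittle--Mat\'ern field to an edge has a Mat\'ern covariance (Theorem \ref{cor:conditional_dists} and the surrounding discussion), one can hope to import those results directly, but the cleanest route may be a self-contained spectral argument on $\mathbb{R}$. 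The second is the invertibility of the $2\alpha\times 2\alpha$ block covariance matrix $\mv{M}$ of $B^\alpha U_e$; the excerpt defers this to "the proof of Lemma \ref{lem:sol_ODE_edge}", so in this proof I would simply cite that, but it is worth flagging that without it the matrix $\mv{s}(t)$ is not even well-defined. Once those two facts are in hand, the representation itself is essentially forced: it is the Gaussian conditional-expectation decomposition $U_e = \pE[U_e \mid B^\alpha U_e] + (U_e - \pE[U_e \mid B^\alpha U_e])$ written out explicitly, and the independence of the two terms is automatic for jointly Gaussian families.
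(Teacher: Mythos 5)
Your proposal is correct and follows essentially the same route as the paper: the paper's proof is precisely the covariance identity $\varrho_M(t_1-t_2) = r_{B,\ell_e}(t_1,t_2) + \mv{s}(t_1)^\top \Cov(B^\alpha U_e)\mv{s}(t_2)$, i.e.\ the Gaussian projection decomposition you describe, with the independence and the identification of the residual as the bridge process following from joint Gaussianity and Definition~\ref{def:WMB}. The two delicate points you flag (existence of the weak $L_2(\Omega)$ derivatives of the stationary Mat\'ern process and invertibility of the block covariance matrix) are handled by the paper exactly as you anticipate, in Lemma~\ref{lem:CM_edge_repr_bridge} and Lemma~\ref{lem:sol_ODE_edge} respectively.
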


\begin{proof}
Let 
$r_{B,\ell_e}(\cdot,\cdot)$ be the covariance function of the Whittle--Mat\'ern bridge process given by \eqref{eq:cov_func_whittle_matern_bridge} (i.e., $r_{B,\ell_e}(\cdot,\cdot)$ is the covariance function of $V_{\alpha,0}$). Then, 
$$\varrho_M(t_1-t_2) = r_{B,\ell_e}(t_1,t_2) + \begin{bmatrix}
	\mv{r}_1(t_1,0) & \mv{r}_1(t_1,\ell_e) 
\end{bmatrix}
\begin{bmatrix}
	\mv{r}(0,0) & \mv{r}(0,\ell_e) \\
	\mv{r}(\ell_e,0) & \mv{r}(\ell_e,\ell_e)
\end{bmatrix}^{-1}
\begin{bmatrix}
	\mv{r}_1(0,t_2) \\
	\mv{r}_1(\ell_e,t_2) 
\end{bmatrix}.$$
Thus,
\begin{equation}\label{eq:decomp_statio_matern_cov}
	\varrho_M(t_1-t_2) = r_{B,\ell_e}(t_1,t_2) + h(t_1,t_2),
\end{equation}
where 
\begin{equation}\label{eq:cov_boundary_part_statiomatern}
h(t_1,t_2) = 	\mv{s}(t_1)^\top \Cov(B^\alpha U_e) \mv{s}(t_2),
\end{equation}
which proves the result.	
\end{proof}

Let $\alpha\in\mathbb{N}$, and recall that $H_0^\alpha(e)$ is the closure of $C_c^\infty(e)$ with respect to the Sobolev norm $\|\cdot\|_{H^\alpha(e)}$, with 
inner product 
$$(u,v)_{\alpha,e} = (L^{\alpha/2}u, L^{\alpha/2}v)_{L_2(e)}, \quad u,v\in H^\alpha_0(e).$$

\begin{proof}[Proof of Lemma~\ref{lem:CM_edge_repr_bridge}]
We prove the result using the representation in Theorem~\ref{thm:EdgeReprStationary}. 
Thus, we begin by connecting the process $V_{\alpha,0}(\cdot)$ to the stationary Mat\'ern process on the interval $[0,\ell_e]$.
Let $V_{\alpha,0}(\cdot)$ be the process obtained from a stationary Mat\'ern process on $[0,\ell_e]$ from the representation in Theorem \ref{thm:EdgeReprStationary}. Then, $V_{\alpha,0}(\cdot)$ has covariance function $r_{B,\ell_e}(\cdot,\cdot)$ and is independent of $B^\alpha U_e$.
We now show that, for every $t\in e$, $B^\alpha r_{B,\ell_e}(\cdot, t) = \mv{0}$. 
As $V_{\alpha,0}(t)$ is independent of $B^\alpha U_e$, we have 
$$
	\pE(V_{\alpha,0}(t) U_e^{(k)}(0)) = 0 \quad \hbox{and}\quad \pE(V_{\alpha,0}(t) U_e^{(k)}(\ell_e)) = 0, \quad k=0,\ldots,\alpha-1.
$$
Therefore, for each $t,s\in e$ and $x\in\{0,\ell_e\}$,
\begin{align*}
	\partial_1^k r_{B,\ell_e}(x,t) &= \partial_1^k \pE(V_{\alpha,0}(t) U_e(x)) = \pE(V_{\alpha,0}(t) U_e^{(k)}(x)) = 0, \quad k=0,\ldots, \alpha-1,
\end{align*}
where we applied the fact that $U^{(k)}_e(\cdot)$ is the $L_2(\Omega)$ derivative to interchange the derivative and expectation signs, 
and the well-known fact that the stationary Mat\'ern process admits $L_2(\Omega)$ derivatives of orders $k=1,\ldots,\alpha-1$. 
This result indicates that, for each $t\in e$, ${B^\alpha r_{B,\ell_e}(\cdot, t) = \mv{0}}$. 
The same argument shows that, for every $t\in e$, $r_{B,\ell_e}(\cdot,t)$ is $\alpha-1$ times differentiable. 
In contrast, let  $h(\cdot,\cdot)$ be given by \eqref{eq:cov_boundary_part_statiomatern}. 
From the explicit expression \eqref{eq:cov_boundary_part_statiomatern}, we can readily check that, for every $t\in e$, $h(\cdot,t)$ is a solution to $(\kappa^2-\Delta)^\alpha h = 0$. 
Thus, by standard elliptic regularity, $h(\cdot,t) \in C^{\infty}(e)$. In particular, for every $t\in e$, $h(\cdot,t)\in H^\alpha(e)$.

Next, 
observe that if $\alpha=1$, 
$\varrho_M(\cdot-t)$ is the exponential covariance function, in which one can readily check that $\varrho_M(\cdot-t)$ is absolutely continuous 
(using the fundamental theorem of calculus on the intervals $[0,t]$ and $[t,\ell_e]$); 
thus, $\varrho_M(\cdot - t) \in H^1(e)$. 
If $\alpha>1$, it follows by \citet[Expression~(15), p. 32]{stein99} that for every $t\in e$, $\varrho_M(\cdot-t)\in C^{\alpha}(e)$. Thus, for every $t\in e$ and every $\alpha\in\mathbb{N}$, $\varrho_M(\cdot - t) \in H^\alpha(e)$. 

Therefore, since for every $t\in e$, $h(\cdot,t)$ and $\varrho_M(\cdot-t)$ belong to $H^\alpha(e)$, 
we have, by \eqref{eq:decomp_statio_matern_cov}, that for every $t\in e$, $r_{B,\ell_e}(\cdot,t)\in H^\alpha(e)$. 
Further, $B^\alpha r_{B,\ell_e}(\cdot,t) =\mv{0}$ for every $t\in e$. 
Hence, by \citet[Theorem 3.40]{mclean}, $r_{B,\ell_e}(\cdot,t) \in H_0^\alpha(e)$ for every $t\in e$. 

Let $\mathcal{C}_{\alpha,0} : L_2(e)\to L_2(e)$ be the integral operator induced  by $r_{B,\ell_e}(\cdot,\cdot)$:
$$
	(\mathcal{C}_{\alpha,0} f)(t) = \int_{e} r_{B,\ell_e}(s,t) f(s) ds,\quad t\in e,\quad f\in L_2(e).
$$
Recall that $r_{B,\ell_e}(\cdot,\cdot)$ is the covariance function of $V_{\alpha,0}(\cdot)$, 
which implies (by Mercer's theorem, \citet{Steinwart2012}) that $\mathcal{C}_{\alpha,0}$ is the covariance operator of $V_{\alpha,0}(\cdot)$.	
Let $(\mathcal{H}_{\alpha,0}(e), \<\cdot,\cdot\>_{\mathcal{H}_{\alpha,0}(e)})$ be the Cameron--Martin space associated to $V_{\alpha,0}(\cdot)$, which coincides with the completion of $\mathcal{C}_{\alpha,0}(L_2(e))$ with respect to the inner product ${\<\widetilde{f},\widetilde{g}\>_{\mathcal{H}_{\alpha,0}(e)} = (\mathcal{C}_{\alpha,0} f, g)_{L_2(e)}},$ where ${\widetilde{f},\widetilde{g}\in \mathcal{C}_{\alpha,0}(L_2(e))}$, with $\widetilde{f} = \mathcal{C}_{\alpha,0}f$ and $\widetilde{g} = \mathcal{C}_{\alpha,0}g$, for some $f,g\in L_2(e)$ \citep[see, e.g.][p.49]{Bogachev1998}.

We can now prove the statement of the lemma in four steps. 
First, we demonstrate that the range of $\mathcal{C}_{\alpha,0}$ is contained in $H_0^{\alpha}(e)$. 
Second, we establish that the range of $\mathcal{C}_{\alpha,0}$ contains $C_c^\infty(e)$.  
Third, we identify the Cameron--Martin inner product with $(\cdot,\cdot)_{\alpha,e}$. 
Finally, we conclude that the Cameron--Martin associated with $V_{\alpha,0}$ is equal to $(H_0^\alpha(e), (\cdot,\cdot)_{\alpha,e})$.

\noindent \textbf{Step 1.} We already showed that $r_{B,\ell_e}(\cdot, t)\in H_0^{\alpha}(e)$, $t\in e$. 
Let $\alpha>1$ (the case $\alpha=1$ can be easily obtained through explicit expressions). 
For $0\leq k \leq \alpha-1$, we have $r_{B,\ell_e}(\cdot, t) \in C^\alpha(e)$. 
Hence, as $e$ is compact, for every $f\in L_2(e)$, we obtain
\begin{equation}\label{eq:diff_under_integral_CM}
	(\mathcal{C}_{\alpha,0} f)^{(k)}(t) = \int_{e} \partial_2^k r_{B,\ell_e}(s,t) f(s) ds,\quad t\in e, k=0,\ldots,\alpha,
\end{equation}
where $\partial_2^k g(\cdot,\cdot)$ denotes the $k$th derivative of $g$ with respect to the second variable.
Next, $(t,s)\mapsto \partial_t^k \varrho_M^{(\alpha)}(t-s)$ and $(t,s)\mapsto \partial_2^kh(s,t)$ are 
bounded on $e\times e$, where $\partial_t^k$ is the $k$th derivative with respect to the variable $t$ and $k=0,\ldots,\alpha$. 
Indeed, the boundedness of $\partial_t^k\varrho_M(\cdot-\cdot)$, $k=0,\ldots,\alpha$, 
directly follows from the fact that $\varrho_M(\cdot)\in C^\alpha(e)$ and $e\times e$ is compact, 
whereas the boundedness of $\partial_2^k h(\cdot,t)$, $t\in e$, follows from \eqref{eq:cov_boundary_part_statiomatern} and the fact that $e\times e$ is compact. 
Thus, $r_{B,\ell_e}^{(k)}(\cdot,\cdot)$ is bounded on $e\times e$, $k=0,\ldots,\alpha$. 
Therefore, by \eqref{eq:diff_under_integral_CM} and the boundedness of $r_{B,\ell_e}^{(k)}(\cdot,\cdot)$, $\mathcal{C}_{\alpha,0}f \in H^\alpha(e)$ 
for every $f\in L_2(e)$.

Moreover, as $B^\alpha r_{B,\ell_e}(\cdot,t) = \mv{0}, t\in e$, it follows by simple evaluation that ${B^\alpha \mathcal{C}_{\alpha,0} f = \mv{0}}$. 
Thus, for every $f\in L_2(e) \Rightarrow \mathcal{C}_{\alpha,0} f \in H_0^\alpha(e)$. 
That is, $\mathcal{C}_{\alpha,0}(L_2(e)) \subset H^\alpha_0(e)$.

\noindent \textbf{Step 2.} 
It is well-known that $\varrho_M$
is the free-field Green function of $L^\alpha$ in the distributional sense. 
That is, $\varrho_M$ solves
$L^\alpha \varrho_M(\cdot- t) = \delta_t,$
where $\delta_t(\cdot)$ is the Dirac delta measure. 
Thus, for any $g\in C_c^\infty(\mathbb{R})$, 
\begin{equation}\label{eq:greenfunctionsolution}
	\int_{-\infty}^\infty \varrho_M(s - t) L^\alpha g(s) ds = g(t),
\end{equation}
where $L^\alpha = (\kappa^2-\Delta)^\alpha$ acts on the variable $s$. 
Let $g\in C_c^\infty(e)$ and define $\widehat{g} = L^{\alpha} g \in C^\infty_c(e)$. 
For every $t\in e$, $h(\cdot,t)\in C^\infty(e)$, ${L^\alpha h(\cdot,t) = 0}$ and $\varrho_M(\cdot-t) = h(\cdot,t) + r_{B,\ell_e}(\cdot,t)$. 
Thus, as $g\in C_c^\infty(e)$, we have that for every $t\in e$,
\begin{equation}\label{eq:invCovOperatorStationary}
\begin{aligned}
	\mathcal{C}_{\alpha,0} \widehat{g}(t) &= \int_e  r_{B,\ell_e}(s,t) L^\alpha g(s) ds = \int_e  r_{B,\ell_e}(s,t) L^\alpha g(s) ds + \int_e L^\alpha h(s,t)  g(s) ds\\
	&= \int_e  r_{B,\ell_e}(s,t) L^\alpha g(s) ds + \int_e  h(s,t) L^\alpha g(s) ds\\
	&= \int_e \varrho_M(s-t) L^\alpha g(s) ds = g(t).
\end{aligned}
\end{equation}
Therefore, the image of $\mathcal{C}_{\alpha,0}$ contains $C_c^\infty(e)$; thus, 
$C_c^\infty(e) \subset \mathcal{C}_{\alpha,0}(L_2(e)) \subset H^\alpha_0(e).$

\noindent \textbf{Step 3.} Equation~\eqref{eq:invCovOperatorStationary} directly shows that, for any $g\in C_c^\infty(e)$, $\mathcal{C}_{\alpha,0} L^\alpha g = g$. 
Similarly, together with differentiation under the sign of the integral, we also obtain that $L^\alpha \mathcal{C}_{\alpha,0} g = g$ for every $g\in C_c^\infty(e)$. 
Further, $C_c^\infty(e)\subset \mathcal{C}_{\alpha,0}(L_2(e))$; thus, given $\widetilde{f}, \widetilde{g} \in \mathcal{C}_{\alpha,0}(L_2(e))$,  
functions $f,g \in L_2(e)$ exist such that $\widetilde{f} = \mathcal{C}_{\alpha,0} f$ and $\widetilde{g} = \mathcal{C}_{\alpha,0}g$, 
so that $L^\alpha \widetilde{f} = f$ and $L^\alpha \widetilde{g} = g$. 
Let $\<\cdot,\cdot\>_{\mathcal{H}_{\alpha,0}(e)}$ be the inner product of $\mathcal{H}_{\alpha,0}(e)$. 
We have that
$$
	\<\widetilde{f},\widetilde{g}\>_{\mathcal{H}_{\alpha,0}(e)} = (\mathcal{C}_{\alpha,0} f, g)_{L_2(e)} = (\widetilde{f}, L^\alpha \widetilde{g})_{L_2(e)} = (\widetilde{f}, \widetilde{g})_{\alpha,e}.
$$
This demonstrates that $\<\cdot,\cdot\>_{\mathcal{H}_{\alpha,0}(e)}$ coincides with $(\cdot, \cdot)_{\alpha,e}$ on $C_c^\infty(e)\times C_c^\infty(e)$. 
Next, the norm $\|\cdot\|_{\alpha,e}$ induced by $(\cdot,\cdot)_{\alpha,e}$ is equivalent to the Sobolev norm $\|\cdot\|_{H^\alpha(e)}$ on $H^\alpha_0(e)$ (see, e.g., \citet[Lemma 3.1]{Thomee}).
Because $H_0^\alpha(e)$ is the closure of $C_c^\infty(e)$ with respect to the Sobolev norm $\|\cdot\|_{H^\alpha(e)}$, it is also the closure of $C_c^\infty(e)$ with respect to the norm $\|\cdot\|_{\alpha,e}$. In particular, $H_0^\alpha(e)$ is closed with respect to $\|\cdot\|_{\alpha,e}$. Therefore, $\<\cdot,\cdot\>_{\mathcal{H}_{\alpha,0}(e)}$ coincides with $(\cdot, \cdot)_{\alpha,e}$ on $H_0^\alpha(e)\times H_0^\alpha(e)$.

\noindent \textbf{Step 4.} 
We claim that $\mathcal{H}_{\alpha,0}(e) = (H_0^\alpha(e), (\cdot,\cdot)_{\alpha,e})$. Indeed, $\<\cdot,\cdot\>_{\mathcal{H}_{\alpha,0}(e)}$ coincides with $(\cdot, \cdot)_{\alpha,e}$ on $H_0^\alpha(e)\times H_0^\alpha(e)$ and $H_0^\alpha(e)$ is the closure of $C_c^\infty(e)$ with respect to the norm $\|\cdot\|_{\alpha,e}$. 
This shows that $C_c^\infty(e)$ is dense in $H^\alpha_0(e)$ with respect to the Cameron--Martin inner product. 
Finally, as $C_c^\infty(e) \subset \mathcal{C}_{\alpha,0}(L_2(e)) \subset H^\alpha_0(e)$, it directly follows (from the first observation in Step 4 along with the previous considerations) that $\mathcal{H}_{\alpha,0}(e) = (H_0^\alpha(e), (\cdot,\cdot)_{\alpha,e})$. 
\end{proof}

We have the following result from \citet[][Propositions 3, 4 and 8]{BSW2022}. 
			
\begin{Proposition}\label{prp:charHdot}
	We have $\dot{H}^1 \cong H^1(\Gamma)$, $\dot{H}^2 \cong \widetilde{H}^2(\Gamma)\cap \mathcal{K}_1(\Gamma) \cap C(\Gamma)$, and
	$$
	\dot{H}^k \cong \left\{f\in \widetilde{H}^k(\Gamma): D^{2\left\lfloor \nicefrac{k}{2}\right\rfloor} f \in C(\Gamma), \,\forall m = 0,\ldots, \left\lfloor\nicefrac{(k-2)}{2} \right\rfloor, D^{2m} f\in\dot{H}^2 \right\}.
	$$
	Furthermore, the norms $\|\cdot\|_k$ and $\|\cdot\|_{\widetilde{H}^k(\Gamma)}, k\in\mathbb{N},$ are equivalent and $\dot{H}^\alpha$ is the Cameron--Martin space associated to the solution $u$ of \eqref{eq:Matern_spde}.
\end{Proposition}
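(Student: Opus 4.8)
Since this statement is quoted as \citet[Propositions~3, 4 and~8]{BSW2022}, the proof may be taken from there; for orientation I sketch the route one would follow. By definition $\dot H^k = \mathcal D(L^{k/2})$ with norm $\|\cdot\|_k = \|L^{k/2}\,\cdot\,\|_{L_2(\Gamma)}$, and since $L = \kappa^2 - \Delta_\Gamma$ differs from $-\Delta_\Gamma$ only by a positive multiple of the identity, the task reduces to identifying the domains of the half-integer powers of the Kirchhoff--Laplacian together with their graph norms. The two base cases are essentially the content of the construction in Section~\ref{sec:construction}: for $k=2$, $\mathcal D(L) = \mathcal D(\Delta_\Gamma) = C(\Gamma)\cap K(\Gamma) = \widetilde H^2(\Gamma)\cap\mathcal K_1(\Gamma)\cap C(\Gamma)$, with $\|u\|_{L_2(\Gamma)}+\|Lu\|_{L_2(\Gamma)}$ equivalent to $\|u\|_{\widetilde H^2(\Gamma)}$ by the one-dimensional estimate, applied on each edge, that $u_e,u_e''\in L_2(e)$ forces $u_e\in H^2(e)$ with a controlled norm; while for $k=1$, $\mathcal D(L^{1/2})$ is the form domain of $L$, the quadratic form of the Kirchhoff--Laplacian being $(u,v)\mapsto\sum_{e\in\mathcal E}(u_e',v_e')_{L_2(e)}$ with form domain $H^1(\Gamma)=C(\Gamma)\cap\widetilde H^1(\Gamma)$ (a standard quantum-graph fact, \citet[Section~1.4]{Berkolaiko2013}), so that adding $\kappa^2\|\cdot\|_{L_2(\Gamma)}^2$ makes the form norm equivalent to $\|\cdot\|_{\widetilde H^1(\Gamma)}$.

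The general case then follows by induction on $k$, stepping two edgewise derivatives at a time. For even $k = 2j$, $\mathcal D(L^j) = \{f : f, Lf, \ldots, L^{j-1}f \in \mathcal D(L)\}$; edgewise elliptic bootstrapping gives $f\in\widetilde H^{2j}(\Gamma)$, and each condition $L^m f\in\mathcal D(L)$, $m=0,\ldots,j-1$, becomes---after absorbing the lower-order terms produced by expanding $(\kappa^2-\Delta_\Gamma)^m$, which are already known to lie in a more regular class---continuity of $D^{2m}f$ across the vertices together with the Kirchhoff balance $\sum_{e\in\mathcal E_v}\partial_e D^{2m}f(v)=0$, i.e.\ the conditions $D^{2m}f\in\dot H^2$ of the displayed characterization. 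For odd $k=2j+1$ one has $\mathcal D(L^{j+1/2})=\{f\in\mathcal D(L^j): L^j f\in\mathcal D(L^{1/2})=H^1(\Gamma)\}$, which adjoins the continuity requirement on the top even-order derivative, $D^{2\lfloor k/2\rfloor}f\in C(\Gamma)$. Throughout, the even-order conditions are continuity of a derivative across each vertex and the odd-order ones the Kirchhoff balance, consistent with part~\ref{thm:regularity:item:Kirchhoff} of Theorem~\ref{thm:regularity}; and the norm equivalence at level $k$ propagates because each step combines an $L_2$-bound on two further edgewise derivatives with the equivalence already established at level $k-2$, the vertex-condition subspaces being closed in $\widetilde H^k(\Gamma)$.

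Finally, for the Cameron--Martin statement, the law of $u$ is $\pN(0,\mathcal C)$ with covariance operator $\mathcal C=\tau^{-2}L^{-\alpha}$, which is self-adjoint, positive and trace-class on $L_2(\Gamma)$ because $\sum_j\lambda_j^{-\alpha}<\infty$ by Weyl's law \eqref{eq:weyl} and $\alpha>\nicefrac12$; hence its Cameron--Martin space is $\mathcal C^{1/2}(L_2(\Gamma))=\tau^{-1}L^{-\alpha/2}(L_2(\Gamma))=\mathcal D(L^{\alpha/2})=\dot H^\alpha$, with inner product $\langle f,g\rangle=(\mathcal C^{-1/2}f,\mathcal C^{-1/2}g)_{L_2(\Gamma)}=\tau^2(L^{\alpha/2}f,L^{\alpha/2}g)_{L_2(\Gamma)}=\tau^2(f,g)_\alpha$, i.e.\ the $\dot H^\alpha$ inner product up to the constant $\tau^2$. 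I expect the only genuinely delicate point to be the bookkeeping in the induction step: checking that the precise list of vertex conditions in the displayed formula is exactly equivalent to ``$L^m f\in\mathcal D(L)$ for all relevant $m$ (plus $L^j f\in H^1(\Gamma)$ when $k$ is odd)'', keeping straight which vertex condition is imposed at which derivative order in each parity class and verifying that the lower-order terms generated by expanding $(\kappa^2-\Delta_\Gamma)^m$ never degrade the regularity class; everything else is standard spectral theory or one-dimensional Sobolev estimates applied edge by edge.
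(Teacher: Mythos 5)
The paper itself offers no proof of this proposition: it is imported verbatim from \citet[Propositions~3, 4 and~8]{BSW2022}, so there is no in-paper argument to compare against. Your reconstruction follows the standard (and, as far as one can check against the cited source, the intended) route: identify $\dot H^1$ with the form domain of the Kirchhoff--Laplacian, $\dot H^2$ with its operator domain via the edgewise estimate that $u_e,u_e''\in L_2(e)$ forces $u_e\in H^2(e)$, climb to general $k$ by induction in steps of two while translating $L^m f\in\mathcal D(L)$ (resp.\ $L^jf\in H^1(\Gamma)$ for odd $k$) into vertex continuity and Kirchhoff balance of $D^{2m}f$, and obtain the Cameron--Martin statement from $\mathcal C^{1/2}(L_2(\Gamma))=\mathcal D(L^{\alpha/2})$ with the inner product agreeing with $(\cdot,\cdot)_\alpha$ up to the factor $\tau^2$. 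This is all sound, and the one point you flag as delicate is indeed the only one worth care: your bootstrap naturally yields, for even $k=2j$, the conditions $D^{2m}f\in\dot H^2$ for $m\le j-1$ \emph{without} any continuity requirement on $D^{2j}f$ (which is only $L_2$ on each edge for $f\in\widetilde H^{2j}(\Gamma)$), whereas the displayed formula's condition $D^{2\lfloor k/2\rfloor}f\in C(\Gamma)$ is only genuinely binding in the odd case; your derivation is the internally consistent reading, and the discrepancy is a matter of how the cited statement packages the two parities rather than a gap in your argument.
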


We need the following two technical lemmata
to prove Theorem~\ref{thm:ReprTheoremEdge_Refined}, 

\begin{Lemma}\label{lem:charHdot_edge}
There exists an isometric isomorphism between $(\mathcal{H}_{0,I}(e), (\cdot, \cdot)_\alpha)$
and $(H_0^\alpha(e), (\cdot,\cdot)_{\alpha,e})$.
\end{Lemma}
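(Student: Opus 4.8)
The plan is to pass through the edge representation of Theorem~\ref{thm:ReprTheoremEdge_Refined} together with the Cameron--Martin identification of Lemma~\ref{lem:CM_edge_repr_bridge}. On $e=[0,\ell_e]$ we have $u_e(t)=v_{\alpha,0}(t)+\mathbf S_e(t)B^\alpha u_e$, where $v_{\alpha,0}$ is a Whittle--Mat\'ern bridge process on $e$ with $B^\alpha v_{\alpha,0}=\mathbf 0$, independent of $B^\alpha u_e$; by Corollary~\ref{cor:bridge_representation} and \cite{BSW_Markov} the edge bridges are moreover mutually independent and independent of the vertex vector $\mathbf U$. First I would use this to split the Gaussian space $H(e)$. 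Let $H_B$ be the finite-dimensional span of the components of $B^\alpha u_e$ and $H_v$ the closed span of $\{v_{\alpha,0}(t):t\in e\}$ in $L_2(\Omega)$. Independence gives $H_v\perp H_B$; the representation shows $u_e(t)\in H_v+H_B$, and conversely $v_{\alpha,0}(t)=u_e(t)-\mathbf S_e(t)B^\alpha u_e\in H(e)$ because, by Theorem~\ref{thm:regularity}, the weak $L_2(\Omega)$ derivatives comprising $B^\alpha u_e$ lie in $H(e)$. Since $H_B$ is finite-dimensional, $H_v+H_B$ is closed, so $H(e)=H_v\oplus H_B$. As, by definition, $H_{\alpha,I}(\partial e)=\operatorname{span}\{u_e^{(k)}(v):v\in\partial e,\ 0\le k\le\alpha-1\}=H_B$, we obtain $H(e)\ominus H_{\alpha,I}(\partial e)=H_v$.

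Next I would transport this to the level of Cameron--Martin spaces. Under the canonical isometric isomorphism $I\colon H(e)\to\mathcal H(e)$, $\xi\mapsto\bigl(s\mapsto\pE(u(s)\xi)\bigr)$, the decomposition of $H(e)$ maps to $\mathcal H(e)=I(H_v)\oplus\mathcal H_{\alpha,I}(\partial e)$, hence $\mathcal H_{0,I}(e)=I(H_v)$ isometrically. I then want to identify $I(H_v)$ with the Cameron--Martin space of the bridge process $v_{\alpha,0}$ on $e$. For $\xi\in H_v$ and $s=(e,t)$ one has $\pE(u(s)\xi)=\pE(v_{\alpha,0}(t)\xi)$ because $\xi\perp H_B$; and for $s$ on any other edge $e'$ one has $\pE(u(s)\xi)=0$, since $u(s)$ equals the corresponding edge bridge at $s$ plus a linear function of $\mathbf U$, both independent of $\xi$ by the mutual independence of the edge bridges and their independence from $\mathbf U$. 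Thus restriction to $e$ maps $\mathcal H_{0,I}(e)=I(H_v)$ bijectively and isometrically onto the Cameron--Martin space of $v_{\alpha,0}$: the inner product $(\cdot,\cdot)_\alpha$ that $\mathcal H_{0,I}(e)$ inherits as a subspace of $\dot H^\alpha=\mathcal H(\Gamma)$ is, under $I$, exactly the $L_2(\Omega)$ inner product, which is also the Cameron--Martin inner product of $v_{\alpha,0}$. By Lemma~\ref{lem:CM_edge_repr_bridge} the latter space equals $\bigl(H_0^\alpha(e),(\cdot,\cdot)_{\alpha,e}\bigr)$, and composing the two isometries proves the lemma.

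The step I expect to be the main obstacle is this last identification: checking that restriction to $e$ is not merely a bounded injection but a genuine isometry onto the bridge's Cameron--Martin space, equivalently that every element of $\mathcal H_{0,I}(e)$ is supported on $\bar e$ with vanishing boundary data $B^\alpha(\cdot)=\mathbf 0$. This relies precisely on the cross-edge independence structure of Corollary~\ref{cor:bridge_representation} (to kill the values on the other edges) together with $B^\alpha v_{\alpha,0}=\mathbf 0$ from Proposition~\ref{prp:Whittle_Matern_bridge_prop} (to kill the endpoint data); once this structural fact is secured, the conclusion drops out of Lemma~\ref{lem:CM_edge_repr_bridge}.
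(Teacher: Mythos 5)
Your argument is circular within the logical structure of the paper. The lemma you are asked to prove is one of the two technical lemmata used to prove Theorem~\ref{thm:ReprTheoremEdge_Refined}: that theorem's proof reduces, via \citet[Theorems~8 and~9]{BSW_Markov} and Lemma~\ref{lem:CM_edge_repr_bridge}, precisely to the identification $(\mathcal{H}_{0,I}(e),(\cdot,\cdot)_\alpha)\cong(H_0^\alpha(e),(\cdot,\cdot)_{\alpha,e})$. Your proof, however, takes as its starting point that the interior process $v_{\alpha,0}$ in the edge representation \emph{is} a Whittle--Mat\'ern bridge process with $B^\alpha v_{\alpha,0}=\mv{0}$, independent of $B^\alpha u_e$, and it further invokes Corollary~\ref{cor:bridge_representation} and Proposition~\ref{prp:Whittle_Matern_bridge_prop} --- all of which are consequences of Theorem~\ref{thm:ReprTheoremEdge_Refined} and hence of the lemma you are proving. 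What is available \emph{before} this lemma is only the abstract edge representation of \citet[Theorem~9]{BSW_Markov}, in which the interior process is characterized through its Cameron--Martin space $\dot{H}_0^\alpha(e)$ (the completion of $C_c(e)\cap\dot{H}^\alpha(e)$ in the Sobolev norm), not identified as the conditioned stationary Mat\'ern process. Identifying it as such requires knowing that $\dot{H}_0^\alpha(e)$, with the inner product $(\cdot,\cdot)_\alpha$, coincides with the bridge's Cameron--Martin space $(H_0^\alpha(e),(\cdot,\cdot)_{\alpha,e})$ from Lemma~\ref{lem:CM_edge_repr_bridge} --- which is exactly the statement to be proved. Your own ``main obstacle'' paragraph resolves the key step by appealing to $B^\alpha v_{\alpha,0}=\mv{0}$ and the cross-edge independence of the bridges, both of which sit downstream of the lemma; so the gap is not a technical detail but the assumption of the conclusion.

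The non-circular route, which is the one the paper takes, is purely deterministic and avoids the probabilistic decomposition entirely: \citet[Theorem~7]{BSW_Markov} gives $\mathcal{H}_{0,I}(e)\cong\dot{H}_0^\alpha(e)$; the explicit characterization of $\dot{H}^\alpha$ in Proposition~\ref{prp:charHdot} shows that $\dot{H}^\alpha(e)\cap C_c(e)=H^\alpha(e)\cap C_c(e)$ (the Kirchhoff vertex conditions are vacuous for functions compactly supported in the interior of an edge); density of $C_c^\infty(e)$ in $H^\alpha(e)\cap C_c(e)$ then yields $\dot{H}_0^\alpha(e)=H_0^\alpha(e)$ as sets; and the locality of $L^\alpha$ for $\alpha\in\mathbb{N}$ gives $(u,v)_{\alpha,e}=(\widetilde{u},\widetilde{v})_\alpha$ for zero-extensions of compactly supported functions, which extends by density and norm equivalence to the isometry of inner products. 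If you want to keep a probabilistic flavor, you would still have to carry out this function-space identification first; the decomposition $H(e)=H_v\oplus H_B$ and the transport under the canonical isometry that you describe are then fine, but they add nothing that bypasses the analytic step.
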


\begin{proof}
First, by \citet[Theorem 7]{BSW_Markov}, we have $\mathcal{H}_{0,I}(e) \cong \dot{H}_0^\alpha(e)$,
where $\dot{H}_0^\alpha(e)$ is the completion of $C_c(e)\cap \dot{H}^\alpha(e)$ 
with respect to the Sobolev norm $\|\cdot\|_{H^\alpha(e)}$. 
Next, from Proposition \ref{prp:charHdot}, $\dot{H}^\alpha(e) \cap C_c(e) = H^\alpha(e)\cap C_c(e)$.
Finally, it follows from the well-known approximation by smooth functions in Sobolev spaces 
(because we are on an edge; see, e.g., \cite{evans2010partial}) that $C_c^\infty(e)$ is dense in $H^\alpha(e)\cap C_c(e)$
with respect to the Sobolev norm $\|\cdot\|_{H^\alpha(e)}$. 
Therefore, as $\dot{H}_0^\alpha(e)$ is the completion of $C_c(e)\cap \dot{H}^\alpha(e)$ 
with respect to the Sobolev norm $\|\cdot\|_{H^\alpha(e)}$, it follows that $\dot{H}_0^\alpha(e)$ 
is the completion of $C^\infty_c(e)$ with respect to the Sobolev norm $\|\cdot\|_{H^\alpha(e)}$, which is $H_0^\alpha(e)$. 
In particular, $\mathcal{H}_{+,0}(e)\cong H_0^\alpha(e)$. 
It remains to establish that the bilinear form $(\cdot,\cdot)_{\alpha,e}$ is isometric to $(\cdot,\cdot)_\alpha$. 
Indeed, given $u\in H_0^\alpha(e)$ and $v\in C_c^\infty(e)$, we can extend $u$ and $v$ to $\Gamma$ by defining
them to be zero on $\Gamma\setminus e$. 
Let $\widetilde{u}$ and $\widetilde{v}$ be these extensions. 
By \citet[Theorem 7]{BSW_Markov}, $\widetilde{u}, \widetilde{v} \in \mathcal{H}_{0,I}(e)$, and because $\alpha\in\mathbb{N}$,
$$
	(u,v)_{\alpha,e} = (u, L^\alpha v)_{L_2(e)} = (\widetilde{u}, L^\alpha \widetilde{v})_{L_2(\Gamma)} = (\widetilde{u}, \widetilde{v})_\alpha.
$$
Now, $C_c^\infty(e)$ is dense in $H_0^\alpha(e)$ with respect to the Sobolev norm $\|\cdot\|_{H^\alpha(e)}$
and from Proposition \ref{prp:charHdot}, the norms $\|\cdot\|_\alpha = \sqrt{(\cdot, \cdot)_\alpha}$ and the 
Sobolev norm $\|\cdot\|_{\widetilde{H}^\alpha(\Gamma)}$ are equivalent. 
Hence, the bilinear form is continuous with respect to the Sobolev norm $\|\cdot\|_{H^\alpha(e)}$. 
Thus, we can uniquely extend $(\cdot,\cdot)_{\alpha,e}$ to $H_0^\alpha(e)\times H_0^\alpha(e)$. 
Finally, for $u,v\in H_0^\alpha(e)$ we let $\widetilde{u}$ and $\widetilde{v}$ be their
extensions to $\Gamma\setminus e$ (as zero). 
Then, $(u,v)_{\alpha,e} = (\widetilde{u}, \widetilde{v})_\alpha$, and because $H_0^\alpha(e)\cong \mathcal{H}_{0,I}(e)$, this concludes the proof.
\end{proof}

\begin{Lemma}\label{lem:sol_ODE_edge}
Let $e = [0,\ell_e]$, with $0<\ell_e<\infty$ and let $\mv{S}_e(\cdot)$ be the vector-valued function defined by \eqref{eq:edge_repr_Solution}, 
where $\alpha\in\mathbb{N}, \kappa>0$ and $\tau>0$. 
Further, let $s_1(\cdot), \ldots s_{2\alpha}(\cdot)$ be the components of $\mv{S}_e(\cdot)$. 
Then, $\{s_j(\cdot): j = 1,\ldots,2\alpha\}$ is a set of linearly independent solutions of the linear equation 
$(\kappa^2 - \Delta)^\alpha s(\cdot) = 0$ and $B^\alpha \mv{S}_e = \mv{I}$.
\end{Lemma}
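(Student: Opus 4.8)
\textbf{Proof proposal for Lemma~\ref{lem:sol_ODE_edge}.}
The plan is to work directly from the explicit formula \eqref{eq:edge_repr_Solution} for $\mv{S}_e(\cdot)$. First I would show that each component $s_j(\cdot)$ solves $(\kappa^2-\Delta)^\alpha s = 0$. The key observation is that $\mv{r}_1(t,0)$ and $\mv{r}_1(t,\ell_e)$ are (by definition \eqref{eq:R_matrix_edge_repr}) linear combinations, with constant coefficients, of $\varrho_M(t)$, $\varrho_M(\ell_e - t)$ and their $t$-derivatives up to order $\alpha-1$; since $\varrho_M$ is, away from the origin, a solution of the homogeneous ODE $(\kappa^2-\Delta)^\alpha y = 0$ (it is the Green's function of $L^\alpha = (\kappa^2-\Delta)^\alpha$, so $L^\alpha \varrho_M(\cdot - t) = \delta_t$, and on $(0,\ell_e)$ the right-hand side vanishes since $t$ is taken to be $0$ or $\ell_e$, i.e.\ an endpoint), each entry of the row vector $[\mv{r}_1(t,0)\ \mv{r}_1(t,\ell_e)]$ satisfies the ODE on the open interval. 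Multiplying on the right by the constant matrix given by the inverse $2\alpha\times 2\alpha$ block matrix preserves this property, so every $s_j$ solves $(\kappa^2-\Delta)^\alpha s_j = 0$ on $e$.

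Next I would establish $B^\alpha \mv{S}_e = \mv{I}$. Applying $B^\alpha$ to $\mv{S}_e(\cdot)$ means evaluating $s_j$, $s_j^{(1)}, \ldots, s_j^{(\alpha-1)}$ at $0$ and at $\ell_e$; stacking these gives, by the very definition \eqref{eq:R_matrix_edge_repr} of $\mv{r}$, exactly the block matrix
$$
\begin{bmatrix}
\mv{r}(0,0) & \mv{r}(0,\ell_e) \\
\mv{r}(\ell_e,0) & \mv{r}(\ell_e,\ell_e)
\end{bmatrix}
$$
acting on the same constant matrix, namely its own inverse, so the product is the $2\alpha\times 2\alpha$ identity. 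Concretely: the $(i,k)$ entry of the $i$th block of $B^\alpha[\mv{r}_1(t,0)\ \mv{r}_1(t,\ell_e)]$ is $\partial_t^{k-1}$ applied to the first row of $\mv{r}$, evaluated at $0$ or $\ell_e$, which is precisely the corresponding row of the big block matrix; hence $B^\alpha \mv{S}_e$ equals that block matrix times its inverse, i.e.\ $\mv{I}$. One must be slightly careful to match the ordering convention in the definition of $B^\alpha$ (which lists $u(0), u^{(1)}(0), \ldots, u^{(\alpha-1)}(0), u(\ell_e), \ldots$) with the row-indexing of $\mv{r}$ in \eqref{eq:R_matrix_edge_repr}, but this is a bookkeeping matter.

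Finally, linear independence of $\{s_1, \ldots, s_{2\alpha}\}$ is immediate from $B^\alpha \mv{S}_e = \mv{I}$: if $\sum_j c_j s_j \equiv 0$ then applying $B^\alpha$ gives $\sum_j c_j \mv{e}_j = \mv{0}$, forcing all $c_j = 0$. The same computation shows the invertibility of the block matrix whose inverse appears in \eqref{eq:edge_repr_Solution} and \eqref{eq:cov_func_whittle_matern_bridge}: this matrix is the covariance matrix of the Gaussian vector $(U_e(0), U_e^{(1)}(0), \ldots, U_e^{(\alpha-1)}(0), U_e(\ell_e), \ldots, U_e^{(\alpha-1)}(\ell_e))$ for the stationary Mat\'ern process $U_e$ with $\nu = \alpha - \tfrac12$, so invertibility follows from the strict positive-definiteness of the distribution of the jet of $U_e$ at two distinct points, which holds because a stationary Mat\'ern process of smoothness $\nu = \alpha - \tfrac12$ has linearly independent derivatives $U_e, U_e', \ldots, U_e^{(\alpha-1)}$ at any point and is nondegenerate. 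I expect the main obstacle to be this last invertibility claim handled cleanly --- one wants either an explicit reference to the nondegeneracy of the finite-dimensional distributions of the Mat\'ern jet process, or a direct argument via the spectral density, rather than a lengthy determinant computation; once that is in place, the ODE and boundary-value parts are routine consequences of the definitions.
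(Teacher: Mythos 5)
Your argument is correct and follows essentially the same route as the paper's proof: the components of $\mv{S}_e$ are constant linear combinations of $\varrho_M(\cdot)$, $\varrho_M(\cdot-\ell_e)$ and their derivatives up to order $\alpha-1$, which solve the homogeneous equation on the edge, and $B^\alpha$ applied to this collection yields exactly the block matrix whose inverse appears in \eqref{eq:edge_repr_Solution}, giving $B^\alpha\mv{S}_e=\mv{I}$ and hence linear independence. Your instinct about the one nontrivial point, the invertibility of that block matrix, matches how the paper handles it: it is the covariance matrix of the Mat\'ern jet at the two endpoints, and its strict positive-definiteness is established by a spectral-density argument (Lemma~\ref{lem:strpos_def_stationary} in Appendix~\ref{app:bdlessaux}) rather than a determinant computation.
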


\begin{proof}
Begin by defining
$$
	\mv{\rho}(\cdot) = (\varrho_M(\cdot),\ldots, \varrho_M^{(\alpha-1)}(\cdot), \varrho_M(\cdot-\ell_e),\ldots,\varrho_M^{(\alpha-1)}(\cdot-\ell_e)).
$$ 
By the explicit expressions of $\varrho_M(\cdot)$, we can readily check that the coordinates of $\mv{\rho}$ are $2\alpha$ 
linearly independent solutions to the linear differential equation $(\kappa^2-\Delta)^\alpha s = 0$. 
Recall the definition of $\mv{r}(\cdot,\cdot)$ in \eqref{eq:R_matrix_edge_repr} and define
$$	
	\mv{M} = \begin{bmatrix}
	\mv{r}(0,0) & \mv{r}(0,\ell_e) \\
	\mv{r}(\ell_e,0) & \mv{r}(\ell_e,\ell_e)
	\end{bmatrix}.
$$
The operator $B^\alpha$ is linear and $B^\alpha \mv{\rho} = \mv{M}$, 
where for a function ${\mv{v}(\cdot)=(v_1(\cdot),\ldots, v_m(\cdot))}$, 
$m\in\mathbb{N}$, $B^\alpha \mv{v}$ is the matrix whose $j$th row is given by $B^\alpha v_j$, $j=1,\ldots, m$. 
Let $\mv{s}(\cdot) = \mv{\rho}(\cdot)\mv{A}$, where the matrix $\mv{A}$ is to be determined. 
Then, the conditions $B^\alpha s_j = e_j$ to ensure $B^\alpha \mv{S}_e = \mv{I}$ 
translate to our current functions as $\mv{I} = B^\alpha \mv{s} =  (B^\alpha \mv{\rho}) \mv{A}=  \mv{M}\mv{A}$. 
Therefore, $\mv{A} = \mv{M}^{-1}$ and $\mv{s}(\cdot) =  \mv{\rho}(\cdot)\mv{M}^{-1}$, which concludes the proof.
\end{proof}

\begin{proof}[Proof of Theorem \ref{thm:ReprTheoremEdge_Refined}]
Let $\dot{H}_0^\alpha(e) = C_c(e)\cap \dot{H}^\alpha(e)$, where $\dot{H}^\alpha(e) = \{h|_e: h\in \dot{H}^\alpha\}$. 
From \citet[Theorem 9]{BSW_Markov}, we obtain an edge representation for $u$ and 
Lemma~\ref{lem:sol_ODE_edge} provides as explicit expression for $\mv{s}_e(\cdot)$. 
Further, by Lemma \ref{lem:CM_edge_repr_bridge}, 
$(H_0^\alpha(e),  (\cdot,\cdot)_{\alpha,e})$ is the Cameron--Martin space associated with the Whittle--Mat\'ern bridge process. 
Thus, to conclude the proof it is enough to show that ${(\dot{H}_0^\alpha(e), (\cdot,\cdot)_{\alpha,e})\cong (H_0^\alpha(e),  (\cdot,\cdot)_{\alpha,e})}$. 
By \citet[Theorem~8]{BSW_Markov}, $\mathcal{H}_{0,I}(e)\cong \dot{H}^\alpha_0(e)$; therefore, 
it is enough to show that we have the identification $(\mathcal{H}_{0,I}(e), (\cdot,\cdot)_{\alpha,e})\cong (H_0^\alpha(e),  (\cdot,\cdot)_{\alpha,e})$. 
This follows directly from Lemma~\ref{lem:charHdot_edge}.
\end{proof}

\begin{proof}[Proof of Theorem~\ref{cor:conditional_dists}]
If the degree of a vertex of the edge $e$ is 1, then by Theorem~\ref{thm:regularity}.\ref{thm:regularity:item:Kirchhoff}, for $0<k\leq\alpha-1$ such that $k$ is odd,  $u_e^{(k)} = 0$ a.s. Thus, we can only condition on odd-order derivatives being zero for the vertex.
Now, the results follows from Theorem \ref{thm:ReprTheoremEdge_Refined} and Theorem \ref{thm:EdgeReprStationary}, since $v_{\alpha,0}(\cdot)$ and $V_{\alpha,0}(\cdot)$ are independent of $B^\alpha u_e$ and $B^\alpha U_e$, respectively. 
\end{proof}

\begin{proof}[Proof of Proposition \ref{prp:Whittle_Matern_bridge_prop}]
The expression for the covariance function of the Whittle--Mat\'ern bridge process follows directly from Definition \ref{def:WMB} and conditioning of normal random vectors.
From the proof of Theorem \ref{thm:ReprTheoremEdge_Refined}, we have that the process $v_{\alpha,0}(\cdot)$ in the representation given in Theorem \ref{thm:ReprTheoremEdge_Refined} is a Whittle--Mat\'ern bridge process. The remaining properties of the Whittle--Mat\'ern bridge process are, thus, direct consequences of Theorem \ref{thm:ReprTheoremEdge_Refined}.
\end{proof}
		
\section{Proofs for the conditional representation}\label{app:proofs_conditional}
This section aims to provide a conditional representation for Whittle--Mat\'ern fields on metric graphs, 
obtained as solutions to \eqref{eq:Matern_spde} for any $\alpha\in\mathbb{N}$. 
Thus, we extend the results in \cite[Section~6]{BSW_Markov} for integer $\alpha\geq 1$ when $\kappa$ and $a$ are constant functions, with $a\equiv 1$. 
The construction of the conditioning and several proofs are similar to their counterparts in \cite{BSW_Markov}; however, we sometimes repeat them for clarity.

The strategy to obtain the conditional representation is as follows. 
We start by defining a process, such that the Cameron--Martin space associated with a field, denoted by $\widetilde{u}(\cdot)$, 
consisting of independent copies of this process on each edge, 
contains the Cameron--Martin space associated with the solution to \eqref{eq:Matern_spde}, denoted by $u(\cdot)$. 
Then, we establish that $u(\cdot)$ can be obtained by projecting $\widetilde{u}(\cdot)$ to some Gaussian space. 
Thus, we can write $\widetilde{u}(\cdot)$ as $u(\cdot)$ plus the projection of $\widetilde{u}(\cdot)$ to a space which is orthogonal to $u(\cdot)$. 
Hence, $u(\cdot)$ can be obtained from $\widetilde{u}(\cdot)$ by conditioning the additional term to be equal to zero. 
Finally, we demonstrate that conditioning this additional term to zero is equivalent to conditioning $\widetilde{u}(\cdot)$ to satisfy Kirchhoff conditions: 
continuity at the vertices, the sum of odd-order directional derivatives being equal to zero, and continuity of the even-order derivatives. 

We define the Cameron--Martin version of the boundaryless Whittle--Mat\'ern process on an interval as follows.
In Appendix \ref{app:bdlessaux}, we show that this process coincides with the boundaryless Whittle--Mat\'ern process of Definition \ref{def:cov_based_bdlessWM}.

\begin{Definition}\label{def:bdlessWM}
	We define the Cameron--Martin (CM) boundaryless Whittle--Mat\'ern process on $[0,\ell]$, $\ell\in (0,\infty)$, 
	with characteristics $\alpha\in\{1,2,3,\ldots\}$, $\kappa>0$, and $\tau>0$ 
	as the centered Gaussian process with the Cameron--Martin space given by $(H^\alpha([0,\ell]), \|\cdot\|_{\alpha,\kappa,\tau})$, 
	where the norm is induced by the inner product
	$$
		\<f,g\>_{\alpha,\kappa,\tau} = \tau^2 \sum_{m=0}^\alpha {\alpha \choose m} \kappa^{2(\alpha-m)} \int_0^\ell f^{(m)}(x) g^{(m)}(x) dx,\quad f,g\in H^\alpha([0,\ell]),
	$$
	if $\alpha$ is even, and
	$\<f,g\>_{\alpha,\kappa,\tau} =\kappa^2\<f,g\>_{\alpha-1,\kappa,\tau} + \<f',g'\>_{\alpha-1,\kappa,\tau}$, $f,g\in H^\alpha([0,\ell]),$
	if $\alpha$ is odd, where $\<f,g\>_{0,\kappa,\tau} = \tau^2(f,g)_{L_2(\Gamma)}$.
\end{Definition}


\begin{Lemma}\label{lem:bdlessWM}
	The CM boundaryless Whittle--Mat\'ern process on $[0,\ell]$, $0<\ell<\infty$, with characteristics $\alpha\in\{1,2,3,\ldots\}$, $\kappa >0$ and $\tau>0$, given in Definition \ref{def:bdlessWM}, exists.
\end{Lemma}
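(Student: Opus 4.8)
The plan is to establish existence by verifying that the prescribed Hilbert space $\mathcal{H} := (H^\alpha([0,\ell]),\|\cdot\|_{\alpha,\kappa,\tau})$ is a reproducing kernel Hilbert space of real-valued functions on $[0,\ell]$, and then invoking the standard correspondence between symmetric positive semi-definite kernels and centered Gaussian processes: the reproducing kernel $K$ of $\mathcal{H}$ is a symmetric positive semi-definite kernel, so by Kolmogorov's extension theorem there is a centered Gaussian process with covariance function $K$, and by the identification of the Cameron--Martin space of a Gaussian process with the RKHS of its covariance \citep[see, e.g.,][Corollary~8.16]{janson_gaussian}, the Cameron--Martin space of that process is exactly $\mathcal{H}$, which is precisely what Definition~\ref{def:bdlessWM} demands.

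First I would check that $\|\cdot\|_{\alpha,\kappa,\tau}$ is a Hilbert-space norm on $H^\alpha([0,\ell])$ equivalent to the standard Sobolev norm. For even $\alpha$ this is immediate from the defining formula, since $\|f\|_{\alpha,\kappa,\tau}^2 = \tau^2\sum_{m=0}^\alpha \binom{\alpha}{m}\kappa^{2(\alpha-m)}\|f^{(m)}\|_{L_2([0,\ell])}^2$ has strictly positive coefficients and contains both the order-$0$ and the order-$\alpha$ terms; equivalence with $\|\cdot\|_{H^\alpha([0,\ell])}$ then follows from the elementary interpolation inequality $\|f^{(m)}\|_{L_2([0,\ell])}\le C(\|f\|_{L_2([0,\ell])}+\|f^{(\alpha)}\|_{L_2([0,\ell])})$, $0\le m\le\alpha$, valid on a bounded interval. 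For odd $\alpha$ I would unfold the recursion $\langle f,g\rangle_{\alpha,\kappa,\tau}=\kappa^2\langle f,g\rangle_{\alpha-1,\kappa,\tau}+\langle f',g'\rangle_{\alpha-1,\kappa,\tau}$, inserting the (even) order-$(\alpha-1)$ formula and reindexing the derivative in the second term; Pascal's identity $\binom{\alpha-1}{m}+\binom{\alpha-1}{m-1}=\binom{\alpha}{m}$ then shows that the same closed form $\|f\|_{\alpha,\kappa,\tau}^2 = \tau^2\sum_{m=0}^\alpha \binom{\alpha}{m}\kappa^{2(\alpha-m)}\|f^{(m)}\|_{L_2([0,\ell])}^2$ holds for every $\alpha\in\mathbb{N}$, so $\mathcal{H}$ is complete. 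Since $\alpha\ge1$ and the interval is bounded, the Sobolev embedding $H^\alpha([0,\ell])\hookrightarrow C([0,\ell])$ gives $|f(t)|\le\|f\|_{C([0,\ell])}\le C'\|f\|_{\alpha,\kappa,\tau}$ for every $t\in[0,\ell]$, so every point evaluation $\delta_t$ is a bounded linear functional on $\mathcal{H}$. Hence $\mathcal{H}$ is an RKHS and the construction described above yields the desired process.

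The only step with any real content is the bookkeeping behind the norm equivalence for odd $\alpha$ — unfolding the recursive inner product and collapsing it via Pascal's rule — together with the interpolation estimate needed to control the intermediate derivatives; once the clean binomial formula is in hand, completeness, the embedding into $C([0,\ell])$, and the RKHS--Gaussian-process dictionary are all routine.
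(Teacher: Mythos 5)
Your proposal is correct and follows essentially the same route as the paper's proof: show that point evaluation is bounded on $(H^\alpha([0,\ell]),\|\cdot\|_{\alpha,\kappa,\tau})$ (the paper does this via the trace theorem and the fact that the norm dominates the $H^1([0,\ell])$ norm), conclude that the space is an RKHS, and obtain the process from its reproducing kernel via Kolmogorov's extension theorem. Your extra bookkeeping — collapsing the odd-$\alpha$ recursion with Pascal's identity to get the closed binomial form and hence norm equivalence with $\|\cdot\|_{H^\alpha([0,\ell])}$ — is a welcome explicit justification of completeness that the paper leaves implicit, but it does not change the argument.
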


\begin{proof}
By the trace theorem \cite[Theorem 1, p.272]{evans2010partial}, for every $x\in [0,\ell]$, ${C_x>0}$ exists such that, for every $f\in H^\alpha([0,\ell])$,
$$
	|f(x)|\leq C_x \|f\|_{H^{1}([0,x])} \leq C_x \|f\|_{H^1([0,\ell])} \leq \widetilde{C}_x \|f\|_{\alpha, \kappa, \tau},
$$
where $\alpha\in\{1,2,3,\ldots\}$, $\kappa >0$, $\tau>0$, and $\widetilde{C}_x>0$ is a constant that may depend on $x, \kappa, \tau, \alpha$, but not on $f$. 
Therefore, the point evaluation on $(H^\alpha([0,\ell]), \|\cdot\|_{\alpha,\kappa,\tau})$ is continuous, 
which proves that $(H^\alpha([0,\ell]), \|\cdot\|_{\alpha,\kappa,\tau})$ is a reproducing kernel Hilbert space. 
	
Let $\varrho_{\alpha,\kappa,\tau}(\cdot,\cdot)$ be the (unique) reproducing kernel from $(H^\alpha([0,\ell]), \|\cdot\|_{\alpha,\kappa,\tau})$. 
The CM boundaryless Whittle--Mat\'ern process on $[0,\ell]$, $0<\ell<\infty$, with  $\alpha\in\mathbb{N}$, $\kappa >0$ and $\tau>0$ is, therefore, a centered Gaussian process with covariance function $\varrho_{\alpha,\kappa,\tau}(\cdot,\cdot)$, which exists by Kolmogorov's extension theorem.
\end{proof}

\begin{Proposition}\label{prp:bdlessWM_Markov}
The CM boundaryless Whittle--Mat\'ern process on $[0,\ell]$, $0<\ell<\infty$, with characteristics $\alpha\in\{1,2,3,\ldots\}$, $\kappa >0$ and $\tau>0$, 
is a Markov random field of order $\alpha$ in the sense of \cite[Definition 2]{BSW_Markov}.
\end{Proposition}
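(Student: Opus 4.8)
The plan is to show that the CM boundaryless Whittle--Mat\'ern process $\widetilde u$ on $[0,\ell]$ is Markov of order $\alpha$ by verifying the splitting property of \cite[Definition 2]{BSW_Markov}: for any closed set $F\subset[0,\ell]$ with complement decomposing into the interior part and the boundary, the interior Gaussian space and the exterior Gaussian space are conditionally independent given the boundary Gaussian space $H_{\alpha,I}(\partial F)$ spanned by $\widetilde u(s),\widetilde u'(s),\ldots,\widetilde u^{(\alpha-1)}(s)$ for $s\in\partial F$. The natural route is \emph{not} to argue probabilistically but to work entirely with the Cameron--Martin space, exploiting the fact, established in Lemma~\ref{lem:bdlessWM}, that this space is $(H^\alpha([0,\ell]),\langle\cdot,\cdot\rangle_{\alpha,\kappa,\tau})$ with the explicit Sobolev-type inner product in Definition~\ref{def:bdlessWM}. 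Markov splitting for a Gaussian process is equivalent to an orthogonal decomposition of the Cameron--Martin space: $\mathcal H = \mathcal H_{\mathrm{int}}(F^c_{\mathrm{left}}) \oplus \mathcal H_{\mathrm{bdry}} \oplus \mathcal H_{\mathrm{int}}(F^c_{\mathrm{right}})$, where the two interior pieces are mutually orthogonal. So the whole proof reduces to a statement about the bilinear form $\langle\cdot,\cdot\rangle_{\alpha,\kappa,\tau}$.

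**Key steps.** First I would reduce to the generating case: it suffices to check splitting for a single interior point $s_0\in(0,\ell)$, since the general Markov-of-order-$\alpha$ property of \cite[Definition 2]{BSW_Markov} follows from the one-point case by the same argument used for the Whittle--Mat\'ern fields themselves. Second, I would identify the boundary space: $H_{\alpha,I}(\{s_0\})$ corresponds, under the CM isometry, to the finite-dimensional subspace $V_{s_0}=\{f\in H^\alpha([0,\ell]) : f$ is a linear combination of the representers of the functionals $f\mapsto f^{(k)}(s_0)$, $k=0,\ldots,\alpha-1\}$. Third, and this is the heart, I would show that the form $\langle\cdot,\cdot\rangle_{\alpha,\kappa,\tau}$ is "local" in the sense that for $f$ supported in $[0,s_0]$ and $g$ supported in $[s_0,\ell]$ one has $\langle f,g\rangle_{\alpha,\kappa,\tau}=0$ — this is immediate from the defining integral formula, since $\int_0^\ell f^{(m)}g^{(m)}\,dx$ splits as $\int_0^{s_0}+\int_{s_0}^\ell$ and each integrand vanishes. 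Fourth, combining locality with the standard characterization (any $h\in H^\alpha([0,\ell])$ decomposes as $h = h_{\le} + h_{\ge}$ where $h_{\le}$ agrees with $h$ on $[0,s_0]$ and extends to $[s_0,\ell]$ by solving $(\kappa^2-\Delta)^\alpha h_\le=0$ with matching boundary jet at $s_0$, and symmetrically for $h_\ge$, with the overlap in $V_{s_0}$) gives the required orthogonal splitting. Finally I would quote Lemma~\ref{lem:sol_ODE_edge} (or its analogue for this norm) to guarantee that such a decomposition exists and is unique, i.e. the relevant boundary-jet matrix is invertible.

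**Main obstacle.** The delicate point is Step four: writing down the decomposition $h = h_\le + h_\ge$ requires that one can \emph{extend} $h|_{[0,s_0]}$ across $s_0$ as a solution of the $2\alpha$-th order ODE $(\kappa^2-\Delta)^\alpha w=0$ with prescribed derivatives $w^{(k)}(s_0)=h^{(k)}(s_0)$, $k=0,\ldots,\alpha-1$, \emph{and} with this extension lying in $H^\alpha$ rather than merely being a classical solution — and then one must check that the resulting $h_\le$ is $\langle\cdot,\cdot\rangle_{\alpha,\kappa,\tau}$-orthogonal to everything supported on $[s_0,\ell]$ modulo $V_{s_0}$. This orthogonality is the integration-by-parts computation: $\langle h_\le, g\rangle_{\alpha,\kappa,\tau}$ for $g$ supported away from $[0,s_0]$ reduces, after moving all derivatives onto $h_\le$ via the boundary terms at $s_0$, to $\tau^2\int (\kappa^2-\Delta)^\alpha h_\le \cdot g = 0$ plus boundary contributions that live in $V_{s_0}$. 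Making the integration-by-parts bookkeeping precise for odd $\alpha$ (where the inner product is defined recursively) and confirming the boundary-jet matrix is nonsingular — which, as in Lemma~\ref{lem:sol_ODE_edge}, follows because the $2\alpha$ fundamental solutions $e^{\pm\kappa x}$, $x e^{\pm\kappa x}$, \ldots are linearly independent — is where the real work lies; everything else is formal.
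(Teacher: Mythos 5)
Your proposal is correct and rests on the same key observation as the paper's proof: the Cameron--Martin inner product $\langle\cdot,\cdot\rangle_{\alpha,\kappa,\tau}$ is local, which yields the Markov property, and the order-$\alpha$ statement comes from the orthogonal splitting of $H^\alpha([0,\ell])$ at an interior point with boundary space spanned by the representers of $f\mapsto f^{(k)}(s_0)$, $k=0,\ldots,\alpha-1$. The paper simply delegates both steps to \citet[Theorems~1 and~6]{BSW_Markov} (``observe that the space is local'' and ``mimic the proof of Theorem~6''), whereas you unpack the content of those cited results --- the ODE extension across $s_0$, the integration by parts, and the invertibility of the boundary-jet matrix --- so the two arguments coincide in substance.
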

\begin{proof}
Observe that $(H^\alpha([0,\ell]), \|\cdot\|_{\alpha,\kappa,\tau})$ is local. 
Therefore, by \cite[Theorem 1]{BSW_Markov}, it directly follows that the CM boundaryless Whittle--Mat\'ern process is a Markov random field.
Then, we can mimic the proof of \cite[Theorem 6]{BSW_Markov} to find that the CM boundaryless Whittle--Mat\'ern process on $[0,\ell]$ with characteristic $\alpha\in\mathbb{N}$ is a Markov random field of order $\alpha$.
\end{proof}

Further technical results regarding the CM boundaryless Whittle--Mat\'ern processes are presented in Appendix \ref{app:bdlessaux}.

\begin{Remark}
By Definition \ref{def:bdlessWM} and \citet[Remark 2 and Lemma 3]{BSW_Markov}, for $\alpha=1$, 
the CM boundaryless Whittle--Mat\'ern process is given by a Whittle--Mat\'ern process on $[0,\ell]$ with Neumann boundary conditions. 
However, this is not the case for $\alpha>1$.
\end{Remark}

Fix $\alpha\in\mathbb{N}$, $\kappa>0$ and $\tau>0$ and let $\{\widetilde{u}_e:e\in\mathcal{E}\}$ be a family of independent CM boundaryless 
Whittle--Mat\'ern processes on $e=[0,\ell_e]$ with characteristics $\alpha$, $\kappa$ and $\tau$. 
We introduce a process on $\Gamma$ denoted by $\widetilde{u}$ that acts on each edge $e\in\mathcal{E}$ as $\widetilde{u}_e$. 
More precisely, we define, for $s=(t,e)\in\Gamma$, the field $\widetilde{u}$ as $\widetilde{u}(s) := \widetilde{u}_e(t)$. 

\begin{Proposition}\label{prp:CM_indep_field_bdless_WM}
Let $\widetilde{u}$ be defined as above. The Cameron--Martin space associated with $\widetilde{u}$ is given by $\widetilde{H}^\alpha(\Gamma)$ with the inner product
$$
	\<f,g\>_{\alpha,\Gamma} = \sum_{e\in\mathcal{E}} \<f_e,g_e\>_{\alpha,\kappa,\tau}.
$$
\end{Proposition}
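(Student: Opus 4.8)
The plan is to identify the Cameron--Martin space of $\widetilde u$ with the reproducing kernel Hilbert space (RKHS) of its covariance function, via \citet[Corollary 8.16]{janson_gaussian}, and then to exploit the block structure of that covariance induced by the independence of the edge processes. Write $\widetilde\varrho(s_1,s_2) = \pE(\widetilde u(s_1)\widetilde u(s_2))$. Since the processes $\{\widetilde u_e\}_{e\in\mathcal E}$ are independent and centered, for $s_i = (t_i,e_i)$ we have $\widetilde\varrho(s_1,s_2) = 0$ whenever $e_1\neq e_2$, while $\widetilde\varrho((t_1,e),(t_2,e)) = \varrho_{\alpha,\kappa,\tau}^{e}(t_1,t_2)$, the reproducing kernel of $(H^\alpha(e),\langle\cdot,\cdot\rangle_{\alpha,\kappa,\tau})$ appearing in the proof of Lemma~\ref{lem:bdlessWM}. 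In particular, for fixed $s = (t,e)$, the section $\widetilde\varrho(s,\cdot)$ vanishes on every edge other than $e$ and restricts to $\varrho_{\alpha,\kappa,\tau}^{e}(t,\cdot)\in H^\alpha(e)$ on $e$; hence $\widetilde\varrho(s,\cdot)\in\widetilde H^\alpha(\Gamma)$.

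Next I would verify that $(\widetilde H^\alpha(\Gamma),\langle\cdot,\cdot\rangle_{\alpha,\Gamma})$ is a Hilbert space with continuous point evaluations. Completeness is immediate because it is the finite orthogonal direct sum $\bigoplus_{e\in\mathcal E}(H^\alpha(e),\langle\cdot,\cdot\rangle_{\alpha,\kappa,\tau})$ of Hilbert spaces, the norms $\langle\cdot,\cdot\rangle_{\alpha,\kappa,\tau}^{1/2}$ being equivalent to $\|\cdot\|_{H^\alpha(e)}$; continuity of $f\mapsto f(s)$ for $s=(t,e)$ follows from the trace/Sobolev estimate $|f(s)| = |f_e(t)|\leq \widetilde C_t\|f_e\|_{\alpha,\kappa,\tau}\leq \widetilde C_t\|f\|_{\alpha,\Gamma}$ established in the proof of Lemma~\ref{lem:bdlessWM}. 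Then I would check the reproducing property: for $f\in\widetilde H^\alpha(\Gamma)$ and $s=(t,e)$, using the block-diagonal structure of $\langle\cdot,\cdot\rangle_{\alpha,\Gamma}$ and the fact that $\widetilde\varrho(s,\cdot)$ is supported on $e$,
\[
\langle f,\widetilde\varrho(s,\cdot)\rangle_{\alpha,\Gamma}
= \langle f_e,\varrho_{\alpha,\kappa,\tau}^{e}(t,\cdot)\rangle_{\alpha,\kappa,\tau}
= f_e(t) = f(s),
\]
where the middle equality is the reproducing property on the edge $e$ from Lemma~\ref{lem:bdlessWM}. Thus $\widetilde\varrho$ is the reproducing kernel of $(\widetilde H^\alpha(\Gamma),\langle\cdot,\cdot\rangle_{\alpha,\Gamma})$, and by uniqueness of the reproducing kernel this Hilbert space is the RKHS of $\widetilde\varrho$, which by \citet[Corollary 8.16]{janson_gaussian} is the Cameron--Martin space of $\widetilde u$.

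The main obstacle I anticipate is purely bookkeeping: one must be careful that the sections $\widetilde\varrho(s,\cdot)$ land in the \emph{decoupled} Sobolev space $\widetilde H^\alpha(\Gamma)$ rather than in $H^\alpha(\Gamma)$, since extending an element of $H^\alpha(e)$ by zero off $e$ generally produces jumps at the vertices of $e$ --- so the relevant space is $\bigoplus_e H^\alpha(e)$, exactly matching the claimed inner product, and no Kirchhoff or continuity constraint should slip in. An equivalent route, avoiding kernels entirely, is to observe that independence of the edge processes makes $H(\Gamma)=\bigoplus_{e}H(e)$ an $L_2(\Omega)$-orthogonal direct sum and then transport this decomposition through the isometry $h\mapsto\pE(\widetilde u(\cdot)h)$ between $H(\Gamma)$ and the Cameron--Martin space, combined with the identification of each edge factor from Lemma~\ref{lem:bdlessWM}.
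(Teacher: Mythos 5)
Your proof is correct. It reaches the same conclusion as the paper but by a more self-contained route: the paper's own proof is essentially a two-line argument that identifies the Cameron--Martin space of each $\widetilde{u}_e$ as $(H^\alpha(e),\langle\cdot,\cdot\rangle_{\alpha,\kappa,\tau})$ via Definition~\ref{def:bdlessWM} and then delegates the ``independent processes $\Rightarrow$ orthogonal direct sum of Cameron--Martin spaces'' step entirely to an external result (Lemma~8 in Appendix~A of \citet{BSW_Markov}). You instead prove that direct-sum step from scratch: you exhibit the covariance $\widetilde{\varrho}$ as block-diagonal over edges, check that $(\widetilde{H}^\alpha(\Gamma),\langle\cdot,\cdot\rangle_{\alpha,\Gamma})$ is a Hilbert space with continuous point evaluations, verify the reproducing property edge by edge, and invoke uniqueness of the reproducing kernel together with \citet[Corollary~8.16]{janson_gaussian}. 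The details all check out, including the point you flag explicitly --- that the kernel sections extended by zero land in the decoupled space $\widetilde{H}^\alpha(\Gamma)$ rather than in $H^\alpha(\Gamma)$, which is exactly why no continuity or Kirchhoff constraint appears. What the paper's citation buys is brevity and consistency with the companion paper's framework; what your argument buys is a self-contained verification (it is, in effect, a proof of the cited lemma in this special case), and your closing remark about transporting the $L_2(\Omega)$-orthogonal decomposition of the linear Gaussian space through the isometry $h\mapsto\pE(\widetilde{u}(\cdot)h)$ is precisely the mechanism underlying that lemma.
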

	
\begin{proof}
By Definition \ref{def:bdlessWM}, the Cameron--Martin space associated with $\widetilde{u}_e$ is $H^\alpha(e)$ with the inner product $\<f_e,g_e\>_{\alpha,\kappa,\tau}$. The proof is completed by applying \cite[Lemma~8 in Appendix A]{BSW_Markov}.
\end{proof}

\begin{Remark}\label{rem:CM_kirk_bdless}
Let $\alpha\in\mathbb{N}$ and recall the definitions of $\dot{H}^\alpha$ and the norm $\|\cdot\|_{\alpha}$ in Section~\ref{sec:construction}. 
Let $\|\cdot\|_{\alpha,\Gamma}$ be the norm in $\widetilde{H}^\alpha(\Gamma)$ associated with $\<\cdot,\cdot\>_{\alpha,\Gamma}$. 
By integration by parts, if $f\in \dot{H}^\alpha$, then $\|f\|_{\alpha} = \|f\|_{\alpha,\Gamma}.$
\end{Remark}

We start by obtaining a representation of $u$, the solution to \eqref{eq:Matern_spde}, in terms of $\widetilde{u}$.

\begin{Proposition}\label{prp:CondRepr_WM_alpha}
Let $\widetilde{u}$ be the field obtained by joining independent CM boundaryless Whittle--Mat\'ern processes (see Definition \ref{def:bdlessWM}) on each edge. Therefore, the Whittle--Mat\'ern field given by the solution of \eqref{eq:Matern_spde} with positive integer $\alpha$ can be obtained as
$$
	u(s) = \sum_{e\in\mathcal{E}} v_{e,0}(s) + \mv{s}_{e}^\transp(s)T_{\alpha,e} \widehat{B}^\alpha\widetilde{u},
$$
where $v_{e,0}(\cdot)$ is the Whittle--Mat\'ern bridge with parameters $(\kappa,\tau,\alpha)$ defined on the edge $e$, $\mv{s}_{e}$ is given by \eqref{eq:edge_repr_Solution} and for each $e\in\mathcal{E}$, $T_{\alpha,e}$ is a matrix.
\end{Proposition}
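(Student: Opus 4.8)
The plan is to combine the edge-by-edge representation of Theorem~\ref{thm:ReprTheoremEdge_Refined} with a characterization of $u$ via its Cameron--Martin space and the projection structure described in \cite[Section 6]{BSW_Markov}. First I would recall from Proposition~\ref{prp:charHdot} that the Cameron--Martin space of $u$ is $\dot H^\alpha$, and from Proposition~\ref{prp:CM_indep_field_bdless_WM} that the Cameron--Martin space of the joined field $\widetilde u$ is $\widetilde H^\alpha(\Gamma)$ with inner product $\langle\cdot,\cdot\rangle_{\alpha,\Gamma}$. By Remark~\ref{rem:CM_kirk_bdless}, the inclusion $\dot H^\alpha \hookrightarrow \widetilde H^\alpha(\Gamma)$ is isometric, so $H(\Gamma)$ (the Gaussian space of $u$) embeds isometrically as a closed subspace of the Gaussian space $\widetilde H(\Gamma)$ of $\widetilde u$. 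Hence $u$ can be realized as the $\widetilde H(\Gamma)$-orthogonal projection of $\widetilde u$ onto this subspace; this is exactly the mechanism used to obtain the conditional representation.

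Next I would decompose $\widetilde u$ restricted to a fixed edge $e$ using Theorem~\ref{thm:EdgeReprStationary}: since the CM boundaryless Whittle--Mat\'ern process on $e$ has the same structure relative to its boundary data $B^\alpha$, we can write $\widetilde u_e(t) = \widetilde v_{e,0}(t) + \mv{s}_e^\transp(t)\, B^\alpha \widetilde u_e$, where $\widetilde v_{e,0}$ is a Whittle--Mat\'ern bridge on $e$ independent of $B^\alpha\widetilde u_e$, and $\mv s_e$ is given by \eqref{eq:edge_repr_Solution}. Summing over edges and collecting the boundary vectors, $\widetilde u(s) = \sum_{e\in\mathcal E}\bigl(\widetilde v_{e,0}(s) + \mv s_e^\transp(s) B^\alpha\widetilde u_e\bigr)$. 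Now apply the orthogonal projection onto $H(\Gamma)$: the bridge components $\widetilde v_{e,0}$ are supported in the interior Gaussian space $\mathcal H_{0,I}(e)$, which by Lemma~\ref{lem:charHdot_edge} and \cite[Theorem 8]{BSW_Markov} already lies inside $\dot H^\alpha$; so they are unaffected by the projection and become the bridge processes $v_{e,0}$ in the statement. The projection therefore acts only on the finite-dimensional boundary part $\widehat B^\alpha\widetilde u := (B^\alpha\widetilde u_e)_{e\in\mathcal E}$, producing for each edge a linear map $T_{\alpha,e}$ such that the boundary contribution on $e$ becomes $\mv s_e^\transp(s)\, T_{\alpha,e}\,\widehat B^\alpha\widetilde u$. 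Concretely, $T_{\alpha,e}$ is read off from the block structure of the Gram matrix of $\{B^\alpha\widetilde u_e\}$ under $\langle\cdot,\cdot\rangle_{\alpha,\Gamma}$ together with the constraint that the projected field is continuous and satisfies the Kirchhoff conditions of $\mathcal K_\alpha$; this is the step where one verifies that the projected boundary data is exactly the (constrained) vertex data of the genuine solution, i.e. that the span of $\{\mv s_e^\transp(\cdot) T_{\alpha,e}\widehat B^\alpha\widetilde u\}$ realizes the low-rank part of $u$.

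The main obstacle is identifying $T_{\alpha,e}$ explicitly and showing that the resulting field is genuinely the solution to \eqref{eq:Matern_spde} rather than merely some Gaussian field with the right Cameron--Martin space; the subtlety is that $\dot H^\alpha$ is a proper subspace of $\widetilde H^\alpha(\Gamma)$ cut out by the Kirchhoff continuity and derivative conditions, so the projection must be shown to coincide with conditioning $\widehat B^\alpha\widetilde u$ on those linear constraints. I would handle this by invoking \cite[Theorem 5 and Remark 3]{BSW_Markov} for the conditional independence across edges given the vertex data, which reduces the matching to a per-vertex computation, and then checking that the covariance of the constructed field agrees with $\tau^{-2}L^{-\alpha}$ via Lemma~3 of \cite{BSW2022} applied edgewise. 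The explicit form of $T_{\alpha,e}$ is not needed for the statement here — only its existence — so the argument can stop once the projection is shown to land in $H(\Gamma)$ and to reproduce the full covariance operator; the finer structure is deferred to Theorem~\ref{thm:representation} and Appendix~\ref{app:bdlessaux}.
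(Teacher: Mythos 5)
Your proposal follows essentially the same route as the paper's proof: the edge-wise bridge decomposition of $\widetilde{u}_e$ (via the boundaryless analogue of Theorem~\ref{thm:EdgeReprStationary}, i.e.\ Proposition~\ref{prp:EdgeReprBdlessProcInterval}), the isometric embedding $\dot{H}^\alpha \hookrightarrow \widetilde{H}^\alpha(\Gamma)$ from Remark~\ref{rem:CM_kirk_bdless}, realization of $u$ as the projection of $\widetilde{u}$ onto the corresponding closed Gaussian subspace, invariance of the bridge components under that projection, and existence of the matrices $T_{\alpha,e}$ from finite-dimensional Gaussian conditioning of the vertex data. The only superfluous step is your proposed verification that the covariance equals $\tau^{-2}L^{-\alpha}$: since a centered Gaussian field is determined in law by its Cameron--Martin space, the paper concludes directly from the uniqueness of Cameron--Martin spaces, and your worry about the projection matching the Kirchhoff conditioning belongs to Proposition~\ref{prp:Wm_prp_kirch_cond_alpha} and Theorem~\ref{thm:representation} rather than to this statement.
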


\begin{proof}
Recall that $H^\alpha_0([0,\ell])$ is the closure of $C_c^\infty([0,\ell])$ with respect to the Sobolev norm. 
That is, $H^\alpha_0([0,\ell])$ is the space of functions $f:[0,\ell]\to\mathbb{R}$ that admit $\alpha$ weak derivatives satisfying $f^{(k)}(0) = f^{(k)}(\ell) = 0$, 
for $k=0,\ldots,\alpha-1$. 

For each $e\in\mathcal{E}$, we can apply Proposition~\ref{prp:EdgeReprBdlessProcInterval} in Appendix~\ref{app:bdlessaux} to $\widetilde{u}_e$ 
to obtain that ${\widetilde{u}_e(t) = v_{e,0}(t) + \mv{s}_e^\transp(t) B^\alpha \widetilde{u}_e}$ for $t\in e$,
where $v_{e,0}(\cdot)$ is the Whittle--Mat\'ern bridge process on the edge $e$ and $\mv{s}_e$ is given by \eqref{eq:edge_repr_Solution}. 
Further, $v_{e,0}$ is independent of $B^\alpha \widetilde{u}_e$. 
We can obtain a representation on the entire metric graph $\Gamma$ from these edge representations on all edges. 
To show this, we extend $v_{e,0}(\cdot)$ and $\mv{s}_e(\cdot)$ as zero on $\Gamma\setminus e$. 
Thus, for $s=(t,e)$, $v_{e',0}(s) = 0$ if $e\neq e'$ and $v_{e',0}(s) = v_{e,0}(t)$ if $e=e'$. 
We handle the functions $\mv{s}_e(\cdot)$ similarly. 
Therefore, 
\begin{equation}\label{eq:bridge_repr_indep_bdless}
	\widetilde{u}(s) = \sum_{e\in\mathcal{E}} v_{e,0}(s) + \mv{s}_{e}^\transp(s) B^\alpha \widetilde{u}_{e}, \quad s\in\Gamma.
\end{equation}

By Remark \ref{rem:CM_kirk_bdless}, $(\dot{H}^\alpha, (\cdot,\cdot)_{\alpha})$ is a closed subspace of $(\widetilde{H}^\alpha(\Gamma), \<\cdot,\cdot\>_{\alpha,\Gamma})$. 
Let $H_{\widetilde{u}}(\Gamma)$ be the Gaussian space induced by $\widetilde{u}$, that is, the $L_2(\Omega)$-closure of $\textrm{span}\{\widetilde{u}(t,e): e\in\mathcal{E}, t\in e\}$, and let $\Pi_K:\widetilde{H}^\alpha(\Gamma)\to \dot{H}^\alpha$ be the $\<\cdot, \cdot\>_{\alpha,\Gamma}$-orthogonal projection from $\widetilde{H}^\alpha(\Gamma)$ onto $\dot{H}^\alpha$. 
Define the isometric isomorphism $\Phi:H_{\widetilde{u}}(\Gamma) \to \widetilde{H}^\alpha(\Gamma)$
associated to the linear Gaussian space $H_{\widetilde{u}}(\Gamma)$ by
$\Phi(h)(t,e) = \pE(h\widetilde{u}_e(t))$, for $s = (t,e)\in\Gamma$.


We now show that the Whittle--Mat\'ern field with positive integer $\alpha$ can be obtained as a projection of $\widetilde{u}$ on a suitable space.
To this end, we introduce the following additional notation. 
We let $\widetilde{\varrho}(\cdot,\cdot)$ denote the covariance function of the field $\widetilde{u}$, 
which is also the reproducing kernel of $(\widetilde{H}^\alpha(\Gamma), \<\cdot,\cdot\>_{\alpha,\Gamma})$. 
We now project the reproducing kernel on $\dot{H}^\alpha$ to obtain a corresponding reproducing kernel there. 
More precisely, we define the function  $\varrho:\Gamma\times \Gamma\to \mathbb{R}$ as
$\varrho(s,\cdot) = \varrho((t,e),\cdot) = \Pi_K(\widetilde{\varrho}(s,\cdot)) = \Pi_K(\widetilde{\varrho}((t,e),\cdot)), s = (t,e)\in\Gamma$.
By the same arguments as in \cite[Section 6]{BSW_Markov}, 
we find that $\varrho(\cdot,\cdot)$ is a reproducing kernel for $(\dot{H}^\alpha,(\cdot,\cdot)_{\alpha})$, 
and in a similar manner, this allows us to define a new field:
$$
	u(s) := \Phi^{-1}(\varrho(s,\cdot)), \quad s\in\Gamma.
$$
Therefore, the very definition of $u$ directly shows that $u$ is a centered Gaussian random field with Cameron--Martin space 
$(\dot{H}^\alpha,(\cdot,\cdot)_{\alpha})$ and Gaussian space 
$$
	{H_u = \Phi^{-1}\circ \Pi_K(\widetilde{H}^\alpha(\Gamma)) = \Phi^{-1}(\dot{H}^\alpha)}.
$$
Thus, $u$ is a Whittle--Mat\'ern field on $\Gamma$. Indeed, this follows from the uniqueness of the Cameron--Martin spaces. 
Furthermore, by the same lines as in \cite[Section~6]{BSW_Markov}, we have the following representation for Whittle--Mat\'ern fields with $\alpha\in\mathbb{N}$:
\begin{equation}\label{eq:repr_gWM_alpha}
	u(s) = \pE(\widetilde{u}(s) | \sigma(H_u)),\quad s\in\Gamma.
\end{equation}

We  now show that the conditional expectation above only affects the values of $\widetilde{u}$ at the vertices of $\Gamma$.
Let $\varrho_{0,e}(\cdot,\cdot)$ be the covariance function of $v_{e,0}$, satisfying $\varrho_{0,e}(\cdot,t)\in H_0^\alpha(e)$, $t\in e$. 
We claim that $\Phi(v_{e,0}(t))|_e \in H^\alpha_0(e)$, where $t\in e$. 
To prove this claim, observe that, because  $B^\alpha \widetilde{u}_e$ and $v_{e,0}$ are independent, 
\begin{align*}
	\Phi(v_{e,0}(t))((t',e)) &= \pE(v_{e,0}(t)\widetilde{u}_e(t')) = \pE\left(v_{e,0}(t)(v_{e,0}(t') + \mv{s}_e^\transp(t') B^\alpha \widetilde{u}_e)\right)\\
	&= \pE(v_{e,0}(t)v_{e,0}(t')) = \varrho_{0,e}(t,t').
\end{align*}
Next, observe that, for $e\neq e'$, $\widetilde{u}_e$ is independent of $\widetilde{u}_{e'}$. 
Thus, for every $t\in e$, ${\Phi^{(k)}(v_{e,0}(t)) = 0}$ in $\Gamma\setminus e$, $k=0,\ldots,\alpha-1$. 
In other words, $\Phi^{(k)}(v_{e,0}(t))=0$ for every boundary point $t$ of the edge $e$. 
In particular, $\Phi^{(k)}(v_{e,0}(t)), t\in e$ is continuous for $k=0,\ldots,\alpha-1$. 
Therefore, for any $t\in e$, $\Phi(v_{e,0}(t)) \in \dot{H}^\alpha_L(\Gamma)$. 
Now, let $H_{0,\widetilde{u}}$ be the $L_2(\Omega)$ closure of $\textrm{span}\{v_{e,0}(s): s\in\Gamma\}$.
Then, $\Phi(H_{0,\widetilde{u}}) \subset \dot{H}^\alpha_L(\Gamma)$ gives us that
\begin{equation}\label{eq:bridge_contained_kirk_alpha}
	H_{0,\widetilde{u}}\subset H_u.
\end{equation}
Therefore, we can use \eqref{eq:bridge_repr_indep_bdless}, \eqref{eq:repr_gWM_alpha}, and \eqref{eq:bridge_contained_kirk_alpha} to obtain, for $s\in\Gamma$, that 
\begin{align*}
	u(s) &= \pE(\widetilde{u}(s) | \sigma(H_u)) = \pE\left(\sum_{e\in\mathcal{E}} v_{e,0}(s) + \mv{s}_{e}^\transp(t) B^\alpha \widetilde{u}_{e}  \Big| \sigma(H_u)\right) \\
		&= \sum_{e\in\mathcal{E}} v_{e,0}(s) + \mv{s}_{e}^\transp(s) \pE(B^\alpha \widetilde{u}_{e}|\sigma(H_u)).
\end{align*}

Following the notation introduced in \citet[Section 6]{BSW_Markov}, we let $H_\mathcal{G}$ denote the orthogonal complement of $H_{0,\widetilde{u}}$ in $H_u$ 
(i.e., $H_u = H_{0,\widetilde{u}} \oplus H_\mathcal{G}$).
Similarly as in \cite[Section 6]{BSW_Markov}, we have $H_\mathcal{G} = \textrm{span}\{F_j \widehat{B}^\alpha \widetilde{u}: j\in J\}$,
where $\widehat{B}^\alpha \widetilde{u} = \{B^\alpha \widetilde{u}_e: e\in\mathcal{E}\}$, $F_j$ denotes a row vector and $J$ is a finite set containing the indexes. 
This implies, in particular, that the linear space $H_\mathcal{G}$ is generated by a set of linear combinations of $\widehat{B}^\alpha\widetilde{u}$ (i.e., of vertex elements).
On the other hand, for each $e\in\mathcal{E}$, $B^\alpha\widetilde{u}_e$ is independent of $\sigma(H_{0,\widetilde{u}})$. Thus,
\begin{align*}
	u(s) &= \sum_{e\in\mathcal{E}} v_{e,0}(s) + \mv{s}_{e}^\transp(s) \pE(B^\alpha \widetilde{u}_{e}|\sigma(H_\mathcal{G})) \\
	&= \sum_{e\in\mathcal{E}} v_{e,0}(s) + \mv{s}_{e}^\transp(s) \pE(B^\alpha \widetilde{u}_{e}|\sigma(F_j \widehat{B}^\alpha\widetilde{u}:j\in J)), \, \, s\in\Gamma.
\end{align*}

The above expression allows using \citet[Theorem 9.1 and Remark 9.2]{janson_gaussian}. Therefore, matrices $\widehat{T}_{\alpha,e}, e\in\mathcal{E}$ exist such that, 
for every $e\in\mathcal{E}$,
$$
	\pE(B^\alpha\widetilde{u}_e|\sigma(F_j \widehat{B}^\alpha\widetilde{u}:j\in J)) = \widehat{T}_{\alpha,e} \widehat{F} \widehat{B}^\alpha \widetilde{u} := T_{\alpha,e} \widehat{B}^\alpha\widetilde{u},
$$
where $\widehat{F} = [F_1,\ldots,F_{|J|}]$, $\widehat{T}_{\alpha,e}, e\in\mathcal{E}$ and $T_{\alpha,e} = \widehat{T}_{\alpha,e} \widehat{F}$. 
\end{proof}

The following result is a conditional representation for $u$ similar to the one obtained in \cite[Theorem 10]{BSW_Markov}. 

\begin{Proposition}\label{prp:Wm_prp_kirch_cond_alpha}
Let $\widetilde{u}$ be the field obtained by joining independent CM boundaryless Whittle--Mat\'ern processes (see Definition \ref{def:bdlessWM}) on each edge and let $K_{\alpha,e}, e\in\mathcal{E},$ be the matrices given in \eqref{eq:decomp_indepWM_kirk_alpha} and $K_\alpha$ be their vertical concatenation. The Whittle--Mat\'ern field given by the solution of \eqref{eq:Matern_spde} with $\alpha\in\mathbb{N}$ can be obtained by conditioning the field $\widetilde{u}$ on $K_\alpha\widehat{B}^\alpha\widetilde{u} = 0$.
\end{Proposition}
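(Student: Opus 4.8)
The plan is to leverage the representation from Proposition~\ref{prp:CondRepr_WM_alpha}, which already expresses the solution $u$ of \eqref{eq:Matern_spde} in terms of the independent field $\widetilde u$ via the matrices $T_{\alpha,e}$, and to identify the conditioning event $\{K_\alpha \widehat B^\alpha \widetilde u = 0\}$ with the projection onto the Gaussian space $H_u$. The key observation is that $H_u = H_{0,\widetilde u}\oplus H_{\mathcal G}$, and the bridge parts $v_{e,0}$ are already contained in $H_u$ and are independent of $\widehat B^\alpha \widetilde u$; therefore conditioning $\widetilde u$ on $\sigma(H_u)$ affects only the finite-dimensional vertex data $\widehat B^\alpha \widetilde u$. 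So the statement amounts to showing that the linear subspace $H_{\mathcal G}\subset\operatorname{span}\{\widehat B^\alpha\widetilde u\}$ is exactly the subspace of vertex data left after imposing the Kirchhoff-type linear constraints $K_\alpha\widehat B^\alpha\widetilde u = 0$, i.e. that the matrix $K_\alpha$ from \eqref{eq:decomp_indepWM_kirk_alpha} has kernel (acting on $\widehat B^\alpha\widetilde u$) spanned by the rows of $\widehat F$, equivalently that $\operatorname{row}(K_\alpha)$ is the orthogonal complement (in the appropriate inner product induced by $\operatorname{Cov}(\widehat B^\alpha\widetilde u)$) of $\operatorname{row}(\widehat F)$.

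First I would recall the decomposition \eqref{eq:decomp_indepWM_kirk_alpha}, writing each $K_{\alpha,e}$ explicitly as the matrix encoding, at each vertex $v$ and each incident edge, the continuity of even-order derivatives $\widetilde u_e^{(2k)}(v)=\widetilde u_{e'}^{(2k)}(v)$ and the vanishing of the sum of odd-order directional derivatives $\sum_{e\in\mathcal E_v}\partial_e \widetilde u_e^{(2k+1)}(v)=0$, for $k=0,\ldots,\lceil\alpha-\nicefrac12\rceil-1$; this is precisely the description of $\mathcal{K}_\alpha(\widetilde u)$ given before Theorem~\ref{thm:representation}. Then I would argue, following the same lines as \citet[Section~6]{BSW_Markov}, that these are exactly the linear constraints characterizing membership of $\Phi(h)$ in $\dot H^\alpha$ among functions of the form ``bridge plus polynomial-type boundary correction on each edge''. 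Concretely: a function $f=\{f_e\}$ with $f_e\in H^\alpha(e)$ lies in $\dot H^\alpha$ iff, by Proposition~\ref{prp:charHdot}, its even-order derivatives up to order $2\lfloor\alpha/2\rfloor$ are continuous across vertices and the odd-order Kirchhoff sums vanish; intersecting this with the range of the map $\widehat B^\alpha\widetilde u\mapsto \sum_e \mv s_e^\transp B^\alpha\widetilde u_e$ (whose components $s_j$ solve $(\kappa^2-\Delta)^\alpha s=0$, by Lemma~\ref{lem:sol_ODE_edge}) yields the linear system $K_\alpha\widehat B^\alpha\widetilde u=0$. Hence $\Pi_K$ restricted to this finite-dimensional piece is the orthogonal projection onto $\{K_\alpha\widehat B^\alpha\widetilde u=0\}$, and $\widehat B^\alpha u = \pE(\widehat B^\alpha\widetilde u \mid H_u)$ is the conditional expectation of $\widehat B^\alpha\widetilde u$ given this linear constraint.

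To finish, I would combine this with the Gaussian fact that for a jointly Gaussian vector $Z=\widehat B^\alpha\widetilde u$ and a linear constraint $K_\alpha Z=0$, the conditional law $Z\mid\{K_\alpha Z=0\}$ has the same finite-dimensional distributions as the $L_2(\Omega)$-projection of each coordinate of $Z$ onto the span of $\{F_j Z\}$ — this is the standard identity that conditioning a centered Gaussian vector on a linear functional being zero equals orthogonal projection onto its kernel, valid because the Gaussian conditional expectation is linear. Plugging $\pE(B^\alpha\widetilde u_e\mid\sigma(H_{\mathcal G}))=T_{\alpha,e}\widehat B^\alpha\widetilde u$ from the proof of Proposition~\ref{prp:CondRepr_WM_alpha} into \eqref{eq:repr_gWM_alpha}, together with the independence of $v_{e,0}$ from $\widehat B^\alpha\widetilde u$, gives that $u(s)=\sum_{e\in\mathcal E}\bigl(v_{e,0}(s)+\mv s_e^\transp(s)\,\pE(B^\alpha\widetilde u_e\mid K_\alpha\widehat B^\alpha\widetilde u=0)\bigr)$, which is exactly the conditional law of $\widetilde u$ given $K_\alpha\widehat B^\alpha\widetilde u=0$ by the representation \eqref{eq:bridge_repr_indep_bdless}. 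The main obstacle will be verifying carefully that the row space of $K_\alpha$ (the Kirchhoff constraints, with the correct set of derivative orders $k<\lceil\alpha-\nicefrac12\rceil$) is precisely the annihilator of the Gaussian space $H_{\mathcal G}$ inside $\operatorname{span}\{\widehat B^\alpha\widetilde u\}$ — i.e. neither too many nor too few constraints — which requires matching the dimension count ($k=\tfrac\alpha2\sum\deg(v)$ for even $\alpha$, $\lceil\tfrac\alpha2\rceil\sum(\deg(v)-1)$ for odd $\alpha$) against $\dim\operatorname{span}\{\widehat B^\alpha\widetilde u\}-\dim H_{\mathcal G}=2\alpha|\mathcal E|-\dim\dot H^\alpha|_{\text{boundary}}$, and invoking the characterization of $\dot H^\alpha$ in Proposition~\ref{prp:charHdot} together with Theorem~\ref{thm:regularity}.\ref{thm:regularity:item:Kirchhoff} to confirm that these are the only vertex conditions.
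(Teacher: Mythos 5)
Your proposal reaches the right conclusion, but it takes a genuinely different and noticeably heavier route than the paper, and it rests on a reading of $K_{\alpha,e}$ that differs from the paper's. In the paper, the matrices $K_{\alpha,e}$ appearing in \eqref{eq:decomp_indepWM_kirk_alpha} are \emph{defined} as $K_{\alpha,e} = P_{\alpha,e} - T_{\alpha,e}$, i.e.\ as the residual of the conditional expectation in Proposition~\ref{prp:CondRepr_WM_alpha}; with that definition the proof is almost immediate: one writes $\widetilde{u}(s) = u(s) + \sum_{e}\mv{s}_e^\transp(s)K_{\alpha,e}\widehat{B}^\alpha\widetilde{u}$, observes that $u(s)\in H_u$ while each $K_{\alpha,e}\widehat{B}^\alpha\widetilde{u}\in H_u^{\perp}$ (so the two terms are independent by Gaussianity), and invokes \citet[Remark~9.10]{janson_gaussian} to conclude that conditioning on $K_\alpha\widehat{B}^\alpha\widetilde{u}=0$ projects $\widetilde{u}(s)$ exactly onto $u(s)$ --- no annihilator or dimension-count argument is needed, because $\widetilde{u}-u$ is by construction a linear image of $K_\alpha\widehat{B}^\alpha\widetilde{u}$. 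You instead interpret $K_\alpha$ as the explicit Kirchhoff constraint matrix and therefore must prove that $\operatorname{span}\{K_\alpha\widehat{B}^\alpha\widetilde{u}\}$ is the full $L_2(\Omega)$-orthogonal complement of $H_{\mathcal G}$ inside $\operatorname{span}\{\widehat{B}^\alpha\widetilde{u}\}$: the inclusion follows from the reproducing property (each Kirchhoff functional annihilates $\dot{H}^\alpha$ via Proposition~\ref{prp:charHdot}), but the equality requires the dimension count you flag as ``the main obstacle'' and do not carry out, including independence of the constraints (e.g.\ the vacuity of continuity constraints at degree-one vertices, which is why the count involves $\deg(v)-1$ for odd $\alpha$). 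That missing step is genuine work, but it is exactly the content the paper defers to the proof of Theorem~\ref{thm:representation} (the equivalence of $K_\alpha\widehat{B}^\alpha\widetilde{u}=0$ with \eqref{eq:Matrix_cond_kirk_alpha}); so your plan, if completed, would prove the proposition and a piece of Theorem~\ref{thm:representation} simultaneously, at the cost of a longer argument, whereas the paper's ordering lets the proposition be established with the abstract $K_{\alpha,e}$ essentially for free.
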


\begin{proof}
The representation obtained in \eqref{eq:repr_gWM_alpha} shows that , for $s\in\Gamma$,
\begin{align*}
	\widetilde{u}(s) &= \pE(\widetilde{u}(s)| \sigma(H_u)) + \left(\widetilde{u}(s) -\pE(\widetilde{u}(s)| \sigma(H_u))\right)
	= u(s) + (\widetilde{u}(s) - u(s)).
\end{align*}
This decomposition proves that $u(s)$ and $\widetilde{u}(s) - u(s)$ are orthogonal for every $s\in\Gamma$, and their Gaussianity implies that they are independent. 
To simplify the notation in the bridge representation \eqref{eq:bridge_repr_indep_bdless}, we let $P_{\alpha,e}$ be the matrix satisfying $P_{\alpha,e} \widehat{B}^\alpha\widetilde{u} = B^\alpha\widetilde{u}_e$. Therefore, \eqref{eq:bridge_repr_indep_bdless} can be written as
$$
	\widetilde{u}(s) = \sum_{e\in\mathcal{E}} v_{e,0}(s) + \mv{s}_{e}^\transp(s) P_{\alpha,e} \widehat{B}^\alpha \widetilde{u}, \quad s\in\Gamma.
$$
Combining this expression with Proposition \ref{prp:CondRepr_WM_alpha} shows that 
$
\widetilde{u}(s) - u(s) = \sum_{e\in\mathcal{E}}   \mv{s}_{e}^\transp(s) K_{\alpha,e} \widehat{B}^\alpha \widetilde{u},
$
for every $s\in\Gamma$, where $K_{\alpha,e} = P_{\alpha,e} - T_{\alpha,e}$. That is, 
\begin{equation}\label{eq:decomp_indepWM_kirk_alpha}
	\widetilde{u}(s) = u(s) + \sum_{e\in\mathcal{E}}   \mv{s}_{e}^\transp(s) K_{\alpha,e} \widehat{B}^\alpha \widetilde{u},\quad s\in\Gamma.
\end{equation}
Here, $u(s)$ and $\{K_{\alpha,e} \widehat{B}^\alpha \widetilde{u}:e\in\mathcal{E}\}$ are independent for every $s\in\Gamma$. 
This independence, the fact that $u(s)$ is a centered Gaussian field, and \eqref{eq:decomp_indepWM_kirk_alpha}, shows that, for every $s\in\Gamma$,
\begin{align*}
	\pE(\widetilde{u}(s) | \sigma(K_{\alpha,e} \widehat{B}^\alpha \widetilde{u}:e\in\mathcal{E})) &= \pE\left(u(s) + K_{\alpha,e} \widehat{B}^\alpha \widetilde{u} \Big| \sigma(K_{\alpha,e} \widehat{B}^\alpha \widetilde{u}:e\in\mathcal{E})\right) \\
	&= \pE\left(u(s) \Big| \sigma(K_{\alpha,e} \widehat{B}^\alpha \widetilde{u}:e\in\mathcal{E})\right) + \sum_{e\in\mathcal{E}}   \mv{s}_{e}^\transp(s) K_{\alpha,e} \widehat{B}^\alpha \widetilde{u}\\
	&= \pE(u(s)) + \sum_{e\in\mathcal{E}}   \mv{s}_{e}^\transp(s) K_{\alpha,e} \widehat{B}^\alpha \widetilde{u}= \sum_{e\in\mathcal{E}}   \mv{s}_{e}^\transp(s) K_{\alpha,e} \widehat{B}^\alpha \widetilde{u}.
\end{align*}
We now follow the steps in \citet[Section 6]{BSW_Markov}. 
That is, we first define $H_{K\widehat{B}\widetilde{u}}$ as the Gaussian space spanned by $\{K_{\alpha,e} \widehat{B}^\alpha \widetilde{u}:e\in\mathcal{E}\}$. 
Second, we let $\Pi_B$ denote the $L_2(\Omega)$-orthogonal projection from $H_{\widetilde{u}}$ onto $H_{K\widehat{B}\widetilde{u}}$, 
and similarly, let $\Pi_B^\perp$ be the $L_2(\Omega)$-orthogonal projection onto the orthogonal complement of $H_{K\widehat{B}\widetilde{u}}$ on $H_{\widetilde{u}}$. Then, $\Pi_B^\perp (\widetilde{u}(s)) = u(s)$ and
$\Pi_B(\widetilde{u}(s)) = \sum_{e\in\mathcal{E}}   \mv{s}_{e}^\transp(s) K_{\alpha,e} \widehat{B}^\alpha \widetilde{u}$. 
The fact that $\Pi_B^\perp (\widetilde{u}(s)) = u(s)$ allows using the result in \citet[Remark 9.10]{janson_gaussian} to obtain that $u(s)$ is the field obtained by conditioning $\widetilde{u}(s)$ on $K_{\alpha,e}\widehat{B}^\alpha\widetilde{u} = 0, e\in\mathcal{E}$. Finally, we can now unify the conditional set by letting $K_\alpha$ denote the vertical concatenation of $K_{\alpha,e}, e\in\mathcal{E}$, so that the system $\{K_{\alpha,e}\widehat{B}^\alpha\widetilde{u} = 0, e\in\mathcal{E}\}$ is equivalent to the linear equation $K_\alpha\widehat{B}^\alpha\widetilde{u} = 0$. 
\end{proof}

We are now in a position to prove Theorem \ref{thm:representation}:

\begin{proof}[Proof of Theorem \ref{thm:representation}]
Proposition \ref{prp:Wm_prp_kirch_cond_alpha} shows that the solution to \eqref{eq:Matern_spde} can be obtained by conditioning $\widetilde{u}(s)$ on $K_\alpha\widehat{B}^\alpha\widetilde{u} = 0$ for any $s\in\Gamma$. 
Because $\widehat{B}^\alpha\widetilde{u}$ consists solely of values of the field $\widetilde{u}$ at the vertices of the metric graph, 
we can conclude that $K\widehat{B}\widetilde{u} = 0$ enforces continuity of the even order weak derivatives (in the $L_2(\Omega)$ sense) of conditioned field, 
 $u$, at the vertices, 
 and enforces the Kirchhoff condition $\sum_{e\in \mathcal{E}_v} \partial_e^{(m)} u(v) = 0$, where $m$ is a positive odd integer number less than or equal to $\alpha-1$. 
 To make this more precise, observe that \eqref{eq:decomp_indepWM_kirk_alpha} allows us to conclude that $K_\alpha\widehat{B}^\alpha\widetilde{u} = 0$ if and only if
\begin{equation}\label{eq:Matrix_cond_kirk_alpha}
\begin{cases}
	\widetilde{u}^{(2k)}_e(v) = \widetilde{u}^{(2k)}_{e'}(v), &k = 0, \ldots, \floor{(\alpha-1)/2},\quad v\in e\cap e',\quad e,e'\in\mathcal{E},\\
	\sum_{e\in\mathcal{E}_v} \partial_e^{(2k-1)} u(v) = 0, &k = 1,\ldots,\floor{\alpha/2}, \quad v\in\mathcal{V}, \hbox{if $\alpha>1$}.
\end{cases}
\end{equation}
As in \citet[Section 6]{BSW_Markov}, we have that, if $\widetilde{K}_\alpha$ is another matrix that enforces the conditions in  \eqref{eq:Matrix_cond_kirk_alpha} (i.e., if  $\widetilde{K}_\alpha$ is another matrix such that  \eqref{eq:Matrix_cond_kirk_alpha} holds if, and only if ${\widetilde{K}_\alpha\widehat{B}^\alpha \widetilde{u} = 0}$), then $\sigma(K_\alpha\widehat{B}^\alpha\widetilde{u}) = \sigma(\widetilde{K}_\alpha\widehat{B}^\alpha\widetilde{u})$.
The result follows from a direct combination of the above with Proposition \ref{prp:bdless_thm_cov_123}.
\end{proof}

\section{The CM boundaryless Whittle--Mat\'ern process}\label{app:bdlessaux}
We have the following 
representation for CM boundaryless Whittle--Mat\'ern processes.

\begin{Proposition}\label{prp:EdgeReprBdlessProcInterval}
Let, $u$ be a CM boundaryless Whittle--Mat\'ern process on $[0,\ell]$, where $0<\ell<\infty$, with characteristics $\alpha\in\mathbb{N}$, $\kappa >0$, and $\tau>0$. Then, we have the representation:
$$
u(t) = v_{\alpha,0}(x) + \sum_{j=1}^{2\alpha} s_j(x)  \left(B^{\alpha}u\right)_j = v_{\alpha,0}(x) + \mv{s}^\transp(t) B^\alpha u,\quad x\in [0,\ell_e],
$$
where $\mv{s}(\cdot) = (s_1(\cdot),\ldots,s_{2\alpha}(\cdot))$ is given by \eqref{eq:edge_repr_Solution} and
$v_{\alpha,0}(\cdot)$ is the Whittle--Mat\'ern bridge process which is independent of $B^\alpha u$.
\end{Proposition}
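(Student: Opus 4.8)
The idea is to condition the CM boundaryless process $u$ on its boundary jet $B^\alpha u$, paralleling the stationary Mat\'ern case in Theorem~\ref{thm:EdgeReprStationary}, but replacing the explicit Gaussian conditioning computation (unavailable here, since the covariance of $u$ is not $\varrho_M$) by a variational characterization of the conditional mean. First I would check that $u$ is weakly $L_2(\Omega)$-differentiable of order $\alpha-1$ with weakly continuous derivatives, so that $B^\alpha u$ is a well-defined $2\alpha$-dimensional centered Gaussian vector. This follows because the Cameron--Martin space of $u$ is $(H^\alpha([0,\ell]),\langle\cdot,\cdot\rangle_{\alpha,\kappa,\tau})$ (Definition~\ref{def:bdlessWM}), which embeds continuously into $C^{\alpha-1}([0,\ell])$ in one dimension, so the point-evaluation functionals $f\mapsto f^{(k)}(x)$, $x\in\{0,\ell\}$, $0\le k\le\alpha-1$, are bounded and their Riesz representers are jointly linearly independent; the reproducing-kernel description of Gaussian fields recalled at the start of this appendix then yields the differentiability and the nondegeneracy of $\mathrm{Var}(B^\alpha u)$.

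Next I would write $u(t) = \pE(u(t)\mid B^\alpha u) + r(t)$ with $r(t) := u(t) - \pE(u(t)\mid B^\alpha u)$; by joint Gaussianity $r$ is a centered Gaussian process independent of $B^\alpha u$, and $B^\alpha r = \mv{0}$ since the conditional expectation reproduces the conditioning variables. The key step is to identify the conditional mean. By the interpolation characterization of the conditional mean in a reproducing kernel Hilbert space, $t\mapsto\pE(u(t)\mid B^\alpha u = \mv{u}_0)$ is the unique minimizer of $\|f\|_{\alpha,\kappa,\tau}^2$ over $f\in H^\alpha([0,\ell])$ subject to $B^\alpha f = \mv{u}_0$. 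By Pascal's rule the recursion in Definition~\ref{def:bdlessWM} collapses, for every $\alpha\in\mathbb{N}$, to $\langle f,g\rangle_{\alpha,\kappa,\tau} = \tau^2\sum_{m=0}^\alpha\binom{\alpha}{m}\kappa^{2(\alpha-m)}(f^{(m)},g^{(m)})_{L_2(e)}$; testing the associated Euler--Lagrange identity against $C_c^\infty(e)$ and integrating by parts (all boundary terms vanish) shows that this minimizer solves $(\kappa^2-\Delta)^\alpha m = 0$ on $(0,\ell)$ with $B^\alpha m = \mv{u}_0$. By Lemma~\ref{lem:sol_ODE_edge}, the components of $\mv{S}_e=\mv{s}$ given by \eqref{eq:edge_repr_Solution} are $2\alpha$ linearly independent solutions of this equation with $B^\alpha\mv{S}_e = \mv{I}$, so $m(t) = \mv{s}^\transp(t)\mv{u}_0$ is the only such solution; hence $\pE(u(t)\mid B^\alpha u) = \mv{s}^\transp(t) B^\alpha u$ and $u(t) = r(t) + \mv{s}^\transp(t) B^\alpha u$.

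It remains to identify $r$ with the Whittle--Mat\'ern bridge $v_{\alpha,0}$. Since $\pE(u(t)\mid B^\alpha u = \mv{0}) = \mv{0}$ and $r$ is independent of $B^\alpha u$, the process $r$ has the same law as $u\mid\{B^\alpha u = \mv{0}\}$, whose Cameron--Martin space is the subspace $\{f\in H^\alpha([0,\ell]) : B^\alpha f = \mv{0}\} = H_0^\alpha(e)$ carrying the restriction of $\langle\cdot,\cdot\rangle_{\alpha,\kappa,\tau}$; by integration by parts this restriction coincides with $(\cdot,\cdot)_{\alpha,e}$, exactly the identification used in Step~3 of the proof of Lemma~\ref{lem:CM_edge_repr_bridge}. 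By Lemma~\ref{lem:CM_edge_repr_bridge}, $(H_0^\alpha(e),(\cdot,\cdot)_{\alpha,e})$ is precisely the Cameron--Martin space of the Whittle--Mat\'ern bridge process, and a centered Gaussian process is determined by its Cameron--Martin space; therefore $r$ is a Whittle--Mat\'ern bridge process on $e$ independent of $B^\alpha u$, which is the assertion. (Alternatively, one may feed the representation of $u$ just obtained together with the analogous representation of the stationary Mat\'ern process in Theorem~\ref{thm:EdgeReprStationary} into the same Cameron--Martin comparison.)

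\textbf{Main obstacle.} The delicate point is the identification of the conditional mean: in the stationary case \eqref{eq:edge_repr_Solution} is literally the Gaussian regression coefficient of $U_e(t)$ on $B^\alpha U_e$, whereas here one must argue structurally, via the variational/PDE characterization above, that the very same $\mv{s}$ governs the conditional mean of the boundaryless process; this relies on setting up the reproducing-kernel interpolation theory carefully, including the $L_2(\Omega)$-differentiability of $u$ and the nondegeneracy of $B^\alpha u$. Matching the two Cameron--Martin inner products in the last step is routine bookkeeping with the $\kappa$- and $\tau$-dependent constants.
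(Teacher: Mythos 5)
Your proposal is correct, and it reaches the same decomposition the paper uses, but by a more self-contained route. The paper's own proof is essentially a pointer: it says the argument is ``analogous to'' Theorem~9 of \citet{BSW_Markov} with $\dot{H}^\alpha_L(\Gamma)$ replaced by $H^\alpha([0,\ell])$ and the inner product replaced by $\langle\cdot,\cdot\rangle_{\alpha,\kappa,\tau}$, and then invokes Lemmata~\ref{lem:CM_edge_repr_bridge} and \ref{lem:charHdot_edge} to identify the residual as the bridge and Lemma~\ref{lem:sol_ODE_edge} for the explicit $\mv{s}$. You instead work everything out on the interval: you justify that $B^\alpha u$ is a nondegenerate Gaussian vector via the embedding $H^\alpha([0,\ell])\hookrightarrow C^{\alpha-1}([0,\ell])$ and linear independence of the boundary-jet representers (which is needed, and cannot be borrowed from Lemma~\ref{lem:prop_bdlessproc} or Proposition~\ref{prp:bdless_thm_cov_123} without circularity, since those results cite the present proposition); you identify the conditional mean through the kriging/minimum-norm-interpolation correspondence plus an Euler--Lagrange computation (Pascal's rule collapsing the recursive inner product to the binomial form, then integration by parts against $C_c^\infty(e)$ giving $(\kappa^2-\Delta)^\alpha m=0$), after which Lemma~\ref{lem:sol_ODE_edge} pins down $m=\mv{s}^\transp\mv{u}_0$; and you identify the residual's Cameron--Martin space directly as $\{f\in H^\alpha: B^\alpha f=\mv{0}\}=H^\alpha_0(e)$ with the restricted inner product, matching Lemma~\ref{lem:CM_edge_repr_bridge}, rather than routing through Lemma~\ref{lem:charHdot_edge} and the interior Gaussian space $\mathcal{H}_{0,I}(e)$ of \citet{BSW_Markov}. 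What your approach buys is independence from the external reference and an explicit reason why the \emph{same} coefficient functions $\mv{s}$ from the stationary case reappear (both are the unique solution basis of the ODE with $B^\alpha\mv{S}_e=\mv{I}$); what it costs is that you must set up the RKHS conditioning machinery yourself, and you should be explicit about the $\tau^2$ constant when matching $\langle\cdot,\cdot\rangle_{\alpha,\kappa,\tau}\big|_{H_0^\alpha(e)}$ with $(\cdot,\cdot)_{\alpha,e}$ (the paper's own normalization of $(\cdot,\cdot)_{\alpha,e}$ versus $\varrho_M$ is already slightly loose on this point, so flagging it as bookkeeping, as you do, is appropriate).
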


\begin{proof}
The proof is analogous to the proof of \cite[Theorem 9]{BSW_Markov}, where we replace in this proof 
(and in the proofs of the auxiliary results needed for this proof, such as \cite[Theorem 8 and Proposition 9]{BSW_Markov}), 
the space $\dot{H}^\alpha_L(\Gamma)$ with $H^\alpha([0,\ell])$ and the inner product $(\cdot,\cdot)_{\dot{H}^\alpha_L(\Gamma)}$ with $\<\cdot,\cdot\>_{\alpha,\kappa,\tau}$. 
Finally, the identification of $v_{\alpha,0}(\cdot)$ as a Whittle--Mat\'ern bridge process follows from Lemmata~\ref{lem:CM_edge_repr_bridge} and \ref{lem:charHdot_edge}, and the explicit expression for $\mv{s}(\cdot)$ follows from Lemma \ref{lem:sol_ODE_edge}.
\end{proof}

For convenience, we recall the definition of strictly positive-definite matrix-valued functions.

\begin{Definition}\label{def:str_pos_def_mat}
	Let $\mv{f}:X\times X\to M_k(\mathbb{R})$, where $X$ is a non-empty set and $M_k(\mathbb{R})$ is the set of $k\times k$ real-valued matrices. We say that $\mv{f}$ is strictly positive-definite if for every $N\in\mathbb{N}$, $x_1,\ldots,x_N\in X$ and $\mv{c}_1,\ldots,\mv{c}_N \in \mathbb{R}^k$, with at least one $\mv{c}_i\neq 0$, for some $i=1,\ldots,N$, we have
	$\sum_{i,j=1}^N \mv{c}_i^\top \mv{f}(x_i,x_j) \mv{c}_j > 0.$
\end{Definition}

We now prove a theorem that allows us to obtain the covariance function of CM boundaryless Whittle--Mat\'ern processes. 
We state this theorem in a very general format, then apply it to the CM boundaryless Whittle--Mat\'ern processes.

\begin{Theorem}\label{thm:CondDens}
Let $(\Omega, \mathcal{F},\mathbb{P})$ be a complete probability space and let $\mv{u} : \mathbb{R}\times \Omega \rightarrow \mathbb{R}^d$ 
be a stationary Gaussian Markov process of order 1 \cite[Sections 5 and 10]{Pitt1971} with a strictly positive-definite (matrix-valued) covariance function 
$\mv{r}\left(t_1, t_2\right)  = \Cov \left[ \mv{u}\left(t_1\right), \mv{u}\left(t_2\right) \right]$, ${t_1,t_2\in\mathbb{R}}$. 
Then, for each $T>0$, the function $\tilde{\mv{r}}_T(\cdot,\cdot)$ defined in \eqref{eq:covmod} is a strictly positive-definite covariance function on the domain $[0, T]$. Further, the family $\{\tilde{\mv{r}}_T \}_{T\in \mathbb{R}_+}$ is the unique family of covariance functions on $\{[0,T]\}_{T\in \mathbb{R}_+}$ satisfying the following conditions: 
\begin{enumerate}
	\item If $\tilde{ \mv{u} }$ is a centered Gaussian process on $[0,T]$ with covariance $\tilde{\mv{r}}_T\left(\cdot,\cdot\right)$, then, for any ${m\in\mathbb{N}}$, 
	any $\mv{t}\in\mathbb{R}^m,  \mv{t}=(t_1,\ldots,t_m)$, and any $\mv{u}_0, \mv{u}_T\in\mathbb{R}^d$,
	$$
		\tilde{ \mv{u} }(\mv{t})| \{\tilde{ \mv{u}}(0)=\mv{u}_0, \tilde{ \mv{u}}(T) = \mv{u}_T \} \stackrel{d}{=} \mv{u}(\mv{t})| \{\mv{u}(0)=\mv{u}_0,  \mv{u}(T) = \mv{u}_T\},
	$$
	where $\tilde{ \mv{u} }(\mv{t}) = (\tilde{\mv{u}}(t_1), \ldots,\tilde{\mv{u}}(t_m)), \mv{u}(\mv{t})= (\mv{u}(t_1), \ldots,\mv{u}(t_m)).$\label{thm:ConDens1}
	\item $\tilde{\mv{r}}_T\left(0,0\right) = \tilde{\mv{r}}_T\left(T,T\right)$.\label{thm:ConDens2}
	\item Let $T_1+T_2=T$ and $\tilde{\mv{u}}_{T_1}, \tilde{\mv{u}}_{T_2}$, and $\tilde{\mv{u}}_{T}$ be three independent centered Gaussian processes  with covariance functions $\tilde{\mv{r}}_{T_1}\left(\cdot,\cdot\right),\tilde{\mv{r}}_{T_2}\left(\cdot,\cdot\right)$, and $\tilde{\mv{r}}_{T}\left(\cdot,\cdot\right)$, respectively. 
	Define
	\begin{equation}\label{eq:cond3thmconddens}
		\tilde{ \mv{u}}^*(t )  \stackrel{d}{=}  \left[\Big(\mathbb{I}\left(t \leq T_1\right)\tilde{ \mv{u}}_{T_1}(t) +  \mathbb{I}\left(t \geq T_1\right)\tilde{ \mv{u}}_{T_2}(t-T_1)\Big)  | \tilde{ \mv{u}}_{T_1}(T_1) = \tilde{ \mv{u}}_{T_2}(0)\right], 
	\end{equation}
	where $t\in [0,T]$, as the process with the same finite-dimensional distributions as the conditional distribution on the right-hand side of \eqref{eq:cond3thmconddens}  (i.e, the process obtained by joining $\tilde{\mv{u}}_{T_1}(t)$ and $\tilde{\mv{u}}_{T_2}(t)$ on $[0,T]$).
	Then $\tilde{ \mv{u}}^*    \stackrel{d}{=} \tilde{ \mv{u}}_{T}.$\label{thm:ConDens3}
\end{enumerate}
\end{Theorem}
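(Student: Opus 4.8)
\textbf{Proof proposal for Theorem~\ref{thm:CondDens}.}

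The plan is to treat the three assertions in turn, exploiting throughout the Markov-of-order-1 property of $\mv{u}$. For the strict positive-definiteness of $\tilde{\mv{r}}_T$: first I would verify that the matrix
$\begin{bmatrix}\mv{r}(0,0) & -\mv{r}(0,\ell)\\ -\mv{r}(\ell,0) & \mv{r}(0,0)\end{bmatrix}$
appearing in \eqref{eq:covmod} is invertible. By stationarity $\mv{r}(t_1,t_2)=\mv{r}(t_1-t_2)$, and the block matrix is a permutation-conjugate of the $2\times2$ block covariance of $(\mv{u}(0),\mv{u}(\ell))$ up to a sign flip; its invertibility reduces, via the Schur complement, to invertibility of $\mv{r}(0,0) - \mv{r}(0,\ell)\mv{r}(0,0)^{-1}\mv{r}(\ell,0)$, which is precisely the conditional covariance of $\mv{u}(\ell)$ given $\mv{u}(0)$, and this is positive-definite since $\mv{r}$ is strictly positive-definite. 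Then I would identify $\tilde{\mv{r}}_T$ probabilistically: the Schur-complement form of \eqref{eq:covmod} shows $\tilde{\mv{r}}_T(t_1,t_2)$ is the covariance of the process obtained from $\mv{u}$ on $[0,T]$ by conditioning on the \emph{periodic} constraint $\mv{u}(0)=\mv{u}(T)$ --- i.e. a ``$\mv{u}$-bridge on the circle of circumference $T$''. Once $\tilde{\mv{u}}$ is realized as such a conditioned Gaussian process, strict positive-definiteness of $\tilde{\mv{r}}_T$ on $[0,T]$ follows from strict positive-definiteness of $\mv{r}$ together with the non-degeneracy of the conditioning (the constraint $\mv{u}(0)=\mv{u}(T)$ is a proper affine subspace, so conditional covariances of any finite collection of interior values remain positive-definite).

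Next, for property~\ref{thm:ConDens1}: conditioning $\tilde{\mv{u}}$ on $\{\tilde{\mv{u}}(0)=\mv{u}_0,\tilde{\mv{u}}(T)=\mv{u}_T\}$ means, in the circle realization, conditioning $\mv{u}$ on $\{\mv{u}(0)=\mv{u}(T)\}\cap\{\mv{u}(0)=\mv{u}_0\}\cap\{\mv{u}(T)=\mv{u}_T\}$, which (when $\mv{u}_0=\mv{u}_T$) collapses to $\{\mv{u}(0)=\mv{u}_0,\mv{u}(T)=\mv{u}_T\}$ and (when $\mv{u}_0\neq\mv{u}_T$) is a null event handled by continuity/limiting argument; either way the conditional law of the interior values agrees with $\mv{u}(\mv{t})\mid\{\mv{u}(0)=\mv{u}_0,\mv{u}(T)=\mv{u}_T\}$. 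Property~\ref{thm:ConDens2} is immediate from stationarity applied to the circle bridge: by the rotational symmetry of the constraint $\mv{u}(0)=\mv{u}(T)$ the law is invariant under cyclic shift, so $\tilde{\mv{r}}_T(0,0)=\tilde{\mv{r}}_T(T,T)$ --- alternatively one reads it directly off \eqref{eq:covmod} since $\mv{r}(0,0)=\mv{r}(T,T)$ and the correction term is symmetric in the two endpoints. For property~\ref{thm:ConDens3}, the consistency-under-concatenation, I would use the Markov property of order 1: joining two independent circle bridges of circumferences $T_1$ and $T_2$ by identifying $\tilde{\mv{u}}_{T_1}(T_1)=\tilde{\mv{u}}_{T_2}(0)$ and tracing the resulting conditional law gives a process on $[0,T]$ whose finite-dimensional distributions, conditionally on the shared value at $T_1$ \emph{and} on the periodic identification $\tilde{\mv{u}}^*(0)=\tilde{\mv{u}}^*(T)$, factor into the two pieces exactly as the Markov property dictates for $\mv{u}$; matching this against the circle bridge of circumference $T$ (which also splits at the intermediate point $T_1$ by the Markov property) yields $\tilde{\mv{u}}^*\stackrel{d}{=}\tilde{\mv{u}}_T$.

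Finally, uniqueness: suppose $\{\mv{s}_T\}_{T>0}$ is another family of covariance functions satisfying \ref{thm:ConDens1}--\ref{thm:ConDens3}. Property~\ref{thm:ConDens1} pins down all conditional distributions given the two endpoint values, so $\mv{s}_T$ is determined by $\mv{r}$ together with the marginal law of $(\tilde{\mv{u}}(0),\tilde{\mv{u}}(T))$ under $\mv{s}_T$; write $\mv{Q}_T$ for that (unknown) $2d\times 2d$ covariance. Property~\ref{thm:ConDens2} forces the two diagonal $d\times d$ blocks of $\mv{Q}_T$ to be equal, say both $\mv{A}_T$. Property~\ref{thm:ConDens3}, applied with $T_1=T_2=T/2$, gives a functional equation relating $\mv{Q}_T$ to $\mv{Q}_{T/2}$ (the endpoint law of the concatenation is computed from two independent copies of the $T/2$-law glued at the midpoint and is required to equal $\mv{Q}_T$); iterating $T\mapsto T/2$ and combining with the stationarity/Markov structure of $\mv{r}$ forces $\mv{A}_T = \mv{r}(0,0)$ and the off-diagonal block to coincide with the value prescribed by \eqref{eq:covmod}, so $\mv{s}_T=\tilde{\mv{r}}_T$. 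The main obstacle I anticipate is this last step --- showing the functional/self-consistency equation from \ref{thm:ConDens3} has $\tilde{\mv{r}}_T$ as its \emph{only} solution; this is where one really has to use that $\mv{u}$ is Markov of order exactly 1 (so that the only free parameter in the endpoint law is its covariance, and gluing behaves multiplicatively on the relevant conditional densities) and that $\mv{r}$ is strictly positive-definite (so all the conditional covariances inverted along the way are genuinely invertible).
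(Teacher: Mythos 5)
There is a genuine error at the heart of your argument: you identify $\tilde{\mv{r}}_T$ as the covariance of $\mv{u}$ conditioned on the periodic constraint $\mv{u}(0)=\mv{u}(T)$ (a ``circle bridge''). This cannot be right. Conditioning is an orthogonal projection in the Gaussian space and can only decrease covariance in the Loewner order; it would produce a correction of the form $-\bigl(\mv{r}(t_1,0)-\mv{r}(t_1,T)\bigr)\bigl(2\mv{r}(0,0)-\mv{r}(0,T)-\mv{r}(T,0)\bigr)^{-1}\bigl(\mv{r}(0,t_2)-\mv{r}(T,t_2)\bigr)$, with a $d\times d$ inverse and a minus sign. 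By contrast, \eqref{eq:covmod} carries a plus sign and a $2d\times 2d$ middle matrix with sign-flipped off-diagonal blocks; in particular $\tilde{\mv{r}}_T(0,0)\succ\mv{r}(0,0)$, so the boundaryless process has \emph{larger} endpoint variance than the stationary one. Moreover, your circle bridge satisfies $\tilde{\mv{u}}(0)=\tilde{\mv{u}}(T)$ a.s., so its endpoint law is degenerate and property (i) --- which requires a well-defined conditional law matching that of $\mv{u}$ for \emph{arbitrary} $\mv{u}_0\neq\mv{u}_T$ --- fails for it. This misidentification propagates to your treatment of (i), (iii) and strict positive-definiteness, and your uniqueness sketch lands on the wrong endpoint law: $\mv{A}_T=\mv{r}(0,0)$ is false.

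The correct mechanism, which the paper uses, is the opposite of conditioning: one keeps the conditional law of the interior given the endpoints (this is exactly what property (i) together with the order-1 Markov property pins down; via Corollary \ref{cor:precmatstructure}, the precision blocks $\widetilde{\mv{Q}}_{AA}$ and $\widetilde{\mv{Q}}_{AB}$ must equal those of $\mv{u}$) and \emph{replaces} the endpoint marginal by a new, enlarged law $\mv{H}^{0T}$ --- the ``adjusting the $c$-marginal'' operation of Lemma \ref{lem:conditional}. Properties (ii) and (iii) then determine $\mv{H}^{0T}$: the concatenation identity forces the precision correction $\mv{C}^{0T}=(\mv{H}^{0T})^{-1}-\mv{Q}^{0T}$ to be block-diagonal with $T$-independent diagonal blocks summing to $-\mv{r}(0,0)^{-1}$, and (ii) splits this evenly, yielding $\mv{C}^{0T}=-\frac{1}{2}\diag\bigl(\mv{r}(0,0)^{-1},\mv{r}(0,0)^{-1}\bigr)$. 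Your uniqueness outline (conditional law fixed by (i), so the only free parameter is the endpoint law; functional equation from (iii)) is structurally the right idea and close to the paper's, but to repair the proof you must abandon the bridge picture entirely and carry out the computation at the level of precision matrices as in Lemma \ref{lem:conditional} and Corollary \ref{cor:conditional}. Your Schur-complement verification that the middle matrix in \eqref{eq:covmod} is invertible is the one piece that survives essentially intact.
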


To prove the theorem, we require the following lemma and corollary. 
The result in the lemma is known in the literature as adjusting the $c$-marginal  \cite[p.~134]{lauritzen1996graphical}. 
Due to its importance, we state it in a slightly more general form than in \cite{lauritzen1996graphical} and prove it.

\begin{Lemma}\label{lem:conditional}
Assume that 
\begin{align*}
	\mv{X}= \begin{bmatrix}
	\mv{X}_A \\
	\mv{X}_B
	\end{bmatrix}
	\sim \pN\left(\mv{0},\mv{\Sigma}\right), \quad 
	\text{with} \quad 
	\mv{\Sigma} = \begin{bmatrix}
	\mv{\Sigma}_{AA} & \mv{\Sigma}_{AB}\\
	\mv{\Sigma}_{BA} & \mv{\Sigma}_{BB}
	\end{bmatrix},
\end{align*}
where $\mv{X}_A\in\mathbb{R}^{n_A}$ and $\mv{X}_B\in\mathbb{R}^{n_B}$.
Let $\mv{Q} = \mv{\Sigma}^{-1}$ (with the corresponding block structure) and fix a symmetric and nonnegative definite $n_B\times n_B$ matrix $\mv{H}$. 
Then, if $\mv{X}_{B}^* \sim \pN\left(0,\mv{H} \right)$ and $\mv{X}^*_A = \mv{X}_{A|B}  + \mv{\Sigma}_{AB}\mv{\Sigma}^{-1}_{BB} \mv{X}^*_B$,
where 
${\mv{X}_{A|B} \sim \pN \left( \mv{0}, \mv{\Sigma}_{AA} - \mv{\Sigma}_{AB}\mv{\Sigma}^{-1}_{BB} \mv{\Sigma}_{BA}  \right)}$, we have 
$\mv{X}^*= 	[(\mv{X}^*_A)^\top, (\mv{X}^*_B)^\top]^\top \sim \pN\left(\mv{0},\mv{\Sigma}^*\right)$,
where 
\begin{align}\label{eq:inverseprecmatstr}
	\mv{\Sigma}^*= 	
	\begin{bmatrix}
	\mv{\Sigma}_{AA} - \mv{\Sigma}_{AB}\mv{\Sigma}^{-1}_{BB} \mv{\Sigma}_{BA} + \mv{\Sigma}_{AB}\mv{\Sigma}^{-1}_{BB} \mv{H}\mv{\Sigma}^{-1}_{BB}\mv{\Sigma}_{BA}  &\quad  \mv{\Sigma}_{AB}\mv{\Sigma}^{-1}_{BB} \mv{H}\\
	\mv{H}\mv{\Sigma}^{-1}_{BB} \mv{\Sigma}_{BA} &\quad \mv{H}
	\end{bmatrix}
\end{align}
and
\begin{align}\label{eq:PrecLemma}
	\left( \mv{\Sigma}^* \right)^{-1} = \mv{Q}^* = 
	\begin{bmatrix}
	\mv{Q}_{AA} && \mv{Q}_{AB} \\
	\mv{Q}_{BA} && \mv{Q}_{BB} + \mv{H}^{-1} - \mv{\Sigma}^{-1}_{BB} 
	\end{bmatrix} = \begin{bmatrix}
	\mv{Q}_{AA} && \mv{Q}_{AB} \\
	\mv{Q}_{BA} && \mv{H}^{-1}  + \mv{Q}_{BA}\mv{Q}^{-1}_{AA} \mv{Q}_{AB}
	\end{bmatrix} .
\end{align}
	
If $\mv{H}$ is singular where the projection onto the nonnull space is $\mv{P} = \mv{H}^\dagger \mv{H}$, then
\begin{align*}
	\mv{Q}^*  = 
	\begin{bmatrix}
	\mv{Q}_{AA} && \mv{Q}_{AB}\mv{P} \\
	\mv{P}\mv{Q}_{BA} && \mv{H}^{\dagger}  + \mv{P}\left( \mv{Q}_{BB} -  \mv{\Sigma}_{BB}^{-1} \right)\mv{P}
	\end{bmatrix} =
	\begin{bmatrix}
	\mv{Q}_{AA} && \mv{Q}_{AB}\mv{P} \\
	\mv{P}\mv{Q}_{BA} && \mv{H}^{\dagger}  + \mv{P}\left(  \mv{Q}_{BA}\mv{Q}^{-1}_{AA} \mv{Q}_{AB} \right)\mv{P}
	\end{bmatrix},
\end{align*}
where $\mv{H}^\dagger$ denotes the pseudo-inverse of $\mv{H}$.
\end{Lemma}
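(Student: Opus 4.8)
The plan is to prove this lemma by a direct block-matrix calculation, verifying in turn each of the claimed identities. First I would establish the structure of $\mv{X}^*$. By construction, $\mv{X}_{A|B}$ is independent of $\mv{X}^*_B$, so $\mv{X}^*=[(\mv{X}^*_A)^\top,(\mv{X}^*_B)^\top]^\top$ is jointly Gaussian with mean zero; its covariance blocks follow from $\Cov(\mv{X}^*_B)=\mv{H}$, $\Cov(\mv{X}^*_A,\mv{X}^*_B)=\mv{\Sigma}_{AB}\mv{\Sigma}^{-1}_{BB}\mv{H}$ (using independence of $\mv{X}_{A|B}$ and $\mv{X}^*_B$), and $\Cov(\mv{X}^*_A)=(\mv{\Sigma}_{AA}-\mv{\Sigma}_{AB}\mv{\Sigma}^{-1}_{BB}\mv{\Sigma}_{BA})+\mv{\Sigma}_{AB}\mv{\Sigma}^{-1}_{BB}\mv{H}\mv{\Sigma}^{-1}_{BB}\mv{\Sigma}_{BA}$. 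This gives \eqref{eq:inverseprecmatstr} immediately.

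Next I would verify \eqref{eq:PrecLemma} in the nonsingular case by checking $\mv{\Sigma}^*\mv{Q}^*=\mv{I}$ blockwise, using the Schur-complement identities relating $\mv{Q}$ and $\mv{\Sigma}$, namely $\mv{Q}_{AA}=(\mv{\Sigma}_{AA}-\mv{\Sigma}_{AB}\mv{\Sigma}^{-1}_{BB}\mv{\Sigma}_{BA})^{-1}$, $\mv{Q}_{AB}=-\mv{Q}_{AA}\mv{\Sigma}_{AB}\mv{\Sigma}^{-1}_{BB}$, and $\mv{Q}_{BB}=\mv{\Sigma}^{-1}_{BB}+\mv{\Sigma}^{-1}_{BB}\mv{\Sigma}_{BA}\mv{Q}_{AA}\mv{\Sigma}_{AB}\mv{\Sigma}^{-1}_{BB}$. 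The key observation is that the $(A,A)$ and $(A,B)$ blocks of the precision are unchanged (only the conditional law of $\mv{X}_B$ given $\mv{X}_A$ is being altered), so one only needs to recompute the $(B,B)$ block: replacing the old conditional precision $\mv{\Sigma}_{BB|A}^{-1}$, which equals $\mv{Q}_{BB}$, by $\mv{H}^{-1}$ yields the entry $\mv{Q}_{BB}+\mv{H}^{-1}-\mv{\Sigma}^{-1}_{BB}$; rewriting $\mv{Q}_{BB}-\mv{\Sigma}^{-1}_{BB}=\mv{Q}_{BA}\mv{Q}^{-1}_{AA}\mv{Q}_{AB}$ via the Schur identity gives the second form. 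The singular case then follows by the same bookkeeping, with $\mv{P}=\mv{H}^\dagger\mv{H}$ acting as the projection onto the range of $\mv{H}$: one restricts $\mv{X}^*_B$ to this subspace, applies the nonsingular result there, and pads with $\mv{H}^\dagger$ for the inverse of $\mv{H}$ on its range; the cross blocks pick up factors of $\mv{P}$ because $\mv{\Sigma}_{AB}\mv{\Sigma}^{-1}_{BB}\mv{H}=\mv{\Sigma}_{AB}\mv{\Sigma}^{-1}_{BB}\mv{H}\mv{P}$ and symmetry forces $\mv{P}$ on the left of the $(B,A)$ block as well.

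I expect the main obstacle to be the singular case: one must be careful that the construction $\mv{X}^*_A=\mv{X}_{A|B}+\mv{\Sigma}_{AB}\mv{\Sigma}^{-1}_{BB}\mv{X}^*_B$ still makes sense (it does, since $\mv{X}^*_B$ lives in $\operatorname{range}(\mv{H})$), that $\mv{Q}^*$ is only claimed on the relevant subspace, and that the pseudo-inverse identities $\mv{H}\mv{H}^\dagger\mv{H}=\mv{H}$ and $\mv{P}=\mv{P}^\top=\mv{P}^2$ are invoked correctly when collapsing the padded blocks. The honest way to handle this is to decompose $\mathbb{R}^{n_B}=\operatorname{range}(\mv{H})\oplus\ker(\mv{H})$, observe $\mv{X}^*_B$ has no component in $\ker(\mv{H})$, apply the nonsingular result on $\operatorname{range}(\mv{H})$, and then reassemble; the claimed formula is exactly what this reassembly produces. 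Everything else is routine linear algebra.
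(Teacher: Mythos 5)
Your proposal is correct and follows essentially the same route as the paper: compute $\mv{\Sigma}^*$ directly from the construction (using independence of $\mv{X}_{A|B}$ and $\mv{X}_B^*$) and then recover $\mv{Q}^*$ via the standard Schur-complement identities $\mv{Q}_{AA}=(\mv{\Sigma}_{AA}-\mv{\Sigma}_{AB}\mv{\Sigma}_{BB}^{-1}\mv{\Sigma}_{BA})^{-1}$ and $\mv{\Sigma}_{AB}\mv{\Sigma}_{BB}^{-1}=-\mv{Q}_{AA}^{-1}\mv{Q}_{AB}$, with the singular case handled by the same bookkeeping using $\mv{H}^\dagger$. The only differences are cosmetic --- you check $\mv{\Sigma}^*\mv{Q}^*=\mv{I}$ rather than computing the blocks of $\mv{Q}^*$ one at a time, and you phrase the singular case as a restriction to $\operatorname{range}(\mv{H})$ --- plus one harmless verbal slip: the construction alters the \emph{marginal} of $\mv{X}_B$ while preserving the conditional of $\mv{X}_A$ given $\mv{X}_B$ (not the other way around), though your conclusion that $\mv{Q}_{AA}$ and $\mv{Q}_{AB}$ are unchanged is exactly the point.
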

\begin{proof}
The expression for $\mv{\Sigma}^*$ follows directly by the definitions of $\mv{X}_A^*$ and $\mv{X}_B^*$; thus, 
we must only establish that $\mv{Q}^*$ has the desired expression. By the Schur complement, 
\begin{align*}
	\mv{Q}_{AA}^* &=  \left(\mv{\Sigma}^*_{AA} - \mv{\Sigma}^*_{AB} \left(\mv{\Sigma}^{*}_{BB}\right)^{-1} \mv{\Sigma}^*_{BA}\right)^{-1}= 
	\left(\mv{\Sigma}_{AA} - \mv{\Sigma}_{AB} \left(\mv{\Sigma}_{BB}\right)^{-1} \mv{\Sigma}_{BA}\right)^{-1} = \mv{Q}_{AA}.
\end{align*}
Then, using that $\mv{\Sigma}_{AB}  \mv{\Sigma}^{-1}_{BB} =  -\mv{Q}_{AA} ^{-1} \mv{Q}_{AB}$ (see \cite{rue2005gaussian} pp.21 and 23), we have
\begin{align*}
	\mv{Q}_{BA}^*  &=  -\mv{Q}_{AA} \mv{\Sigma}_{AB}  \mv{\Sigma}^{-1}_{BB} \mv{H}\mv{H}^{-1} =
	-\mv{Q}_{AA} \mv{\Sigma}_{AB}  \mv{\Sigma}^{-1}_{BB} = \mv{Q}_{AB} .
\end{align*}
Again using the Schur complement and that $\mv{\Sigma}_{AB}  \mv{\Sigma}^{-1}_{BB} =  -\mv{Q}_{AA} ^{-1} \mv{Q}_{AB}$, we obtain
\begin{align*}
	\mv{Q}^*_{BB}  &=  \mv{H}^{-1}  +  \mv{H}^{-1}  \mv{\Sigma}^*_{BA} \mv{Q}^*_{AA}\mv{\Sigma}^*_{AB}  \mv{H}^{-1} 
	=  \mv{H}^{-1}  + \mv{\Sigma}_{BB}^{-1} \mv{\Sigma}_{BA} \mv{Q}_{AA}\mv{\Sigma}_{AB} \mv{\Sigma}_{BB}^{-1}  \\
	&=  \mv{H}^{-1}  + \mv{Q}_{BA}\mv{Q}^{-1}_{AA} \mv{Q}_{AB} 
	=  \mv{H}^{-1}    + \mv{Q}_{BB} -  \mv{\Sigma}_{BB}^{-1}.
\end{align*}
Finally, if $\mv{H}$ is singular, then 
\begin{align*}
	\mv{Q}^*_{BB}  &=   \mv{H}^{\dagger}  +   \mv{H}^{\dagger}  \mv{\Sigma}^*_{BA} \mv{Q}^*_{AA}\mv{\Sigma}^*_{AB}  \mv{H}^{\dagger}  
	=   \mv{H}^{\dagger}   +\mv{P} \mv{\Sigma}_{BB}^{-1} \mv{\Sigma}_{BA} \mv{Q}_{AA}\mv{\Sigma}_{AB} \mv{\Sigma}_{BB}^{-1}\mv{P}  \\
	&=  \mv{H}^{\dagger}    +\mv{P}\left( \mv{Q}_{BB} -  \mv{\Sigma}_{BB}^{-1} \right)\mv{P}.
\end{align*}
\end{proof}

The lemma shows that one can change the conditional distribution of a Gaussian random variable, $\mv{X}_{B}$, without affecting the conditional distribution of $\mv{X}_{A}|\mv{X}_B$. 
The following particular case is needed for the proof of the theorem. 
The result follows directly from Lemma~\ref{lem:conditional} and the Woodbury matrix identity.

\begin{Corollary}\label{cor:conditional}
Assume the setting of Lemma~\ref{lem:conditional} and that $\mv{H}^{-1} = \mv{\Sigma}^{-1}_{BB} + \mv{C}$, where $\mv{C}$ is a symmetric and non-singular matrix. Then, $\mv{Q}^*  = \mv{Q} + \diag(\mv{0}, \mv{C})$ and 
\begin{align*}
	\mv{X}^*= 	\begin{bmatrix}
	\mv{X}^*_A \\
	\mv{X}^*_B
	\end{bmatrix}
	\sim \pN\left(\mv{0},
	\begin{bmatrix}
	\mv{\Sigma}_{AA} - \mv{\Sigma}_{AB}\left(\mv{\Sigma}_{BB} + \mv{C}^{-1} \right)^{-1}\mv{\Sigma}_{BA}  &\qquad  \mv{\Sigma}_{AB}\mv{\Sigma}^{-1}_{BB} \mv{H}\\
	\mv{H}\mv{\Sigma}^{-1}_{BB}\mv{\Sigma}_{BA}  &\qquad \mv{H}
	\end{bmatrix}
	\right).
\end{align*}
\end{Corollary}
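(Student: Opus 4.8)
The plan is to obtain the statement as an immediate specialization of Lemma~\ref{lem:conditional}: since the hypotheses of that lemma are assumed to hold, $\mv{X}^* = [(\mv{X}_A^*)^\top, (\mv{X}_B^*)^\top]^\top$ is already known to be centered Gaussian with precision matrix $\mv{Q}^*$ given by \eqref{eq:PrecLemma} and covariance matrix $\mv{\Sigma}^*$ given by \eqref{eq:inverseprecmatstr}. It therefore only remains to substitute the particular choice $\mv{H}^{-1} = \mv{\Sigma}_{BB}^{-1} + \mv{C}$ into those two formulas and simplify, the simplification being exactly an instance of the Woodbury matrix identity.

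First I would treat the precision matrix. Plugging $\mv{H}^{-1} = \mv{\Sigma}_{BB}^{-1} + \mv{C}$ into the lower-right block of the first expression for $\mv{Q}^*$ in \eqref{eq:PrecLemma} gives $\mv{Q}_{BB} + \mv{H}^{-1} - \mv{\Sigma}_{BB}^{-1} = \mv{Q}_{BB} + \mv{C}$, while the $AA$, $AB$ and $BA$ blocks are unchanged; hence $\mv{Q}^* = \mv{Q} + \diag(\mv{0}, \mv{C})$, which is the first assertion. Next I would simplify the covariance. From \eqref{eq:inverseprecmatstr} the $AB$, $BA$ and $BB$ blocks of $\mv{\Sigma}^*$ are $\mv{\Sigma}_{AB}\mv{\Sigma}_{BB}^{-1}\mv{H}$, $\mv{H}\mv{\Sigma}_{BB}^{-1}\mv{\Sigma}_{BA}$ and $\mv{H}$, which already coincide with the claimed matrix. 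For the $AA$ block it suffices to prove
\[
-\mv{\Sigma}_{BB}^{-1} + \mv{\Sigma}_{BB}^{-1}\mv{H}\mv{\Sigma}_{BB}^{-1} = -(\mv{\Sigma}_{BB} + \mv{C}^{-1})^{-1},
\]
because sandwiching this identity between $\mv{\Sigma}_{AB}$ and $\mv{\Sigma}_{BA}$ turns the $AA$ block of \eqref{eq:inverseprecmatstr}, namely $\mv{\Sigma}_{AA} - \mv{\Sigma}_{AB}\mv{\Sigma}_{BB}^{-1}\mv{\Sigma}_{BA} + \mv{\Sigma}_{AB}\mv{\Sigma}_{BB}^{-1}\mv{H}\mv{\Sigma}_{BB}^{-1}\mv{\Sigma}_{BA}$, into $\mv{\Sigma}_{AA} - \mv{\Sigma}_{AB}(\mv{\Sigma}_{BB} + \mv{C}^{-1})^{-1}\mv{\Sigma}_{BA}$. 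Using $\mv{H} = (\mv{\Sigma}_{BB}^{-1} + \mv{C})^{-1}$, the Woodbury identity applied to $\mv{\Sigma}_{BB} + \mv{C}^{-1}$ (with $\mv{\Sigma}_{BB}$ as the base term and $\mv{C}^{-1}$ as the additive correction) gives $(\mv{\Sigma}_{BB} + \mv{C}^{-1})^{-1} = \mv{\Sigma}_{BB}^{-1} - \mv{\Sigma}_{BB}^{-1}(\mv{C} + \mv{\Sigma}_{BB}^{-1})^{-1}\mv{\Sigma}_{BB}^{-1} = \mv{\Sigma}_{BB}^{-1} - \mv{\Sigma}_{BB}^{-1}\mv{H}\mv{\Sigma}_{BB}^{-1}$, which is the displayed identity; equivalently one may verify it by right-multiplying the candidate $\mv{\Sigma}_{BB}^{-1} - \mv{\Sigma}_{BB}^{-1}\mv{H}\mv{\Sigma}_{BB}^{-1}$ by $\mv{\Sigma}_{BB} + \mv{C}^{-1}$ and using $I + \mv{\Sigma}_{BB}^{-1}\mv{C}^{-1} = \mv{H}^{-1}\mv{C}^{-1}$ to collapse the product to the identity.

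There is no real obstacle here: the argument is pure linear algebra once Lemma~\ref{lem:conditional} is in hand. The only points needing a little care are the bookkeeping of the block structure (so that the off-diagonal blocks of \eqref{eq:inverseprecmatstr} are transcribed without error) and checking that every inverse invoked in the Woodbury step — $\mv{\Sigma}_{BB}$, $\mv{C}$, $\mv{H} = (\mv{\Sigma}_{BB}^{-1} + \mv{C})^{-1}$, and $\mv{\Sigma}_{BB} + \mv{C}^{-1}$ — actually exists, which is guaranteed by the positive-definiteness of $\mv{\Sigma}$, the non-singularity of $\mv{C}$, and the fact that $\mv{H}$ is well-defined as part of the standing hypotheses of Lemma~\ref{lem:conditional}.
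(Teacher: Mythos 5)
Your proposal is correct and follows exactly the route the paper takes: the paper's entire justification is that the corollary "follows directly from Lemma~\ref{lem:conditional} and the Woodbury matrix identity," and your substitution of $\mv{H}^{-1} = \mv{\Sigma}_{BB}^{-1} + \mv{C}$ into \eqref{eq:PrecLemma} and \eqref{eq:inverseprecmatstr} together with the Woodbury step $(\mv{\Sigma}_{BB} + \mv{C}^{-1})^{-1} = \mv{\Sigma}_{BB}^{-1} - \mv{\Sigma}_{BB}^{-1}\mv{H}\mv{\Sigma}_{BB}^{-1}$ is precisely that argument, carried out in full detail.
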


The proof of Lemma \ref{lem:conditional} also provides the following corollary about precision matrices with certain structures:

\begin{Corollary}\label{cor:precmatstructure}
Using the notation from Lemma \ref{lem:conditional}, let $\mv{Q}^*$ be a symmetric and invertible matrix of the form 
\begin{equation}\label{eq:precmatstructS}
	\mv{Q}^*  = \begin{bmatrix}
	\mv{Q}_{AA} &\quad \mv{Q}_{AB} \\
	\mv{Q}_{BA} &\quad \mv{S}  + \mv{Q}_{BA}\mv{Q}^{-1}_{AA} \mv{Q}_{AB}
	\end{bmatrix} ,
\end{equation}
where $\mv{S}$ is symmetric and strictly positive-definite. Then, the inverse $\mv{\Sigma}^* = (\mv{Q}^*)^{-1}$ is given by
\begin{align}\label{eq:inverseprecmatstrS}
	\mv{\Sigma}^*= 	
	\begin{bmatrix}
	\mv{\Sigma}_{AA} - \mv{\Sigma}_{AB}\mv{\Sigma}^{-1}_{BB} \mv{\Sigma}_{BA}   +  \mv{\Sigma}_{AB}\mv{\Sigma}^{-1}_{BB} \mv{S}^{-1}\mv{\Sigma}^{-1}_{BB}\mv{\Sigma}_{BA}  &\qquad  \mv{\Sigma}_{AB}\mv{\Sigma}^{-1}_{BB} \mv{S}^{-1}\\
	\mv{S}^{-1}\mv{\Sigma}^{-1}_{BB}\mv{\Sigma}_{BA}  & \qquad\mv{S}^{-1}
	\end{bmatrix}
\end{align}
\end{Corollary}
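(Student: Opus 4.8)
The plan is to observe that the matrix $\mv{Q}^*$ in \eqref{eq:precmatstructS} is nothing but the precision matrix whose inverse was already computed inside the proof of Lemma~\ref{lem:conditional} (see \eqref{eq:PrecLemma} and \eqref{eq:inverseprecmatstr}), for the particular choice $\mv{H}=\mv{S}^{-1}$. So instead of a fresh calculation I would reduce Corollary~\ref{cor:precmatstructure} to Lemma~\ref{lem:conditional} by matching parametrizations, the only genuine content being the identification $\mv{H}=\mv{S}^{-1}$ together with the standard Schur identity that rewrites the lower-right block.

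Concretely, the steps, in order, are as follows. First I would set $\mv{H}:=\mv{S}^{-1}$; since $\mv{S}$ is symmetric and strictly positive-definite, so is $\mv{H}$, hence $\mv{H}$ is symmetric and nonnegative definite and invertible with $\mv{H}^{-1}=\mv{S}$, placing us squarely in the non-singular case of Lemma~\ref{lem:conditional}. Second, I would recall the Schur-complement identity $\mv{Q}_{BB}-\mv{\Sigma}^{-1}_{BB}=\mv{Q}_{BA}\mv{Q}^{-1}_{AA}\mv{Q}_{AB}$ (valid because $\mv{Q}$ and $\mv{Q}_{AA}$ are invertible; see \cite{rue2005gaussian}), which shows that the lower-right block of \eqref{eq:precmatstructS} equals $\mv{S}+\mv{Q}_{BA}\mv{Q}^{-1}_{AA}\mv{Q}_{AB}=\mv{H}^{-1}+\mv{Q}_{BB}-\mv{\Sigma}^{-1}_{BB}=\mv{Q}_{BB}+\mv{H}^{-1}-\mv{\Sigma}^{-1}_{BB}$. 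Hence $\mv{Q}^*$ is exactly the matrix appearing in \eqref{eq:PrecLemma}, and therefore its inverse is given by \eqref{eq:inverseprecmatstr}. Third, I would substitute $\mv{H}=\mv{S}^{-1}$ into \eqref{eq:inverseprecmatstr}, which produces precisely \eqref{eq:inverseprecmatstrS}. (As a sanity check one could instead verify $\mv{Q}^*\mv{\Sigma}^*=\mv{I}$ by direct block multiplication, but that merely re-runs the computation behind Lemma~\ref{lem:conditional}, so I would not include it.)

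There is no real analytic obstacle here; the only point requiring care is bookkeeping. One must keep in mind that $\mv{\Sigma}_{AA},\mv{\Sigma}_{AB},\mv{\Sigma}_{BB}$ in \eqref{eq:inverseprecmatstrS} denote the blocks of $\mv{\Sigma}=\mv{Q}^{-1}$ inherited from Lemma~\ref{lem:conditional}, not of $(\mv{Q}^*)^{-1}$, and that the standing hypotheses of Lemma~\ref{lem:conditional} (namely $\mv{\Sigma}$, hence $\mv{Q}$ and $\mv{Q}_{AA}$, positive-definite) supply exactly the invertibility needed for the Schur identity, while the invertibility of $\mv{Q}^*$ assumed in the statement is automatically consistent with, and in fact forced by, $\mv{S}$ being strictly positive-definite. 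So the proof is short: the whole argument is the remark ``take $\mv{H}=\mv{S}^{-1}$ in Lemma~\ref{lem:conditional}'' plus the one-line Schur rewriting of the $(B,B)$-block.
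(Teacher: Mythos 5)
Your proposal is correct and follows essentially the same route as the paper, which also proves the corollary by comparing \eqref{eq:precmatstructS} and \eqref{eq:inverseprecmatstrS} with \eqref{eq:PrecLemma} and \eqref{eq:inverseprecmatstr} and reading off the identification $\mv{H}^{-1}=\mv{S}$. Your explicit invocation of the Schur identity $\mv{Q}_{BB}-\mv{\Sigma}^{-1}_{BB}=\mv{Q}_{BA}\mv{Q}^{-1}_{AA}\mv{Q}_{AB}$ to match the lower-right block is exactly the bookkeeping the paper leaves implicit.
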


\begin{proof}
The expression for $\mv{\Sigma}^* = (\mv{Q}^*)^{-1}$ is obtained by comparing \eqref{eq:precmatstructS} and \eqref{eq:inverseprecmatstrS} with \eqref{eq:PrecLemma} and \eqref{eq:inverseprecmatstr}, respectively, where we obtain that $\mv{H}^{-1}=\mv{S}$. 
\end{proof}

\begin{proof}[Proof of Theorem~\ref{thm:CondDens}]	
We use the following matrix notation throughout the proof: 
Suppose that $\mv{\Sigma}$ is the covariance matrix of $\mv{U}= (\mv{u}(t_1),\ldots, \mv{u}(t_n))^\top$ for some ${t_1,\ldots, t_n \in \mathbb{R}}$, 
then we let $\mv{\Sigma}^{t_it_j}$ denote the submatrix that is the covariance matrix of $(\mv{u}(t_i),\mv{u}(t_j))^\top$. 
We write $\mv{Q} = \mv{\Sigma}^{-1}$ for the precision matrix of $\mv{U}$ and $\mv{Q}^{t_it_j} := (\mv{\Sigma}^{t_it_j})^{-1}$, 
whereas, for a matrix $\mv{M}$, $\mv{M}_{t_i t_j}$ denotes the submatrix obtained from $\mv{M}$ with respect to the indices $t_i$ and $t_j$.

We derive the result by showing that any covariance function $\tilde{\mv{r}}$ satisfying (i), (ii) and (iii) must be of the form \eqref{eq:covmod}. 
Finally, we show that $\tilde{\mv{r}}_T$ is, indeed, a covariance function.

Fix $t, s \in (0,T)$ and let $\mv{Q}$ be the precision matrix of $\mv{U}=(\mv{u}(t), \mv{u}(s), \mv{u}(0), \mv{u}(T))^\top$ and $\widetilde{\mv{Q}}$ denote the precision matrix of $\tilde{\mv{U}}=(\tilde{\mv{u}}(t), \tilde{\mv{u}}(s), \tilde{\mv{u}}(0),  \tilde{\mv{u}}(T))^\top$. 
In addition, let $A = \{t,s\}$ and $B = \{0,T\}$. 
Then, (i) and the Markov property of $\mv{u}$ imply that 
$\widetilde{\mv{Q}}_{AA} = 	\mv{Q}_{AA} $ and ${\widetilde{\mv{Q}}_{AB} = \mv{Q}_{AB}}$. 
Hence, $\widetilde{\mv{Q}}$ is of the form \eqref{eq:precmatstructS}, so by Corollary \ref{cor:precmatstructure}, 
its inverse is given by \eqref{eq:inverseprecmatstrS}, directly implying that $\tilde{\mv{r}}$ is given by
\begin{align*}
	\tilde{\mv{r}}\left(s,t\right) =&\,  \mv{r}(s,t) - \begin{bmatrix}\mv{r}(s, 0) & \mv{r}(s,T)\end{bmatrix}  \left(\mv{\Sigma}^{0T}\right)^{-1} \begin{bmatrix}\mv{r}(0,t)\\
	\mv{r}(T,t)
	\end{bmatrix}  \\
	& \,\,\,\,+\begin{bmatrix}\mv{r}(s, 0) & \mv{r}(s,T)\end{bmatrix}  \left(\mv{\Sigma}^{0T}\right)^{-1}\mv{H}^{0T} \left(\mv{\Sigma}^{0T}\right)^{-1} \begin{bmatrix}\mv{r}(0,t)\\
	\mv{r}(T,t)
	\end{bmatrix},
\end{align*}
for some strictly positive-definite matrix $\mv{H}^{0T}$.
	
We now show that (ii) and (iii) provide an explicit form for $\mv{H}^{0T}$. 
Fix $T_1\in (0,T)$, let $\widetilde{\mv{u}}^*$ be obtained from \eqref{eq:cond3thmconddens}, 
where $T_2 = T-T_1$, and let $\widetilde{\mv{u}}_3$ be obtained from $\tilde{\mv{r}}_T$. 
We obtain an explicit form for $\mv{H}^{0T}$ by obtaining conditions for equality of the densities of 
$\widetilde{\mv{U}}^*= [\widetilde{\mv{u}}^*\left(0 \right),\widetilde{\mv{u}}^*\left(T_1 \right), \widetilde{\mv{u}}^*\left(T \right)]$ and  
$\widetilde{\mv{U}}_3= [\widetilde{\mv{u}}_3\left(0 \right),\widetilde{\mv{u}}_3\left(T_1 \right), \widetilde{\mv{u}}_3\left(T \right)]$, 
which we denote by $f_{\widetilde{\mv{U}}^*}$ and $f_{\widetilde{\mv{U}}_3}$, respectively.
Let $ \mv{C}^{0T} =  \left(\mv{H}^{0T} \right)^{-1} -\mv{Q} ^{0T}$. Then, by Lemma  \ref{lem:conditional}, we have that
\begin{align*}
	f_{\widetilde{\mv{U}}^*}( \mv{x}) &\propto f_{\widetilde{\mv{u}}_1(0),\widetilde{\mv{u}}_1(T_1)}( \mv{x}_0,\mv{x}_{T_1})  f_{\widetilde{\mv{u}}_1(0),\widetilde{\mv{u}}_1(T_2)}( \mv{x}_{T_1},\mv{x}_{T})\\
	&
	\propto \exp \left(  -0.5
	\begin{bmatrix}
	\mv{x}_0 \\ 
	\mv{x}_{T_1}
	\end{bmatrix}^{\top}  \left( \mv{H}^{0T_1} \right)^{-1}
	\begin{bmatrix}
	\mv{x}_0 \\ 
	\mv{x}_{T_1}
	\end{bmatrix}  
	-0.5 \begin{bmatrix}
	\mv{x}_{T_1} \\ 
	\mv{x}_{T}
	\end{bmatrix}^{\top} \left( \mv{H}^{0T_2} \right)^{-1}
	\begin{bmatrix}
	\mv{x}_{T_1} \\ 
	\mv{x}_{T}
	\end{bmatrix}
	\right) \\
	&= \exp \left( -0.5 \mv{x}^{\top}
	\mv{Q}^* \mv{x}
	\right),
\end{align*}
where
$$
	\mv{Q}^* =
	\begin{bmatrix}
	\mv{Q}^{0T_1}_{00} + \mv{C}^{0T_1}_{00} &\qquad 	\mv{Q}^{0T_1}_{0T_1} + \mv{C}^{0T_1}_{0T_1} &\qquad \mv{0} \\
	\mv{Q}^{0T_1}_{0T_1} + \mv{C}^{0T_1}_{0T_1} & \qquad	\mv{Q}^{0T_1}_{T_1T_1}+ \mv{Q}^{0T_2}_{00} +  \mv{C}^{0T_1}_{00}+ \mv{C}^{0T_2}_{T_2T_2} &\qquad  	\mv{Q}^{0T_2}_{0T_2} + \mv{C}^{0T_2}_{0T_2} \\
	\mv{0} &\qquad \mv{Q}^{0T_2}_{0T_2} + \mv{C}^{0T_2}_{0T_2} &\qquad  \mv{Q}^{0T_2}_{T_2T_2} +\mv{C}^{0T_2}_{T_2T_2}
	\end{bmatrix} .
$$
Again, by Lemma \ref{lem:conditional}, the density of $\widetilde{\mv{U}}_3$ is
$f_{\widetilde{\mv{U}}_3}( \mv{x})  
	= \exp \left( -0.5 \mv{x} ^{\top}
	\widetilde{\mv{Q}}   \mv{x}
	\right),$
where
\begin{equation}\label{eq:qtildeconddens}
	\widetilde{\mv{Q}} =   \mv{Q}^{0T_1T} +
	\begin{bmatrix}
	\mv{C}^{0T}_{00} & \mv{0} &  \mv{C}^{0T}_{0T} \\
	\mv{0} &  \mv{0}  &   \mv{0} \\
	\mv{C}^{0T}_{0T}&   \mv{0}  &   \mv{C}^{0T}_{TT}
	\end{bmatrix},
\end{equation}
and $\mv{Q}^{0T_1T}$ is the precision matrix of $[\mv{u}(0), \mv{u}(T_1),\mv{u}(T)]$.
Now the densities are equal if and only if $\mv{Q}^*  = \widetilde{\mv{Q}}_3 $, which establishes three conditions on $\mv{C}^{0T}$: 
First, $\mv{C}^{0T}_{0T}  = \mv{0}$ for all $T$. 
Second, due to the Markov property of $\mv{u}$, $\mv{Q}^{0T_1T}_{00}= \mv{Q}^{0T_1}_{00} $; hence, $\mv{C}^{0T_1}=\mv{C}^{0T}$ for all $T_1$ and $T$. 
Thus, $\mv{C}^{0T}_{00} =: \mv{C}_0$ is a matrix independent of $T$. 
The same reasoning gives that $\mv{C}^{0T}_{TT} =: \mv{C}_{1}$ is independent of $T$. 
Finally, the Markov property and the stationarity of $\mv{u}$ implies that 
$$
	\widetilde{\mv{Q}}_{T_1T_1} = \mv{Q}^{0T_1T}_{T_1T_1} = \mv{r}(0,0)^{-1}+\left( \mv{Q}^{0T_1}_{0T_1} \right)^{\top} \left( \mv{Q}^{0T_1}_{00}\right)^{-1} \mv{Q}^{0T_1}_{0T_1} + 	\left(\mv{Q}^{0T_2}_{0T_2} \right)^{\top} \left( \mv{Q}^{0T_2}_{T_2T_2}\right)^{-1}  \mv{Q}^{0T_2}_{0T_2},
$$
and by construction,
$$
	\mv{Q}^*_{T_1T_1} =2\mv{r}(0,0)^{-1}+\left( \mv{Q}^{0T_1}_{0T_1} \right)^{\top} \left( \mv{Q}^{0T_1}_{00}\right)^{-1}\!\!\! \mv{Q}^{0T_1}_{0T_1} + 	\left(\mv{Q}^{0T_2}_{0T_2} \right)^{\top} \left( \mv{Q}^{0T_2}_{T_2T_2}\right)^{-1}\!\!\!  \mv{Q}^{0T_2}_{0T_2} + \mv{C}_0+\mv{C}_1.
$$
Hence, $\mv{C}_0 + \mv{C}_1 = - \mv{r}(0,0)^{-1}$.
By combining \eqref{eq:qtildeconddens}, the stationarity of $\mv{u}$, and (ii), we obtain $\mv{C}_0=\mv{C}_1$. 
More precisely, we invert the right-hand side of \eqref{eq:qtildeconddens} and use the stationarity of $\mv{u}$ to conclude that if $\mv{C}_0\neq \mv{C}_1$, 
then $\tilde{\mv{r}}_T(0,0)\neq \tilde{\mv{r}}_T(T,T)$. 
Thus,
$$
	\mv{C}^{0T} = - \frac{1}{2}\begin{bmatrix}
	\mv{r}(0,0)^{-1} & \mv{0} \\
	\mv{0} & \mv{r}(0,0)^{-1} 
	\end{bmatrix}.
$$
The desired expression for the covariance of $\tilde{ \mv{u}}(s)$ on $[0,T]$ is obtained by applying Corollary~\ref{cor:conditional}.
Finally, from the Schur complement, the matrix 
$$\begin{bmatrix}
	\mv{r}(0,0)  & 	-\mv{r}(0,T) \\
	-\mv{r}(T,0)  & 	\mv{r}(0,0)
	\end{bmatrix}$$ 
is strictly positive-definite; hence, $\tilde r$ is a covariance function.			
\end{proof}

\begin{proof}[Proof of Proposition~\ref{prop:multivariate_covariante}]
The result follows directly from Theorem~\ref{thm:CondDens}.
\end{proof}

\begin{proof}[Proof of Proposition~\ref{cor:precfuncconddens}]
From Lemma \ref{lem:conditional} and the proof of Theorem~\ref{thm:CondDens}, it follows that $[\widetilde{\mv{u}}(0), \widetilde{\mv{u}}(\ell)] \sim \pN(0, \mv{H}^{0\ell})$, where $\mv(\mv{H}^{0\ell})^{-1} = \mv{Q}^{0\ell} + \mv{C}^{0\ell}$, $\mv{Q}^{0\ell}$ is the precision matrix of $[{\mv{u}}(0), {\mv{u}}(\ell)]$ and 		
$$
	\mv{C}^{0\ell} = - \frac{1}{2}\begin{bmatrix}
	\mv{r}(0,0)^{-1} & \mv{0} \\
	\mv{0} & \mv{r}(0,0)^{-1} 
	\end{bmatrix},
$$
which proves the result.
\end{proof}

To connect Definitions \ref{def:bdlessWM} and \ref{def:cov_based_bdlessWM}, we require the following lemma.

\begin{Lemma}\label{lem:prop_bdlessproc}
Let $\widetilde{u}_T(\cdot)$ be a CM boundaryless Whittle--Mat\'ern process with parameters $(\kappa,\tau,\alpha)$, with $\alpha\in\mathbb{N}$, on the interval $[0,T]$ and let $\mv{r}(\cdot,\cdot)$ be given by \eqref{eq:R_matrix_edge_repr}. 
Then, the family $\{\widetilde{\mv{r}}_T(\cdot,\cdot): T>0\}$ of multivariate covariance functions of the multivariate processes ${\widetilde{\mv{u}}_T(\cdot) = [\widetilde{u}_T(\cdot), \widetilde{u}_T'(\cdot),\ldots, \widetilde{u}_T^{(\alpha-1)}(\cdot)]}$, satisfies conditions (i), (ii), and (iii) in Theorem \ref{thm:CondDens}, where the derivatives are weak  in the $L_2(\Omega)$ sense. Further, let $\mv{u}$ be a stationary Gaussian random field with covariance function $\mv{r}(\cdot,\cdot)$. Then, $\mv{u}$ is a Gaussian Markov random field of order 1.
\end{Lemma}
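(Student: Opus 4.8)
The statement has two parts: first, that the family of multivariate covariance functions $\{\widetilde{\mv{r}}_T(\cdot,\cdot):T>0\}$ of the differentiated CM boundaryless process satisfies conditions (i), (ii), (iii) of Theorem~\ref{thm:CondDens}; second, that the stationary field $\mv{u}$ with covariance $\mv{r}(\cdot,\cdot)$ is a Gaussian Markov random field of order $1$. I would tackle the second part first, since the first part needs it (Theorem~\ref{thm:CondDens} is stated for Gaussian Markov processes of order $1$, and to even invoke it on the family $\widetilde{\mv{r}}_T$ one must know $\mv{u}$ is of order $1$). For this I would use the known fact that a stationary Gaussian process on $\R$ with a Mat\'ern covariance function with $\nu=\alpha-\nicefrac12$, $\alpha\in\mathbb{N}$, together with its first $\alpha-1$ weak $L_2(\Omega)$ derivatives, forms a vector-valued stationary process whose spectral density is (up to constants) $(\kappa^2+\xi^2)^{-\alpha}$ times a rank-one reshuffling — more precisely, the $\mathbb{R}^\alpha$-valued process $[x(t),x'(t),\dots,x^{(\alpha-1)}(t)]$ is a stationary solution to a first-order linear SDE $d\mv{X} = \mv{A}\mv{X}\,dt + \mv{e}_\alpha\,dW$ (companion-matrix form), hence is Markov of order $1$ in the sense of \cite{Pitt1971}. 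This is the classical state-space representation of the Mat\'ern family; I would cite it and note that $\mv{r}(\cdot,\cdot)$ in \eqref{eq:R_matrix_edge_repr} is exactly its covariance matrix function, as already remarked after Proposition~\ref{prop:multivariate_covariante}. The only subtlety is matching ``weak $L_2(\Omega)$ derivative'' with the classical mean-square derivative: this follows from Theorem~\ref{thm:regularity}.(iii)–(iv) (or directly, since for a Gaussian process the two notions coincide).

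\textbf{Conditions (i)–(iii).} With $\mv{u}$ established to be stationary Gaussian Markov of order $1$ with strictly positive-definite matrix covariance $\mv{r}$ (strict positive-definiteness of $\mv{r}$ is exactly the claim I would prove as part of this lemma, or rather the final sentence of the statement reads as if it also wants this — in any case it follows because the spectral density $(\kappa^2+\xi^2)^{-\alpha}$ is strictly positive, so the process has no deterministic linear dependencies; I would spell this out), Theorem~\ref{thm:CondDens} \emph{constructs} the family $\{\tilde{\mv{r}}_T\}_{T>0}$ via \eqref{eq:covmod} and asserts it is the \emph{unique} family satisfying (i)–(iii). So the logical content here is: the family arising from the CM boundaryless Whittle–Mat\'ern processes $\widetilde{u}_T$ coincides with the family $\{\tilde{\mv{r}}_T\}$ built in Theorem~\ref{thm:CondDens}. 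For that I would verify (i): the conditional law of $\widetilde{\mv{u}}_T(\mv{t})$ given its endpoint values equals the conditional law of the stationary Mat\'ern state-vector $\mv{u}(\mv{t})$ given endpoints. This is essentially Proposition~\ref{prp:EdgeReprBdlessProcInterval}: the CM boundaryless process admits the edge representation $\widetilde{u}_T(t) = v_{\alpha,0}(t) + \mv{s}^\top(t)B^\alpha\widetilde{u}_T$ with $v_{\alpha,0}$ a Whittle–Mat\'ern bridge independent of the boundary data $B^\alpha\widetilde{u}_T$, and by Theorem~\ref{thm:EdgeReprStationary} the stationary Mat\'ern process $U_e$ has \emph{the same} representation with the same bridge covariance $r_{B,\ell_e}$; since the conditional distribution given $B^\alpha(\cdot)=\mv{u}_0$ is governed entirely by $v_{\alpha,0}$ plus the deterministic term $\mv{s}^\top(t)\mv{u}_0$, the two conditionals agree. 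Condition (ii), $\tilde{\mv{r}}_T(0,0)=\tilde{\mv{r}}_T(T,T)$, follows from the symmetry of the Whittle–Mat\'ern bridge under $t\mapsto T-t$ (the bridge pins down both endpoints symmetrically, and the stationary Mat\'ern covariance is even). Condition (iii), the splitting/concatenation consistency, follows from the Markov property of order $1$ together with the fact that the CM construction is edge-local: concatenating two independent CM boundaryless processes on $[0,T_1]$ and $[T_1,T]$ and conditioning on matching values at $T_1$ reproduces the CM boundaryless process on $[0,T]$ — this is exactly the content of the Markov property plus the locality of the inner product $\langle\cdot,\cdot\rangle_{\alpha,\kappa,\tau}$ noted in the proof of Proposition~\ref{prp:bdlessWM_Markov}.

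\textbf{Main obstacle.} The delicate point is verifying condition (iii) rigorously — showing the concatenation of two independent CM boundaryless processes, conditioned on agreement at the junction, has exactly the CM boundaryless law on the union interval. The CM boundaryless process is \emph{defined} abstractly through its Cameron–Martin space $(H^\alpha([0,\ell]),\langle\cdot,\cdot\rangle_{\alpha,\kappa,\tau})$, not through an SPDE, so one cannot simply ``glue solutions.'' The argument I would give: by Proposition~\ref{prp:bdlessWM_Markov} the process is Markov of order $\alpha$, and the quadratic form $\langle f,g\rangle_{\alpha,\kappa,\tau}$ is additive over a partition of the interval (it is an integral of $f^{(m)}g^{(m)}$ terms), so the precision/energy operator decomposes as a sum of edge contributions plus boundary coupling; this additivity is precisely what makes conditioning-and-gluing consistent. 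Alternatively — and perhaps more cleanly — I would argue that \emph{both} sides of (iii) give rise, via the edge representation of Proposition~\ref{prp:EdgeReprBdlessProcInterval}, to the same finite-dimensional conditional laws given all the relevant endpoint and junction values, because the Whittle–Mat\'ern bridge pieces on sub-intervals are determined by the same covariance formula \eqref{eq:cov_func_whittle_matern_bridge}, and then appeal to Kolmogorov's extension theorem. Once (i)–(iii) are checked, the uniqueness clause of Theorem~\ref{thm:CondDens} forces $\widetilde{\mv{r}}_T = \tilde{\mv{r}}_T$ for all $T$, which is the assertion. I would also note in passing that this identification is what ultimately reconciles Definition~\ref{def:bdlessWM} (the CM definition) with Definition~\ref{def:cov_based_bdlessWM} (the covariance-based definition), as promised in the text.
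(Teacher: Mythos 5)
Your overall strategy matches the paper's: condition (i) via the coincidence of the bridge representations (Theorem~\ref{thm:EdgeReprStationary} and Proposition~\ref{prp:EdgeReprBdlessProcInterval}), and condition (iii) via the additivity of $\langle\cdot,\cdot\rangle_{\alpha,\kappa,\tau}$ over a partition, so that $H^\alpha([0,T])$ sits inside $\widetilde H^\alpha(e_1)\oplus\widetilde H^\alpha(e_2)$ as the closed subspace of functions with matching derivatives at the junction and the conditioning is the corresponding Cameron--Martin projection --- this is exactly how the paper handles the ``main obstacle'' you identify. For the order-$1$ Markov claim you take a genuinely different route: the companion-matrix state-space SDE for $[x,x',\dots,x^{(\alpha-1)}]$, whereas the paper simply cites \citet[Proposition 10.2]{Pitt1971} for the order-$\alpha$ Markov property of the scalar Mat\'ern process and observes that the germ $\sigma$-fields $\mathcal{F}_+^{u}(B)$ and $\mathcal{F}_+^{\mv{u}}(B)$ coincide, so that order $\alpha$ for $u$ is order $1$ for $\mv{u}$. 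Both work; the paper's version avoids having to set up the driving noise and verify the state-space covariance matches \eqref{eq:R_matrix_edge_repr}, while yours is more self-contained if one is willing to invoke the classical representation.

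One step as written does not go through: your justification of condition (ii). You argue that $\tilde{\mv{r}}_T(0,0)=\tilde{\mv{r}}_T(T,T)$ because ``the bridge pins down both endpoints symmetrically, and the stationary Mat\'ern covariance is even.'' But the bridge component vanishes identically at both endpoints, so its symmetry carries no information about the endpoint marginals of the boundaryless process; and the evenness of $\varrho_M$ concerns the stationary process, not $\widetilde u_T$. What condition (ii) actually requires is that the \emph{law of the boundary vector} of $\widetilde u_T$ be invariant under $t\mapsto T-t$, and this is a property of the defining Cameron--Martin space: the paper proves it by noting that $(H^\alpha([0,T]),\|\cdot\|_{\alpha,\kappa,\tau})$ is invariant under reflection and $\langle f,g\rangle_{\alpha,\kappa,\tau}=\langle f(T-\cdot),g(T-\cdot)\rangle_{\alpha,\kappa,\tau}$, which forces the reproducing kernel to satisfy $\widetilde\varrho(T-t_2,t_1)=\widetilde\varrho(T-t_1,t_2)$; differentiating and evaluating at $t_1=T$, $t_2=0$ gives (ii). This is an easy fix and in the spirit of what you intended, but the argument must be run on the inner product of Definition~\ref{def:bdlessWM}, not on the bridge. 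Finally, your side remark on strict positive-definiteness of $\mv{r}$ is correct in substance but understated: in the matrix-valued setting the ``no linear dependencies'' claim requires the distributional Fourier argument with derivatives of Dirac measures, which the paper isolates as Lemma~\ref{lem:strpos_def_stationary}.
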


\begin{proof}
	We start by showing that $\mv{u}(\cdot)$ is a Markov random field. Let $u(\cdot)$ be a Mat\'ern random field on $\mathbb{R}$, that is, $u(\cdot)$ is Gaussian process on $\mathbb{R}$ with covariance function \eqref{eq:matern_cov}. Then, $\mv{u} \stackrel{d}{=} (u, u',\ldots, u^{(\alpha-1)})$, where the derivatives are taken weakly in the $L_2(\Omega)$ sense. It is well-known that $u(\cdot)$ is a Gaussian Markov random field of order $\alpha$, since $\alpha\in\mathbb{N}$. Indeed, it follows, e.g. from \citet[Proposition 10.2]{Pitt1971}. Finally, observe that for each Borel set $B\subset [0,T]$,
	$$\mathcal{F}_+^{u}(B) := \bigcap_{\varepsilon>0} \sigma(u(s): s\in B_\varepsilon) = \bigcap_{\varepsilon>0} \sigma(\mv{u}(s): s\in B_\varepsilon) =: \mathcal{F}_+^{{\mv{u}}}(B),$$
	where $B_\varepsilon := \{s\in [0,T]: \exists z\in B, d(s,z) < \varepsilon\}.$
	Since $u(\cdot)$ is a Gaussian Markov random field of order $\alpha$ and $\mv{u} \stackrel{d}{=} (u, u',\ldots, u^{(\alpha-1)})$, it follows that ${\mv{u}}(\cdot)$ is a Gaussian Markov random field of order 1. We refer the reader to \citet{Pitt1971} and \cite{BSW_Markov} for further details on Gaussian Markov random fields.

	We will now check conditions (i), (ii) and (iii) in Theorem \ref{thm:CondDens}.
By Proposition \ref{prp:bdlessWM_Markov}, $\widetilde{u}(\cdot)$ admits weak derivatives in the $L_2(\Omega)$ sense up to order $\alpha-1$, 
implying that the vector $\widetilde{\mv{u}}(\cdot) = [\widetilde{u}(\cdot), \widetilde{u}'(\cdot),\ldots, \widetilde{u}^{(\alpha-1)}(\cdot)]$ is well-defined.
Condition~(i) follows directly by the edge representations given in Theorem \ref{thm:ReprTheoremEdge_Refined} and Proposition \ref{prp:EdgeReprBdlessProcInterval}.
To verify condition (ii), observe that we have the following symmetry: 
$$
	({H}^\alpha([0,T]), \|\cdot\|_{\alpha,\kappa,\tau}) = (\breve{H}^\alpha([0,T]), \|\cdot\|_{\alpha,\kappa,\tau}), 
$$
where 
$
	\breve{H}^\alpha([0,T]) = \{f(T - \cdot): f\in {H}^\alpha([0,T])\}.
$ 
We also have 
\begin{equation}\label{eq:symmcond_bdless}
	\<f, g\>_{\alpha,\kappa,\tau} = \<f(T-\cdot), g(T-\cdot)\>_{\alpha,\kappa,\tau}.
\end{equation}
	
By Definition \ref{def:bdlessWM}, $({H}^\alpha([0,T]), \|\cdot\|_{\alpha,\kappa,\tau})$ is the Cameron--Martin space associated with $\widetilde{u}(\cdot)$. 
Therefore, if we let $\widetilde{\varrho}(\cdot,\cdot)$ be the covariance function of $\widetilde{u}(\cdot)$, then for every $t\in [0,T]$, 
we have  $\widetilde{\varrho}(\cdot, t)\in H^\alpha([0,T])$. 
Now, take $t_1,t_2\in [0,T]$, and use \eqref{eq:symmcond_bdless} to obtain 
$$
	\widetilde{\varrho}(T-t_2, t_1) = \<\widetilde{\varrho}(T - \cdot,t_1), \widetilde{\varrho}(\cdot,t_2)\>_{\alpha,\kappa,\tau} = \<\widetilde{\varrho}(\cdot,t_1), \widetilde{\varrho}(T-\cdot,t_2)\>_{\alpha,\kappa,\tau} = \widetilde{\varrho}(T-t_1, t_2).
$$
Differentiating the above expression $k$ times with respect to $t_1$ and $j$ times with respect to $t_2$, $j,k\in\{0,\ldots,\alpha-1\}$, gives that 
$
	\partial_{t_1}^k \partial_{t_2}^j \widetilde{\varrho}(T-t_2, t_1) = \partial_{t_1}^k \partial_{t_2}^j\widetilde{\varrho}(T-t_1, t_2).
$
Condition (ii) now follows from taking $t_1=T$ and $t_2 = 0$.

Finally, we verify condition (iii). To this end, we require the following additional notation. 
Take any $T_1 \in (0,T)$ and let $e_1 = [0,T_1]$, $e_2 = [T_1,T]$, 
$\widetilde{\mathcal{E}} = \{e_1, e_2\}$, $\widetilde{\mathcal{V}} = \{0,T_1,T\}$, and $\widetilde{\Gamma} = e_1\cup e_2$. 
Further, let $\widehat{u}_1(\cdot)$ and $\widehat{u}_2(\cdot)$ be two independent CM boundaryless Whittle--Mat\'ern processes with parameters $(\kappa,\tau,\alpha)$, with $\alpha\in\mathbb{N}$, on the edges $e_1$ and $e_2$, respectively. 
We consider the field $\widehat{u}(\cdot)$ on $\widetilde{\Gamma}$, such that $\widehat{u}|_{e_i}(\cdot) = \widehat{u}_i(\cdot)$. 
To prove condition (iii), we must establish that if we condition $(\widehat{u}(\cdot),\widehat{u}'(\cdot),\ldots,\widehat{u}^{\alpha-1}(\cdot))$ on 
$\widehat{u}_{e_1}^{(j)}(T_1) = \widehat{u}_{e_2}^{(j)}(T_1)$, for $j=0,\ldots,\alpha-1$ (i.e., on continuity of $\widehat{u}(\cdot)$ and its derivatives at $T_1$), 
then we obtain a process $\widetilde{u}(\cdot)$ that is a CM boundaryless process on $[0,T]$.
	
By Definition \ref{def:bdlessWM} and \citet[Lemma 8 in Appendix~A]{BSW_Markov}, 
the Cameron--Martin space associated with $\widehat{u}(\cdot)$ is given by $\widetilde{H}^\alpha(\widetilde{\Gamma})$, 
endowed with the inner product
${\<f,g\>_{\alpha,\kappa,\tau,\widetilde{\Gamma}} = \<f_{e_1}, g_{e_1}\>_{\alpha,\kappa,\tau,1} + \<f_{e_2}, g_{e_2}\>_{\alpha,\kappa,\tau,2}}$,
where $\<\cdot,\cdot\>_{\alpha,\kappa,\tau,i}$ is the inner product $\<\cdot,\cdot\>_{\alpha,\kappa,\tau}$ acting on $e_i$, $i=1,2$. 
Further, let $\<\cdot,\cdot\>_{\alpha,\kappa,\tau,T}$ be the inner product $\<\cdot,\cdot\>_{\alpha,\kappa,\tau}$ acting on $[0,T]$. 
Note that 
\begin{equation}\label{eq:Halpha_condcont_bdless}
	H^{\alpha}([0,T]) = \widetilde{H}^{\alpha}(\widetilde{\Gamma}) \cap \{f\in\widetilde{H}^\alpha(\widetilde{\Gamma}): f_{e_1}^{(k)}(T_1) = f_{e_2}^{(k)}(T_1), k=0,\ldots,\alpha-1\},
\end{equation} 
and if $f,g\in H^{\alpha}([0,T])$, then $\<f,g\>_{\alpha,\kappa,\tau,\widetilde{\Gamma}} = \<f,g\>_{\alpha,\kappa,\tau,T}$. 
Thus, $(H^\alpha([0,T]), \<\cdot,\cdot\>_{\alpha,\kappa\tau,T})$ is a closed subspace of $(\widetilde{H}^\alpha(\widetilde{\Gamma}), \<\cdot,\cdot\>_{\alpha,\kappa\tau,\widetilde{\Gamma}})$. 
Now, let $H_{\widehat{u}}([0,T])$ be the linear Gaussian space generated by $\widehat{u}$ 
(i.e., the completion of $\textrm{span}\{\widehat{u}(t):t\in [0,T]\}$ with respect to the $L_2(\Omega)$ norm). 
Recall that $(\widetilde{H}^\alpha(\widetilde{\Gamma}), \<\cdot,\cdot\>_{\alpha,\kappa\tau,\widetilde{\Gamma}})$ is the Cameron--Martin space associated with the field $\widehat{u}(\cdot)$, so that 
$$
	\widetilde{H}^\alpha(\widetilde{\Gamma}) = \{h(s) = \pE(\widehat{u}(s)v): s\in \Gamma\hbox{ and } v\in H_{\widehat{u}}(\widetilde{\Gamma})\}
$$
and $\Phi:H_{\widehat{u}}(\widetilde{\Gamma})\to \widetilde{H}^\alpha(\widetilde{\Gamma})$ given by $\Phi(v)(s) = \pE(\widehat{u}(s)v)$, for $v\in H_{\widehat{u}}(\widetilde{\Gamma})$ is an isometric isomorphism. 
Let $\Pi_C:\widetilde{H}^\alpha(\widetilde{\Gamma})\to H^\alpha([0,T])$ be the orthogonal projection onto $H^\alpha([0,T])$ with respect to the inner product $\<\cdot,\cdot\>_{\alpha,\kappa,\tau,\widetilde{\Gamma}}$. 
In addition, let $\widehat{\varrho}(\cdot,\cdot)$ be the reproducing kernel of $\widetilde{H}^\alpha(\widetilde{\Gamma})$ 
and observe that, following the same arguments as in Appendix \ref{app:proofs_conditional}, 
$(H^\alpha([0,T]), \<\cdot,\cdot\>_{\alpha,\kappa,\tau,T})$ is also a reproducing kernel Hilbert space, 
with reproducing kernel ${\widetilde{\varrho}(s,\cdot) = \Pi_C(\widehat{\varrho}(s,\cdot))}$, $s\in\widetilde{\Gamma}$. 
By defining the process $\widetilde{u}(s) = \Phi^{-1}(\widetilde{\varrho}(t,\cdot)), s\in \widetilde{\Gamma}$, 
it follows by construction that $\widetilde{u}(\cdot)$ has the Cameron--Martin space given by $(H^\alpha([0,T]), \<\cdot,\cdot\>_{\alpha,\kappa,\tau,T})$, 
that is, by Definition \ref{def:bdlessWM}, $\widetilde{u}(\cdot)$ is a CM boundaryless Whittle--Mat\'ern field with parameters $(\kappa,\tau,\alpha)$ on the interval $[0,T]$. Next, let $H_{\widetilde{u}}([0,T]) = \Phi^{-1}(H_{\widehat{u}}(\widetilde{\Gamma}))$ be the linear Gaussian space associated with $\widetilde{u}(\cdot)$. 
Then, $\widetilde{u}(s) = \pE(\widehat{u}(s)|\sigma(H_{\widetilde{u}}([0,T])))$, for $s\in \widetilde{\Gamma}$.
We can now proceed as in Appendix \ref{app:proofs_conditional}, using \eqref{eq:Halpha_condcont_bdless}, 
to conclude that $\widetilde{u}(\cdot)$ is obtained from $\widehat{u}(\cdot)$ by conditioning on 
$\widehat{u}_{e_1}^{(k)}(T_1) = \widehat{u}_{e_2}^{(k)}(T_1), k =0,\ldots,\alpha-1$, 
where the derivatives are weak derivatives in the $L_2(\Omega)$ sense. 
This proves part of condition (iii). 
It remains to be established that $\widetilde{u}^{(k)}(\cdot)$ is obtained from $\widehat{u}^{(k)}(\cdot)$ by conditioning on 
$\widehat{u}_{e_1}^{(k)}(T_1) = \widehat{u}_{e_2}^{(k)}(T_1), k =0,\ldots,\alpha-1$, 
where the derivatives are weak in the $L_2(\Omega)$ sense. 
However, observe that, directly from the definition of weak derivatives in the $L_2(\Omega)$ sense, we obtain
\begin{equation}\label{eq:cond_exp_bdless_weakderiv}
	\widetilde{u}^{(k)}(s) = \pE(\widehat{u}^{(k)}(s)|\sigma(H_{\widetilde{u}}([0,T]))), \quad s\in \widetilde{\Gamma}, \quad k=0,\ldots,\alpha-1.
\end{equation}
Further, by taking derivatives on the edge representation for the CM boundaryless Whittle--Mat\'ern processes (Proposition \ref{prp:EdgeReprBdlessProcInterval}), 
we obtain edge representations for the weak derivatives (in the $L_2(\Omega)$ sense) of $\widetilde{u}(\cdot)$ and $\widehat{u}(\cdot)$. 
The corresponding results for the weak derivatives of $\widetilde{u}(\cdot)$ and $\widehat{u}(\cdot)$ follow from applying the same arguments as in 
Appendix~\ref{app:proofs_conditional} combined with \eqref{eq:cond_exp_bdless_weakderiv} and the edge representation for the weak derivatives. 
Furthermore, by the same arguments, for any $a_0,\ldots,a_{\alpha-1}\in\mathbb{R}$, 
the linear combination $\sum_{k=0}^{\alpha-1} a_k \widetilde{u}^{(k)}$ can be obtained by conditioning $\sum_{k=0}^{\alpha-1} a_k \widehat{u}^{(k)}$ on $\widehat{u}_{e_1}^{(k)}(T_1) = \widehat{u}_{e_2}^{(k)}(T_1), k =0,\ldots,\alpha-1$. 
Therefore, $(\widetilde{u}(\cdot), \widetilde{u}'(\cdot),\ldots,\widetilde{u}^{(\alpha-1)}(\cdot))$ can be obtained from $(\widehat{u}(\cdot), \widehat{u}'(\cdot),\ldots,\widehat{u}^{(\alpha-1)}(\cdot))$ by conditioning on $\widehat{u}_{e_1}^{(k)}(T_1) = \widehat{u}_{e_2}^{(k)}(T_1), k =0,\ldots,\alpha-1$, which proves condition (iii).
\end{proof}

We have a final auxiliary lemma we need to be able to apply Theorem \ref{thm:CondDens}. 

\begin{Lemma}\label{lem:strpos_def_stationary}
The function $\mv{r}(\cdot,\cdot)$ in \eqref{eq:R_matrix_edge_repr} is strictly positive-definite (see Definition \ref{def:str_pos_def_mat}).
\end{Lemma}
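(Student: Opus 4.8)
The plan is to show that $\mv{r}(\cdot,\cdot)$ is the covariance function of the $\R^\alpha$-valued process $\mv{u}(t) = [u(t), u'(t), \ldots, u^{(\alpha-1)}(t)]^\top$, where $u$ is a stationary Mat\'ern process on $\R$ with $\nu = \alpha - \nicefrac12$, and then to deduce strict positive-definiteness from the fact that the scalar Mat\'ern covariance is strictly positive-definite together with the fact that the finite-dimensional distributions of $(\mv{u}(t_1), \ldots, \mv{u}(t_N))$ are nondegenerate. First I would recall that, as noted in the paragraph preceding Definition~\ref{def:cov_based_bdlessWM} and as used throughout Appendix~\ref{app:bdlessaux}, the entry $[\mv{r}(t_1,t_2)]_{ij} = \partial_{t_1}^{j-1}\partial_{t_2}^{i-1}\varrho_M(t_1-t_2)$ is exactly $\Cov(u^{(j-1)}(t_1), u^{(i-1)}(t_2))$, where the derivatives are taken weakly in the $L_2(\Omega)$ sense; this is the content of Theorem~\ref{thm:regularity}(iii) applied on an interval (equivalently, differentiation under the expectation for a Gaussian process admitting $L_2(\Omega)$-derivatives up to order $\alpha-1$, which the Mat\'ern process with $\nu=\alpha-\nicefrac12$ does).

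Given this identification, the statement $\sum_{i,j=1}^N \mv{c}_i^\top \mv{r}(t_i,t_j)\mv{c}_j > 0$ for distinct $t_1,\ldots,t_N$ and $\mv{c}_1,\ldots,\mv{c}_N\in\R^\alpha$ not all zero becomes the statement $\Var\bigl(\sum_{i=1}^N \mv{c}_i^\top \mv{u}(t_i)\bigr) > 0$, i.e. that the Gaussian random variable $Z := \sum_{i=1}^N \sum_{k=0}^{\alpha-1} c_{i,k+1} u^{(k)}(t_i)$ is not almost surely zero. So the key step is: if $Z = 0$ a.s., then all the coefficients $c_{i,k}$ vanish. To see this, I would pass to the spectral representation of the stationary Mat\'ern process: $u^{(k)}(t) = \int_\R (i\lambda)^k e^{i\lambda t}\, \widehat{W}(d\lambda)$ in the $L_2(\Omega)$ sense, where the control measure is the Mat\'ern spectral measure $f(\lambda)\,d\lambda$ with $f(\lambda) \propto (\kappa^2+\lambda^2)^{-\alpha}$, which has full support on $\R$. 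Then $Z = 0$ a.s.\ forces the function $\lambda \mapsto \sum_{i,k} c_{i,k+1}(i\lambda)^k e^{i\lambda t_i}$ to vanish $f(\lambda)\,d\lambda$-a.e., hence (by continuity and full support of $f$) identically on $\R$. Writing this as $\sum_{i=1}^N p_i(\lambda) e^{i\lambda t_i} \equiv 0$ with $p_i$ a polynomial of degree $\le \alpha-1$ and the $t_i$ distinct, a standard argument (the functions $\lambda^k e^{i\lambda t_i}$ are linearly independent — e.g.\ compare growth rates / use that an exponential polynomial with distinct frequencies vanishing identically has all coefficients zero) gives $p_i \equiv 0$ for each $i$, i.e.\ all $c_{i,k}=0$.

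I expect the only mildly delicate point to be justifying the spectral/exponential-polynomial linear independence cleanly, but this is classical; alternatively one can avoid spectral theory entirely by a purely local argument: the vector $(\mv{u}(t_1),\ldots,\mv{u}(t_N))\in\R^{\alpha N}$ has a density (its covariance matrix is nonsingular because, e.g., the Mat\'ern process on $\R$ admits $L_2(\Omega)$-derivatives of all orders $\le \alpha-1$ and its law is locally equivalent to a nondegenerate Gaussian — or one can invoke Lemma~\ref{lem:prop_bdlessproc}, which already records that $\mv{u}$ is a Gaussian Markov random field of order $1$ with covariance $\mv{r}$, so that the joint law at finitely many points is nondegenerate), and a random variable with a positive density is not a.s.\ constant unless it is the zero linear functional. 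I would structure the write-up as: (1) identify $\mv{r}$ with $\Cov(\mv{u}(t_1),\mv{u}(t_2))$; (2) reduce strict positive-definiteness to nondegeneracy of $Z$; (3) prove nondegeneracy via the spectral representation and linear independence of $\{\lambda^k e^{i\lambda t_i}\}$. The main obstacle, such as it is, is step (3)'s bookkeeping with derivatives and the spectral measure; everything else is immediate from results already in the excerpt.

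\begin{proof}[Proof of Lemma~\ref{lem:strpos_def_stationary}]
Let $u(\cdot)$ be a centered stationary Gaussian process on $\R$ with Mat\'ern covariance $\varrho_M$ as in \eqref{eq:matern_cov}, with $n=1$ and $\nu = \alpha-\nicefrac12$, so that $\alpha\in\mathbb{N}$. Since $\alpha\in\mathbb{N}$, $u$ admits weak derivatives in the $L_2(\Omega)$ sense up to order $\alpha-1$, and by differentiation under the expectation (equivalently, by the interval version of Theorem~\ref{thm:regularity}(iii)) the $\R^\alpha$-valued process $\mv{u}(t) = [u(t),u'(t),\ldots,u^{(\alpha-1)}(t)]^\top$ has, for $t_1,t_2\in\R$,
$$
\Cov\bigl(\mv{u}(t_1),\mv{u}(t_2)\bigr) = \Bigl[\partial_{t_1}^{j-1}\partial_{t_2}^{i-1}\varrho_M(t_1-t_2)\Bigr]_{i,j\in\{1,\ldots,\alpha\}} = \mv{r}(t_1,t_2),
$$
with $\mv{r}$ as in \eqref{eq:R_matrix_edge_repr}. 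Hence, for distinct $t_1,\ldots,t_N\in\R$ and vectors $\mv{c}_1,\ldots,\mv{c}_N\in\R^\alpha$,
$$
\sum_{i,j=1}^N \mv{c}_i^\top \mv{r}(t_i,t_j)\mv{c}_j = \Var\Bigl(\sum_{i=1}^N \mv{c}_i^\top \mv{u}(t_i)\Bigr) = \Var(Z), \qquad Z := \sum_{i=1}^N \sum_{k=0}^{\alpha-1} c_{i,k+1}\, u^{(k)}(t_i).
$$
It therefore suffices to show that $Z$ is nondegenerate whenever some $\mv{c}_i\neq 0$.

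The stationary Mat\'ern process admits the spectral representation $u(t) = \int_\R e^{i\lambda t}\,\widehat{W}(d\lambda)$ in the $L_2(\Omega)$ sense, with control (spectral) measure $f(\lambda)\,d\lambda$, where $f(\lambda) = c\,(\kappa^2+\lambda^2)^{-\alpha}$ for a constant $c>0$; in particular $f>0$ on all of $\R$. Since $u$ has $L_2(\Omega)$-derivatives up to order $\alpha-1$, $u^{(k)}(t) = \int_\R (i\lambda)^k e^{i\lambda t}\,\widehat{W}(d\lambda)$, $k=0,\ldots,\alpha-1$. Consequently
$$
Z = \int_\R g(\lambda)\,\widehat{W}(d\lambda), \qquad g(\lambda) = \sum_{i=1}^N p_i(\lambda)\, e^{i\lambda t_i}, \quad p_i(\lambda) = \sum_{k=0}^{\alpha-1} c_{i,k+1}(i\lambda)^k,
$$
and $\Var(Z) = \int_\R |g(\lambda)|^2 f(\lambda)\,d\lambda$. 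If $\Var(Z)=0$, then since $f>0$ everywhere and $g$ is continuous, $g\equiv 0$ on $\R$; that is, the exponential polynomial $\sum_{i=1}^N p_i(\lambda)e^{i\lambda t_i}$ vanishes identically, with distinct frequencies $t_1,\ldots,t_N$ and polynomial coefficients $p_i$ of degree at most $\alpha-1$. By the classical fact that the functions $\{\lambda^k e^{i\lambda t_i}: i=1,\ldots,N,\ k=0,\ldots,\alpha-1\}$ are linearly independent over $\R$ (for distinct $t_i$), we conclude $p_i\equiv 0$ for all $i$, hence $c_{i,k}=0$ for all $i,k$, i.e. $\mv{c}_i = 0$ for all $i$. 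Therefore $\Var(Z)>0$ whenever some $\mv{c}_i\neq 0$, which is precisely the strict positive-definiteness of $\mv{r}$ in the sense of Definition~\ref{def:str_pos_def_mat}.
\end{proof}
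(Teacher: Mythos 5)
Your proof is correct and takes essentially the same route as the paper: both reduce the quadratic form to the spectral integral $\int|\sum_i p_i(\lambda)e^{i\lambda t_i}|^2 f(\lambda)\,d\lambda$ with $f>0$ and then argue that a vanishing exponential polynomial with distinct frequencies has all coefficients zero. The only difference is that you invoke this last linear-independence fact as classical, whereas the paper derives it explicitly by taking an inverse Fourier transform in the sense of tempered distributions and testing the resulting combination of Dirac-delta derivatives against suitably localized bump functions.
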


\begin{proof}
Let $N\in\mathbb{N}$ and take $t_1,\ldots, t_N\in\mathbb{R}$. Define the cross-covariance matrix
$$\mv{\Sigma} = \begin{bmatrix}
\mv{\Sigma}_{11}& \mv{\Sigma}_{12}& \cdots & \mv{\Sigma}_{1N}\\
\mv{\Sigma}_{21}& \mv{\Sigma}_{22}& \cdots & \mv{\Sigma}_{2N}\\
\vdots & \vdots & \ddots & \vdots\\
\mv{\Sigma}_{N1}& \mv{\Sigma}_{N2}& \cdots & \mv{\Sigma}_{NN}
\end{bmatrix},$$
where $\mv{\Sigma}_{pj}$ has $(k,l)$th element given by
$$
	(\mv{\Sigma}_{pj})_{kl} = \frac{\pd^{l-1}}{\pd t_j^{l-1}}\frac{\pd^{k-1}}{\pd t_p^{k-1}}\varrho_M(t_p-t_j) = (-1)^{l-1} \varrho_M^{(k-1+l-1)}(t_p - t_j),
$$
where $p,j=1,\ldots,N$ and $k,l = 1,\ldots, \alpha.$ 
Let $f(\cdot)$ be the spectral density of $\varrho_M(\cdot)$ and observe that, by e.g., \citet{stein99}, 
$$
	\int_{-\infty}^\infty \omega^{2(\alpha-1)} f(\omega)\,d\omega < \infty,
$$
since $\varrho_M(\cdot)$ is $\alpha-1$ times differentiable in the $L_2(\Omega)$-sense. 
Thus, by the dominated convergence theorem, 
$$
	\varrho_M^{(k)}(t) = \int_{-\infty}^\infty (i\omega)^k e^{it\omega} f(\omega) \, d\omega,\quad k=0,\ldots,2(\alpha-1).
$$ 
Take $\mv{a}_1,\ldots, \mv{a}_N \in \mathbb{R}^\alpha$, with $\mv{a}_p = (a_{p,1},\ldots,a_{p,\alpha})^\top, p=1,\ldots,N$. Then,  
\begin{align}
\sum_{p,j=1}^N \mv{a}_p^\top \mv{\Sigma}_{pj} \mv{a}_j &= \sum_{p,j=1}^N \sum_{k,l=1}^{\alpha} a_{p,k}(\mv{\Sigma}_{p,j})_{kl}a_{j,l}
= \sum_{p,j=1}^N \sum_{k,l=1}^{\alpha} a_{p,k} (-1)^{l-1} \varrho_M^{(k-1+l-1)}(t_p-t_j) a_{j,l}\nonumber\\
&= \sum_{p,j=1}^N \sum_{k,l=1}^{\alpha} a_{p,k} (-1)^{l-1} \int_{-\infty}^\infty (i\omega)^{k+l-2} e^{i(t_p-t_j)\omega} f(\omega) \, d\omega a_{j,l}\nonumber\\
&= \int_{-\infty}^\infty \sum_{k,l=1}^\alpha  (i\omega)^{k-1} (-i\omega)^{l-1} \left(\sum_{p,j=1}^N a_{p,k} e^{it_p\omega} e^{-it_j \omega} a_{jl} \right) f(\omega)\, d\omega\nonumber\\
&= \int_{-\infty}^\infty \sum_{k,l=1}^\alpha h_k(\omega) \overline{h_l(\omega)} f(\omega)\, d\omega = \int_{-\infty}^\infty |\mv{h}(\omega)^\top \mv{1}|^2 f(\omega)\,d\omega \geq 0,\label{eq:ineq_pos_str_pos_statio}
\end{align}
where $h_k:\mathbb{R}\to\mathbb{C}$, $h_k(\omega) = (i\omega)^{k-1} \sum_{p=1}^\alpha a_{p,k} e^{it_p\omega}$, $k=1,\ldots,\alpha$, $\mv{h}(\cdot) = (h_1(\cdot), \ldots, h_\alpha(\cdot))$ and $\mv{1} = (1,\ldots,1)^\top \in\mathbb{R}^\alpha$.
Thus, to conclude the proof we must show that the inequality in \eqref{eq:ineq_pos_str_pos_statio} is  strict. We prove this claim by contraposition, 
that is, we assume that the integral in \eqref{eq:ineq_pos_str_pos_statio} is equal to zero, and then show that this implies that for every $p=1,\ldots,N$, 
 $\mv{a}_p=\mv{0}$. 

Now, observe that $f(\omega)>0$ a.e., so this integral is equal to zero if, and only if, $|\mv{h}(\omega)^\top \mv{1}|^2 = 0$ a.e., which in turn implies that $\mv{h}(\omega)^\top \mv{1} = 0$ a.e. Hence, $\sum_{k=1}^\alpha h_k(\omega) = 0$ a.e. By taking the inverse Fourier transform (in the space of tempered distributions), we obtain, in particular, that
\begin{equation}\label{eq:identity_dirac_deriv}
	\forall f\in C_c^\infty(\mathbb{R}), \sum_{k=1}^\alpha \sum_{p=1}^N a_{p,k} \delta_{t_p}^{(k-1)}(f) = 0 \Rightarrow \forall f\in C_c^\infty(\mathbb{R}),\sum_{k=1}^\alpha \sum_{p=1}^N a_{p,k} f^{(k-1)}(t_p) = 0,
\end{equation}
where $\delta_{t_p}(\cdot)$ is Dirac's delta measure concentrated at $t_p$, and $\delta_{t_p}^{(k-1)}(\cdot)$ is the $k$th derivative, in the distribution sense, of Dirac's delta measure, $k=1,\ldots,\alpha$.

Take any $j\in\{1,\ldots,N\}$, and let $\epsilon_j := \min\{|t_j - t_p|: p=1,\ldots, N, p\neq j\} > 0$. Now, let $f_{j,0} \in C_c^\infty(\mathbb{R}),j=1,\ldots,N,$ be such that the support of $f_{j,0}$ is contained in ${B_{\epsilon_j/2}(t_j) := \{y\in\mathbb{R}: |y-t_j|<\epsilon_j/2\}}$, with $f_{j,0}(t_j)>0$ and $f_{j,0}(\cdot)$ being constant in a neighborhood of $t_j$. Then, by applying \eqref{eq:identity_dirac_deriv} to $f_{j,0}$, we obtain that for every $j=1,\ldots, N$, $a_{j1} = 0$. 
Now, proceed in a similar manner. By taking $j=1,\ldots,N$, and $k=1,\ldots,\alpha-1$, let $f_{j,k}(\cdot)$ be such that the support of $f_{j,k}$ is contained in $B_{\epsilon_j/2}(t_j)$, $f_{j,k}^{(k)}(t_j)>0$ and $f_{j,k}^{(k)}(\cdot)$ is constant on a neighborhood of $t_j$. Then, by applying \eqref{eq:identity_dirac_deriv} to $f_{j,k}$, in an inductive manner with respect to $k$ (that is, we first apply to $k=1$, and use that $a_{j1}=0$, to obtain $a_{j2}=0$, then we do the same for $k=2$, and so on), we obtain that for every $j=1,\ldots, N$, $a_{j2} = 0,\ldots, a_{j\alpha} = 0$. That is, we obtain that for every $p=1,\ldots,N$, $\mv{a}_p=0$. Thus, if at least one $\mv{a}_p\neq \mv{0}$, $p=1,\ldots,N$, then the integral in \eqref{eq:ineq_pos_str_pos_statio} is strictly positive. 
\end{proof}

Finally, we can demonstrate that Definitions \ref{def:bdlessWM} and \ref{def:cov_based_bdlessWM} are equivalent.

\begin{Proposition}\label{prp:bdless_thm_cov_123}
Let $\widetilde{u}(\cdot)$ be a CM boundaryless Whittle--Mat\'ern process from Definition \ref{def:bdlessWM}, with parameters $(\kappa,\tau,\alpha)$, with $\alpha\in\mathbb{N}$, on the interval $[0,T]$. Then, the process $\widetilde{\mv{u}}(\cdot) = [\widetilde{u}(\cdot), \widetilde{u}'(\cdot),\ldots, \widetilde{u}^{(\alpha-1)}(\cdot)]$ has multivariate covariance function given by \eqref{eq:covmod}, where $\mv{r}(\cdot,\cdot)$ is given by \eqref{eq:R_matrix_edge_repr}, and the derivatives are weak derivatives in the $L_2(\Omega)$ sense.
\end{Proposition}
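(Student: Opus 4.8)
The plan is to deduce the statement directly from the uniqueness assertion in Theorem~\ref{thm:CondDens}, using the two auxiliary lemmas that precede it; essentially no new computation is needed.

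First I would fix $T>0$ and let $x(\cdot)$ be a stationary Gaussian process on $\mathbb{R}$ with the Mat\'ern covariance function \eqref{eq:matern_cov} for $\nu=\alpha-\nicefrac{1}{2}$, so that, taking weak derivatives in the $L_2(\Omega)$ sense, the derivative-stacked vector process $\mv{u}(\cdot)=[x(\cdot),x'(\cdot),\ldots,x^{(\alpha-1)}(\cdot)]^\top$ has the matrix-valued covariance function $\mv{r}(\cdot,\cdot)$ of \eqref{eq:R_matrix_edge_repr}. By the last assertion of Lemma~\ref{lem:prop_bdlessproc}, $\mv{u}(\cdot)$ is a stationary Gaussian Markov random field of order $1$, and by Lemma~\ref{lem:strpos_def_stationary} its covariance function $\mv{r}(\cdot,\cdot)$ is strictly positive-definite in the sense of Definition~\ref{def:str_pos_def_mat}. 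Hence the hypotheses of Theorem~\ref{thm:CondDens} are met, and that theorem produces a \emph{unique} family $\{\tilde{\mv{r}}_T(\cdot,\cdot)\}_{T>0}$ of covariance functions on $\{[0,T]\}_{T>0}$ satisfying conditions (i), (ii) and (iii), namely the family given explicitly by \eqref{eq:covmod}.

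Next I would invoke Lemma~\ref{lem:prop_bdlessproc} a second time: it states precisely that the family $\{\widetilde{\mv{r}}_T(\cdot,\cdot):T>0\}$ of multivariate covariance functions of the processes $\widetilde{\mv{u}}_T(\cdot)=[\widetilde u_T(\cdot),\widetilde u_T'(\cdot),\ldots,\widetilde u_T^{(\alpha-1)}(\cdot)]$ built from the CM boundaryless Whittle--Mat\'ern processes also satisfies conditions (i), (ii) and (iii) of Theorem~\ref{thm:CondDens}. Since the two families satisfy the same three conditions on the same family of intervals, the uniqueness clause of Theorem~\ref{thm:CondDens} forces them to coincide; in particular the covariance function of $\widetilde{\mv{u}}_T(\cdot)$ is exactly $\tilde{\mv{r}}_T(\cdot,\cdot)$ as in \eqref{eq:covmod}, which is the claim.

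The only genuine subtlety — and the step I would treat most carefully — is matching the hypotheses of Theorem~\ref{thm:CondDens} exactly: that $\mv{r}(\cdot,\cdot)$ is the covariance of a process that is \emph{stationary} and, after the derivative-stacking, Markov \emph{of order one}, and that its positive-definiteness is \emph{strict} in the matrix-valued sense the theorem requires. These two points are handled by Lemma~\ref{lem:prop_bdlessproc} and Lemma~\ref{lem:strpos_def_stationary}, respectively, so once they are cited the proof reduces to a one-line appeal to uniqueness.
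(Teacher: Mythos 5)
Your proposal is correct and follows exactly the paper's own route: the paper's proof is the one-line statement that the result follows by combining Lemma~\ref{lem:prop_bdlessproc} and Lemma~\ref{lem:strpos_def_stationary} with Theorem~\ref{thm:CondDens}, which is precisely the argument you spell out. Your version simply makes explicit the role of each ingredient (verifying the hypotheses of Theorem~\ref{thm:CondDens} via the two lemmas and then invoking its uniqueness clause), which is a faithful and slightly more detailed rendering of the same proof.
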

\begin{proof}
The result follows by combining Lemma \ref{lem:prop_bdlessproc} and Lemma \ref{lem:strpos_def_stationary} with Theorem~\ref{thm:CondDens}.
\end{proof}

\section{Proofs from Sections~\ref{sec:inference} and \ref{sec:prediction}}\label{app:inference}

\begin{proof}[Proof of Proposition \ref{Them:piAXsoft}]
Note that $\pi_{\mv{Y} | \mv{KU}}\left(  \mv{y} | \mv{b} \right)= \pi_{\mv{Y} | \mv{U}_\A^*}(\mv{y}|\mv{b}^*)$ and
\begin{align}
	\pi_{\mv{Y} | \mv{U}^*_{\A}}\left(  \mv{y} | \mv{b}^* \right) &=  \int  \pi_{\mv{U}^*_{\Ac},\mv{Y}| \mv{U}^*_{\A} }\left(\mv{u}^*_{\Ac}, \mv{y}| \mv{b}^* \right) d\mv{u}^*_{\Ac} \notag\\
	&=  \int  \pi_{\mv{Y}|\mv{U}^*_{\Ac}, \mv{U}^*_{\A}}(\mv{y}|\mv{u}_{\Ac}^*, \mv{b}^*) \pi_{\mv{U}^*_{\Ac}| \mv{U}^*_{\A}}(\mv{u}_{\Ac}^*| \mv{b}^*) d\mv{u}^*_{\Ac}.	\label{eq:decomposePI}
\end{align}
The goal is to derive an explicit form of the density by evaluating the integral in \eqref{eq:decomposePI}. 
To shorten the notation, we let $q(\mv{K},\mv{x})$ denote the quadratic form $\mv{x}^\top\mv{K}\mv{x}$ for a matrix $\mv{K}$ and a vector $\mv{x}$. 
First, it is straightforward to see that 
\begin{align}\label{eq:proof_cond1}
	\pi_{\mv{Y}|\mv{U}_\Ac^*,\mv{U}_{\A}^*}(\mv{y}|\mv{u}_{\Ac}^*,\mv{b}^*) 
	&=  \frac{1}{(2\pi)^{\frac{n}{2}}|\mv{\Sigma}|^{1/2}}\exp \left( -\frac{1}{2}q\left(\mv{\Sigma}^{-1}, \mv{y}- \mv{B}^*
	\begin{bmatrix}
		\mv{b}^* \\ 
		\mv{u}^*_{\Ac} 
	\end{bmatrix}
	\right)\right) \\
	&\propto \exp\left(-\frac{1}{2}\mv{u}^{*\top}_{\Ac}\mv{B}^{*\top}_{\Ac} \mv{\Sigma}^{-1} \mv{B}^*_{\Ac}\mv{u}^*_{\Ac}+ \mv{y}^{\top}\mv{\Sigma}^{-1}\mv{B}^*_{\Ac}\mv{u}^*_{\Ac} \right),\nonumber
\end{align}
as a function of $\mv{u}_{\Ac}^*$.
From the proof of Theorem 2 in \cite{bolin2021efficient}, we obtain
\begin{equation}\label{eq:Xconditional}
	\mv{U}_{\Ac}^*|\mv{U}^*_\A= \mv{b}^* \sim  \pN\left(
	\mv{\mu}^*_{\Ac}-\left(\mv{Q}_{\Ac\Ac}^{*}\right)^{-1} \mv{Q}_{\Ac\A}^{*} \left( \mv{b}^* -  \mv{\mu}^*_{\A}\right)
	, (\mv{Q}_{\Ac\Ac}^*)^{-1} \right).
\end{equation}
By these expressions, 
\begin{align*}
	&\pi_{\mv{Y}|\mv{U}^*_{\Ac}, \mv{U}^*_{\A}}(\mv{y}|\mv{u}_{\Ac}^*, \mv{b}^*) \pi_{\mv{U}^*_{\Ac}| \mv{U}^*_{\A}}(\mv{u}_{\Ac}^*| \mv{b}^*) =\\
	&= \exp\left (-\frac{1}{2}\mv{u}^{*\top}_{\Ac} \mv{B}^{*\top}_{\Ac}\mv{\Sigma}^{-1} \mv{B}^*_{\Ac} \mv{u}^*_{\Ac}+ \left( \mv{B}^{*\top}_{\Ac} \mv{\Sigma}^{-1}\mv{y} \right)^{\top}\mv{u}^*_{\Ac} \right) \frac{ |\mv{Q}^*_{\Ac\Ac}|^{1/2}}{\left(2\pi\right)^{\nicefrac{(n+m-k)}{2}} |\mv{\Sigma}|^{1/2}} \cdot \\
	&\quad\exp\left(-\frac{1}{2} \mv{u}^{*\top}_{\Ac} \mv{Q}^*_{\Ac\Ac} \mv{u}^*_{\Ac}+ \left(  \mv{Q}^*_{\Ac\Ac} \widetilde{\mv{\mu}}^*_{\Ac}\right)^{\top} \mv{u}^*_{\Ac}   \right)
	\exp \left(- \frac{1}{2}\left[  \mv{y}^{\top}\mv{\Sigma}^{-1}\mv{y}+  \widetilde{\mv{\mu}}_{\Ac}^{*\top}  \mv{Q}^*_{\Ac\Ac}  \widetilde{\mv{\mu}}^*_{\Ac} \right] \right) \\
	&=\pi_{\mv{U}^*_{\Ac}|\mv{Y}, \mv{U}^*_{\A}}(\mv{u}_{\Ac}^*| \mv{y},\mv{b}^*)  
	\frac{\exp \left( \frac{1}{2}  \widehat{\mv{\mu}}_{\Ac}^{*\top} \widehat{\mv{Q}}^*_{\Ac\Ac} \widehat{\mv{\mu}}^*_{\Ac} \right) }{|\mv{Q}^*_{\Ac\Ac}|^{-1/2}|\mv{\Sigma}|^{1/2} \left(2\pi\right)^{\nicefrac{n}{2}} }
	\exp \left(- \frac{1}{2}\left[  \mv{y}^{\top} \mv{\Sigma}^{-1}\mv{y} +  \mv{\mu}_{\Ac}^{*\top}  \mv{Q}^*_{\Ac\Ac}\mv{\mu}^*_{\Ac}\right]\right).
\end{align*}
Inserting this expression in \eqref{eq:decomposePI} and evaluating the integral, noting  that $\pi_{\mv{U}^*_{\Ac}|\mv{Y}, \mv{U}^*_{\A}}(\mv{u}_{\Ac}^*| \mv{y},\mv{b}^*)$ integrates to 1, gives the desired result. 
\end{proof}

\begin{proof}[Proof of Proposition \ref{Them:piXgby}]
Note that 
$$
\pi_{\mv{U}_{\Ac}^*|\mv{Y},\mv{U}^*_{\A} } (\mv{u}^*_{\Ac}| \mv{y},\mv{b}^*) \propto
\pi_{\mv{Y}|\mv{U}_\A^*,\mv{U}_{\Ac}^*}(\mv{y}|\mv{b}^*, \mv{u}_{\Ac}^* ) \pi(\mv{u}^*_{\Ac}|\mv{b}^*).
$$ 
By \eqref{eq:proof_cond1} and \eqref{eq:Xconditional} we have  
$\pi_{\mv{U}^*_{\Ac}|\mv{U}^*_\A}(\mv{u}^*_{\Ac}|\mv{b}^*) \propto
\exp \left( -\frac{1}{2}  \left( \mv{u}^*_{\Ac}- \widetilde{\mv{\mu}}^*_{\Ac} \right)^\top  \mv{Q}^*_{\Ac\Ac}  \left( \mv{u}^*_{\Ac}- \widetilde{\mv{\mu}}^*_{\Ac} \right) \right),$ 
as a function of $\mv{u}_{\Ac}^*$,
where $\widetilde{\mv{\mu}}^*_{\Ac} =  \mv{\mu}^*_{\Ac}-\left(\mv{Q}_{\Ac\Ac}^{*} \right)^{-1}\mv{Q}_{\Ac\A}^{*} \left( \mv{b}^* -  \mv{\mu}^*_{\A}\right)$.
Thus, it follows that
\begin{align*}
	\pi_{\mv{U}_{\Ac}^*|\mv{Y},\mv{U}^*_{\A} } (\mv{u}^*_{\Ac}| \mv{y},\mv{b}^*)  \propto
	&\exp\left (-\frac{1}{2}\mv{u}^{*\top}_{\Ac} \mv{B}^{*\top}_{\Ac}\mv{\Sigma}^{-1} \mv{B}^*_{\Ac}\mv{u}^*_{\Ac}+ \left( \mv{B}^{*\top}_{\Ac} \mv{\Sigma}^{-1}\mv{y} \right)^{\top}\mv{u}^*_{\Ac} \right) \times \\  
	\times& \exp\left(-\frac{1}{2} \mv{u}^{*\top}_{\Ac} \mv{Q}^*_{\Ac\Ac} \mv{u}^*_{\Ac}+ \left(  \mv{Q}^*_{\Ac\Ac} \widetilde{\mv{\mu}}^*_{\Ac}\right)^{\top} \mv{u}^*_{\Ac}   \right) \\
	\propto 	&\exp\left(- \frac{1}{2} \left(\mv{u}_{\Ac}^*-\widehat{\mv{\mu}}_{\Ac}^* \right)^\top \widehat{\mv{Q}}_{\Ac\Ac}^*
	\left(\mv{u}_{\Ac}^*-\widehat{\mv{\mu}}_{\Ac}^* \right)\right).
\end{align*}
Finally, the relation $\mv{U}= \mv{T}^\top\mv{U}^*$ completes the proof.
\end{proof}
	\end{appendix}  
\end{document}